\documentclass[11pt]{article}
\usepackage[utf8]{inputenc}

\usepackage[margin=1in]{geometry}
\usepackage{amsmath,amsfonts,amssymb,amsthm}
\usepackage{aliascnt}
\usepackage{mathtools}
\mathtoolsset{centercolon}
\usepackage{mathrsfs}
\usepackage{bm}

\usepackage{float}

\usepackage{tikz}
\usetikzlibrary{arrows.meta,positioning}

\usepackage{enumitem}
\usepackage{booktabs}
\usepackage{xcolor}
\usepackage[colorlinks]{hyperref}
\usepackage[capitalize]{cleveref}

\allowdisplaybreaks
\numberwithin{equation}{section}

\newtheorem{theorem}{Theorem}[section]

\newaliascnt{lemma}{theorem}
\newtheorem{lemma}[lemma]{Lemma}
\aliascntresetthe{lemma}

\newaliascnt{corollary}{theorem}
\newtheorem{corollary}[corollary]{Corollary}
\aliascntresetthe{corollary}

\newaliascnt{proposition}{theorem}
\newtheorem{proposition}[proposition]{Proposition}
\aliascntresetthe{proposition}

\newaliascnt{assumption}{theorem}
\newtheorem{assumption}[assumption]{Assumption}
\aliascntresetthe{assumption}

\theoremstyle{definition}

\newaliascnt{definition}{theorem}
\newtheorem{definition}[definition]{Definition}
\aliascntresetthe{definition}

\newaliascnt{notation}{theorem}
\newtheorem{notation}[notation]{Notation}
\aliascntresetthe{notation}

\newaliascnt{remark}{theorem}
\newtheorem{remark}[remark]{Remark}
\aliascntresetthe{remark}

\newaliascnt{example}{theorem}

\aliascntresetthe{example}

\crefname{problem}{problem}{problems}
\Crefname{problem}{Problem}{Problems}
\crefname{theorem}{theorem}{theorems}
\Crefname{theorem}{Theorem}{Theorems}
\crefname{lemma}{lemma}{lemmas}
\Crefname{lemma}{Lemma}{Lemmas}
\crefname{corollary}{corollary}{corollaries}
\Crefname{corollary}{Corollary}{Corollaries}
\crefname{proposition}{proposition}{propositions}
\Crefname{proposition}{Proposition}{Propositions}
\crefname{assumption}{assumption}{assumptions}
\Crefname{assumption}{Assumption}{Assumptions}
\crefname{definition}{definition}{definitions}
\Crefname{definition}{Definition}{Definitions}
\crefname{notation}{notation}{notations}
\Crefname{notation}{Notation}{Notations}
\crefname{remark}{remark}{remarks}
\Crefname{remark}{Remark}{Remarks}
\crefname{example}{example}{examples}
\Crefname{example}{Example}{Examples}

\newcommand{\C}{\mathbb{C}}
\newcommand{\R}{\mathbb{R}}
\newcommand{\I}{\mathrm{I}}
\newcommand{\spec}{\operatorname{spec}}
\newcommand{\rank}{\operatorname{rank}}
\newcommand{\range}{\operatorname{range}}
\newcommand{\signf}{\operatorname{sign}}
\newcommand{\dist}{\operatorname{dist}}
\newcommand{\sepop}{\operatorname{sep}}

\DeclareMathOperator{\diag}{diag}
\newcommand{\BE}{\mathrm{BE}}
\newcommand{\eps}{\varepsilon}
\newcommand{\wt}[1]{\widetilde{#1}}
\newcommand{\half}{\tfrac12}
\newcommand{\abs}[1]{\left\lvert #1 \right\rvert}
\newcommand{\norm}[1]{\left\lVert #1 \right\rVert}

\newcommand{\bracks}[1]{\left[ #1 \right]}
\newcommand{\paren}[1]{\left( #1 \right)}
\newcommand{\set}[1]{\left\{ #1 \right\}}
\newcommand{\suchthat}{\;:\;}
\newcommand{\RePart}{\operatorname{Re}}
\newcommand{\ImPart}{\operatorname{Im}}
\newcommand{\qA}{q_A}
\newcommand{\qB}{q_B}
\newcommand{\qC}{q_C}
\newcommand{\qE}{q_E}
\newcommand{\qD}{q_D}
\newcommand{\ket}[1]{\lvert #1 \rangle}
\newcommand{\bra}[1]{\langle #1 \rvert}

\newcommand{\secref}[1]{\hyperref[sec:#1]{Section~\ref*{sec:#1}}}
\newcommand{\appref}[1]{\hyperref[app:#1]{Appendix~\ref*{app:#1}}}
\newcommand{\thmref}[1]{\hyperref[thm:#1]{Theorem~\ref*{thm:#1}}}
\newcommand{\lemref}[1]{\hyperref[lem:#1]{Lemma~\ref*{lem:#1}}}
\newcommand{\corref}[1]{\hyperref[cor:#1]{Corollary~\ref*{cor:#1}}}
\newcommand{\fig}[1]{\hyperref[fig:#1]{Figure~\ref*{fig:#1}}}

\usepackage{tabularx}
\usepackage{ragged2e}
\usepackage{makecell}
\usepackage{placeins}
\newcolumntype{Y}{>{\RaggedRight\arraybackslash}X}

\usepackage{array,threeparttable,multirow,pifont}

\usepackage{graphicx}

\title{Sign Embedding Quantum Algorithms for \\ Matrix Equations and Matrix Functions}
\author{Yanqiao Wang$^{1,2,3}$, Jin-Peng Liu$^{1,4,5,\thanks{Corresponding author: liujinpeng@tsinghua.edu.cn}}$
\\ 
\footnotesize $^{1}$ Yau Mathematical Sciences Center, Tsinghua University\\
\footnotesize $^{2}$ Qiuzhen College, Tsinghua University\\
\footnotesize $^{3}$ Institute for AI Industry Research, Tsinghua University\\
\footnotesize $^{4}$ Institute for Applied Mathematics, Tsinghua University\\
\footnotesize $^{5}$ Beijing Institute of Mathematical Sciences and Applications\\
}
\date{}

\begin{document}

\maketitle

\begin{abstract}
We develop a systematic sign-embedding framework of operator-output quantum algorithms for matrix equations and matrix functions. Differing from the contour-integral treatment, we start with the matrix-sign embedding route: an augmented matrix $M$ whose half-plane matrix sign compresses the target operator either as a block of $\signf(M)$ or, in projector form, through $(\I-\signf(M))/2$; we then construct a logarithmic-sinc approximation for the half-plane sign operator and combine it with structure-aware scaled multiplexing and nodewise rebalancing of shifted inverse families. For ordinary Sylvester equations, we offer an explicit block-encoding of the target matrix solution with query complexity linear in the inverse-conditioning parameters and logarithmic in the target error tolerance, under non-normal and non-diagonalizable settings given a field-of-values (FoV) gap or strip-resolvent hypotheses. These algorithms propagate the same overlap-based normalization bookkeeping to ordinary and generalized Sylvester equations, generalized Lyapunov equations, principal square roots and inverse square roots, matrix geometric means, and continuous-time algebraic Riccati equations (CARE). These results identify matrix-sign embeddings and nodewise rebalancing as reusable design principles for structured operator-output quantum linear algebra.
\end{abstract}

\tableofcontents

\section{Introduction}\label{sec:introduction}
Matrix equations are ubiquitous in numerical linear algebra, systems and control, and scientific computing. A basic example is the Sylvester equation
\begin{equation}\label{eq:sylvester}
AX + XB = C,
\end{equation}
which appears in Lyapunov and Riccati theory, model reduction, eigenvalue assignment, matrix functions, descriptor systems, and the analysis of dynamical systems. Classical solution paradigms include Schur-based dense solvers, sign-function iterations, and projection/Krylov schemes \cite{BartelsStewart1972,Roberts1980,Byers1987,KenneyLaub1995,Higham2008,Simoncini2016}. Generalized Sylvester and Lyapunov equations play the same structural role in descriptor-system and weighted-control settings, where the matrix sign remains an important organizing principle \cite{BennerQuintanaOrti1999,LancasterRodman1995}. Because these problems return operators rather than scalars or vectors, they are a natural test bed for quantum algorithms whose output should itself be matrix-valued.

From the matrix-function viewpoint, the same ecosystem also contains principal square roots, inverse square roots, geometric means, and invariant-subspace projectors. These objects are classically linked to contour-integral methods, rational approximants, and sign-based iterations \cite{Higham2008,HaleHighamTrefethen2008,NakatsukasaFreund2016,Gawlik2019}. Recent classical work on Sylvester equations continues this sign/rational perspective, including inverse-free iterative schemes that explicitly approximate the sign of an augmented block matrix \cite{BallewTrogdonWilber2025}. This classical lineage is important for the present paper: our quantum constructions turn well-known sign-function embeddings into block-encoding primitives with explicit query complexity.

On the quantum side, the modern story begins with the quantum linear system algorithm (HHL algorithm), which showed that, under suitable input and output assumptions, a linear system can be solved by preparing a quantum state proportional to the solution vector \cite{HarrowHassidimLloyd2009}. Subsequent linear-systems algorithms improved the dependence on precision and clarified the role of conditioning \cite{ChildsKothariSomma2017,Dervovic2018}. In parallel, Hamiltonian simulation evolved from phase-estimation-based approaches to qubitization and quantum signal processing \cite{LowChuang2017,LowChuang2019}; together with block-encoding, these ideas culminated in Quantum Singular Value Transformation (QSVT), which turns polynomial transformations of matrices into a standard algorithmic primitive \cite{Gilyen2018}. Beyond singular-value-based transformations, several recent directions have broadened the quantum treatment of non-unitary and non-normal problems. One line of work develops Linear Combination of Hamiltonian Simulation (LCHS) methods for non-unitary dynamics, achieving optimal or near-optimal state-preparation and parameter dependence, and extending the same philosophy to Laplace-transform-based eigenvalue transformations \cite{AnLiuLin2023,AnChildsLin2023,AnChildsLinYing2024}. Another line of work develops Schr\"odingerisation techniques, which lift broad classes of linear Partial Differential Equations (PDEs) and related linear dynamical systems to higher-dimensional Schr\"odinger-type equations that can then be handled by quantum simulation, with subsequent refinements for detailed error analysis, inhomogeneous terms, and improved matrix-query complexity \cite{JinLiuYu2023,JinLiuYu2024,JinLiuMa2024,JinLiuMaPengYu2025}. Complementing these developments, quantum eigenvalue processing extends the transformation paradigm from singular values to eigenvalues of block-encoded non-normal matrices, providing new primitives for eigenvalue transformation and estimation in settings beyond the Hermitian framework \cite{LowSu2024}. A recent work gives hardness results and a direct quantum algorithmic pipeline for non-Hermitian pseudospectrum membership \cite{YangLengWuLin2026}. Generic contour-integral quantum algorithms for matrix functions have also been developed in the block-encoding framework, including non-Hermitian and non-normal settings \cite{TakahiraOhashiSogabeUsuda2022}. Very recent work develops a Poisson-summation viewpoint for generalized quantum matrix transformations, including a discrete contour-transform perspective for holomorphic matrix functions \cite{WangZhuangDouChenGuo2026}. More recently, direct quantum work has begun to address matrix equations and operator-valued nonlinear transforms, including ordinary Sylvester equations, Lyapunov equations in control pipelines, mixed-state Lyapunov solvers, and quantum routines for matrix geometric means \cite{SommaLowBerryBabbush2025,Clayton2024,Benedetti2025,LiuWangWildeZhang2025}.

\vspace{0.5\baselineskip}
\textbf{Matrix-sign embedding framework.} The motivating example is again the Sylvester equation \eqref{eq:sylvester}. After a suitable shift-rotation and common positive scaling that leave the solution invariant (whenever the shift-rotation puts $A$ and $-B$ into opposite half-planes), the augmented matrix
\[
M=
\begin{bmatrix}
A&C\\
0&-B
\end{bmatrix}
\]
satisfies the classical identity
\[
\signf(M)=
\begin{bmatrix}
\I&2X\\
0&-\I
\end{bmatrix},
\]
so the solution is exactly the off-diagonal block of the sign. This already exhibits the basic \emph{sign embedding} pattern:
\[
\text{target }X
\Longrightarrow
\text{sign embedding }M
\Longrightarrow
\text{rational approximation of }\signf(M)
\Longrightarrow
\text{block-encoding of }X.
\]
The same pattern reappears for generalized Sylvester and Lyapunov equations, principal square roots and inverse square roots, matrix geometric means, and continuous-time algebraic Riccati equations (CAREs), with the target recovered from a sign block or a sign projector. Concretely, for principal roots $A^{1/2}$ and $A^{-1/2}$, one uses
\[
K(A)=
\begin{bmatrix}
0 & A\\
\I & 0
\end{bmatrix},
\qquad
\signf(K(A))=
\begin{bmatrix}
0 & A^{1/2}\\
A^{-1/2} & 0
\end{bmatrix};
\]
for geometric means $A\# B$, one uses $G(A,B)$, or any positive scaling of it since $\signf(\chi G)=\signf(G)$ for $\chi>0$,
\[
G(A,B)=
\begin{bmatrix}
0 & B\\
A^{-1} & 0
\end{bmatrix},
\qquad
\signf(G(A,B))=
\begin{bmatrix}
0 & A\# B\\
(A\# B)^{-1} & 0
\end{bmatrix};
\]
and for continuous-time algebraic Riccati
equations $A^*X + XA - XGX + Q = 0$, one uses the Hamiltonian
\[
H=
\begin{bmatrix}
A & -G\\
-Q & -A^*
\end{bmatrix},
\qquad
\Pi_-:=\frac{\I-\signf(H)}{2},
\qquad
X=\Pi_{21}\Pi_{11}^{-1}.
\]
The field-of-value (FoV) viewpoint deserves emphasis here. Our assumptions are phrased through Hermitian parts and numerical ranges, not through diagonalizability, normality, or access to a spectral decomposition. Because numerical-range bounds remain meaningful for defective matrices and directly imply resolvent control, the same sign-embedding theorems apply to genuinely non-normal and even non-diagonalizable inputs whenever the relevant FoV gap or strip-resolvent certificate is available.

\vspace{0.5\baselineskip}
\textbf{Comparison with contour-integral methods.} Two aspects distinguish the sign embedding route from a generic holomorphic functional-calculus approach. First, the sign embedding \emph{compresses} each structured problem to a single object, namely $\signf(M)$ or $(\I-\signf(M))/2$. Second, after this compression, the contour geometry becomes canonical. A generic contour-integral algorithm must choose contours adapted to the spectrum of each target function; by contrast, the half-plane sign is separated by the imaginary axis, and the standard sign integral reduces to shifted resolvents on the one-dimensional real parameter family $M\pm it\I$, $t>0$. After the logarithmic substitution $t=e^x$, the entire quadrature is a trapezoidal rule on the real line. Thus all problem-specific geometry is pushed into resolvent bounds, while the quadrature nodes and weights are universal.

This real-line contour simplification is especially useful because the shifted resolvents factor in a structure-aware way. For ordinary Sylvester, for example,
\[
[(z\I-M)^{-1}]_{12}=(z\I-A)^{-1}C(z\I+B)^{-1},
\]
and the logarithmic-sinc quadrature for $\signf(M)$ can be reorganized by scaled multiplexing so that all inverse calls have uniform conditioning while the total Linear Combination of Unitaries (LCU) weight remains bounded by a universal constant. This constant-weight phenomenon is one of the main algorithmic advantages of the sign embedding approach.

\subsection{Framework overview}
At a high level, the framework has three reusable ingredients: 
\begin{enumerate}[leftmargin=2em,label=(\roman*)]
\item a \emph{sign representation} that expresses the target operator as a sign block or as a quotient of projector blocks of an augmented matrix; for the main instances, see \cref{thm:higham,thm:sqrt-sign,thm:care-projector};
\item an explicit \emph{logarithmic-sinc approximation theorem} for the half-plane sign under a strip-resolvent bound; see \cref{thm:sign-approx}; 
\item a \emph{scaled-multiplexing implementation} of all shifted inverses with explicit normalization bookkeeping, including a rebalancing based on nodewise bounds that improves upon plain worst-case conditioning bounds; see \cref{thm:profile-implementation,thm:single-family-algorithm}.
\end{enumerate}
The main technical result for the ordinary Sylvester equation is \cref{thm:generic-overlap}. It packages the three ingredients above into a deterministic block-encoding theorem with explicit normalization, a separation between deterministic sign-approximation error and inverse-implementation error, explicit query complexity, and a nodewise rebalancing mechanism. In the introduction, however, we quote only the two headline consequences corresponding to regimes $(\mathrm{R1})$ and $(\mathrm{R2})$; the auxiliary overlap and error-bookkeeping parameters are deferred to the formal theorem statement in \cref{sec:main-results}. The other problems treated in this paper are obtained by instantiating the same sign embedding framework with different sign embeddings.

\subsection{Main results}
We develop a systematic framework of quantum algorithms for matrix equations and matrix functions. The main consequences are as follows.

\begin{enumerate}[leftmargin=2em,label=(\arabic*)]
\item \textbf{Ordinary Sylvester equation.}
For the transformed Sylvester problem, the master theorem is the rebalanced sign-embedding result \cref{thm:generic-overlap}. The two main headline regimes are as follows.

\emph{Regime $(\mathrm{R1})$: field-of-values (FoV) gap.}
After a suitable shift-rotation, suppose
\[
H(A)\succeq \mu \I,
\qquad
H(B)\succeq \mu \I.
\]
Then \cref{thm:R1-main,cor:fov-main} give a unitary $U_X$ and an ancilla count $a_X$ such that
\[
\left\|
X-\beta_X(\bra{0^{a_X}}\otimes \I)U_X(\ket{0^{a_X}}\otimes \I)
\right\|\le \eps,
\qquad
\beta_X=O\!\left(\frac{1}{\mu^2}\right),
\]
with query complexity
\[
\qA=O\!\left(\frac{1}{\mu}\log\frac{1}{\eps\mu}\right),
\qquad
\qB=O\!\left(\frac{1}{\mu}\log\frac{1}{\eps\mu}\right),
\qquad
\qC=O(1).
\]
Moreover, a geometric FoV gap $\dist(W(A_0),-W(B_0))>0$ can always be converted into this regime by an optimal shift-rotation, and \cref{lem:R1-strip-certificate} shows that the same FoV hypothesis automatically furnishes an explicit strip-resolvent certificate for the augmented matrix. 

\emph{Banded-overlap refinement.} \cref{thm:R1-banded} improves the normalization to
\[
\beta_X=O\!\left(\frac{1}{\mu}+\frac{\tau}{\mu^2}\right)
\]
whenever the numerical ranges lie in a horizontal strip of half-width $\tau$; in particular, this gives $\beta_X=O(1/\mu)$ when $\tau=O(\mu)$. For Hermitian inputs one has $\tau=0$, so \cref{thm:R1-banded} yields the \emph{optimal-order} normalization $\beta_X\le 4/\mu$. More importantly, the hypothesis is stated purely in terms of numerical ranges rather than normality or diagonalizability, so the same \emph{optimal-order} regime extends to non-normal and non-diagonalizable matrices whenever the same strip-banded FoV geometry holds.

\emph{Regime $(\mathrm{R2})$: non-normal strip-resolvent regime.}
If the augmented matrix
\[
M=
\begin{bmatrix}
A & C\\
0 & -B
\end{bmatrix}
\]
is separated from the imaginary axis and admits a strip-resolvent bound $\gamma$ on $\Omega_a$, then \cref{thm:R2-main} give a unitary $U_X$ and an ancilla count $a_X$ such that
\[
\left\|
X-\beta_X(\bra{0^{a_X}}\otimes \I)U_X(\ket{0^{a_X}}\otimes \I)
\right\|\le \eps,
\qquad
\beta_X=O(\gamma^2),
\]
with query complexity
\[
\qA=O\!\left(\gamma\log\frac{\gamma}{\eps}\right),
\qquad
\qB=O\!\left(\gamma\log\frac{\gamma}{\eps}\right),
\qquad
\qC=O(1).
\]
This regime is designed precisely for non-normal, non-diagonalizable inputs beyond FoV-gap assumptions. A finer version, \cref{thm:generic-overlap}, replaces the plain worst-case product bound behind these headlines by a nodewise overlap sum built from the actual shifted resolvent norms.

\item \textbf{Generalized Sylvester and generalized Lyapunov equations.}
\Cref{thm:gen-main,cor:lyap-main} show that nonsingular generalized Sylvester equations and generalized Lyapunov equations reduce to ordinary Sylvester problems on the coefficients
\[
\widetilde A=E^{-1}A,
\qquad
\widetilde B=BD^{-1},
\qquad
\widetilde C=E^{-1}CD^{-1},
\]
so the same sign embedding and the same rebalanced implementation layer transfer verbatim to descriptor-type settings. In the weighted FoV-gap specialization, the block-encoding normalizations scale as $O(\norm{E^{-1}}\norm{D^{-1}}/\mu^2)$ for generalized Sylvester and $O(\norm{E^{-1}}^2/\mu^2)$ for generalized Lyapunov, while the query costs remain linear in the corresponding weighted inverse-conditioning parameters. Because the assumptions are again phrased through weighted accretivity and weighted numerical ranges, the generalized theory also accommodates non-normal pencils once reduced strip or weighted FoV certificates are available.

\item \textbf{Matrix functions: matrix roots and geometric means.}
\Cref{thm:sqrt-sign,thm:sqrt-generic,thm:sqrt-accretive,thm:geom-mean,thm:geom-mean-accretive} show that the same sign embedding mechanism also yields block-encodings of $A^{-1/2}$, $A^{1/2}$, and the operator geometric mean $A\#B := A(A^{-1}B)^{1/2}$. For the square-root embedding
\[
K(A)=
\begin{bmatrix}
0 & A\\
\I & 0
\end{bmatrix},
\qquad
\signf(K(A))=
\begin{bmatrix}
0 & A^{1/2}\\
A^{-1/2} & 0
\end{bmatrix},
\]
both $A^{-1/2}$ and $A^{1/2}$ are recovered as sign blocks. In the FoV-based regime $H(A)\succeq \mu\I$, \cref{thm:sqrt-accretive} gives normalization $4/\sqrt{\mu}$ for both outputs and query complexity
\[
O\!\paren{\frac{1}{\mu}\log\frac{1}{\eps\mu}}.
\]
This normalization is optimal in order for the inverse-square-root output: already the scalar case $A=\mu$ forces $\|A^{-1/2}\|=\mu^{-1/2}$.  The square-root output is obtained from the same inverse-square-root construction by one multiplication by $A$, and therefore inherits the same normalization.  Equally importantly, the hypothesis is purely FoV-based, so the theorem applies to non-normal and even non-diagonalizable matrices; no normality assumption is needed. For the geometric mean embedding, the same single-family machinery yields generic profile-dependent bounds under a principal-branch assumption and a scaled sign-embedding strip-resolvent certificate.  When these sign-embedding hypotheses are combined with FoV assumptions $H(A)\succeq \mu_A\I$ and $H(B)\succeq \mu_B\I$, \cref{thm:geom-mean-accretive} gives normalization at most $4/\sqrt{\mu_A\mu_B}$ for $(A\#B)^{-1}$; the block-encoding of $A\#B$ is obtained from the same inverse-geometric-mean construction by left and right multiplication and hence has the same scale.\item \textbf{Continuous-time algebraic Riccati equation.}
\Cref{thm:care-projector,thm:care-sign,thm:care-main} apply the same philosophy to the Hamiltonian matrix associated with the standard continuous-time algebraic Riccati equation,
\[
H=
\begin{bmatrix}
A & -G\\
-Q & -A^*
\end{bmatrix},
\qquad
\Pi_-:=\frac{\I-\signf(H)}{2}.
\]
The stabilizing solution is then extracted as a quotient of projector blocks, $X=\Pi_{21}\Pi_{11}^{-1}$, so CARE becomes a Hamiltonian sign stage followed by one projector-block inversion. The final theorem again returns a unitary $U_X$ satisfying
\[
\left\|
X-\beta_X(\bra{0^{a_X}}\otimes \I)U_X(\ket{0^{a_X}}\otimes \I)
\right\|\le \eps,
\]
with normalization controlled by the Hamiltonian sign stage and the invertibility gap of the leading projector block. Unlike geometric-mean-based Riccati routines tied to Hermitian positive-definite special cases, this formulation treats the standard Hamiltonian CARE directly and permits non-Hermitian, non-normal Hamiltonians whenever $\spec(H)\cap i\R=\varnothing$ and the required strip/projector certificates are available.
\end{enumerate}

\begin{table}[H]
\centering
\small
\setlength{\tabcolsep}{4pt}
\renewcommand{\arraystretch}{1.2} 
\caption{Main formal instances of the sign embedding framework}
\label{tab:this_paper_internal_summary}
\begin{tabularx}{\textwidth}{>{\raggedright\arraybackslash}p{3.55cm} >{\raggedright\arraybackslash}p{3.55cm} >{\raggedright\arraybackslash}X >{\raggedright\arraybackslash}p{2.45cm}}
\toprule
Problem / regime & Sign representation & Representative query complexity & Normalization $\beta_X$\\
\midrule
\makecell[l]{Sylvester\\(FoV gap;\\ \cref{thm:R1-main},\\ \cref{cor:fov-main})}
& $[\signf\!\bigl(\begin{smallmatrix}A&C\\0&-B\end{smallmatrix}\bigr)]_{(1,2)}$
& $\qA,\qB=O\!\paren{\frac{1}{\mu}\log\frac{1}{\eps\mu}}$, $\qC=O(1)$
& $O\!\left(\mu^{-2}\right)$\\ \addlinespace[10pt] 

\makecell[l]{Sylvester\\(FoV gap, banded;\\ \cref{thm:R1-banded})}
& $[\signf\!\bigl(\begin{smallmatrix}A&C\\0&-B\end{smallmatrix}\bigr)]_{(1,2)}$
& $\qA,\qB=O\!\paren{\frac{1}{\mu}\log\frac{\tau}{\eps\mu}}$, $\qC=O(1)$
& $O\!\left(\mu^{-1}+\tau\mu^{-2}\right)$\\ \addlinespace[10pt]

\makecell[l]{Sylvester\\(strip-resolvent;\\ \cref{thm:R2-main})}
& $[\signf\!\bigl(\begin{smallmatrix}A&C\\0&-B\end{smallmatrix}\bigr)]_{(1,2)}$
& $\qA,\qB=O\!\left(\gamma\log\frac{\gamma}{\eps}\right)$, $\qC=O(1)$
& $O\!\left(\gamma^2\right)$\\ \addlinespace[10pt] 

\makecell[l]{Generalized Sylvester /\\ generalized Lyapunov\\(FoV gap;\\ \cref{thm:gen-main},\\ \cref{cor:lyap-main})}
& $\left[\signf\!\bigl(\begin{smallmatrix}E^{-1}A&E^{-1}CD^{-1}\\0&-BD^{-1}\end{smallmatrix}\bigr)\right]_{(1,2)}$
& \makecell[l]{
$\qA,\qE=O\!\left(\frac{\norm{E^{-1}}}{\mu}\log\frac{\norm{E^{-1}}\norm{D^{-1}}}{\eps\mu}\right)$,\\
$\qB,\qD=O\!\left(\frac{\norm{D^{-1}}}{\mu}\log\frac{\norm{E^{-1}}\norm{D^{-1}}}{\eps\mu}\right)$
}
& \makecell[l]{
$O\!\left(\norm{E^{-1}}\norm{D^{-1}}\mu^{-2}\right)$,\\
$O\!\left(\norm{E^{-1}}^{2}\mu^{-2}\right)$
}\\ \addlinespace[10pt]

\makecell[l]{Inverse square root /\\ square root\\(FoV gap;\\
\cref{thm:sqrt-accretive})}
& \makecell[l]{
$[\signf\!\bigl(\begin{smallmatrix}0&A\\ \I&0\end{smallmatrix}\bigr)]_{(2,1)}$,\\
$[\signf\!\bigl(\begin{smallmatrix}0&A\\ \I&0\end{smallmatrix}\bigr)]_{(1,2)}$
}
& $O\!\paren{\frac{1}{\mu}\log\frac{1}{\eps\mu}}$
& $O\!\left(\mu^{-1/2}\right)$\\ \addlinespace[10pt] 

\makecell[l]{Geometric mean /\\ Inverse geometric mean\\(FoV + scaled\\ sign certificate;\\
\cref{thm:geom-mean-accretive})}
& \makecell[l]{
$\left[\signf\!\bigl(\begin{smallmatrix}0&B\\A^{-1}&0\end{smallmatrix}\bigr)\right]_{(2,1)}$,\\
$\left[\signf\!\bigl(\begin{smallmatrix}0&B\\A^{-1}&0\end{smallmatrix}\bigr)\right]_{(1,2)}$
}
& $O\!\paren{
\frac{1}{\min\{\mu_A,\mu_B\}}
\log(\frac{1}{\eps \min\{\mu_A,\mu_B\}})
}$
& \makecell[l]{$O\!\left((\mu_A\mu_B)^{-1/2}\right)$}\\ \addlinespace 

\makecell[l]{CARE\\(\cref{thm:care-main})}
& \makecell[l]{
$\Pi_-=(\I-\signf(H))/2$,\\
$X=\Pi_{21}\Pi_{11}^{-1}$
}
& \makecell[l]{
$\begin{aligned}
&Q_{\mathrm{total}}
=O\!\Bigl(R_H\log\frac{R_H}{\eps_H}\cdot
\\
&\bigl[1+\frac{\beta_\Pi}{2\sigma-\eps_{\mathrm{sign}}}
\log\frac{1}{(2\sigma-\eps_{\mathrm{sign}})\eps_{\Pi^{-1}}}
\bigr]\Bigr)
\end{aligned}$
}
& \makecell[l]{
$O\!\left(\dfrac{1+\beta_{\mathrm{sign}}}{2\sigma-\eps_{\mathrm{sign}}}\right)$
}
\\
\bottomrule
\end{tabularx}
\end{table}

\subsection{Method overview}
At a high level, our algorithm for the ordinary Sylvester equation follows the pipeline \fig{my_image_0} 

\begin{figure}[H]
  \centering
  \includegraphics[width=1.0\textwidth]{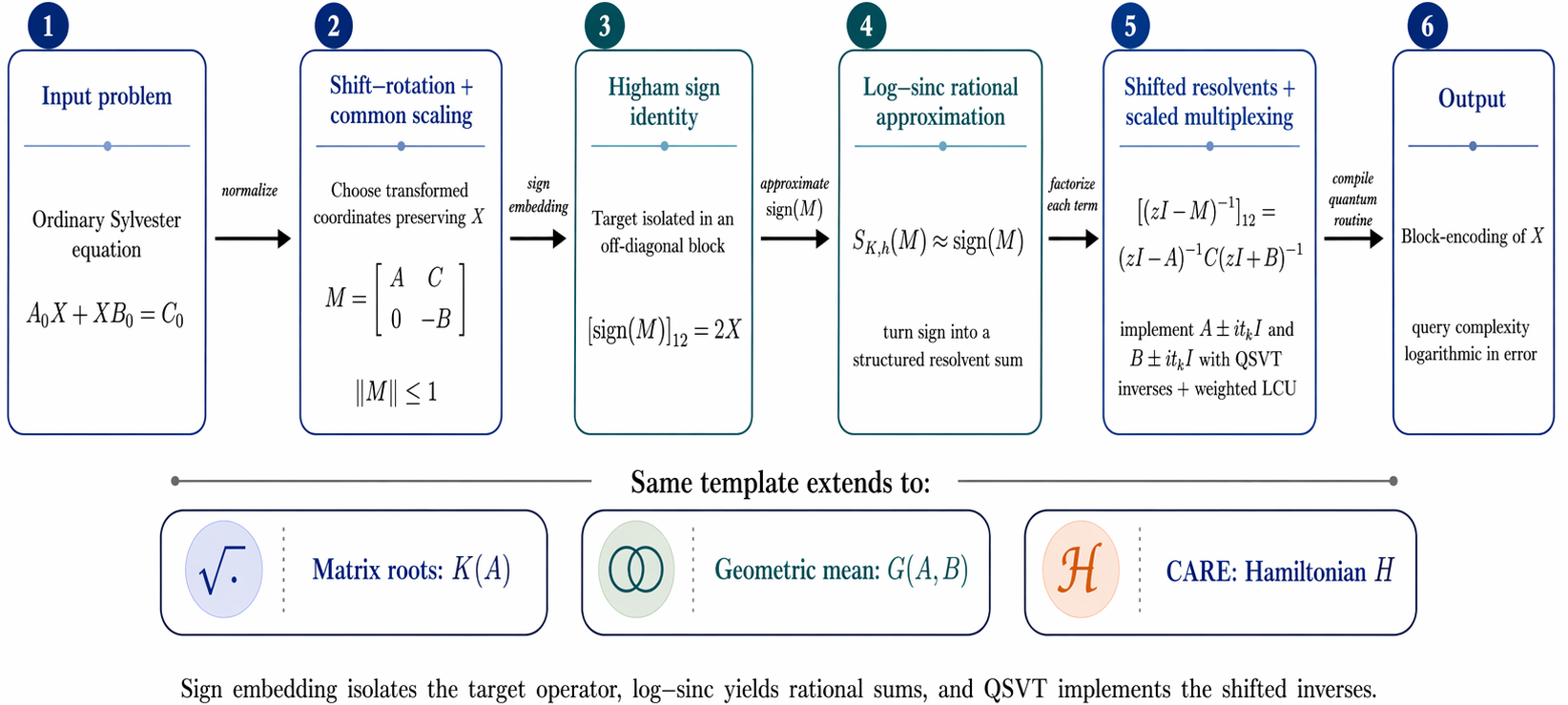} 
  \caption{method overview}
  \label{fig:my_image_0}
\end{figure}

The same construction reappears later with different augmented matrices: $K(A)$ for principal square roots, a positive scaling of $G(A,B)$ for geometric means, and the Hamiltonian matrix for CARE. In each case the sign function isolates the target operator in a specific block or as a quotient of projector blocks, the log-sinc rule turns that target into a structured rational sum, and QSVT(Quantum Singular Value Transformation) implements the required shifted inverses.

\subsection{Relation to prior work}
The classical lineage most relevant here comes from matrix-sign methods for Sylvester, Lyapunov, and Riccati equations \cite{Roberts1980,Byers1987,KenneyLaub1995,BennerQuintanaOrti1999,Higham2008,LancasterRodman1995} and from contour-integral or rational-approximation methods for matrix functions \cite{HaleHighamTrefethen2008,NakatsukasaFreund2016,Gawlik2019}. The algebraic sign embeddings used in this paper therefore have clear classical roots. We observe that these identities can be turned into reusable \emph{operator-output quantum primitives} with explicit block-encoding normalization and query complexity.

On the quantum side, Quantum Singular Value Transformation (QSVT) and block-encoding provide the dominant general framework for matrix transformations \cite{Gilyen2018}. Generic contour-integral constructions for holomorphic matrix functions in the block-encoding model are available \cite{TakahiraOhashiSogabeUsuda2022}, and recent work also develops a Poisson-summation perspective for generalized quantum matrix transformations, including a discrete contour-transform viewpoint for holomorphic matrix functions \cite{WangZhuangDouChenGuo2026}. Our framework is more structured in the class of problems it solves. Rather than approximating the target function directly by a contour formula, we first compress each structured problem to a canonical sign object, namely $\signf(M)$ or $(\I-\signf(M))/2$, and only then implement the resulting shifted inverses. This sign-first reduction exposes factorization identities and scaled resolvent families that are invisible in a black-box contour treatment. For ordinary Sylvester it yields the off-diagonal factorization $[(z\I-M)^{-1}]_{12}=(z\I-A)^{-1}C(z\I+B)^{-1}$ and the constant-weight scaled identity of \cref{lem:constant-weight}; for square roots and geometric means it reduces everything to one positive-shift family; for CARE it isolates a Hamiltonian sign projector whose leading block can be inverted separately. These structure-specific gains are the main advantages over a generic contour approach.

Among dedicated operator-output baselines, the closest ordinary Sylvester result is the recent algorithm of Somma, Low, Berry, and Babbush \cite{SommaLowBerryBabbush2025}. That work is specialized to ordinary Sylvester equations and organized around the vectorized Kronecker operator, with assumption of diagonalizability and separate regimes such as normal, positive-Hermitian-part, and positive-matrix inputs. By contrast, our contribution is unified in two senses. First, the same sign-embedding pipeline covers ordinary and generalized Sylvester equations, generalized Lyapunov equations, principal roots, geometric means, and CARE. Second, the assumptions are formulated through half-plane separation, numerical ranges, and strip-resolvent bounds, which naturally accommodate more general inputs without requiring diagonalizability or normality in the main FoV-based theorems.

For Lyapunov equations, Sun and Zhang \cite{SunZhang2017}, Clayton \emph{et al.} \cite{Clayton2024}, and Benedetti, Rosmanis, and Rosenkranz \cite{Benedetti2025} study state output, application-specific operator output, and mixed-state output, respectively. For nonlinear operator-valued problems, Liu, Wang, Wilde, and Zhang \cite{LiuWangWildeZhang2025} treat geometric means and Riccati-type equations in Hermitian positive-definite settings. The closest CARE-related quantum comparison is therefore geometric-mean-based rather than Hamiltonian-projector-based. Our CARE section is different both algebraically and in scope: it treats the standard continuous-time Hamiltonian formulation directly, extracts the stabilizing solution from $(\I-\signf(H))/2$, and allows non-Hermitian, non-normal Hamiltonians under half-plane separation, strip-resolvent control, and an invertibility certificate for the leading projector block. 

Appendix~\ref{app:comparison} gives a more detailed side-by-side comparison with representative prior results; see Tables~\ref{tab:sylvester-compare}--\ref{tab:care-compare}.

\subsection{Organization}
\Cref{sec:model} states the computational model and the inverse primitive. \Cref{sec:sign-reduction} develops the sign representation for ordinary Sylvester equations, and \Cref{sec:sinc} proves the log-sinc sign approximation theorem in its asymptotic form, with explicit constants deferred to a later proposition. \Cref{sec:scaled-multiplexing} packages the implementation layer as rebalanced and single-family implementations, isolating the nodewise normalization bookkeeping from the problem-specific sign reductions. \Cref{sec:main-results} packages these ingredients into generic, FoV gap, and strip-resolvent theorems for the ordinary Sylvester equation. \Cref{sec:generalized,sec:sqrt,sec:care} then instantiates the same sign embedding pattern for generalized Sylvester and Lyapunov equations, principal square roots and geometric means, and continuous-time algebraic Riccati equations. Appendix~\ref{app:be-calculus} records the minimal block-encoding calculus and the standard multiplexed gadgets used in the proofs, Appendix~\ref{app:mu-conditioning} relates the FoV gap $\mu$ to standard Sylvester conditioning quantities, and Appendix~\ref{app:comparison} collects comparison tables against representative quantum baselines.

\section{Model and QSVT inverse primitive}\label{sec:model}
\subsection{Norms and transformed coordinates}
All norms are operator norms.
For a square matrix $K$, denote the Hermitian part as
\[
H(K) := \half (K+K^*)
\]
and denote its numerical range as
\[
W(K) := \set{ x^* K x \suchthat x \in \C^n,\ \norm{x}_2 = 1 }
\].

For the Sylvester operator $S_{A,B}(X) := AX + XB$, define
\begin{equation}\label{eq:sep}
\sepop(A,-B) := \min_{X\neq 0} \frac{\norm{AX+XB}}{\norm{X}} = \frac{1}{\norm{S_{A,B}^{-1}}}.
\end{equation}

We will repeatedly use transformations that leave the solution $X$ invariant.
If $\eta\in \C$ has $\abs{\eta}=1$, $\omega\in\C$, and $\lambda>0$, then
\[
A \mapsto \lambda\eta(A-\omega \I),\qquad B \mapsto \lambda\eta(B+\omega \I),\qquad C \mapsto \lambda\eta C
\]
leaves the solution of \eqref{eq:sylvester} unchanged.
Accordingly, whenever we speak about the matrices $A,B,C$ in the ordinary Sylvester part of the paper, we mean the matrices \emph{after} this shift-rotation and common positive scaling has been applied.

We adopt the same normalization philosophy throughout the paper.
For each problem class we work in normalized coordinates in which every coefficient matrix that is given by oracle access has operator norm at most one.
For ordinary and generalized Sylvester-type equations, this normalization can be arranged by transformations that leave the solution operator unchanged.
For square roots and geometric means, where scalar rescaling changes the target by an explicit known scalar factor, we record the corresponding recovery formulas in the relevant sections and then work with the normalized matrices.
For CARE, a positive rescaling of the Hamiltonian leaves the sign projector unchanged, so the same normalization is harmless there as well.

For the sign-approximation theorem we also need a normalized augmented matrix.
Hence in the ordinary Sylvester sections we assume
\begin{equation}\label{eq:Mnorm}
\norm{M} \le 1,
\qquad
M := \begin{bmatrix} A & C \\ 0 & -B \end{bmatrix}.
\end{equation}
Since $A$, $B$, and $C$ are compressions of $M$, this implies $\norm{A}\le 1$, $\norm{B}\le 1$, and $\norm{C}\le 1$ as well.

\subsection{Block-encodings}
\begin{definition}[Block-encoding]
Let $T$ be an operator on $s$ qubits.
A unitary $U_T$ acting on $a+s$ qubits is an $(\alpha,a,\eps)$-block-encoding of $T$ if
\[
\norm{T - \alpha (\langle 0^a\!\mid \otimes \I) U_T (\mid 0^a\rangle \otimes \I)} \le \eps.
\]
\end{definition}

\begin{assumption}[Unit-normalized block-encoding access]\label{ass:be}
We are given exact unit block-encodings
\[
U_A \in \BE(1,a_A,0),\qquad U_B \in \BE(1,a_B,0),\qquad U_C \in \BE(1,a_C,0)
\]
of $A$, $B$, and $C$, respectively.
Controlled versions and adjoints of these unitaries are available at the same asymptotic query cost.
\end{assumption}

We use standard block-encoding product and LCU rules throughout; a concise summary is recorded in \cref{app:be-calculus}.
In the present unit-normalized input model, attaching an input oracle contributes no additional normalization factor beyond the explicit factors created by the construction itself.

\subsection{The QSVT inverse primitive}
Once an exact unit block-encoding of an invertible matrix is available, QSVT(Quantum Singular Value Transformation) produces a block-encoding of its inverse with linear dependence on $\norm{T^{-1}}$ and logarithmic dependence on the target precision.  For a non-Hermitian or non-normal matrix, the proof passes through the Hermitian dilation, so no spectral normality of $T$ is used.

\begin{proposition}[QSVT inverse for a unit-normalized block-encoding]\label{prop:qsvt-inverse}
Let $U_T$ be a $(1,a,0)$-block-encoding of an invertible matrix $T$.
Then for every target operator-norm error $\eps_{\mathrm{inv}}\in(0,1]$ there exists a QSVT circuit $V_T$ that is a
\begin{equation}\label{eq:inverse-output-scaling}
(2\norm{T^{-1}}, a+O(1), \eps_{\mathrm{inv}})\text{-block-encoding of }T^{-1}
\end{equation}
using
\begin{equation}\label{eq:qsvt-query-cost}
O\!\paren{\norm{T^{-1}} \log \frac{\norm{T^{-1}}}{\eps_{\mathrm{inv}}}}
\end{equation}
queries to $U_T$ and $U_T^\dagger$.
\end{proposition}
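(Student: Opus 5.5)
The proof runs the standard QSVT matrix-inversion construction of \cite{Gilyen2018} on the singular values of $T$. Set $\kappa:=\norm{T^{-1}}$. Since $\norm{T}\le 1$ (it is a unit block-encoding), every singular value of $T$ lies in $[1/\kappa,1]$. Write the SVD $T=W\Sigma V^*$, so $T^{-1}=V\Sigma^{-1}W^*$. Because QSVT acts on singular values, this is exactly the Hermitian-dilation viewpoint: applying QSVT with an odd real polynomial $p$ to $U_T$ block-encodes $p^{(SV)}(T):=Vp(\Sigma)W^*$. This requires no normality or diagonalizability of $T$ itself.

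\textbf{Polynomial approximation.} We need an odd polynomial $p$ with three properties:
\begin{itemize}
\item $\abs{p(x)}\le 1$ on $[-1,1]$;
\item $\abs{p(x)-\frac{1}{2\kappa x}}\le \eps_{\mathrm{inv}}/(2\kappa)$ for $x\in[1/\kappa,1]$;
\item $\deg p=O(\kappa\log(\kappa/\eps_{\mathrm{inv}}))$.
\end{itemize}
The factor $\tfrac12$ provides headroom so that the bound $\le 1$ can hold on all of $[-1,1]$, including near $0$ where $1/x$ blows up. This target function explains the normalization $2\kappa$ in \eqref{eq:inverse-output-scaling}. To build $p$, I would invoke the known construction (Childs--Kothari--Somma, as packaged in \cite[Cor.~69]{Gilyen2018}). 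Take the function $\frac{1-(1-x^2)^b}{x}$ with $b=\lceil\kappa^2\log(\kappa/\eps)\rceil$. Expand it in Chebyshev polynomials and truncate at degree $O(\sqrt{b\log(b/\eps)})=O(\kappa\log(\kappa/\eps))$. Then multiply by a polynomial approximation of a rectangle function, which keeps the magnitude bounded on $[-1/\kappa,1/\kappa]$. The sum of the errors is $O(\eps/\kappa)$ relative to $1/x$. After rescaling $\eps$ by a constant, this gives the stated bounds.

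\textbf{Implementation.} Since $p$ is odd, real, and bounded by $1$, QSVT realizes $p^{(SV)}(T)$ exactly. Obtaining a real polynomial needs the standard real-part trick, which is an LCU of two QSVT sequences with phases $\Phi$ and $-\Phi$ and adds one ancilla. The total uses $a+O(1)$ ancillas and $\deg p$ alternating queries to $U_T$ and $U_T^\dagger$, which gives \eqref{eq:qsvt-query-cost}.

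\textbf{Error.} The block of the circuit is $Vp(\Sigma)W^*$, so
\[
\norm{T^{-1}-2\kappa\,Vp(\Sigma)W^*}=2\kappa\max_i\abs{\tfrac{1}{2\kappa\sigma_i}-p(\sigma_i)}\le\eps_{\mathrm{inv}},
\]
because all $\sigma_i\in[1/\kappa,1]$. Hence the circuit is a $(2\kappa,a+O(1),\eps_{\mathrm{inv}})$-block-encoding of $T^{-1}$.

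The main obstacle is the polynomial approximation step. The difficulty is keeping $\abs{p}\le 1$ on the full interval $[-1,1]$ while achieving the relative accuracy $\eps_{\mathrm{inv}}/\kappa$ on $[1/\kappa,1]$ with degree linear in $\kappa$. I would cite this step rather than reprove it. The remaining points are bookkeeping: the $\eps$-rescaling, and the requirement $\eps_{\mathrm{inv}}\le 1$, which keeps the $\log$ term nonnegative.
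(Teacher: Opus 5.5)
Your argument is essentially the paper's. The paper forms the Hermitian dilation $\mathcal H(T)=\begin{bmatrix}0&T\\T^*&0\end{bmatrix}$ explicitly and applies the Hermitian QSVT inverse of Gily\'en \emph{et al.}, approximating $\frac{1}{2\norm{T^{-1}}}\mathcal H(T)^{-1}$ to accuracy $\eps_{\mathrm{inv}}/(2\norm{T^{-1}})$. It then reads off the $(2,1)$ block. That block is exactly the odd singular-value transform you apply directly, with the same normalization $2\norm{T^{-1}}$ and degree $O(\norm{T^{-1}}\log(\norm{T^{-1}}/\eps_{\mathrm{inv}}))$.

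There is one orientation slip to fix. With $T=W\Sigma V^*$, a QSVT sequence built on $U_T$ with an odd polynomial block-encodes $Wp(\Sigma)V^*$, not $Vp(\Sigma)W^*$. With your $\kappa=\norm{T^{-1}}$, this is approximately $\frac{1}{2\kappa}(T^{-1})^*$. To obtain $T^{-1}$ you must use $U_T^\dagger$, which block-encodes $T^*=V\Sigma W^*$, as the base unitary of the sequence. This costs nothing, since the proposition already counts queries to both $U_T$ and $U_T^\dagger$. It is also why the paper extracts the $(2,1)$ block, the $T^*$ branch of the dilation, rather than the $(1,2)$ block, which carries $(T^{-1})^*$.
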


\begin{proof}
Set
\[
\mathcal H(T):=
\begin{bmatrix}
0&T\\
T^*&0
\end{bmatrix}.
\]
A standard one-qubit selector construction, using $U_T$ on one branch and $U_T^\dagger$ on the other, gives an exact unit block-encoding of $\mathcal H(T)$ with $a+O(1)$ ancillas and $O(1)$ queries to $U_T$ and $U_T^\dagger$.  The eigenvalues of $\mathcal H(T)$ are $\pm\sigma_j(T)$, hence
\[
\norm{\mathcal H(T)^{-1}}=\norm{T^{-1}},
\qquad
\mathcal H(T)^{-1}=
\begin{bmatrix}
0&(T^{-1})^*\\
T^{-1}&0
\end{bmatrix}.
\]

Let $\delta:=\sigma_{\min}(T)=1/\norm{T^{-1}}$.  Applying the Hermitian QSVT inverse theorem of Gily\'en \emph{et al.}\cite{Gilyen2018} to the unit block-encoding of $\mathcal H(T)$ yields, for any scaled target error $\eta\in(0,1]$, a QSVT circuit whose encoded block approximates
\[
\frac{\delta}{2}\mathcal H(T)^{-1}
=
\frac{1}{2\norm{T^{-1}}}\mathcal H(T)^{-1}
\]
with error at most $\eta$, using $O(\delta^{-1}\log(1/\eta))$ queries.  Taking the $(2,1)$ block of the encoded operator gives the same scaled approximation to $T^{-1}$.  Choosing
\[
\eta=\min\left\{1,\frac{\eps_{\mathrm{inv}}}{2\norm{T^{-1}}}\right\}
\]
therefore gives operator-norm error at most $\eps_{\mathrm{inv}}$ on $T^{-1}$, which proves \eqref{eq:inverse-output-scaling} and \eqref{eq:qsvt-query-cost}.
\end{proof}

\subsection{Two regimes of interest}
We will state complexity results in two regimes. Regime $(\mathrm{R1})$ is a field-of-values (FoV) gap regime: as shown in \fig{my_image}, after a suitable shift-rotation, the numerical ranges of $A$ and $-B$ lie in opposite half-planes with a known margin $\mu$. This immediately yields uniform shifted-resolvent bounds for the systems appearing in the log-sinc discretization and therefore leads to explicit complexity estimates. Regime $(\mathrm{R2})$ is more general: it assumes only that the augmented matrix $M$ is separated from the imaginary axis and that a strip-resolvent bound is known on a vertical strip around $i\R$. This second regime is designed to cover non-Hermitian and non-normal inputs beyond the field-of-values gap setting, at the price of carrying the strip constant $\gamma$ in the complexity bounds.

The numerical-range formulation is intentional. A FoV gap can be verified without diagonalizing the input, remains meaningful for defective matrices, and directly implies resolvent bounds through distances to the numerical range. In this sense, regime $(\mathrm{R1})$ relaxes assumptions such as normality, diagonalizability, or explicit spectral access: it is a resolvent-control hypothesis rather than a spectral-structure hypothesis. Several later results, including the banded Sylvester refinement and the accretive square-root/geometric-mean corollaries, exploit exactly this flexibility.

\begin{assumption}[Field-of-values (FoV) gap regime $(\mathrm{R1})$]\label{ass:R1}
After the shift-rotation and common scaling described above, the transformed matrices satisfy
\begin{equation}\label{eq:R1ass}
H(A) \succeq \mu \I,
\qquad
H(B) \succeq \mu \I,
\qquad
\norm{M}\le 1
\end{equation}
for a known $\mu\in(0,1]$.
\end{assumption}

\begin{figure}[htbp]
  \centering
  \includegraphics[width=1.0\textwidth]{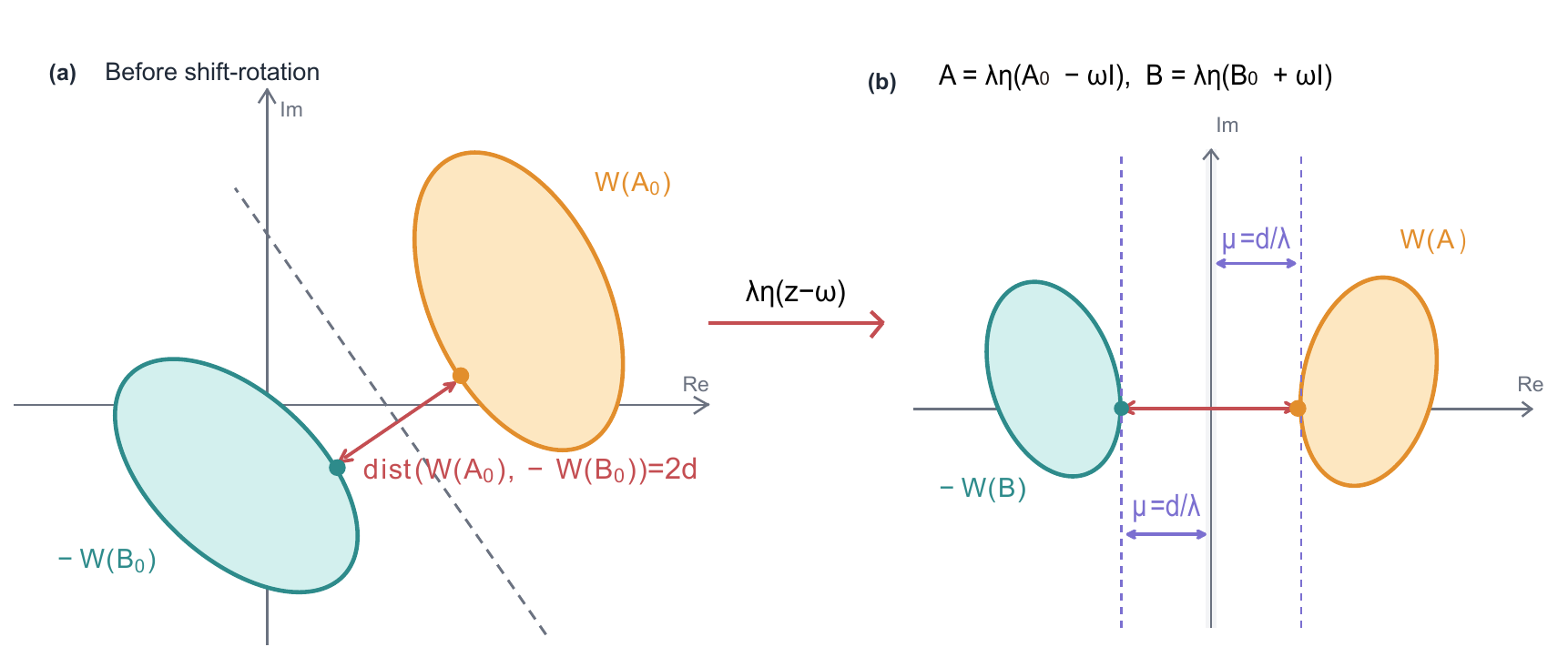} 
  \caption{shift-rotation and common scaling}
  \label{fig:my_image}
\end{figure}

\begin{assumption}[Strip-resolvent regime $(\mathrm{R2})$]\label{ass:R2}
The transformed augmented matrix $M$ satisfies $\spec(M)\cap i\R = \varnothing$, $\norm{M}\le 1$, and there exist $a\in(0,1)$ and a known finite constant
\begin{equation}\label{eq:strip-gamma}
\gamma \ge \gamma_{\Omega_a}(M) := \sup_{z\in\Omega_a} \norm{(z\I-M)^{-1}},
\qquad
\Omega_a := \set{ z\in\C \suchthat \abs{\RePart z}\le a }.
\end{equation}
\end{assumption}

In regime $(\mathrm{R1})$ the strip constant is not an additional black-box input: \cref{lem:R1-strip-certificate} later shows that the field-of-values gap assumptions already imply the explicit bound
\[
\gamma_{\Omega_a}(M)\le \frac{2}{\mu-a}+\frac{\norm{C}}{(\mu-a)^2},
\qquad 0<a<\mu.
\]
Thus the sign-approximation theorem can be invoked in regime $(\mathrm{R1})$ with a concrete $\gamma$ depending only on $\mu$, $a$, and $\norm{C}$; in particular, the FoV regime automatically generates the strip data needed by the general non-normal sign theorem.

\section{Sign embedding representation for ordinary Sylvester equations}\label{sec:sign-reduction}
The ordinary Sylvester equation is the main model problem of the paper. This section develops the sign representation that identifies the solution operator with a specific sign block of an augmented matrix. Later sections reuse the same pattern with different augmented matrices.

\subsection{Half-plane sign}
\begin{definition}[Half-plane sign]
For a scalar $z\in\C$ with $\RePart z \neq 0$, define
\[
\signf(z) = \begin{cases}
+1, & \RePart z > 0,\\
-1, & \RePart z < 0.
\end{cases}
\]
If $T$ is a matrix with $\spec(T)\cap i\R = \varnothing$, then $\signf(T)$ is defined by holomorphic functional calculus on any domain containing $\spec(T)$ and disjoint from $i\R$.
\end{definition}

\begin{remark}
Equivalently,
\[
\signf(T) = T(T^2)^{-1/2},
\]
where the square root uses the principal branch.
In particular, $\signf(T)^2 = \I$ and $\signf(T)$ commutes with $T$.
\end{remark}

\subsection{Higham's sign function identity}
\begin{theorem}[Higham sign function identity]\label{thm:higham}
Let
\[
M = \begin{bmatrix} A & C \\ 0 & -B \end{bmatrix}.
\]
Assume that $\spec(A)\subset \set{z\suchthat \RePart z>0}$ and $\spec(-B)\subset \set{z\suchthat \RePart z<0}$.
Then
\[
\signf(M) = \begin{bmatrix} \I & 2X \\ 0 & -\I \end{bmatrix},
\]
where $X$ is the unique solution of \eqref{eq:sylvester}.
\end{theorem}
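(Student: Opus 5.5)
The plan is to block-diagonalize $M$ by an explicit similarity built from the solution $X$, and then use the fact that the holomorphic functional calculus commutes with similarity.

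First, existence and uniqueness of $X$. The spectra of $A$ and $-B$ lie in opposite open half-planes, so $\spec(A)\cap\spec(-B)=\varnothing$. Sylvester's classical theorem then says $S_{A,B}$ is invertible, because its eigenvalues are $\lambda_i(A)+\lambda_j(B)$, none of which vanish. This gives the unique solution $X$ of \eqref{eq:sylvester}. It also shows $\spec(M)=\spec(A)\cup\spec(-B)$ avoids $i\R$, so $\signf(M)$ is well defined.

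Second, the similarity. Set
\[
S=\begin{bmatrix}\I&X\\0&\I\end{bmatrix},\qquad S^{-1}=\begin{bmatrix}\I&-X\\0&\I\end{bmatrix}.
\]
A direct block multiplication gives
\[
S\begin{bmatrix}A&0\\0&-B\end{bmatrix}S^{-1}=\begin{bmatrix}A&-AX-XB\\0&-B\end{bmatrix}.
\]
So I will use the sign convention that makes the off-diagonal block equal to $C$. The identity $\begin{bmatrix}A&C\\0&-B\end{bmatrix}=S\,\diag(A,-B)\,S^{-1}$ requires $-AX-XB=C$. The theorem's normalization $2X$ therefore corresponds to the similarity with $S^{-1}$ and $S$ swapped: with $T=\begin{bmatrix}\I&-X\\0&\I\end{bmatrix}$ one gets $T\diag(A,-B)T^{-1}$ with off-diagonal block $AX+XB=C$. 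I will carry the sign carefully at this step.

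Third, transfer the sign function. The functional calculus satisfies $f(TDT^{-1})=Tf(D)T^{-1}$, and on block-diagonal matrices it acts blockwise. Hence $\signf(\diag(A,-B))=\diag(\signf(A),\signf(-B))=\diag(\I,-\I)$, since each block has spectrum in a single half-plane where $\signf$ is constant. Multiplying out,
\[
T\begin{bmatrix}\I&0\\0&-\I\end{bmatrix}T^{-1}=\begin{bmatrix}\I&-X\\0&\I\end{bmatrix}\begin{bmatrix}\I&X\\0&-\I\end{bmatrix}=\begin{bmatrix}\I&2X\\0&-\I\end{bmatrix}.
\]
This is the claimed formula.

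The only delicate point is the sign bookkeeping in the similarity. I expect this to be the main obstacle, purely as a matter of care rather than difficulty: the conjugating matrix must be $T=\begin{bmatrix}\I&-X\\0&\I\end{bmatrix}$, not $S$, so that its off-diagonal block is $AX+XB=C$ and the $(1,2)$ entry of $\signf(M)$ comes out as $+2X$. Everything else is standard functional-calculus bookkeeping.
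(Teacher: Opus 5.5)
Your proof is correct, but it takes a different route from the paper.

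\textbf{Your route.} You first obtain $X$ from the invertibility of $S_{A,B}$. You then check that $T=\begin{bmatrix}\I&-X\\0&\I\end{bmatrix}$ satisfies $M=T\,\diag(A,-B)\,T^{-1}$, and transport $\signf$ through this similarity. The block $2X$ then falls out of an explicit product. Your sign bookkeeping is right: $S$ produces the off-diagonal block $-AX-XB$, and $T$ produces $AX+XB=C$. In a write-up you can drop the detour through $S$ and state $M=T\,\diag(A,-B)\,T^{-1}$ directly.

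\textbf{The paper's route.} The paper never block-diagonalizes $M$. It uses that $\signf(M)$ inherits the block upper triangular structure of $M$, with diagonal blocks $\signf(A)=\I$ and $\signf(-B)=-\I$. It writes the unknown $(1,2)$ block as $Y$ and compares the $(1,2)$ blocks of $M\signf(M)=\signf(M)M$. This gives $AY+YB=2C$, and uniqueness then yields $Y=2X$. This is the block Parlett-recurrence mechanism.

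\textbf{What each buys.} The commutation argument does not need a solution in hand; it actually produces one, namely $Y/2$. It also extends verbatim to other functions of block triangular matrices. Your similarity argument is more constructive and slightly more elementary. It uses only similarity invariance and the blockwise action of the functional calculus on block-diagonal matrices, not the fact that functions of block triangular matrices are block triangular with diagonal blocks $f(A)$ and $f(-B)$. As a bonus, it exhibits the spectral projectors $(\I\pm\signf(M))/2$ explicitly as $T\,\diag(\I,0)\,T^{-1}$ and $T\,\diag(0,\I)\,T^{-1}$. Both proofs ultimately rest on the same fact, $\spec(A)\cap\spec(-B)=\varnothing$: you use it for existence of $X$, the paper for uniqueness.
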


\begin{proof}
Because $M$ is block upper triangular and $\signf$ is holomorphic near $\spec(M)$, the matrix $\signf(M)$ is block upper triangular with diagonal blocks $\signf(A)=\I$ and $\signf(-B)=-\I$.
So
\[
\signf(M) = \begin{bmatrix} \I & Y \\ 0 & -\I \end{bmatrix}
\]
for some $Y$.
Since $\signf(M)$ is a holomorphic function of $M$, it commutes with $M$.
Comparing the $(1,2)$-blocks of $M\signf(M)$ and $\signf(M)M$ gives
\[
AY + YB = 2C.
\]
Hence $Y=2X$, where $X$ solves $AX+XB=C$.
Uniqueness follows from the half-plane separation, which implies $\spec(A)\cap(-\spec(B))=\varnothing$.
\end{proof}

\subsection{Shift-rotation invariance}
\begin{lemma}[Shift-rotation invariance]\label{lem:shift-rotation}
Let $\eta\in\C$ satisfy $\abs{\eta}=1$, and let $\omega\in\C$, $\lambda>0$.
Define
\[
A' := \lambda\eta(A-\omega\I),\qquad B' := \lambda\eta(B+\omega\I),\qquad C' := \lambda\eta C.
\]
Then $X$ solves $AX+XB=C$ if and only if it solves $A'X+XB'=C'$.
Moreover, if
\[
M' := \begin{bmatrix} A' & C' \\ 0 & -B' \end{bmatrix},
\]
then $M' = \lambda\eta(M-\omega \I)$.
In particular, whenever $\signf(M')$ is defined and its diagonal blocks lie in opposite open half-planes, the $(1,2)$-block of $\signf(M')$ is again $2X$.
\end{lemma}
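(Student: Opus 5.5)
The proof is a direct algebraic verification in three parts, followed by an appeal to \cref{thm:higham}.

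\emph{Equivalence of the two equations.} We compute
\[
A'X+XB' = \lambda\eta\bigl((A-\omega\I)X + X(B+\omega\I)\bigr) = \lambda\eta\,(AX+XB),
\]
since the terms $-\omega X$ and $+\omega X$ cancel. Because $\lambda\eta\neq 0$, the identity $A'X+XB'=C'=\lambda\eta C$ holds if and only if $AX+XB=C$. This gives the claimed equivalence in both directions.

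\emph{The augmented matrix.} We expand $\lambda\eta(M-\omega\I)$ blockwise. The $(1,1)$ block is $\lambda\eta(A-\omega\I)=A'$. The $(1,2)$ block is $\lambda\eta C=C'$. The $(2,1)$ block is $0$. The $(2,2)$ block is $\lambda\eta(-B-\omega\I)=-\lambda\eta(B+\omega\I)=-B'$. Hence $M'=\lambda\eta(M-\omega\I)$.

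\emph{The sign block.} The phrase ``diagonal blocks lie in opposite open half-planes'' will be read as $\spec(A')\subset\{\RePart z>0\}$ and $\spec(-B')\subset\{\RePart z<0\}$. Under that reading, \cref{thm:higham} applies directly to $M'$. It gives
\[
\signf(M')=\begin{bmatrix}\I & 2X'\\ 0 & -\I\end{bmatrix},
\]
where $X'$ is the unique solution of $A'X'+X'B'=C'$. By the first part, $X'=X$.

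\emph{Uniqueness.} Separation gives $\spec(A')\cap(-\spec(B'))=\varnothing$, so the primed equation has a unique solution. Since $\spec(A')=\lambda\eta(\spec(A)-\omega)$ and $-\spec(B')=\lambda\eta(-\spec(B)-\omega)$, this disjointness is equivalent to $\spec(A)\cap(-\spec(B))=\varnothing$. So the unprimed equation also has a unique solution, and it coincides with $X'$.

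\emph{Main obstacle.} Nothing here is genuinely hard. The only point needing care is interpreting the hypothesis on the diagonal blocks so that \cref{thm:higham} applies verbatim. Block upper triangularity gives $\spec(M')=\spec(A')\cup\spec(-B')$, and the half-plane placement rules out any eigenvalue on $i\R$, so $\signf(M')$ is indeed defined.
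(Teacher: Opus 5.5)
Your proof is correct and follows the paper's route exactly: the cancellation $(A-\omega\I)X+X(B+\omega\I)=AX+XB$ scaled by $\lambda\eta\neq 0$, the blockwise check of $M'=\lambda\eta(M-\omega\I)$, and an appeal to \cref{thm:higham} for $M'$. Your two additions go beyond the paper. You fix the orientation explicitly ($\spec(A')$ in the right half-plane, $\spec(-B')$ in the left), which matters because the opposite orientation would give a $(1,2)$-block of $-2X$. You also transfer uniqueness through the bijection $z\mapsto\lambda\eta(z-\omega)$.
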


\begin{proof}
The equation
\[
(A-\omega\I)X + X(B+\omega\I) = AX+XB
\]
is immediate, and multiplying by $\eta$ gives the transformed Sylvester equation.
The block identity for $M'$ is obvious.
The last statement then follows from \cref{thm:higham}.
\end{proof}

\section{Log-sinc approximation of the sign operator}\label{sec:sinc}
This section provides the analytic core of the framework: a logarithmic-sinc approximation theorem for the half-plane sign. We state the asymptotic form first and collect the explicit constants and tail estimates afterwards. The same approximation result is then reused in the ordinary Sylvester, generalized Sylvester, square-root, geometric-mean, and CARE sections.

The logarithmic-sinc formula is the natural discretization here for two reasons. Analytically, after the substitution $t=e^x$ the sign integral becomes an integral of a holomorphic operator-valued function on a horizontal strip, with exponentially decaying tails as $x\to\pm\infty$. This is precisely the regime in which the trapezoidal rule enjoys exponential convergence, and the proof below makes that convergence explicit by combining contour shifts with Poisson-summation-type bounds. Algorithmically, the geometric nodes $t_k=e^{kh}$ match the scaled identities used later for shifted inverse families, so the resulting LCU weights remain bounded in the Sylvester, square-root, and geometric-mean constructions.

\subsection{Integral formula and logarithmic resolvent sum}
For scalars with nonzero real part,
\[
\signf(z) = \frac{2}{\pi}\int_0^{\infty} \frac{z}{z^2+t^2}\,dt.
\]
Hence, if $\spec(M)\cap i\R = \varnothing$, then holomorphic functional calculus yields
\begin{equation}\label{eq:sign-integral}
\signf(M) = \frac{2}{\pi}\int_0^{\infty} M(M^2+t^2\I)^{-1} \, dt.
\end{equation}
Using the partial fraction identity
\[
M(M^2+t^2\I)^{-1} = \half\paren{ (M-it\I)^{-1} + (M+it\I)^{-1} },
\]
we can rewrite the sign integral on the logarithmic line.
For $z\in\C$, define
\begin{equation}\label{eq:FM-def}
F_M(z)
:=
\frac1\pi e^z\Big((M-i e^z \I)^{-1}+(M+i e^z \I)^{-1}\Big)
=
\frac{2}{\pi}e^z M(M^2+e^{2z}\I)^{-1}.
\end{equation}
For a mesh width $h>0$ and an integer $K\ge 1$, define the logarithmic sinc nodes
\begin{equation}\label{eq:logsinc-nodes}
t_k := e^{kh},\qquad k=-K,-K+1,\dots,K,
\end{equation}
and the truncated log-sinc approximant
\begin{equation}\label{eq:SKh}
S_{K,h}(M)
:=
h\sum_{k=-K}^{K}F_M(kh)
=
\frac{h}{\pi}\sum_{k=-K}^{K} t_k
\Big((M-it_k\I)^{-1}+(M+it_k\I)^{-1}\Big).
\end{equation}

The next theorem is the form used throughout the rest of the paper. The explicit constants and the full strip/tail estimate are recorded in \cref{prop:sign-approx-explicit}.

\begin{theorem}[Log-sinc sign approximation]\label{thm:sign-approx}
Assume $\norm{M}\le 1$, fix $a\in(0,1)$, and let
\[
\gamma \ge \gamma_{\Omega_a}(M):=\sup_{z\in\Omega_a}\norm{(z\I-M)^{-1}}<\infty,
\qquad
\Omega_a:=\set{z\in\C\suchthat \abs{\RePart z}\le a}.
\]
Choose any $\beta\in(0,\arcsin a)$.
Then there is an explicit constant $C_{\mathrm{sgn}}(\beta,a,\gamma)>0$, given in \cref{prop:sign-approx-explicit}, such that under the balanced rule \eqref{eq:balanced-h},
\begin{equation}\label{eq:sign-approx-clean}
\norm{\signf(M)-S_{K,h}(M)}
\le
C_{\mathrm{sgn}}(\beta,a,\gamma)\,e^{-\sqrt{2\pi\beta K}}.
\end{equation}
Consequently, to guarantee deterministic sign-approximation error at most $\eps_{\mathrm{sgn}}$, it suffices to take
\begin{equation}\label{eq:Kchoice-sinc-asymptotic}
K = O\!\paren{\frac{1}{\beta}\log^2\frac{C_{\mathrm{sgn}}(\beta,a,\gamma)}{\eps_{\mathrm{sgn}}}},
\end{equation}
and the explicit sufficient inequality used later is \eqref{eq:Kchoice-sinc}.
\end{theorem}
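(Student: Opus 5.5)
The plan is to write $\signf(M)=\int_{\R}F_M(x)\,dx$ and split the error as
\[
\signf(M)-S_{K,h}(M)=\Big(\int_{\R}F_M(x)\,dx-h\sum_{k\in\mathbb Z}F_M(kh)\Big)+h\sum_{\abs{k}>K}F_M(kh),
\]
that is, a discretization error of the infinite trapezoidal rule plus a truncation error. The identity $\signf(M)=\int_{\R}F_M$ follows from \eqref{eq:sign-integral} and the substitution $t=e^x$. Each piece is then bounded by a multiple of $e^{-c/h}$ or $e^{-cKh}$, and $h$ is balanced so that both exponents equal $\sqrt{2\pi\beta K}$. This balancing is the rule $h=\sqrt{2\pi\beta/K}$, which I take to be \eqref{eq:balanced-h}.

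\textbf{Step 1 (analyticity on a strip).} For $z=x+iy$ with $\abs{y}\le\beta$, the shifts $\pm ie^{z}=\pm ie^{x}e^{iy}$ have real part $\mp e^x\sin y$, with $\abs{e^x\sin y}\le e^x\sin\beta$. When $e^x\sin\beta\le a$, the points $\pm ie^z$ lie in $\Omega_a$, so $\norm{(M\mp ie^z\I)^{-1}}\le\gamma$. When $e^x\ge 2$, the Neumann series with $\norm{M}\le1$ gives $\norm{(M\mp ie^z\I)^{-1}}\le 1/(e^x-1)\le 2e^{-x}$. These two regimes overlap, because $\beta<\arcsin a$ gives $2\sin\beta<2a$; any remaining gap in $x$ is bounded compactly. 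I expect the messy case analysis to sit here, and I would handle it by covering $\{e^x\le a/\sin\beta\}\cup\{e^x\ge 2\}$. If $a/\sin\beta<2$, the middle range needs a separate argument: there $\abs{e^z}\ge a/\sin\beta>1\ge\norm{M}$, and since $1<a/\sin\beta$, Neumann still applies with the constant $1/(a/\sin\beta-1)$. So $F_M$ is holomorphic on $D_\beta=\{\abs{\ImPart z}<\beta\}$.

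\textbf{Step 2 (decay).} From the first form of \eqref{eq:FM-def}, $\norm{F_M(z)}\le\frac{2\gamma}{\pi}e^{x}$ as $x\to-\infty$, uniformly on the strip. From the second form, for large $x$ one has $e^zM(M^2+e^{2z})^{-1}=e^{-z}M(\I+e^{-2z}M^2)^{-1}$, so $\norm{F_M(z)}\le C e^{-x}$. Hence $N(F_M,D_\beta):=\sup_{\abs y<\beta}\int_{\R}\norm{F_M(x+iy)}\,dx$ is finite and explicitly bounded by a quantity of the form $C_1(\beta,a)\gamma$ (plus a $\beta,a$-dependent constant).

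\textbf{Step 3 (discretization).} I apply the standard trapezoidal estimate on a strip (Stenger/Trefethen–Weideman), proved by a contour shift to $\ImPart z=\pm\beta$ or by Poisson summation. Its operator-valued version follows by pairing with unit vectors $u^*F_M v$, and gives
\[
\Big\|\int_{\R}F_M-h\sum_{k}F_M(kh)\Big\|\le \frac{2N}{e^{2\pi\beta/h}-1}.
\]

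\textbf{Step 4 (truncation).} Using the real-line decay $\norm{F_M(x)}\le c\,e^{-\abs{x}}$ with $c\lesssim\gamma$, a geometric sum gives $h\sum_{\abs k>K}\norm{F_M(kh)}\le 2c\,h\,e^{-(K+1)h}/(1-e^{-h})\le C_2 e^{-Kh}$.

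Finally, substituting $h=\sqrt{2\pi\beta/K}$ makes $2\pi\beta/h=Kh=\sqrt{2\pi\beta K}$, which yields \eqref{eq:sign-approx-clean} with $C_{\mathrm{sgn}}$ collecting $N$ and $C_2$. Inverting $C_{\mathrm{sgn}}e^{-\sqrt{2\pi\beta K}}\le\eps_{\mathrm{sgn}}$ gives $K\ge\frac{1}{2\pi\beta}\log^2(C_{\mathrm{sgn}}/\eps_{\mathrm{sgn}})$, which is \eqref{eq:Kchoice-sinc-asymptotic}.

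The main obstacle is Step 1–2: obtaining uniform strip bounds on $F_M$ across the transition region near $e^x\sim 1$ with explicit constants. My first attempt would be the two-region cover described in Step 1, choosing the split point $e^{x_0}=a/\sin\beta$.
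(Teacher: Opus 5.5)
Your proposal is correct and follows essentially the same route as the paper. Both arguments use the split at $e^{x}=a/\sin\beta>1$ (resolvent bound $\gamma$ on the left, Neumann series on the right), an $L^1$ strip bound feeding a contour-shift/Poisson-summation estimate of size $(e^{2\pi\beta/h}-1)^{-1}$ for the infinite trapezoidal rule, separate exponential tail bounds for the truncation, and the balanced step $h=\sqrt{2\pi\beta/K}$.

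Two details need tightening. First, converting $(e^{s}-1)^{-1}$ into $e^{-s}$ requires $s\ge\log 2$. The paper imposes exactly this, which is why \eqref{eq:Kchoice-sinc} contains $\max\{(\log 2)^2,\cdot\}$; alternatively you can absorb a factor $(1-e^{-\sqrt{2\pi\beta}})^{-1}$ into the constant. Second, the ``overlap'' claim in Step~1 does not follow from $\sin\beta<a$. It is also unnecessary: $a/\sin\beta>1$ already makes the Neumann bound valid on all of $e^{x}\ge a/\sin\beta$, which is the paper's split.
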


\begin{proof}
\Cref{prop:sign-approx-explicit}(3) gives \eqref{eq:sign-approx-clean} with
$C_{\mathrm{sgn}}(\beta,a,\gamma)=\wt C_{\beta,a,\gamma}$.
Solving
\[
\wt C_{\beta,a,\gamma}e^{-\sqrt{2\pi\beta K}}\le \eps_{\mathrm{sgn}}
\]
for $K$ yields \eqref{eq:Kchoice-sinc-asymptotic}, while \eqref{eq:Kchoice-sinc} is the corresponding explicit sufficient choice.
\end{proof}

\begin{proposition}[Explicit strip and tail bounds for the log-sinc rule]\label{prop:sign-approx-explicit}
Assume $\norm{M}\le 1$, fix $a\in(0,1)$, and let
\[
\gamma \ge \gamma_{\Omega_a}(M):=\sup_{z\in\Omega_a}\norm{(z\I-M)^{-1}}<\infty,
\qquad
\Omega_a:=\set{z\in\C\suchthat \abs{\RePart z}\le a}.
\]
Fix any
\[
0<\beta<\arcsin a,
\qquad
\Sigma_\beta := \set{z\in\C\suchthat \abs{\ImPart z}<\beta}.
\]
Define
\begin{equation}\label{eq:Cbetaa}
C_{\beta,a,\gamma}
:=
\frac{4}{\pi}\left(\frac{a\gamma}{\sin\beta}
+\frac{\sin\beta}{a-\sin\beta}\right)
\end{equation}
and
\begin{equation}\label{eq:eps-sgn-sinc}
\eps_{\mathrm{sgn}}^{\mathrm{sinc}}(K,h;\beta,a,\gamma)
:=
\frac{C_{\beta,a,\gamma}}{e^{2\pi\beta/h}-1}
+\frac{2\gamma}{\pi}e^{-Kh}
+\frac{2}{\pi(e^{Kh}-1)}.
\end{equation}
Then the following hold.
\begin{enumerate}[leftmargin=2em]
\item The function $F_M$ is holomorphic on $\Sigma_\beta$, and
\begin{equation}\label{eq:sign-as-log-integral}
\signf(M)=\int_{-\infty}^{\infty}F_M(x)\,dx.
\end{equation}
\item For every $h>0$ and $K\ge 1$,
\begin{equation}\label{eq:sign-approx-bound}
\norm{\signf(M)-S_{K,h}(M)}
\le
\eps_{\mathrm{sgn}}^{\mathrm{sinc}}(K,h;\beta,a,\gamma).
\end{equation}
\item If one chooses the balanced step
\begin{equation}\label{eq:balanced-h}
h=\sqrt{\frac{2\pi\beta}{K}},
\qquad
s:=Kh=\sqrt{2\pi\beta K},
\end{equation}
and if $s\ge \log 2$, then
\begin{equation}\label{eq:balanced-sign-bound}
\norm{\signf(M)-S_{K,h}(M)}
\le
\wt C_{\beta,a,\gamma}e^{-s},
\end{equation}
where
\begin{equation}\label{eq:Ctilde}
\wt C_{\beta,a,\gamma}
:=
2C_{\beta,a,\gamma}+\frac{2\gamma+4}{\pi}
=
\frac{8}{\pi}\left(\frac{a\gamma}{\sin\beta}
+\frac{\sin\beta}{a-\sin\beta}\right)
+\frac{2\gamma+4}{\pi}.
\end{equation}
Consequently, it suffices to choose
\begin{equation}\label{eq:Kchoice-sinc}
K\ge
\left\lceil
\frac{1}{2\pi\beta}
\max\!\left\{
(\log 2)^2,
\log^2\!\left(1+\frac{\wt C_{\beta,a,\gamma}}{\eps_{\mathrm{sgn}}}\right)
\right\}
\right\rceil
\end{equation}
and $h=\sqrt{2\pi\beta/K}$ to ensure
\[
\norm{\signf(M)-S_{K,h}(M)}\le \eps_{\mathrm{sgn}}.
\]
\end{enumerate}
\end{proposition}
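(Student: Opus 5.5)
The proof splits into three parts, one per item, and rests on the strip-resolvent bound on $\Omega_a$.

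\emph{Item (1): holomorphy and the integral formula.} For $z=x+iy$ with $\abs{y}<\beta$ we have $ie^z=e^x(i\cos y-\sin y)$, so $\abs{\RePart(\pm ie^z)}=e^x\abs{\sin y}<e^x\sin\beta$. The shifted resolvents $(M\mp ie^z\I)^{-1}$ are therefore controlled as follows.
\begin{itemize}
\item When $e^x\sin\beta\le a$, the points $\pm ie^z$ lie in $\Omega_a$, so the resolvents are bounded by $\gamma$.
\item When $e^x\sin\beta> a$, I use instead that $\abs{e^z}=e^x$ is large. If $e^x>1\ge\norm{M}$, the Neumann series gives $\norm{(M\mp ie^z\I)^{-1}}\le 1/(e^x-1)$. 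Since $\sin\beta<a<1$, the regime $e^x>a/\sin\beta$ already forces $e^x>1$, and then $e^x-1\ge e^x(a-\sin\beta)/a$.
\end{itemize}
Hence $F_M$ is holomorphic on $\Sigma_\beta$. The identity $\signf(M)=\int F_M$ follows from \eqref{eq:sign-integral}, the partial-fraction identity, and the substitution $t=e^x$.

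\emph{Item (2): splitting the error.} I split
\[
\signf(M)-S_{K,h}(M)=\Big(\int F_M-h\sum_{k\in\mathbb Z}F_M(kh)\Big)+h\sum_{\abs{k}>K}F_M(kh).
\]

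The first term is the standard trapezoidal error on a strip. By the Poisson summation / contour-shift argument (Stenger's theorem), it is bounded by $\frac{N(F,\Sigma_\beta)}{e^{2\pi\beta/h}-1}$ up to the convention for the boundary integral. Here $N$ is the sum of the $L^1$ norms of $F_M$ on the lines $\ImPart z=\pm\beta'$, with $\beta'\uparrow\beta$.

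I bound $N$ piecewise using the two resolvent regimes.
\begin{itemize}
\item For $e^x\le a/\sin\beta$: $\norm{F_M}\le \frac{2}{\pi}e^x\gamma$, which integrates to at most $\frac{2}{\pi}\frac{a\gamma}{\sin\beta}$.
\item For $e^x>a/\sin\beta$: $\norm{F_M}\le\frac{2}{\pi}\frac{e^x}{e^x-1}$. This is not integrable as written, so I will use the combined form $\frac2\pi e^zM(M^2+e^{2z})^{-1}$ instead. Its norm is at most about $\frac{2}{\pi}e^x/(e^{2x}-1)$, which decays like $e^{-x}$. The tail integral then gives a constant of order $\frac{\sin\beta}{a-\sin\beta}$, using $e^{x}-1\ge e^x(a-\sin\beta)/a$ from the threshold.
\end{itemize}
Summing over both lines gives the factor $\frac4\pi$ in $C_{\beta,a,\gamma}$. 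Getting exactly these constants is the main bookkeeping obstacle, and I would adjust the split point to match \eqref{eq:Cbetaa}.

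The tails are handled on the real axis.
\begin{itemize}
\item Negative side ($k<-K$): $\norm{(M\mp it\I)^{-1}}\le\gamma$ since $\pm it\in\Omega_a$. So $h\sum_{k<-K}\frac{2}{\pi}\gamma e^{kh}\le\frac{2\gamma}{\pi}e^{-Kh}$, because $he^{-h}/(1-e^{-h})\le1$.
\item Positive side ($k>K$): $\norm{F_M(kh)}\le\frac2\pi\frac{e^{kh}}{e^{2kh}-1}$. Comparing the sum with the integral $\int_{Kh}^\infty \frac{2}{\pi}\frac{dx}{e^x-1}$ (the integrand is monotone) gives a bound of the form $\frac{2}{\pi(e^{Kh}-1)}$.
\end{itemize}

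\emph{Item (3): balanced step.} Put $h=\sqrt{2\pi\beta/K}$, so that $2\pi\beta/h=s=Kh$. For $s\ge\log2$ we have $e^s-1\ge e^s/2$, so each term of \eqref{eq:eps-sgn-sinc} is at most its coefficient times $2e^{-s}$ or $e^{-s}$. This yields $\wt C_{\beta,a,\gamma}e^{-s}$ in \eqref{eq:balanced-sign-bound}.

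Finally, requiring $\wt C e^{-s}\le\eps_{\mathrm{sgn}}$ together with $s\ge\log 2$ and solving $s=\sqrt{2\pi\beta K}$ for $K$ gives \eqref{eq:Kchoice-sinc}. The $\log(1+\cdot)$ form is a harmless strengthening.
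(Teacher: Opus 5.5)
Your proposal is correct and follows essentially the paper's route: holomorphy on $\Sigma_\beta$ from the strip-resolvent/Neumann dichotomy, a split of the boundary integrals at $x_\beta=\log(a/\sin\beta)$ (the $\gamma$-bound below it, Neumann decay above it), the Poisson/contour-shift bound $(B_++B_-)/(e^{2\pi\beta/h}-1)$ for the infinite trapezoidal sum, real-axis tail estimates, and the balanced step $2\pi\beta/h=Kh$. The differences are cosmetic: you cite Stenger's theorem where the paper reproves the Poisson argument, and for large $x$ you use $M(M^2+e^{2z}\I)^{-1}$, which gives $e^x/(e^{2x}-1)$, where the paper uses the factored form $(M-ie^z\I)^{-1}M(M+ie^z\I)^{-1}$, which gives $e^x/(e^x-1)^2$. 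No adjustment of the split point is needed: $e^x/(e^{2x}-1)\le 1/(e^x-1)\le a e^{-x}/(a-\sin\beta)$ integrated over $[x_\beta,\infty)$ gives exactly $\sin\beta/(a-\sin\beta)$, so $C_{\beta,a,\gamma}$ comes out exactly.
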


\begin{proof}
All operator-valued contour integrals and Fourier transforms can be performed entrywise.
Because the matrix space is finite-dimensional, the resulting statements are equivalent to the corresponding operator-valued ones, and
\[
\norm{\int G(z)\,dz}\le \int \norm{G(z)}\,|dz|
\]
is available throughout.\\
\smallskip
\noindent\emph{Integral representation on the logarithmic line.}
From \eqref{eq:sign-integral} and the identity
\[
M(M^2+t^2\I)^{-1}
=
\half\Big((M-it\I)^{-1}+(M+it\I)^{-1}\Big),
\]
substitute $t=e^x$, $dt=e^x\,dx$, to obtain
\[
\signf(M)
=
\frac1\pi\int_{-\infty}^{\infty}
 e^x\Big((M-i e^x\I)^{-1}+(M+i e^x\I)^{-1}\Big)\,dx
=
\int_{-\infty}^{\infty}F_M(x)\,dx.
\]
This proves \eqref{eq:sign-as-log-integral}.

\smallskip
\noindent\emph{Holomorphy of $F_M$ on $\Sigma_\beta$.}
It suffices to prove that $M\pm i e^z\I$ is invertible for every $z=x+iy\in\Sigma_\beta$.
Suppose, to the contrary, that $M\pm i e^z\I$ is singular.
Then there exists $\lambda\in\spec(M)$ such that $\lambda=\mp i e^z$, hence
\[
\abs{\RePart \lambda}=e^x\abs{\sin y}=\abs{\lambda}\,\abs{\sin y}
\le \abs{\lambda}\sin\beta
\le \norm{M}\sin\beta
\le \sin\beta
< a.
\]
But $\gamma<\infty$ implies $\spec(M)\cap\Omega_a=\varnothing$, a contradiction.
Therefore $F_M$ is holomorphic on $\Sigma_\beta$.

\smallskip
\noindent\emph{Two pointwise bounds.}
Let $z=x+iy\in\Sigma_\beta$.
If
\begin{equation}\label{eq:strip-case}
e^x\abs{\sin y}\le a,
\end{equation}
then $\abs{\RePart(\pm i e^z)}=e^x\abs{\sin y}\le a$, so $\pm i e^z\in\Omega_a$.
Hence
\[
\norm{(M\mp i e^z\I)^{-1}} = \norm{(\pm i e^z\I-M)^{-1}}\le \gamma,
\]
and therefore
\begin{equation}\label{eq:FM-bound-A}
\norm{F_M(z)}
\le
\frac1\pi e^x\Big(\gamma+\gamma\Big)
=
\frac{2\gamma}{\pi}e^x.
\end{equation}
If instead $e^x>1$, then by
\[
(M-it\I)^{-1}+(M+it\I)^{-1}=2(M-it\I)^{-1}M(M+it\I)^{-1}
\]
with $t=e^z$,
\[
F_M(z)=\frac{2}{\pi}e^z(M-i e^z\I)^{-1}M(M+i e^z\I)^{-1}.
\]
Using $\norm{M}\le 1$ and the Neumann estimate
\[
\norm{(M\pm i e^z\I)^{-1}}\le \frac{1}{|e^z|-\norm{M}}\le \frac{1}{e^x-1},
\]
we obtain
\begin{equation}\label{eq:FM-bound-B}
\norm{F_M(z)}
\le
\frac{2}{\pi}e^x\frac{\norm{M}}{(e^x-1)^2}
\le
\frac{2}{\pi}\frac{e^x}{(e^x-1)^2}.
\end{equation}
In particular, $F_M|_{\R}\in L^1(\R)$, and the same bounds show that the boundary restrictions $F_M(x\pm i\beta)$ are integrable.

\smallskip
\noindent\emph{Boundary $L^1$ bounds on $\ImPart z=\pm\beta$.}
Set
\[
x_\beta:=\log\!\frac{a}{\sin\beta}>0.
\]
For
\[
B_+ := \int_{-\infty}^{\infty}\norm{F_M(x+i\beta)}\,dx,
\qquad
B_- := \int_{-\infty}^{\infty}\norm{F_M(x-i\beta)}\,dx,
\]
we split the integral at $x_\beta$.
If $x\le x_\beta$, then $e^x\sin\beta\le a$, so \eqref{eq:FM-bound-A} gives
\[
\norm{F_M(x\pm i\beta)}\le \frac{2\gamma}{\pi}e^x.
\]
If $x\ge x_\beta$, then $e^x>1$, so \eqref{eq:FM-bound-B} gives
\[
\norm{F_M(x\pm i\beta)}\le \frac{2}{\pi}\frac{e^x}{(e^x-1)^2}.
\]
Therefore
\begin{align}
B_\pm
&\le
\frac{2\gamma}{\pi}\int_{-\infty}^{x_\beta}e^x\,dx
+
\frac{2}{\pi}\int_{x_\beta}^{\infty}\frac{e^x}{(e^x-1)^2}\,dx \nonumber\\
&=
\frac{2a\gamma}{\pi\sin\beta}
+
\frac{2\sin\beta}{\pi(a-\sin\beta)}.
\label{eq:Bpm-bound}
\end{align}
Summing the two bounds yields
\begin{equation}\label{eq:Bpm-sum}
B_+ + B_-
\le
\frac{4}{\pi}\left(\frac{a\gamma}{\sin\beta}
+\frac{\sin\beta}{a-\sin\beta}\right)
=
C_{\beta,a,\gamma}.
\end{equation}

\smallskip
\noindent\emph{The infinite trapezoidal discretization error.}
Define the Fourier transform
\[
\widehat F_M(\xi):=\int_{-\infty}^{\infty} e^{-i\xi x}F_M(x)\,dx.
\]
Then $\widehat F_M(0)=\signf(M)$ by \eqref{eq:sign-as-log-integral}.
For $\xi>0$, shift the contour from the real axis to $\ImPart z=-\beta$.
The vertical sides vanish as the rectangle width tends to infinity, because \eqref{eq:FM-bound-B} gives $\norm{F_M(R+iy)}=O(e^{-R})$ as $R\to\infty$, while for sufficiently large $R$, \eqref{eq:FM-bound-A} gives $\norm{F_M(-R+iy)}=O(e^{-R})$.
Thus
\[
\widehat F_M(\xi)
=
 e^{-\beta\xi}
\int_{-\infty}^{\infty} e^{-i\xi x}F_M(x-i\beta)\,dx,
\qquad \xi>0,
\]
so
\begin{equation}\label{eq:Fhat-plus}
\norm{\widehat F_M(\xi)}\le e^{-\beta\xi}B_-,\qquad \xi>0.
\end{equation}
Similarly, for $\xi<0$, shifting to $\ImPart z=+\beta$ gives
\begin{equation}\label{eq:Fhat-minus}
\norm{\widehat F_M(\xi)}\le e^{-\beta|\xi|}B_+,\qquad \xi<0.
\end{equation}
Now define the $h$-periodic function
\[
G_h(x):=\sum_{m\in\mathbb Z}F_M(x+mh).
\]
The real-axis decay from part~3 implies absolute and uniform convergence on $[0,h]$, so $G_h$ is continuous and $h$-periodic.
Its Fourier coefficients are
\[
\frac1h\int_0^h G_h(x)e^{-2\pi i n x/h}\,dx
=
\frac1h\widehat F_M\!\left(\frac{2\pi n}{h}\right).
\]
Because \eqref{eq:Fhat-plus}--\eqref{eq:Fhat-minus} give exponential decay, the Fourier series converges absolutely, and evaluating it at $x=0$ yields the Poisson summation formula
\[
h\sum_{k\in\mathbb Z}F_M(kh)
=
\sum_{n\in\mathbb Z}\widehat F_M\!\left(\frac{2\pi n}{h}\right).
\]
Write
\[
S_h^{\infty}(M):=h\sum_{k\in\mathbb Z}F_M(kh).
\]
Then
\begin{align*}
\norm{\signf(M)-S_h^{\infty}(M)}
&=
\left\|\widehat F_M(0)-\sum_{n\in\mathbb Z}\widehat F_M\!\left(\frac{2\pi n}{h}\right)\right\|\\
&\le
\sum_{n=1}^{\infty}\left\|\widehat F_M\!\left(\frac{2\pi n}{h}\right)\right\|
+
\sum_{n=1}^{\infty}\left\|\widehat F_M\!\left(-\frac{2\pi n}{h}\right)\right\|\\
&\le
\sum_{n=1}^{\infty} e^{-2\pi\beta n/h}(B_-+B_+)\\
&=
\frac{B_+ + B_-}{e^{2\pi\beta/h}-1}.
\end{align*}
Using \eqref{eq:Bpm-sum}, we obtain
\begin{equation}\label{eq:infinite-sinc-error}
\norm{\signf(M)-S_h^{\infty}(M)}
\le
\frac{C_{\beta,a,\gamma}}{e^{2\pi\beta/h}-1}.
\end{equation}

\smallskip
\noindent\emph{Truncation from the infinite sum to $2K+1$ nodes.}
Since
\[
S_h^{\infty}(M)-S_{K,h}(M)
=
 h\sum_{|k|>K}F_M(kh),
\]
we have
\[
\norm{S_h^{\infty}(M)-S_{K,h}(M)}
\le
h\sum_{k=K+1}^{\infty}\norm{F_M(kh)}
+
 h\sum_{k=K+1}^{\infty}\norm{F_M(-kh)}.
\]
For the negative half-axis, \eqref{eq:FM-bound-A} on the real line gives
\[
\norm{F_M(-kh)}\le \frac{2\gamma}{\pi}e^{-kh},
\]
whence
\begin{equation}\label{eq:negative-tail-sinc}
h\sum_{k=K+1}^{\infty}\norm{F_M(-kh)}
\le
\frac{2\gamma}{\pi}
 h\sum_{k=K+1}^{\infty}e^{-kh}
\le
\frac{2\gamma}{\pi}\int_{Kh}^{\infty}e^{-x}\,dx
=
\frac{2\gamma}{\pi}e^{-Kh}.
\end{equation}
For the positive half-axis, define
\[
q(x):=\frac{e^x}{(e^x-1)^2},\qquad x>0.
\]
Then \eqref{eq:FM-bound-B} gives $\norm{F_M(kh)}\le \frac{2}{\pi}q(kh)$, and
\[
q'(x)= -\frac{e^x(e^x+1)}{(e^x-1)^3}<0,
\]
so $q$ is decreasing.  Hence
\begin{equation}\label{eq:positive-tail-sinc}
h\sum_{k=K+1}^{\infty}\norm{F_M(kh)}
\le
\frac{2}{\pi}h\sum_{k=K+1}^{\infty} q(kh)
\le
\frac{2}{\pi}\int_{Kh}^{\infty} q(x)\,dx
=
\frac{2}{\pi(e^{Kh}-1)}.
\end{equation}
Combining \eqref{eq:infinite-sinc-error}, \eqref{eq:negative-tail-sinc}, and \eqref{eq:positive-tail-sinc} proves \eqref{eq:sign-approx-bound}.

\smallskip
\noindent\emph{Balanced choice.}
Under \eqref{eq:balanced-h}, we have $2\pi\beta/h=Kh=s$.
Substituting into \eqref{eq:sign-approx-bound} gives
\[
\norm{\signf(M)-S_{K,h}(M)}
\le
\frac{C_{\beta,a,\gamma}}{e^s-1}
+\frac{2\gamma}{\pi}e^{-s}
+\frac{2}{\pi(e^s-1)}.
\]
If $s\ge \log 2$, then $e^s\ge 2$, so $(e^s-1)^{-1}\le 2e^{-s}$.
Therefore
\[
\norm{\signf(M)-S_{K,h}(M)}
\le
\left(2C_{\beta,a,\gamma}+\frac{2\gamma+4}{\pi}\right)e^{-s}
=
\wt C_{\beta,a,\gamma}e^{-s},
\]
which is \eqref{eq:balanced-sign-bound}.
Finally, if $s\ge \log(1+\wt C_{\beta,a,\gamma}/\eps_{\mathrm{sgn}})$, then
\[
\wt C_{\beta,a,\gamma}e^{-s}
\le
\wt C_{\beta,a,\gamma}
\left(1+\frac{\wt C_{\beta,a,\gamma}}{\eps_{\mathrm{sgn}}}\right)^{-1}
< \eps_{\mathrm{sgn}}.
\]
Since $s^2=2\pi\beta K$, this is exactly \eqref{eq:Kchoice-sinc}.
\end{proof}

\subsection{Resolvent factorization and the Sylvester approximant}
\begin{lemma}[Resolvent block factorization]\label{lem:res-factor}
Let
\[
M = \begin{bmatrix} A & C \\ 0 & -B \end{bmatrix},
\]
and let $z\in\C$ be such that $z\I-A$ and $z\I+B$ are invertible.
Then
\begin{equation}\label{eq:block-factor}
\bracks{(z\I-M)^{-1}}_{12} = (z\I-A)^{-1} C (z\I+B)^{-1}.
\end{equation}
\end{lemma}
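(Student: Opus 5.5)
The plan is to write down the inverse of the block upper-triangular matrix $z\I-M$ explicitly and then check it. We have
\[
z\I-M=\begin{bmatrix} z\I-A & -C\\ 0 & z\I+B\end{bmatrix}.
\]
Both diagonal blocks $z\I-A$ and $z\I+B$ are invertible by hypothesis, so this triangular matrix is invertible. The candidate inverse is
\[
R(z):=\begin{bmatrix} (z\I-A)^{-1} & (z\I-A)^{-1}C(z\I+B)^{-1}\\ 0 & (z\I+B)^{-1}\end{bmatrix}.
\]

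The key step is to verify $(z\I-M)R(z)=\I$ blockwise. The diagonal blocks give $\I$ immediately, and the $(2,1)$ block is $0$. The $(1,2)$ block is
\[
(z\I-A)(z\I-A)^{-1}C(z\I+B)^{-1}-C(z\I+B)^{-1}=0.
\]
For square finite-dimensional matrices a one-sided inverse is two-sided, so $R(z)=(z\I-M)^{-1}$. Reading off its $(1,2)$ block gives \eqref{eq:block-factor}.

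There is no real obstacle here. The only points worth stating explicitly are the sign convention, namely that the off-diagonal entry of $z\I-M$ is $-C$, which is what makes the product come out with a plus sign, and the fact that invertibility of $z\I-M$ follows from that of its diagonal blocks. As an alternative, one could solve $(z\I-M)\begin{bmatrix}u\\v\end{bmatrix}=\begin{bmatrix}0\\w\end{bmatrix}$ by back-substitution. This gives $v=(z\I+B)^{-1}w$ and $u=(z\I-A)^{-1}Cv$, which is the same formula.
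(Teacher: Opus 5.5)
Your proof is correct and follows the paper's route: the paper also writes down the explicit block upper-triangular inverse of $z\I-M$ and reads off its $(1,2)$ block. Your blockwise check of $(z\I-M)R(z)=\I$, and the remark on the $-C$ sign convention, only spell out what the paper leaves as ``direct block inversion.''
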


\begin{proof}
Direct block inversion gives
\[
(z\I-M)^{-1} = \begin{bmatrix}
(z\I-A)^{-1} & (z\I-A)^{-1} C (z\I+B)^{-1} \\
0 & (z\I+B)^{-1}
\end{bmatrix}.
\]
\end{proof}

Assume from now on that the transformed matrices satisfy the hypotheses of \cref{thm:higham}, so that $[\signf(M)]_{12}=2X$.
Define the unscaled log-sinc Sylvester approximant
\begin{equation}\label{eq:XKh-unscaled}
X_{K,h} := \frac{h}{2\pi}\sum_{k=-K}^{K} t_k \paren{
(A-it_k\I)^{-1} C (B+it_k\I)^{-1} + (A+it_k\I)^{-1} C (B-it_k\I)^{-1}
}.
\end{equation}

\begin{corollary}[Deterministic log-sinc approximation of the Sylvester solution]\label{cor:XN-error}
Assume the half-plane separation of \cref{thm:higham}, and let $M$ satisfy the hypotheses of \cref{thm:sign-approx} with parameters $(a,\beta,\gamma)$.
Then
\begin{equation}\label{eq:XN-error}
\norm{X-X_{K,h}}
\le
\half\norm{\signf(M)-S_{K,h}(M)}
\le
\half\eps_{\mathrm{sgn}}^{\mathrm{sinc}}(K,h;\beta,a,\gamma).
\end{equation}
In particular, under the balanced choice \eqref{eq:balanced-h},
\begin{equation}\label{eq:X-balanced-error}
\norm{X-X_{K,h}}
\le
\half\wt C_{\beta,a,\gamma}e^{-\sqrt{2\pi\beta K}}.
\end{equation}
\end{corollary}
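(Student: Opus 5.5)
The plan is to identify $X_{K,h}$ as exactly one half of the $(1,2)$ block of $S_{K,h}(M)$. Then the error bound follows because compressing to a block does not increase the operator norm.

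First, by \cref{thm:higham}, $[\signf(M)]_{12}=2X$. Next, expand $S_{K,h}(M)$ from \eqref{eq:SKh} and rewrite each term as a resolvent:
\[
(M\mp it_k\I)^{-1}=-\bigl((\pm it_k)\I-M\bigr)^{-1}.
\]
Applying \cref{lem:res-factor} with $z=\pm it_k$ gives
\[
\bigl[(M\mp it_k\I)^{-1}\bigr]_{12}=-(\pm it_k\I-A)^{-1}C(\pm it_k\I+B)^{-1}=(A\mp it_k\I)^{-1}C(B\pm it_k\I)^{-1}.
\]
Here the two minus signs from the outer factors cancel, since $-(z\I-A)^{-1}=(A-z\I)^{-1}$ and $(z\I+B)^{-1}=(B+z\I)^{-1}$. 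The lemma's invertibility hypothesis holds because $\pm it_k$ lies on $i\R$, and the half-plane separation keeps $\spec(A)$ and $\spec(-B)$ off $i\R$.

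Summing these blocks with the weights $\tfrac{h}{\pi}t_k$ gives $[S_{K,h}(M)]_{12}=2X_{K,h}$, matching \eqref{eq:XKh-unscaled} term by term. The only point needing care is this sign and ordering bookkeeping: the $(A-it_k)$ factor must pair with the $(B+it_k)$ factor. I will write that check out explicitly.

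Then
\[
X-X_{K,h}=\half\bigl[\signf(M)-S_{K,h}(M)\bigr]_{12}.
\]
Since any block of a matrix has operator norm at most that of the whole matrix (the block is $P_1(\cdot)P_2^*$ with coordinate isometries), we get the first inequality in \eqref{eq:XN-error}. The second inequality is \eqref{eq:sign-approx-bound} from \cref{prop:sign-approx-explicit}(2), and \eqref{eq:X-balanced-error} follows from \eqref{eq:balanced-sign-bound} under the balanced choice \eqref{eq:balanced-h}.

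The balanced bound requires the side condition $s\ge\log 2$, which I will state as inherited from \cref{prop:sign-approx-explicit}(3), or absorb into the choice \eqref{eq:Kchoice-sinc}. No genuine obstacle is expected beyond this bookkeeping.
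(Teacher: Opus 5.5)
Your proposal is correct and follows the paper's proof. Both arguments identify $[S_{K,h}(M)]_{12}=2X_{K,h}$ via \cref{lem:res-factor} and $[\signf(M)]_{12}=2X$ via \cref{thm:higham}, use that a block compression does not increase the operator norm, and then invoke \cref{prop:sign-approx-explicit}. Your explicit sign check $[(M\mp it_k\I)^{-1}]_{12}=(A\mp it_k\I)^{-1}C(B\pm it_k\I)^{-1}$ is accurate, and so is your remark that the balanced bound inherits the side condition $s\ge\log 2$ from part~(3) of that proposition, which the paper's statement leaves implicit.
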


\begin{proof}
By \cref{lem:res-factor}, the $(1,2)$-block of $S_{K,h}(M)$ is exactly $2X_{K,h}$, while \cref{thm:higham} gives $[\signf(M)]_{12}=2X$.
Therefore
\[
2\norm{X-X_{K,h}} = \norm{\bracks{\signf(M)-S_{K,h}(M)}_{12}} \le \norm{\signf(M)-S_{K,h}(M)}.
\]
The bounds follow from \cref{prop:sign-approx-explicit}.
\end{proof}

\section{Scaled-multiplexing for rational approximants}\label{sec:scaled-multiplexing}
This section packages the implementation layer into two reusable protocols. The rebalanced implementation handles bilinear sums of shifted inverses, which is the pattern produced by Sylvester-type sign blocks. The single-family implementation handles a single inverse family, which is the pattern used later for square roots, geometric means, and Hamiltonian sign approximants. Standard controlled-LCU constructions for the multiplexed direct sums are summarized in \cref{app:be-calculus}; the point of the rebalanced and single-family packages is not a new black-box LCU primitive, but a sharper normalization and error analysis adapted to structured sign embeddings. Compared with a plain implementation that uses worst-case inverse norms at every node, rebalancing replaces those plain factors by overlap quantities such as $\Theta_{\rho}^{\mathrm{syl}}$ in the bilinear case and $\Theta_\rho^{(1)}$ in the single-family case, and in several regimes this lowers the output normalization substantially.

The plain worst-case product bound remains the default practical corollary when only uniform field-of-values or strip-resolvent bounds are available; the explicit overlap sum becomes useful once one has additional nodewise information.

As throughout the paper, the active quadrature nodes are indexed by $k=-K,\dots,K$ and stored in a $q=\lceil \log_2(2K+1)\rceil$-qubit register, padded by inactive states if necessary.

\subsection{Rebalanced implementation for rational sums}
Directly implementing \eqref{eq:XKh-unscaled} as an LCU would retain the node-dependent weights $ht_k/(2\pi)$. The useful algebraic rewrite is therefore to keep the shifted inverses themselves in scaled form and leave only a bounded scalar coefficient in front of each term:
\begin{equation}\label{eq:scaled-identity}
\begin{aligned}
X_{K,h}
&=
\sum_{k=-K}^{K}\frac{h t_k}{2\pi(1+t_k)^2}
\Big(
((1+t_k)(A-it_k\I)^{-1})\,C\,((1+t_k)(B+it_k\I)^{-1})
\\  
&\quad +  
((1+t_k)(A+it_k\I)^{-1})\,C\,((1+t_k)(B-it_k\I)^{-1})
\Big)
\end{aligned}
\end{equation}
Define
\begin{equation}\label{eq:Lambda-syl}
\Lambda^{\mathrm{syl}}_{K,h}
:=
\sum_{k=-K}^{K}\frac{h t_k}{\pi(1+t_k)^2}.
\end{equation}

\begin{lemma}[Exact scaled identity and bounded total weight]\label{lem:constant-weight}
The representation \eqref{eq:scaled-identity} is exact. Moreover,
\begin{equation}\label{eq:constant-weight}
\Lambda^{\mathrm{syl}}_{K,h}
\le
\frac{1}{\pi}\left(1+\frac{h}{4}\right).
\end{equation}
In particular, if $h\le \pi$, then
\begin{equation}\label{eq:constant-weight-balanced}
\Lambda^{\mathrm{syl}}_{K,h}
\le
\Lambda^{\mathrm{syl}}_\star
:=
\frac{1+\pi/4}{\pi}
<0.57.
\end{equation}
\end{lemma}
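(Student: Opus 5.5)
The exactness of \eqref{eq:scaled-identity} is a pure algebraic check, so I will dispose of it first. In each term of \eqref{eq:XKh-unscaled}, multiply the left inverse and the right inverse by $(1+t_k)$ and compensate with the scalar prefactor $1/(1+t_k)^2$; scalars commute with $C$ and with the inverses. The coefficient then becomes $\frac{h t_k}{2\pi(1+t_k)^2}\cdot(1+t_k)^2=\frac{h t_k}{2\pi}$, which is exactly the coefficient in \eqref{eq:XKh-unscaled}. No invertibility issues arise beyond those already assumed in defining $X_{K,h}$.

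For the weight bound, I write $\Lambda^{\mathrm{syl}}_{K,h}=\frac{1}{\pi}\,h\sum_{k=-K}^{K} g(kh)$ with
\[
g(x):=\frac{e^x}{(1+e^x)^2}=\frac{1}{4\cosh^2(x/2)}.
\]
This $g$ is positive and even, is nonincreasing on $[0,\infty)$, has maximum $g(0)=1/4$, and satisfies $\int_{\R}g(x)\,dx=\bigl[-(1+e^x)^{-1}\bigr]_{-\infty}^{\infty}=1$. Extending the finite sum to all of $\mathbb Z$ only increases it, since every term is positive. I then use the standard comparison for a unimodal function with peak at $0$: for $k\ge1$ one has $h\,g(kh)\le\int_{(k-1)h}^{kh}g$, and symmetrically for $k\le -1$. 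This gives
\[
h\sum_{k\in\mathbb Z}g(kh)\le h\,g(0)+\int_{\R}g=\frac h4+1.
\]
Dividing by $\pi$ yields \eqref{eq:constant-weight}.

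The specialization \eqref{eq:constant-weight-balanced} follows by substituting $h\le\pi$, which gives the bound $(1+\pi/4)/\pi\approx 0.5683<0.57$. Note that the balanced choice \eqref{eq:balanced-h} satisfies $h\le\pi$ whenever $K\ge 2\beta/\pi$, and this always holds because $\beta<\arcsin a<\pi/2$ and $K\ge1$.

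None of these steps is a real obstacle. The only point that needs care is the Riemann-sum comparison: the central node has to be counted separately, through the term $h\,g(0)$, while the monotone tails on each side are bounded by the integral.
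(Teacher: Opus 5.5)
Your proposal is correct and matches the paper's proof essentially step for step. Exactness is the multiply-and-divide by $(1+t_k)^2$. The weight bound comes from writing $\Lambda^{\mathrm{syl}}_{K,h}=\frac{h}{\pi}\sum_k q(kh)$ with $q(x)=e^x/(1+e^x)^2$, extending the sum to $\mathbb Z$, and using evenness and monotonicity on $[0,\infty)$ to get $h\sum_{k\in\mathbb Z}q(kh)\le h/4+1$. Your extra check that the balanced step satisfies $h\le\pi$ is also correct: $\beta<\pi/2\le\pi K/2$ gives $2\pi\beta/K<\pi^2$. The paper leaves this point implicit.
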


\begin{proof}
Equality \eqref{eq:scaled-identity} is just \eqref{eq:XKh-unscaled} with each term multiplied and divided by $(1+t_k)^2$.
For the coefficient sum, define
\[
q(x):=\frac{e^x}{(1+e^x)^2}.
\]
Then $q$ is positive, even, decreasing on $[0,\infty)$, and
\[
\frac{h t_k}{\pi(1+t_k)^2}=\frac{h}{\pi}q(kh).
\]
Therefore
\[
\Lambda^{\mathrm{syl}}_{K,h}
=
\frac{h}{\pi}\sum_{k=-K}^{K} q(kh)
\le
\frac{h}{\pi}\sum_{k\in\mathbb Z} q(kh).
\]
Since $q$ is decreasing on $[0,\infty)$,
\[
h\sum_{k=1}^{\infty}q(kh)\le \int_0^{\infty}q(x)\,dx.
\]
Using $q(0)=1/4$ and $\int_0^{\infty}q(x)\,dx=1/2$, we obtain
\[
h\sum_{k\in\mathbb Z} q(kh)
\le
h q(0)+2\int_0^{\infty}q(x)\,dx
=
\frac{h}{4}+1,
\]
which proves \eqref{eq:constant-weight}. The last statement is immediate.
\end{proof}

The rebalanced implementation theorem is stated in terms of a classically available nodewise profile.  This is the algorithmic object that is needed to implement the controlled rotations in the rebalancing step; the profile may be the exact shifted-resolvent norm profile, but it can equally be any efficiently available collection of upper bounds.

\begin{definition}[Sylvester nodewise inverse profile]\label{def:sylvester-profile}
A \emph{Sylvester nodewise inverse profile} for the nodes $t_k=e^{kh}$ is a collection of positive numbers
\[
\rho_{A,k}^{\pm},\qquad \rho_{B,k}^{\pm}
\qquad (-K\le k\le K)
\]
such that
\begin{equation}\label{eq:syl-profile-valid}
(1+t_k)\norm{(A\pm it_k\I)^{-1}}\le \rho_{A,k}^{\pm},
\qquad
(1+t_k)\norm{(B\pm it_k\I)^{-1}}\le \rho_{B,k}^{\pm}.
\end{equation}
Associated with such a profile, define
\begin{equation}\label{eq:syl-profile-R}
R_A:=\max_{k,\pm}\rho_{A,k}^{\pm},
\qquad
R_B:=\max_{k,\pm}\rho_{B,k}^{\pm},
\end{equation}
and
\begin{equation}\label{eq:Theta-syl-profile}
\Theta_{\rho}^{\mathrm{syl}}
:=
\sum_{k=-K}^{K}\frac{h t_k}{2\pi(1+t_k)^2}
\Bigl(
\rho_{A,k}^{-}\rho_{B,k}^{+}
+
\rho_{A,k}^{+}\rho_{B,k}^{-}
\Bigr).
\end{equation}
\end{definition}

For reference, the exact uniform shifted-resolvent maxima are
\begin{equation}\label{eq:rA-rB}
r_A := \max_{k,\pm}(1+t_k)\norm{(A\pm it_k\I)^{-1}},
\qquad
r_B := \max_{k,\pm}(1+t_k)\norm{(B\pm it_k\I)^{-1}}.
\end{equation}
If only uniform upper bounds $r_A,r_B$ are used, the constant profile
\[
\rho_{A,k}^{\pm}\equiv r_A,
\qquad
\rho_{B,k}^{\pm}\equiv r_B
\]
is admissible and gives
\begin{equation}\label{eq:Theta-AB-plain}
\Theta_{\rho}^{\mathrm{syl}}\le r_A r_B \Lambda^{\mathrm{syl}}_{K,h}.
\end{equation}
If the exact profile is used, then $\Theta_{\rho}^{\mathrm{syl}}$ coincides with the overlap-weighted coefficient sum
\[
\sum_{k=-K}^{K}\frac{h t_k}{2\pi}
\Bigl(
\norm{(A-it_k\I)^{-1}}\norm{(B+it_k\I)^{-1}}
+
\norm{(A+it_k\I)^{-1}}\norm{(B-it_k\I)^{-1}}
\Bigr).
\]

\begin{theorem}[Rebalanced implementation for the Sylvester sign sum]\label{thm:profile-implementation}
Let $\rho$ be a classically available Sylvester nodewise inverse profile in the sense of \cref{def:sylvester-profile}, with parameters $R_A$, $R_B$, and $\Theta_{\rho}^{\mathrm{syl}}$ defined by \eqref{eq:syl-profile-R}--\eqref{eq:Theta-syl-profile}.  Let $\eps_A,\eps_B\in(0,1]$.
Then there exists a unitary $U^{\mathrm{reb}}_{X_{K,h}}$ that is a
\begin{equation}\label{eq:profile-be}
\paren{
4\Theta_{\rho}^{\mathrm{syl}},
\ a_A+a_B+a_C+O(\log K),
\ \Theta_{\rho}^{\mathrm{syl}}\norm{C}\left(
\frac{\eps_A}{R_A}
+
\frac{\eps_B}{R_B}
+
\frac{\eps_A\eps_B}{R_A R_B}
\right)
}
\text{-block-encoding of }X_{K,h}.
\end{equation}
The corresponding query costs are
\begin{equation}\label{eq:qAqB-rho}
\qA = O\!\paren{R_A \log\frac{R_A}{\eps_A}},
\qquad
\qB = O\!\paren{R_B \log\frac{R_B}{\eps_B}},
\qquad
\qC = O(1).
\end{equation}
\end{theorem}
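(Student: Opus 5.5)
We plan to build $U^{\mathrm{reb}}_{X_{K,h}}$ as a PREP--SELECT--UNPREP LCU over an index register holding the node $k$ and a sign bit $\sigma\in\{\pm\}$. The SELECT part applies a product of three block-encodings: a multiplexed inverse of the $A$-family, then $U_C$, then a multiplexed inverse of the $B$-family.

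\textbf{Step 1: uniform inverse families.} Since $\norm{A}\le 1$ and $t_k>0$, the matrix $T^{\pm}_{A,k}:=(A\pm it_k\I)/(1+t_k)$ has norm at most one. A unit block-encoding of it comes from an LCU of $U_A$ and $\I$ with weights $1/(1+t_k)$ and $t_k/(1+t_k)$ and phase $\pm i$; this LCU can be multiplexed over $(k,\sigma)$ with $O(\log K)$ extra ancillas. We have $\norm{(T^{\pm}_{A,k})^{-1}}=(1+t_k)\norm{(A\pm it_k\I)^{-1}}\le \rho^{\pm}_{A,k}\le R_A$. We apply \cref{prop:qsvt-inverse} with the common condition bound $R_A$ and error $\eps_A$, using one QSVT phase sequence for all nodes; a controlled version of the circuit is fed the multiplexed block-encoding. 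This yields a multiplexed $(2R_A,\cdot,\eps_A)$-block-encoding of $(1+t_k)(A\pm it_k\I)^{-1}$ at cost $\qA=O(R_A\log(R_A/\eps_A))$. The $B$-family is handled identically.

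\textbf{Step 2: nodewise rebalancing.} Using the same degree for all nodes, we post-multiply each node's encoding by the controlled scalar $R_A/\rho^{\pm}_{A,k}\le 1$. This is a controlled rotation on one ancilla, so we get a $(2\rho^{\pm}_{A,k},\cdot,\eps_A\rho^{\pm}_{A,k}/R_A)$-block-encoding of the node inverse; the rescaling of the encoded operator rescales its error by the same ratio. This is the step where the classically available profile is essential.

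\textbf{Step 3: assembly and bookkeeping.} By \eqref{eq:scaled-identity}, $X_{K,h}=\sum_{k,\sigma}c_k\, P^{\sigma}_k$ with $c_k=ht_k/(2\pi(1+t_k)^2)$ and $P^{\sigma}_k$ the scaled product. The product of the three encodings has normalization $4\rho^{-\sigma}_{A,k}\rho^{\sigma}_{B,k}$, with $\norm{C}\le 1$ entering through $U_C$. PREP loads amplitudes $\propto\sqrt{c_k\rho^{-\sigma}_{A,k}\rho^{\sigma}_{B,k}}$, so the total normalization is $4\sum c_k(\cdots)=4\Theta^{\mathrm{syl}}_\rho$.

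\textbf{Step 4: the error term.} For each node, $\norm{\tilde P-P}\le \norm{C}(\norm{\tilde U}\norm{\Delta V}+\norm{\Delta U}\norm{V}+\norm{\Delta U}\norm{\Delta V})$. With $\norm{U}\le\rho_A$, $\norm{\Delta U}\le \eps_A\rho_A/R_A$, and the analogous bounds for the $B$-factor, this is at most $\rho_A\rho_B\norm{C}(\eps_A/R_A+\eps_B/R_B+\eps_A\eps_B/(R_AR_B))$. The product factorizes, so the errors sum with weights $c_k\rho\rho$ to the claimed $\Theta^{\mathrm{syl}}_\rho$ prefactor. The query counts are one use of each multiplexed inverse plus one $U_C$ call.

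The main obstacle is Step 1's multiplexing: all nodes must share one QSVT circuit so that the query cost is independent of $K$. This is resolved by using the common bound $R_A$ and the $\pm$ and $t_k$ control only inside the input encoding. The error bookkeeping also requires the rebalancing in Step 2 to be an exact scalar.
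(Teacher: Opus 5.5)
Your Step 2 fails, and it is the step that gives the theorem its content. Since $R_A=\max_{k,\pm}\rho^{\pm}_{A,k}$, the factor $R_A/\rho^{\pm}_{A,k}$ is $\ge 1$, not $\le 1$. A controlled rotation can only multiply the encoded block by a scalar of modulus at most one, so for a fixed target it can only \emph{raise} the normalization. To pass from a $(2R_A,\cdot,\eps_A)$-encoding of $Y^{\pm}_{A,k}:=(1+t_k)(A\pm it_k\I)^{-1}$ to a $(2\rho^{\pm}_{A,k},\cdot,\cdot)$-encoding, you would have to amplify the block by $R_A/\rho^{\pm}_{A,k}$. That needs singular-value amplification, with extra node-dependent cost and an error inflated by the same factor. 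The claimed error is also out of reach: rescaling the block while keeping the same target leaves the absolute error of the encoded operator unchanged. If the block $B$ satisfies $\norm{\alpha B-Y}\le\eps_A$, then $cB$ as an $(\alpha/c)$-encoding of $Y$ still has error $\eps_A$, so $\eps_A\rho^{\pm}_{A,k}/R_A$ cannot come from post-processing. As written, your construction only achieves the plain normalization $4R_AR_B\Lambda^{\mathrm{syl}}_{K,h}$ of \cref{cor:plain-algorithm}, which is exactly what this theorem improves.

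The fix, and what the paper does, is to rebalance \emph{before} inversion.
\begin{itemize}
\item Put $d^{\pm}_{A,k}:=R_A/\rho^{\pm}_{A,k}\ge 1$ into the multiplexed input $\sum_k\ket{k}\!\bra{k}\otimes (A\pm it_k\I)/((1+t_k)d^{\pm}_{A,k})$. This is still a unit block-encoding by \cref{lem:mux-one-matrix}, because the coefficient moduli in each block sum to $1/d^{\pm}_{A,k}\le 1$.
\item Its inverse blocks $d^{\pm}_{A,k}Y^{\pm}_{A,k}$ have norm at most $d^{\pm}_{A,k}\rho^{\pm}_{A,k}=R_A$ uniformly. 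A single QSVT inversion with bound $R_A$ and precision $\eps_A$ therefore gives a $(2R_A,\cdot,\eps_A)$-encoding of $d^{\pm}_{A,k}Y^{\pm}_{A,k}$.
\item Dividing the target by $d^{\pm}_{A,k}$, this is exactly the $(2\rho^{\pm}_{A,k},\cdot,\eps_A\rho^{\pm}_{A,k}/R_A)$-encoding of $Y^{\pm}_{A,k}$ you wanted in Step 2.
\end{itemize}
The amplification costs nothing here because shrinking the input enlarges its inverse, while the smallest singular value stays at least $1/R_A$, so the same QSVT polynomial applies.

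With this change your Steps 3 and 4 go through. The paper writes the same bookkeeping as a uniform per-term normalization $4R_AR_B$ with LCU weights $\lambda^{\pm}_k=c_k/(d^{\mp}_{A,k}d^{\pm}_{B,k})$, whose sum times $R_AR_B$ is $\Theta^{\mathrm{syl}}_\rho$. One small slip in Step 4: since you keep the cross term $\norm{\Delta U}\norm{\Delta V}$, the first term should carry $\norm{U}$ rather than $\norm{\tilde U}$ before you bound it by $\rho_A$.
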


\begin{proof}
Define the rebalancing factors
\[
d^\pm_{A,k}:=\frac{R_A}{\rho_{A,k}^{\pm}},
\qquad
d^\pm_{B,k}:=\frac{R_B}{\rho_{B,k}^{\pm}}.
\]
By construction, $d^\pm_{A,k}\ge 1$ and $d^\pm_{B,k}\ge 1$.

Now form the multiplexed direct sums
\[
A_{\rho}^\pm
:=
\sum_{k=-K}^{K}\ket{k}\!\bra{k}\otimes \frac{A\pm it_k\I}{(1+t_k)d^\pm_{A,k}},
\qquad
B_{\rho}^\pm
:=
\sum_{k=-K}^{K}\ket{k}\!\bra{k}\otimes \frac{B\pm it_k\I}{(1+t_k)d^\pm_{B,k}}.
\]
For each sign choice and each $k$, the coefficients of $A$ and $\I$ in the $k$th block of $A_{\rho}^\pm$ have total absolute value $1/d^\pm_{A,k}\le 1$, and similarly for $B_{\rho}^\pm$.
Hence \cref{lem:mux-one-matrix} yields unit-normalized block-encodings of $A_{\rho}^\pm$ and $B_{\rho}^\pm$ using $O(1)$ queries to $U_A$ and $U_B$, respectively.

Their inverse blocks are
\[
(A_{\rho}^\pm)^{-1}
=
\sum_{k=-K}^{K}\ket{k}\!\bra{k}\otimes
d^\pm_{A,k}(1+t_k)(A\pm it_k\I)^{-1},
\]
\[
(B_{\rho}^\pm)^{-1}
=
\sum_{k=-K}^{K}\ket{k}\!\bra{k}\otimes
d^\pm_{B,k}(1+t_k)(B\pm it_k\I)^{-1}.
\]
By \eqref{eq:syl-profile-valid},
\[
\norm{d^\pm_{A,k}(1+t_k)(A\pm it_k\I)^{-1}}\le d^\pm_{A,k}\rho_{A,k}^{\pm}=R_A,
\]
and similarly for $B$, so
\[
\norm{(A_{\rho}^\pm)^{-1}}\le R_A,
\qquad
\norm{(B_{\rho}^\pm)^{-1}}\le R_B.
\]
Applying \cref{prop:qsvt-inverse} gives QSVT block-encodings
\[
V_{A_{\rho}^\pm}\in \BE(2R_A,a_A+O(1),\eps_A),
\qquad
V_{B_{\rho}^\pm}\in \BE(2R_B,a_B+O(1),\eps_B),
\]
with query bounds \eqref{eq:qAqB-rho}.

Define
\[
\lambda_k^+ := \frac{h t_k}{2\pi(1+t_k)^2 d^-_{A,k}d^+_{B,k}},
\qquad
\lambda_k^- := \frac{h t_k}{2\pi(1+t_k)^2 d^+_{A,k}d^-_{B,k}}.
\]
Then \eqref{eq:scaled-identity} becomes
\begin{align}
X_{K,h}
&=
\sum_{k=-K}^{K}\lambda_k^+\,
\Bigl(d^-_{A,k}(1+t_k)(A-it_k\I)^{-1}\Bigr)
\,C\,
\Bigl(d^+_{B,k}(1+t_k)(B+it_k\I)^{-1}\Bigr)
\nonumber\\
&\quad+
\sum_{k=-K}^{K}\lambda_k^-\,
\Bigl(d^+_{A,k}(1+t_k)(A+it_k\I)^{-1}\Bigr)
\,C\,
\Bigl(d^-_{B,k}(1+t_k)(B-it_k\I)^{-1}\Bigr).
\label{eq:scaled-identity-rho}
\end{align}
Moreover,
\begin{align*}
R_A R_B\sum_{k=-K}^{K}\bigl(\abs{\lambda_k^+}+\abs{\lambda_k^-}\bigr)
&=
\sum_{k=-K}^{K}\frac{h t_k}{2\pi(1+t_k)^2}
\Bigl(
\rho_{A,k}^{-}\rho_{B,k}^{+}
+
\rho_{A,k}^{+}\rho_{B,k}^{-}
\Bigr)\\
&=
\Theta_{\rho}^{\mathrm{syl}}.
\end{align*}

Let $\widehat T_k^\pm$ denote the products obtained by replacing the exact inverse blocks in \eqref{eq:scaled-identity-rho} by their QSVT approximants.
By \cref{lem:approx-product},
\[
\norm{\widehat T_k^\pm-T_k^\pm}
\le
\norm{C}\paren{\eps_A R_B + R_A\eps_B+\eps_A\eps_B}.
\]
Weighting by $\lambda_k^\pm$ and summing yields
\[
\norm{\widehat X_{K,h}^{\mathrm{reb}}-X_{K,h}}
\le
\Theta_{\rho}^{\mathrm{syl}}\,
\norm{C}\left(
\frac{\eps_A}{R_A}
+
\frac{\eps_B}{R_B}
+
\frac{\eps_A\eps_B}{R_A R_B}
\right).
\]
Each product term in \eqref{eq:scaled-identity-rho} is implemented by the product of a $(2R_A)$-block-encoding, the unit block-encoding of $C$, and a $(2R_B)$-block-encoding, so its normalization is at most $4R_A R_B$.
Weighted LCU therefore yields the normalization $4\Theta_{\rho}^{\mathrm{syl}}$ and ancilla count $a_A+a_B+a_C+O(\log K)$.
\end{proof}

\begin{corollary}[Plain algorithm specialization]\label{cor:plain-algorithm}
Suppose $r_A,r_B$ are available uniform upper bounds satisfying
\[
(1+t_k)\norm{(A\pm it_k\I)^{-1}}\le r_A,
\qquad
(1+t_k)\norm{(B\pm it_k\I)^{-1}}\le r_B
\]
for all active nodes and both signs.
Then the constant profile in \cref{def:sylvester-profile} and \cref{thm:profile-implementation} yield a unitary $U_{X_{K,h}}$ that is a
\begin{equation}\label{eq:output-beta-generic}
\paren{
4 r_A r_B\Lambda^{\mathrm{syl}}_{K,h},
\ a_A+a_B+a_C+O(\log K),
\ \Lambda^{\mathrm{syl}}_{K,h}\norm{C}\paren{\eps_A r_B + r_A \eps_B + \eps_A \eps_B}
}
\text{-block-encoding of }X_{K,h},
\end{equation}
and in particular
\begin{equation}\label{eq:XhatN-error}
\norm{\widehat{X}_{K,h} - X_{K,h}} \le \Lambda^{\mathrm{syl}}_{K,h}\norm{C}\paren{\eps_A r_B + r_A \eps_B + \eps_A \eps_B}.
\end{equation}
Its query complexity is
\begin{equation}\label{eq:qAqB-generic}
\qA = O\!\paren{r_A \log\frac{r_A}{\eps_A}},
\qquad
\qB = O\!\paren{r_B \log\frac{r_B}{\eps_B}},
\qquad
\qC = O(1).
\end{equation}
\end{corollary}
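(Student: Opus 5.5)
The plan is to specialize \cref{thm:profile-implementation} to the constant profile $\rho_{A,k}^{\pm}\equiv r_A$, $\rho_{B,k}^{\pm}\equiv r_B$ and read off each parameter.

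\emph{Admissibility.} The hypothesis $(1+t_k)\norm{(A\pm it_k\I)^{-1}}\le r_A$ and $(1+t_k)\norm{(B\pm it_k\I)^{-1}}\le r_B$ at every active node and for both signs is exactly the validity condition \eqref{eq:syl-profile-valid}. Hence the constant profile is a classically available Sylvester nodewise inverse profile, and by \eqref{eq:syl-profile-R} its parameters are $R_A=r_A$ and $R_B=r_B$.

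\emph{Overlap sum.} Substituting the constant profile into \eqref{eq:Theta-syl-profile} makes each bracket equal to $2r_Ar_B$. Therefore
\[
\Theta_\rho^{\mathrm{syl}}=r_Ar_B\sum_{k}\frac{h t_k}{\pi(1+t_k)^2}=r_Ar_B\Lambda^{\mathrm{syl}}_{K,h},
\]
by \eqref{eq:Lambda-syl}. This is an equality, which sharpens \eqref{eq:Theta-AB-plain}.

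\emph{Block-encoding.} Inserting these values into \eqref{eq:profile-be}:
\begin{itemize}
\item The normalization becomes $4r_Ar_B\Lambda^{\mathrm{syl}}_{K,h}$.
\item The ancilla count $a_A+a_B+a_C+O(\log K)$ is unchanged.
\item The error term becomes
\[
r_Ar_B\Lambda^{\mathrm{syl}}_{K,h}\norm{C}\Bigl(\frac{\eps_A}{r_A}+\frac{\eps_B}{r_B}+\frac{\eps_A\eps_B}{r_Ar_B}\Bigr)=\Lambda^{\mathrm{syl}}_{K,h}\norm{C}(\eps_Ar_B+r_A\eps_B+\eps_A\eps_B).
\]
\end{itemize}
This is \eqref{eq:output-beta-generic}.

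\emph{Error bound.} The estimate \eqref{eq:XhatN-error} is the bound on $\norm{\widehat X^{\mathrm{reb}}_{K,h}-X_{K,h}}$ from the proof of \cref{thm:profile-implementation}, now written for this profile. The rebalancing factors are all $d\equiv1$, so the circuit is the plain one and $\widehat X_{K,h}$ is its encoded block.

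\emph{Query costs.} \eqref{eq:qAqB-generic} follows from \eqref{eq:qAqB-rho} with $R_A=r_A$ and $R_B=r_B$.

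No step poses a real obstacle. The one point to check is that the constant profile needs no extra classical information beyond the uniform bounds $r_A,r_B$, so that the controlled rotations in the rebalancing step are trivial.
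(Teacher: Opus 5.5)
Your proposal is correct and matches the paper's proof. The paper likewise substitutes the constant profile $\rho^{\pm}_{A,k}\equiv r_A$, $\rho^{\pm}_{B,k}\equiv r_B$ into \cref{thm:profile-implementation} and applies \eqref{eq:Theta-AB-plain}, and your observation that this bound holds with equality (so $\Theta_\rho^{\mathrm{syl}}=r_Ar_B\Lambda^{\mathrm{syl}}_{K,h}$ and all rebalancing factors equal $1$) is right.
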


\begin{proof}
Use the constant profile $\rho_{A,k}^{\pm}\equiv r_A$ and $\rho_{B,k}^{\pm}\equiv r_B$ in \cref{thm:profile-implementation} and apply \eqref{eq:Theta-AB-plain}.
\end{proof}

\subsection{Plain conditioning bounds}

The algorithm theorems are agnostic about how the shifted inverses are bounded. The next lemmas provide the plain uniform bounds used later in the FoV gap and strip-resolvent regimes.

\begin{lemma}[FoV gap resolvent bound]\label{lem:accretive-res}
If $H(T)\succeq \mu \I$ with $\mu>0$, then for all $t\in\R$,
\begin{equation}\label{eq:accretive-res}
\norm{(T+it\I)^{-1}} \le \frac{1}{\mu}.
\end{equation}
If in addition $\norm{T}\le 1$, then for $\abs{t}>1$,
\begin{equation}\label{eq:large-t-neumann}
\norm{(T\pm it\I)^{-1}} \le \frac{1}{\abs{t}-1}.
\end{equation}
\end{lemma}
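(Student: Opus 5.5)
The first inequality will follow from a lower bound on $\norm{(T+it\I)x}$ obtained from the Hermitian part; the second from a Neumann-series estimate.

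\emph{Step 1: the FoV bound.} Fix $t\in\R$ and a unit vector $x\in\C^n$. Since $it$ is purely imaginary,
\[
\RePart\bigl(x^*(T+it\I)x\bigr)=\RePart(x^*Tx)=x^*H(T)x\ge \mu .
\]
By Cauchy--Schwarz,
\[
\norm{(T+it\I)x}\ \ge\ \abs{x^*(T+it\I)x}\ \ge\ \RePart\bigl(x^*(T+it\I)x\bigr)\ \ge\ \mu .
\]
So $T+it\I$ is injective, hence invertible in finite dimensions, and its smallest singular value is at least $\mu$. This gives $\norm{(T+it\I)^{-1}}\le 1/\mu$, i.e.\ \eqref{eq:accretive-res}. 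Because $t$ ranges over all of $\R$, replacing $t$ by $-t$ covers both signs $T\pm it\I$, as used later in the paper.

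\emph{Step 2: the large-shift bound.} Suppose $\norm{T}\le 1$ and $\abs{t}>1$. Write
\[
T\pm it\I=\pm it\bigl(\I\mp (it)^{-1}T\bigr).
\]
Here $\norm{(it)^{-1}T}\le 1/\abs{t}<1$, so the Neumann series converges and
\[
\norm{(T\pm it\I)^{-1}}\le \frac{1}{\abs{t}}\cdot\frac{1}{1-1/\abs{t}}=\frac{1}{\abs{t}-1}.
\]
This is \eqref{eq:large-t-neumann}. The same estimate also follows from $\norm{(T\pm it\I)x}\ge \abs{t}-\norm{Tx}\ge \abs{t}-1$ for unit $x$.

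Neither step is a real obstacle. The only point needing care is Step 1: one must bound $\norm{(T+it\I)x}$ from below through its real part, and not through the eigenvalues of $T$. This is why no normality or diagonalizability is needed, consistent with the paper's FoV philosophy.
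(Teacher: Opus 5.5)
Your proof is correct and follows the paper's argument. The real part of $x^*(T+it\I)x$ is at least $\mu$, so $0$ lies at distance at least $\mu$ from $W(T+it\I)$, and your Cauchy--Schwarz step simply makes the resulting resolvent bound explicit; the large-$\abs{t}$ case is the Neumann estimate. One harmless slip: the factorization should read $T\pm it\I=\pm it\bigl(\I\pm (it)^{-1}T\bigr)$, not with $\mp$ inside, but the bound is unaffected because only $\norm{(it)^{-1}T}\le 1/\abs{t}$ is used (and your reverse-triangle-inequality alternative is fine as stated).
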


\begin{proof}
For a unit vector $x$,
\[
\RePart\bigl(x^*(T+it\I)x\bigr) = x^*H(T)x \ge \mu,
\]
so $0$ is at distance at least $\mu$ from the numerical range of $T+it\I$, which implies \eqref{eq:accretive-res}.
The large-$\abs{t}$ bound is the usual Neumann-series estimate.
\end{proof}

\begin{lemma}[Uniform conditioning in regime $(\mathrm{R1})$]\label{lem:R1-conditioning}
Under \cref{ass:R1},
\begin{equation}\label{eq:R1-rA-rB}
r_A \le \frac{3}{\mu},
\qquad
r_B \le \frac{3}{\mu}.
\end{equation}
\end{lemma}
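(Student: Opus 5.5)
We need to bound $(1+t)\norm{(A\pm it\I)^{-1}}\le 3/\mu$ for every $t>0$ (in particular at each node $t_k=e^{kh}$), and likewise for $B$. The plan is to split at $t=2$ and use the two estimates of \cref{lem:accretive-res}. \cref{ass:R1} gives $H(A)\succeq\mu\I$ and $H(B)\succeq\mu\I$, and $\norm{A},\norm{B}\le 1$ because $A$ and $B$ are compressions of $M$ with $\norm{M}\le 1$. The argument is identical for $A$ and $B$ and for both signs, so we treat a generic $T\in\{A,B\}$.

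\emph{Case $0<t\le 2$.} By \eqref{eq:accretive-res}, applied with $\pm t$ in place of $t$, we have $\norm{(T\pm it\I)^{-1}}\le 1/\mu$. Hence $(1+t)\norm{(T\pm it\I)^{-1}}\le 3/\mu$.

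\emph{Case $t>2$.} Here we want a bound that does not grow with $t$, so we combine the two estimates:
\[
(1+t)\norm{(T\pm it\I)^{-1}}\le (1+t)\min\Bigl\{\frac1\mu,\frac{1}{t-1}\Bigr\}\le \frac{1+t}{t-1}.
\]
The function $(1+t)/(t-1)$ is decreasing on $(1,\infty)$, so for $t>2$ it is at most $3$. Since $\mu\le 1$, we get $3\le 3/\mu$, which gives the claim.

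Taking the maximum over all active nodes and both signs yields $r_A\le 3/\mu$ and $r_B\le 3/\mu$. The only delicate point is the choice of the threshold: it must be the same value ($t=2$) in both cases, so that $1+t\le 3$ on the small-$t$ side and $(1+t)/(t-1)\le 3$ on the large-$t$ side. The final step also uses the hypothesis $\mu\le 1$ from \cref{ass:R1} to absorb the constant $3$ into $3/\mu$.
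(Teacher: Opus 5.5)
Your proposal is correct and follows essentially the same route as the paper: split at $t=2$, use the field-of-values bound $\norm{(T\pm it\I)^{-1}}\le 1/\mu$ to get $3/\mu$ for $t\le 2$, and use the Neumann bound $(1+t)/(t-1)\le 3\le 3/\mu$ (via $\mu\le 1$) for $t\ge 2$. The paper's proof is the same two-case argument, stated for $A$ and then noted to be identical for $B$.
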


\begin{proof}
Fix $t>0$.
Then
\[
(1+t)\norm{(A\pm it\I)^{-1}}
\le \min\set{ \frac{1+t}{\mu},\ \frac{1+t}{t-1} }.
\]
If $t\le 2$, the first term is at most $3/\mu$.
If $t\ge 2$, the second term is at most $3$, hence at most $3/\mu$ because $\mu\le 1$.
Taking the maximum over $k$ and both signs gives \eqref{eq:R1-rA-rB}.
The statement for $B$ is identical.
\end{proof}

\begin{lemma}[Uniform conditioning in regime $(\mathrm{R2})$]\label{lem:R2-conditioning}
Assume \cref{ass:R2} and define
\[
\gamma_A := \sup_{z\in\Omega_a} \norm{(z\I-A)^{-1}},
\qquad
\gamma_B := \sup_{z\in\Omega_a} \norm{(z\I+B)^{-1}}.
\]
Then
\begin{equation}\label{eq:R2-rA-rB}
r_A \le 3\gamma_A,
\qquad
r_B \le 3\gamma_B.
\end{equation}
Moreover, $\gamma_A\le \gamma$ and $\gamma_B\le \gamma$, so in particular
\begin{equation}\label{eq:R2-rA-rB-gamma}
r_A \le 3\gamma,
\qquad
r_B \le 3\gamma.
\end{equation}
\end{lemma}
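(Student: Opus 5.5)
The proof has two parts: bound $\gamma_A,\gamma_B$ by $\gamma$ by compressing the resolvent of $M$, then copy the two-case argument of \cref{lem:R1-conditioning}, with the strip-resolvent bound taking the place of the FoV bound.

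\emph{Comparison with $\gamma$.} For $z\in\Omega_a$, the matrix $z\I-M$ is invertible, since $\gamma<\infty$ forces $\spec(M)\cap\Omega_a=\varnothing$. Because $M$ is block upper triangular, $\spec(M)=\spec(A)\cup\spec(-B)$, so $z\I-A$ and $z\I+B$ are invertible as well. The block inversion in \cref{lem:res-factor} shows that the diagonal blocks of $(z\I-M)^{-1}$ are exactly $(z\I-A)^{-1}$ and $(z\I+B)^{-1}$. A diagonal block is a compression of the whole operator and so has norm no larger than it. Hence $\norm{(z\I-A)^{-1}}\le\gamma$ and $\norm{(z\I+B)^{-1}}\le\gamma$, and taking suprema over $\Omega_a$ gives $\gamma_A,\gamma_B\le\gamma$.

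\emph{Node bound for $A$.} Fix $t>0$. Since $\pm it\in\Omega_a$, we have $\norm{(A\pm it\I)^{-1}}=\norm{(\mp it\I-A)^{-1}}\le\gamma_A$. For $t\le 2$ this gives $(1+t)\norm{(A\pm it\I)^{-1}}\le 3\gamma_A$. For $t\ge2$, the Neumann estimate with $\norm{A}\le\norm{M}\le1$ gives $(1+t)/(t-1)\le 3$. To turn this into $3\gamma_A$, I need $\gamma_A\ge1$. This holds because $0\in\Omega_a$ and $\norm{A}\le1$, so $\norm{A^{-1}}\ge 1/\norm{A}\ge1$, and therefore $\gamma_A\ge\norm{(0\I-A)^{-1}}\ge1$. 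Taking the maximum over nodes and signs yields $r_A\le3\gamma_A$.

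\emph{Node bound for $B$.} The argument is the same, using $(B\pm it\I)^{-1}=-((\mp it)\I-(-B))^{-1}$. Up to sign this is $(z\I+B)^{-1}$ with $z=\mp it\in\Omega_a$, so its norm is at most $\gamma_B$. The lower bound $\gamma_B\ge1$ follows in the same way from $z=0$. This gives $r_B\le 3\gamma_B$.

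Combining the two parts yields \eqref{eq:R2-rA-rB-gamma}. The only delicate point is remembering to verify $\gamma_A,\gamma_B\ge1$ for the large-$t$ case; the rest is routine.
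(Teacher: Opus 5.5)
Your proof is correct and follows the paper's argument essentially step for step. The diagonal blocks of $(z\I-M)^{-1}$ give $\gamma_A,\gamma_B\le\gamma$; the split at $t=2$ uses the strip bound for small $t$ and the Neumann bound for large $t$; and $\gamma_A\ge\norm{A^{-1}}\ge 1$ (resp.\ $\gamma_B\ge\norm{B^{-1}}\ge 1$) absorbs the large-$t$ case. One cosmetic slip: your identity $(B\pm it\I)^{-1}=-((\mp it)\I-(-B))^{-1}$ is wrong, since the right side equals $-(B\mp it\I)^{-1}$, whereas $(B\pm it\I)^{-1}$ is simply $(z\I+B)^{-1}$ with $z=\pm it$. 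Because both $\pm it$ lie in $\Omega_a$, the bound by $\gamma_B$ is unaffected.
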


\begin{proof}
Because $\pm it\in\Omega_a$ whenever $t\le 2$ and $a\in(0,1)$,
\[
\norm{(A\pm it\I)^{-1}} \le \gamma_A,
\qquad
\norm{(B\mp it\I)^{-1}} \le \gamma_B
\qquad (0\le t\le 2).
\]
For $t\ge 2$, the Neumann estimate gives
\[
\norm{(A\pm it\I)^{-1}} \le \frac{1}{t-1},
\qquad
\norm{(B\pm it\I)^{-1}} \le \frac{1}{t-1}.
\]
Hence
\[
(1+t)\norm{(A\pm it\I)^{-1}}
\le \min\set{\frac{1+t}{1}\gamma_A,\frac{1+t}{t-1}},
\]
and similarly for $B$.
Since $\gamma_A\ge \norm{A^{-1}}\ge 1$ and $\gamma_B\ge \norm{B^{-1}}\ge 1$, the large-$t$ bound is already dominated by $3\gamma_A$ and $3\gamma_B$.
Taking the maxima proves \eqref{eq:R2-rA-rB}.
The inequalities $\gamma_A\le\gamma$ and $\gamma_B\le\gamma$ follow because the diagonal blocks of $(z\I-M)^{-1}$ are $(z\I-A)^{-1}$ and $(z\I+B)^{-1}$.
\end{proof}

\subsection{Single-family implementation for rational sums}
Many later applications lead to a rational approximant that is a linear combination of
nodewise transformed inverses from a \emph{single} multiplexed family:
\[
Y_{K,h}=\sum_{j\in J} c_j R_j,
\qquad
R_j:=\sigma_j F_j^{-1},
\]
where each $F_j$ is an invertible node matrix, $\sigma_j>0$ is a scaling factor, and
$R_j$ is the corresponding scaled inverse block.
Equivalently, if
\[
F:=\sum_{j\in J}\ket{j}\!\bra{j}\otimes \frac{F_j}{\sigma_j},
\]
then
\[
F^{-1}=\sum_{j\in J}\ket{j}\!\bra{j}\otimes R_j.
\]
This single-family pattern is exactly the one used later for the square-root shifts
(\cref{eq:sqrt-scaled-identity}, where $R_j=\mathcal R_k^{\mathrm{sq}}=(1+t_k^2)(A+t_k^2\I)^{-1}$),
for the geometric-mean shifts
(\cref{eq:geom-mean-inv-approx}, equivalently $R_j=(1+t_k^2)(B+t_k^2A)^{-1}$ in the multiplexed family),
and for the Hamiltonian sign sum
(\cref{eq:care-sign-sum}, where $R_j=(1+t_k)(H\pm it_k\I)^{-1}$).
We now package this pattern into a reusable single-family implementation theorem.

\begin{definition}[Single-family nodewise profile]\label{def:single-family-profile}
Let $J$ be a finite index set, and suppose an approximant $Y_{K,h}$ admits the representation
\[
Y_{K,h}=\sum_{j\in J} c_j R_j,
\qquad
R_j=\sigma_j F_j^{-1},
\]
where each $F_j$ is invertible.
Define the multiplexed direct sum
\[
F:=\sum_{j\in J}\ket{j}\!\bra{j}\otimes \frac{F_j}{\sigma_j},
\qquad
F^{-1}=\sum_{j\in J}\ket{j}\!\bra{j}\otimes R_j.
\]
A \emph{single-family nodewise profile} is a collection of numbers $\rho_j>0$ such that
\[
\norm{R_j}\le \rho_j
\qquad
\text{for all }j\in J.
\]
Associated to such a profile, define
\[
R_\rho:=\max_{j\in J}\rho_j,
\qquad
d_j:=\frac{R_\rho}{\rho_j},
\qquad
\Theta_\rho^{(1)}:=\sum_{j\in J}\abs{c_j}\rho_j.
\]
\end{definition}

In algorithmic use, the profile numbers $\rho_j$ are assumed to be classically available upper bounds, since they determine the controlled diagonal rebalancing rotations.  Exact nodewise norms are only one possible choice.

\begin{theorem}[Single-family implementation]\label{thm:single-family-algorithm}
Let the notation be as in \cref{def:single-family-profile}, and let $U_F\in \BE(1,a,0)$ be a block-encoding of $F$.
Define
\[
D_\rho^{-1}:=\sum_{j\in J}\frac{1}{d_j}\ket{j}\!\bra{j},
\qquad
F_\rho:= (D_\rho^{-1}\otimes \I)F
      = \sum_{j\in J}\ket{j}\!\bra{j}\otimes \frac{F_j}{\sigma_j d_j}.
\]
Then:
\begin{enumerate}[leftmargin=2em]
\item $F_\rho$ admits a block-encoding
\[
U_{F_\rho}\in \BE(1,a+O(1),0),
\]
and
\[
F_\rho^{-1}
=
\sum_{j\in J}\ket{j}\!\bra{j}\otimes d_j R_j,
\qquad
\norm{F_\rho^{-1}}\le R_\rho;
\]
\item for every $\eps_{\mathrm{inv}}\in(0,1]$, QSVT yields
\[
V_{F_\rho}\in \BE(2R_\rho,a+O(1),\eps_{\mathrm{inv}})
\]
using
\[
O\!\paren{R_\rho \log\frac{R_\rho}{\eps_{\mathrm{inv}}}}
\]
queries to $U_F$ and $U_F^\dagger$;
\item one has
\[
Y_{K,h}=\sum_{j\in J}\frac{c_j}{d_j}\,(d_jR_j),
\qquad
R_\rho\sum_{j\in J}\abs{\frac{c_j}{d_j}}=\Theta_\rho^{(1)},
\]
and there exists a unitary $U_{Y_{K,h}}^{(1,\rho)}$ that is a
\[
\paren{2\Theta_\rho^{(1)},\, a+O(\log|J|),\, \Theta_\rho^{(1)}\frac{\eps_{\mathrm{inv}}}{R_\rho}}
\text{-block-encoding of }Y_{K,h}.
\]
\end{enumerate}
\end{theorem}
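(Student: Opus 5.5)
The plan mirrors the proof of \cref{thm:profile-implementation}, specialized to one inverse family instead of a bilinear product.

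\emph{Item 1.} Since $d_j=R_\rho/\rho_j\ge 1$, the diagonal $D_\rho^{-1}$ has entries in $(0,1]$. It is therefore implemented exactly as a unit block-encoding by a multiplexed single-qubit rotation on one extra ancilla: controlled on $\ket{j}$, rotate so that the $\ket{0}$-amplitude is $1/d_j$. Multiplying this (exact, norm-one-normalized) block-encoding of $D_\rho^{-1}\otimes\I$ with $U_F$ via the product rule of \cref{app:be-calculus} gives $U_{F_\rho}\in\BE(1,a+O(1),0)$. The inverse formula follows blockwise, since $(F_j/(\sigma_jd_j))^{-1}=d_j\sigma_jF_j^{-1}=d_jR_j$. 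Because $F_\rho^{-1}$ is block diagonal, its norm is $\max_j d_j\norm{R_j}\le\max_j d_j\rho_j=R_\rho$.

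\emph{Item 2.} Apply \cref{prop:qsvt-inverse} to $U_{F_\rho}$. Since $\norm{F_\rho^{-1}}\le R_\rho$, the output normalization $2\norm{F_\rho^{-1}}$ may be inflated to $2R_\rho$; this is harmless because one may rescale the encoded block by a factor at most one, or simply run QSVT with the upper bound $R_\rho$ in place of the exact condition number. The query count is then $O(R_\rho\log(R_\rho/\eps_{\mathrm{inv}}))$ calls to $U_{F_\rho}$, hence to $U_F$ and $U_F^\dagger$.

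\emph{Item 3.} The identity $Y_{K,h}=\sum_j (c_j/d_j)(d_jR_j)$ is trivial, and
\[
R_\rho\sum_j|c_j|/d_j=\sum_j|c_j|\rho_j=\Theta_\rho^{(1)}.
\]
To extract $Y_{K,h}$, prepare the index register with a state-preparation unitary whose amplitudes are $\sqrt{|c_j/d_j|/\sum_i|c_i/d_i|}$, with the phases of $c_j$ absorbed into the conjugate preparation. Apply $V_{F_\rho}$, which acts blockwise on the index register, then unprepare. This is the standard LCU sandwich of a block-diagonal encoding. The encoded block is
\[
\Bigl(\sum_j|c_j/d_j|\Bigr)^{-1}\sum_j (c_j/d_j)\,\widetilde{B}_j,
\]
where $\widetilde{B}_j$ is the $j$th diagonal block of the QSVT output divided by $2R_\rho$. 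The total normalization is therefore $2R_\rho\sum_j|c_j/d_j|=2\Theta_\rho^{(1)}$, using $a+O(\log|J|)$ ancillas.

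\emph{Error.} Each diagonal block of $2R_\rho\times$(encoded block of $V_{F_\rho}$) differs from $d_jR_j$ by at most $\eps_{\mathrm{inv}}$. This is because compressing a block-diagonal error operator to a diagonal block does not increase its norm. Summing with weights $|c_j/d_j|$ gives an error of at most $\eps_{\mathrm{inv}}\sum_j|c_j/d_j|=\Theta_\rho^{(1)}\eps_{\mathrm{inv}}/R_\rho$.

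The main obstacle I expect is justifying that the QSVT approximation error for $F_\rho^{-1}$ transfers to each block with the correct weighting. One must check that the QSVT output is itself block diagonal in $\ket{j}$. It is, since QSVT applies a polynomial to a block-diagonal encoding whose controlled structure commutes with the index register. One must also check that off-diagonal index leakage vanishes under the prepare/unprepare sandwich. Once block-diagonality is verified, the estimate is the weighted triangle inequality above.
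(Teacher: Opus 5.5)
Your proposal is correct and follows the paper's proof essentially step for step. The steps are the same: diagonal contraction plus the product rule for $F_\rho$, \cref{prop:qsvt-inverse} for the inverse, and a weighted LCU whose error is the $|c_j/d_j|$-weighted sum of $\eps_{\mathrm{inv}}$. Your extra care about inflating $2\norm{F_\rho^{-1}}$ to $2R_\rho$ and about index leakage is sound, and the leakage check can even be skipped: the prepare/unprepare sandwich with unit vectors already contracts the operator-norm error $\eps_{\mathrm{inv}}/(2R_\rho)$ of the whole encoded block.
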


\begin{proof}
Because $0<1/d_j\le 1$ for every $j$, \cref{lem:diag-contraction} gives an exact unit block-encoding of $D_\rho^{-1}$.
The product rule then yields the stated block-encoding of $F_\rho$.
The formula for $F_\rho^{-1}$ is immediate from block diagonality, and
\[
\norm{d_jR_j}\le d_j\rho_j=R_\rho
\qquad
\text{for all }j,
\]
so $\norm{F_\rho^{-1}}\le R_\rho$.

Applying \cref{prop:qsvt-inverse} to $F_\rho$ gives the inverse block-encoding and its query complexity.
The exact identity
\[
Y_{K,h}=\sum_{j\in J}c_jR_j
=
\sum_{j\in J}\frac{c_j}{d_j}\,(d_jR_j)
\]
is immediate, and
\[
R_\rho\sum_{j\in J}\abs{\frac{c_j}{d_j}}
=
\sum_{j\in J}\abs{c_j}\rho_j
=
\Theta_\rho^{(1)}.
\]
Each term in the LCU has normalization at most $2R_\rho$, so the weighted-LCU normalization is $2\Theta_\rho^{(1)}$.
The implementation error is bounded by
\[
\sum_{j\in J}\abs{\frac{c_j}{d_j}}\eps_{\mathrm{inv}}
=
\Theta_\rho^{(1)}\frac{\eps_{\mathrm{inv}}}{R_\rho}.
\]
\end{proof}
\section{Ordinary Sylvester equation}\label{sec:main-results}
This section states the ordinary Sylvester quantum algorithms first and records the technical
nodewise sign-embedding theorems afterwards. The results are organized as shown in \fig{sylvester-regimes}.

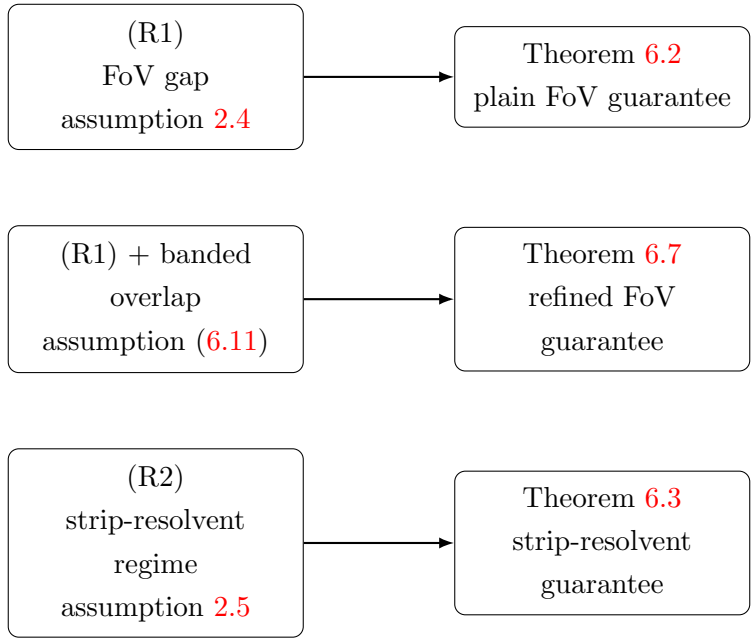
\begin{figure}[H]
\centering
\begin{tikzpicture}[
  >={Latex[length=2mm]},
  node distance=1.8cm and 2.0cm,
  box/.style={draw, rounded corners, align=center, inner sep=6pt, text width=3.5cm},
  arr/.style={->, thick}
]
\node[box] (r1) {$(\mathrm{R1})$\\FoV gap\\\cref{ass:R1}};
\node[box, right=of r1] (t1) {Theorem~\ref{thm:R1-main}\\plain FoV guarantee};
\draw[arr] (r1) -- (t1);

\node[box, below=1cm of r1] (r1b) {$(\mathrm{R1})$ + banded overlap\\assumption \eqref{eq:R1-banded-ass}};
\node[box, right=of r1b] (t1b) {Theorem~\ref{thm:R1-banded}\\refined FoV guarantee};
\draw[arr] (r1b) -- (t1b);

\node[box, below=1cm of r1b] (r2) {$(\mathrm{R2})$\\strip-resolvent regime\\\cref{ass:R2}};
\node[box, right=of r2] (t2) {Theorem~\ref{thm:R2-main}\\strip-resolvent guarantee};
\draw[arr] (r2) -- (t2);
\end{tikzpicture}
\caption{Roadmap for the three ordinary Sylvester regimes treated in \cref{sec:main-results}.}
\label{fig:sylvester-regimes}
\end{figure}

\subsection{An explicit strip-resolvent bound in the FoV gap regime}
\begin{lemma}[Explicit strip-resolvent bound in the FoV gap regime]\label{lem:R1-strip-certificate}
Assume \cref{ass:R1} and fix $a\in(0,\mu)$.
Then
\begin{equation}\label{eq:R1-gammaM}
\gamma_{\Omega_a}(M)
\le
\frac{2}{\mu-a}+\frac{\norm{C}}{(\mu-a)^2}
\le
\frac{2}{\mu-a}+\frac{1}{(\mu-a)^2}.
\end{equation}
\end{lemma}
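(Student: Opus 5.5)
The plan is to use the block inversion already recorded in the proof of \cref{lem:res-factor}. For $z\in\Omega_a$ we write
\[
(z\I-M)^{-1}=\begin{bmatrix}(z\I-A)^{-1} & (z\I-A)^{-1}C(z\I+B)^{-1}\\ 0 & (z\I+B)^{-1}\end{bmatrix},
\]
which holds once both diagonal resolvents exist. We then bound the norm of this block upper triangular operator by the sum of the norms of its diagonal part and its off-diagonal part. The diagonal part is $\operatorname{diag}((z\I-A)^{-1},(z\I+B)^{-1})$, so its norm is the maximum of the two block norms. The off-diagonal part has norm $\norm{(z\I-A)^{-1}}\,\norm{C}\,\norm{(z\I+B)^{-1}}$ at most. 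In fact the maximum suffices for the diagonal part, so the stated constant $2/(\mu-a)$ is even slightly generous. Alternatively we can bound each of the three nonzero blocks separately, which gives exactly $2/(\mu-a)+\norm{C}/(\mu-a)^2$.

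The key step is a uniform bound $\norm{(z\I-A)^{-1}}\le 1/(\mu-a)$ and $\norm{(z\I+B)^{-1}}\le 1/(\mu-a)$ for $\abs{\RePart z}\le a$. This follows the numerical-range argument of \cref{lem:accretive-res}. For a unit vector $x$,
\[
\RePart\bigl(x^*(A-z\I)x\bigr)=x^*H(A)x-\RePart z\ge \mu-a>0,
\]
so $\abs{x^*(A-z\I)x}\ge\mu-a$. This gives $\norm{(A-z\I)x}\ge \mu-a$, hence invertibility and the resolvent bound. For $B$ we similarly have $\RePart(x^*(B+z\I)x)\ge \mu+\RePart z\ge\mu-a$. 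The hypothesis $a<\mu$ is exactly what makes both estimates positive. The same computation shows that $\spec(M)=\spec(A)\cup\spec(-B)$ avoids $\Omega_a$, so the block inversion formula is legitimate throughout the strip.

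Combining these bounds and taking the supremum over $z\in\Omega_a$ gives the first inequality in \eqref{eq:R1-gammaM}. The second inequality uses $\norm{C}\le\norm{M}\le 1$ from \cref{ass:R1}, since $C$ is a compression of $M$. There is no real obstacle in this argument. The only care needed is the norm bound for a $2\times2$ block operator, for which the triangle inequality over blocks is enough. We also have to handle the shift for $B$ with the correct sign: the $(2,2)$ block is $(z\I+B)^{-1}$, and its accretivity margin is $\mu+\RePart z$, not $\mu-\RePart z$, but on the symmetric strip both are at least $\mu-a$.
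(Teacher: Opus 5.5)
Your proposal is correct and follows the paper's proof essentially verbatim: the Hermitian-part shift $H(A-z\I)\succeq(\mu-a)\I$, $H(B+z\I)\succeq(\mu-a)\I$ gives the two diagonal resolvent bounds, and the block-triangular inverse plus the triangle inequality over its blocks gives \eqref{eq:R1-gammaM}, with $\norm{C}\le\norm{M}\le 1$ for the last step. Your side remark is also right: treating the diagonal part as a block-diagonal operator gives the slightly sharper bound $1/(\mu-a)+\norm{C}/(\mu-a)^2$, whereas the paper simply adds the three block norms.
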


\begin{proof}
Let $z\in\Omega_a$, so $\abs{\RePart z}\le a$.
Then
\[
H(A-z\I)=H(A)-\RePart(z)\I\succeq (\mu-a)\I,
\qquad
H(B+z\I)=H(B)+\RePart(z)\I\succeq (\mu-a)\I.
\]
Hence
\[
\norm{(z\I-A)^{-1}}\le \frac{1}{\mu-a},
\qquad
\norm{(z\I+B)^{-1}}\le \frac{1}{\mu-a}.
\]
Using the block factorization
\[
(z\I-M)^{-1}=
\begin{bmatrix}
(z\I-A)^{-1} & (z\I-A)^{-1}C(z\I+B)^{-1} \\
0 & (z\I+B)^{-1}
\end{bmatrix}
\]
and the triangle inequality gives
\[
\norm{(z\I-M)^{-1}}
\le
\norm{(z\I-A)^{-1}} + \norm{(z\I+B)^{-1}} + \norm{(z\I-A)^{-1}}\,\norm{C}\,\norm{(z\I+B)^{-1}},
\]
which yields \eqref{eq:R1-gammaM}.
The last inequality uses $\norm{C}\le \norm{M}\le 1$.
\end{proof}

\subsection{Quantum algorithms for the ordinary Sylvester equation}

\begin{theorem}[Ordinary Sylvester in the FoV gap regime]\label{thm:R1-main}
Assume \cref{ass:R1}.
Then for every $\eps\in(0,1]$ there exist an ancilla count $a_X$ and a unitary $U_X$ such that
\begin{equation}\label{eq:R1-main-output}
\left\|X-\beta_X(\bra{0^{a_X}}\otimes \I)U_X(\ket{0^{a_X}}\otimes \I)\right\|<\eps.
\end{equation}
Moreover,
\begin{equation}\label{eq:R1-betaX}
\beta_X \le 36\Lambda^{\mathrm{syl}}_\star\frac{1}{\mu^2}<21\frac{1}{\mu^2},
\end{equation}
and the query complexities satisfy
\begin{equation}\label{eq:R1-query-main}
\qA = O\!\paren{\frac{1}{\mu}\log\frac{1}{\eps\mu}},
\qquad
\qB = O\!\paren{\frac{1}{\mu}\log\frac{1}{\eps\mu}},
\qquad
\qC = O(1).
\end{equation}
\end{theorem}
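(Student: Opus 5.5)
The proof combines three earlier results: the deterministic log-sinc bound (\cref{cor:XN-error}), the plain implementation (\cref{cor:plain-algorithm}), and the $(\mathrm{R1})$ conditioning bounds (\cref{lem:R1-conditioning,lem:R1-strip-certificate}). The total error budget $\eps$ will be split into two halves.

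\emph{Quadrature error.} Fix $a:=\mu/2$. By \cref{lem:R1-strip-certificate}, \cref{ass:R2}-type strip data hold with
\[
\gamma:=\frac{4}{\mu}+\frac{4}{\mu^2}=O(\mu^{-2}).
\]
Take $\beta:=\tfrac12\arcsin(\mu/2)$, so $\beta=\Theta(\mu)$ and $\sin\beta\le a/2$. Then in \eqref{eq:Ctilde} we have $a/\sin\beta=O(1)$ and $\sin\beta/(a-\sin\beta)\le 1$. Hence
\[
\wt C_{\beta,a,\gamma}=O(\gamma)=O(\mu^{-2}).
\]
We choose $K$ by \eqref{eq:Kchoice-sinc} with $\eps_{\mathrm{sgn}}=\eps$. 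Then \cref{cor:XN-error} gives $\norm{X-X_{K,h}}\le \eps/2$, with
\[
K=O\!\left(\mu^{-1}\log^2\frac{1}{\eps\mu}\right)\quad\text{and}\quad h=\sqrt{2\pi\beta/K}.
\]
We must check $h\le\pi$, so that \eqref{eq:constant-weight-balanced} applies. This holds since $K\ge(\log2)^2/(2\pi\beta)$ gives $h\le 2\pi\beta/\log 2$, and $\beta\le\pi/12$. The number of nodes enters only the ancilla count $O(\log K)$, not the query counts, since the nodes are multiplexed.

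\emph{Implementation error and normalization.} \Cref{lem:R1-conditioning} gives the constant profile $r_A,r_B\le 3/\mu$. Applying \cref{cor:plain-algorithm} yields a block-encoding of $X_{K,h}$ with normalization
\[
4r_Ar_B\Lambda^{\mathrm{syl}}_{K,h}\le 36\Lambda^{\mathrm{syl}}_\star/\mu^2.
\]
The stated numerical constant follows from $36\cdot 0.57<21$; the factor $4$ from the two $2R$ inverse normalizations is already absorbed into this count. The implementation error is at most
\[
\Lambda^{\mathrm{syl}}_\star\norm{C}(\eps_Ar_B+r_A\eps_B+\eps_A\eps_B).
\]
With $\norm{C}\le1$ and $\eps_A=\eps_B=\eps\mu/24$, this is $\le 0.57(\eps/4+\eps/4+\eps^2\mu^2/576)<\eps/2$.

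\emph{Queries and conclusion.} By \eqref{eq:qAqB-generic}, the query counts are $\qA,\qB=O(\mu^{-1}\log(1/(\eps\mu^2)))=O(\mu^{-1}\log(1/(\eps\mu)))$ and $\qC=O(1)$. The triangle inequality then gives total error strictly below $\eps$, because the sign bound in \cref{prop:sign-approx-explicit} is strict.

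\emph{Main obstacle.} The delicate point is keeping the log argument at $1/(\eps\mu)$. The factors $\gamma=O(\mu^{-2})$ and the choice of $\eps_A=\Theta(\eps\mu)$ only change logarithms by constant multiples, since $\log(\mu^{-2}/\eps)\le 2\log(1/(\eps\mu))$ for $\mu,\eps\le1$. The dependence on $\beta$ affects only $K$, not the queries. Verifying these absorptions carefully is the step needing the most care.
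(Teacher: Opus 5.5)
Your proposal is correct and follows the paper's proof: same $a=\mu/2$, $\beta=\tfrac12\arcsin a$, strip bound $\gamma=O(\mu^{-2})$ from \cref{lem:R1-strip-certificate}, plain profile $r_A=r_B=3/\mu$, and $\eps_A=\eps_B=\Theta(\eps\mu)$; the paper only routes \cref{cor:XN-error,cor:plain-algorithm} through \cref{thm:generic-main} and takes $\eps_{\mathrm{sgn}}=\eps/2$, and your check that $h\le\pi$ supplies a step the paper leaves implicit. One harmless slip: since $a=2\sin\beta\cos\beta$, in fact $\sin\beta\ge a/2$, so $\sin\beta/(a-\sin\beta)=1/(2\cos\beta-1)\le 1.08$ rather than $\le 1$, which still gives $\wt C_{\beta,a,\gamma}=O(\mu^{-2})$.
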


\begin{proof}
Choose
\[
a:=\frac{\mu}{2},
\qquad
\beta:=\frac12\arcsin a.
\]
Then $\beta=\Theta(\mu)$ for $\mu\in(0,1]$, and \cref{lem:R1-strip-certificate} gives
\[
\gamma_{\Omega_a}(M)\le \frac{4}{\mu}+\frac{4}{\mu^2}\le \frac{8}{\mu^2}=: \gamma_{\mathrm{R1}}.
\]
Let $K$ satisfy \eqref{eq:Kchoice-sinc} with $\eps_{\mathrm{sgn}}=\eps/2$, $\gamma=\gamma_{\mathrm{R1}}$, and the present choices of $a$ and $\beta$, and set $h=\sqrt{2\pi\beta/K}$.
Then
\[
K=O\!\paren{\frac{1}{\mu}\log^2\frac{1}{\eps\mu}}.
\]
By \cref{lem:R1-conditioning}, the available uniform bounds
\[
r_A=r_B=\frac{3}{\mu}
\]
satisfy \eqref{eq:generic-uniform-profile}. Apply \cref{thm:generic-main} with inverse target precisions
\[
\eps_A = c\,\eps\mu,
\qquad
\eps_B = c\,\eps\mu,
\]
where $c>0$ is a sufficiently small absolute constant.
The deterministic term in \cref{thm:generic-main} is at most $\eps/2$ by construction, and the
implementation term is at most $\eps/2$ because $\Lambda^{\mathrm{syl}}_{K,h}\le \Lambda^{\mathrm{syl}}_\star$
and $r_A,r_B=O(\mu^{-1})$.
The normalization bound follows from
\[
4 r_A r_B\Lambda^{\mathrm{syl}}_{K,h}
\le
4\paren{\frac{3}{\mu}}^2\Lambda^{\mathrm{syl}}_\star.
\]
Finally,
\[
\log\frac{2r_A}{\eps_A}
=
O\!\paren{\log\frac{1}{\eps\mu}},
\qquad
\log\frac{2r_B}{\eps_B}
=
O\!\paren{\log\frac{1}{\eps\mu}},
\]
so \cref{eq:generic-query-main} yields \eqref{eq:R1-query-main}.
\end{proof}

\begin{theorem}[Ordinary Sylvester under strip-resolvent hypotheses]\label{thm:R2-main}
Assume \cref{ass:R2} and the half-plane separation hypotheses of \cref{thm:higham}.
Then for every $\eps\in(0,1]$ there exist an ancilla count $a_X$ and a unitary $U_X$ such that
\begin{equation}\label{eq:R2-main-output}
\left\|X-\beta_X(\bra{0^{a_X}}\otimes \I)U_X(\ket{0^{a_X}}\otimes \I)\right\|<\eps.
\end{equation}
Moreover,
\begin{equation}\label{eq:R2-betaX}
\beta_X \le 36\Lambda^{\mathrm{syl}}_\star \gamma_A\gamma_B\le 36\Lambda^{\mathrm{syl}}_\star \gamma^2
< 21 \gamma^2,
\end{equation}
and the query complexities satisfy
\begin{equation}\label{eq:R2-query-main}
\qA = O\!\paren{\gamma\log\frac{\gamma}{\eps}},
\end{equation}
\begin{equation}\label{eq:R2-query-main-B}
\qB = O\!\paren{\gamma\log\frac{\gamma}{\eps}},
\qquad
\qC = O(1).
\end{equation}
\end{theorem}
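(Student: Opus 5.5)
The proof follows the template of the proof of \cref{thm:R1-main}, with the strip constant $\gamma$ from \cref{ass:R2} replacing the FoV-derived bound of \cref{lem:R1-strip-certificate}. The steps are: choose quadrature parameters, fix the uniform conditioning bounds, apply the generic assembly theorem, and split the error budget.

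\emph{Quadrature parameters.} Use the $a\in(0,1)$ supplied by \cref{ass:R2} and set $\beta:=\tfrac12\arcsin a$. Then apply \cref{prop:sign-approx-explicit} with this $a$, $\beta$ and $\gamma$, choosing $K$ by \eqref{eq:Kchoice-sinc} with $\eps_{\mathrm{sgn}}=\eps/2$ and $h=\sqrt{2\pi\beta/K}$. By \cref{cor:XN-error} this gives $\norm{X-X_{K,h}}\le \eps/4$.

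Since $\wt C_{\beta,a,\gamma}=O(\gamma)$ for fixed $a$, we get $K=O(\log^2(\gamma/\eps))$. Here $a$ and $\beta$ are treated as constants, because the headline bounds carry only $\gamma$. It must also be checked that $h\le\pi$, so that $\Lambda^{\mathrm{syl}}_{K,h}\le\Lambda^{\mathrm{syl}}_\star$ by \cref{lem:constant-weight}. This holds as soon as $K\ge 2\beta/\pi$, which \eqref{eq:Kchoice-sinc} guarantees automatically when $\beta$ is small; otherwise $K$ is enlarged slightly.

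\emph{Uniform conditioning.} By \cref{lem:R2-conditioning}, the uniform bounds $r_A=3\gamma_A$ and $r_B=3\gamma_B$ are admissible, both at most $3\gamma$. Plugging them into \cref{cor:plain-algorithm}, equivalently the generic theorem \cref{thm:generic-main} used in the proof of \cref{thm:R1-main}, gives normalization
\[
4r_Ar_B\Lambda^{\mathrm{syl}}_{K,h}\le 36\Lambda^{\mathrm{syl}}_\star\gamma_A\gamma_B\le 36\Lambda^{\mathrm{syl}}_\star\gamma^2<21\gamma^2,
\]
which is \eqref{eq:R2-betaX}.

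\emph{Error budget and query cost.} By \eqref{eq:XhatN-error} together with $\norm{C}\le 1$, the implementation error is at most
\[
\Lambda^{\mathrm{syl}}_\star\,(\eps_A r_B+r_A\eps_B+\eps_A\eps_B).
\]
Choosing $\eps_A=\eps_B=c\,\eps/\gamma$ for a small absolute constant $c$ makes this at most $\eps/2$. The triangle inequality with the $\eps/4$ sign error then gives total error strictly below $\eps$. The query bounds \eqref{eq:qAqB-generic} become
\[
O\!\paren{\gamma\log\frac{\gamma^2}{\eps}}=O\!\paren{\gamma\log\frac{\gamma}{\eps}}
\]
for both $A$ and $B$, and $\qC=O(1)$ because $C$ appears once in each LCU term.

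\emph{Main obstacle.} The main difficulty is bookkeeping rather than analysis: tracking the dependence on $a$ (and hence $\beta$) hidden in $\wt C_{\beta,a,\gamma}$ and $K$, and confirming the side conditions $s\ge\log 2$ and $h\le\pi$. The quadrature size $K$ affects only the $O(\log K)$ ancilla count, not the query counts, because all nodes are multiplexed into a single QSVT inverse. So even a crude $a$-dependence in $K$ does not spoil \eqref{eq:R2-query-main}.
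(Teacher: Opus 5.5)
Your proposal is correct and follows the paper's proof essentially step for step: you take $\beta=\tfrac12\arcsin a$, choose $K$ from \eqref{eq:Kchoice-sinc} with $\eps_{\mathrm{sgn}}=\eps/2$, feed the uniform bounds of \cref{lem:R2-conditioning} into \cref{thm:generic-main}, and set $\eps_A=\eps_B=c\,\eps/\gamma$. Two small remarks. First, using $r_A=3\gamma_A$, $r_B=3\gamma_B$ (the paper uses $r_A=r_B=3\gamma$) is what actually justifies the intermediate $\gamma_A\gamma_B$ bound in \eqref{eq:R2-betaX}; this requires $\gamma_A,\gamma_B$, or upper bounds on them, to be classically available, and otherwise only the $\gamma^2$ bound follows. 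Second, your hedge about $h\le\pi$ is unnecessary, since $2\beta/\pi<1\le K$ makes it automatic.
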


\begin{proof}
Choose
\[
\beta:=\frac12\arcsin a.
\]
Let $K$ satisfy \eqref{eq:Kchoice-sinc} with $\eps_{\mathrm{sgn}}=\eps/2$, $\gamma=\gamma$, and the present choices of $a$ and $\beta$, and set $h=\sqrt{2\pi\beta/K}$.
Because $\beta$ depends only on $a$, one has
\[
K=O\!\paren{\log^2\frac{\gamma}{\eps}},
\]
and the deterministic term in \cref{thm:generic-main} is at most $\eps/2$.
By \cref{lem:R2-conditioning}, the available uniform bounds
\[
r_A=r_B=3\gamma
\]
satisfy \eqref{eq:generic-uniform-profile}. Apply \cref{thm:generic-main} with inverse target precisions
\[
\eps_A = c\,\frac{\eps}{\gamma},
\qquad
\eps_B = c\,\frac{\eps}{\gamma},
\]
where $c>0$ is a sufficiently small absolute constant.
Then the implementation term is at most $\eps/2$, because $\Lambda^{\mathrm{syl}}_{K,h}\le \Lambda^{\mathrm{syl}}_\star$
and $r_A,r_B=O(\gamma)$.
Substituting $r_A,r_B\le 3\gamma$ into \cref{eq:generic-main-output} yields
\eqref{eq:R2-betaX}.
Moreover,
\[
\log\frac{2r_A}{\eps_A}
=
O\!\paren{\log\frac{\gamma}{\eps}},
\qquad
\log\frac{2r_B}{\eps_B}
=
O\!\paren{\log\frac{\gamma}{\eps}},
\]
so \cref{eq:generic-query-main} yields \eqref{eq:R2-query-main} and \eqref{eq:R2-query-main-B}.
\end{proof}

\begin{remark}[A direct augmented-resolvent implementation]
One may also implement the Sylvester sign block without using the off-diagonal
factorization
\[
[(zI-M)^{-1}]_{12}
=
(zI-A)^{-1}C(zI+B)^{-1}.
\]
Namely, since \(X=\frac12[\mathrm{sign}(M)]_{12}\), one can apply the single-family
implementation directly to the shifted augmented resolvents
\[
(1+t_k)(M\pm it_k I)^{-1}.
\]
With a nodewise profile
\[
(1+t_k)\|(M\pm it_kI)^{-1}\|\le \rho_{M,k}^{\pm},
\]
define
\[
\Theta_{\rho_M}^{\mathrm{dir}}
:=
\sum_{k=-K}^{K}
\frac{h t_k}{2\pi(1+t_k)}
\bigl(\rho_{M,k}^{-}+\rho_{M,k}^{+}\bigr).
\]
Then the direct implementation gives a block-encoding of \(X\) with normalization
\[
\beta_X^{\mathrm{dir}}=2\Theta_{\rho_M}^{\mathrm{dir}}.
\]
Under the plain profile
\[
\rho_{M,k}^{\pm}\equiv r_M,
\qquad
r_M=\max_{k,\pm}(1+t_k)\|(M\pm it_kI)^{-1}\|,
\]
this becomes
\[
\beta_X^{\mathrm{dir}}
=
r_M\Lambda_{K,h}^{\mathrm{care}}
=
O\!\left(
r_M\log\frac{\widetilde C_{\beta,a,\gamma}}{\epsilon_{\mathrm{sign}}}
\right).
\]
In particular, under a full augmented strip-resolvent bound
\(\gamma\ge \gamma_{\Omega_a}(M)\), one has \(r_M\le 3\gamma\), and hence
\[
\beta_X^{\mathrm{dir}}
=
O\!\left(
\gamma
\log\frac{\widetilde C_{\beta,a,\gamma}}{\epsilon_{\mathrm{sign}}}
\right).
\]

This route can improve the coarse fallback bound \(O(\gamma^2)\) obtained by
bounding both diagonal resolvent families by the same full augmented constant.
However, it loses the bounded-coefficient advantage of the Sylvester factorization:
the factorized implementation has normalization \(O(\gamma_A\gamma_B)\), or more
generally \(4\Theta_\rho^{\mathrm{syl}}\), with a bounded coefficient sum. Therefore
the direct augmented-resolvent route is preferable only when
\[
\gamma\log(\widetilde C_{\beta,a,\gamma}/\epsilon_{\mathrm{sign}})
\ll
\gamma_A\gamma_B,
\]
or when only direct block-encoding access to \(M\) is available.
\end{remark}

\begin{theorem}[Optimal shift-rotation from a FoV gap]\label{thm:fov-certificate}
Let $A_0\in\C^{n\times n}$ and $B_0\in\C^{m\times m}$, and define
\begin{equation}\label{eq:delta-fov}
\delta := \dist\paren{W(A_0), -W(B_0)} > 0.
\end{equation}
Then there exist $\eta\in\C$ with $\abs{\eta}=1$ and $\omega\in\C$ such that, with
\[
A' := \eta(A_0-\omega\I),
\qquad
B' := \eta(B_0+\omega\I),
\]
one has
\begin{equation}\label{eq:fov-optimal-margin}
H(A') \succeq \frac{\delta}{2}\I,
\qquad
H(B') \succeq \frac{\delta}{2}\I.
\end{equation}
Moreover, $\delta/2$ is the largest common half-plane margin achievable by any shift-rotation.
\end{theorem}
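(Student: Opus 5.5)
The plan is to reduce the statement to separating two compact convex sets in the plane by a line. By Toeplitz--Hausdorff, $W(A_0)$ and $-W(B_0)$ are compact convex subsets of $\C\cong\R^2$, and the hypothesis says their distance $\delta$ is positive.

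\emph{Step 1: pick the extremal pair and the midpoint.} By compactness there exist $p\in W(A_0)$ and $q\in -W(B_0)$ with $\abs{p-q}=\delta$. Set $\omega:=(p+q)/2$ and $\eta:=\overline{(p-q)}/\abs{p-q}$, so that $\eta(p-q)=\delta>0$ is real and positive.

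\emph{Step 2: the separating line.} The standard projection characterization of a nearest-point pair between convex sets gives
\[
\RePart\bigl(\overline{(p-q)}(w-p)\bigr)\ge 0\quad\text{for all } w\in W(A_0),
\qquad
\RePart\bigl(\overline{(p-q)}(v-q)\bigr)\le 0\quad\text{for all } v\in -W(B_0).
\]
These follow from the first-order optimality of $\abs{(1-s)p+sw-q}^2$ at $s=0$, and symmetrically for $q$. Multiplying by $1/\abs{p-q}$ and subtracting $\omega$ yields
\[
\RePart\bigl(\eta(w-\omega)\bigr)\ge \RePart\bigl(\eta(p-\omega)\bigr)=\delta/2,
\qquad
\RePart\bigl(\eta(v-\omega)\bigr)\le -\delta/2 .
\]

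\emph{Step 3: translate to Hermitian parts.} We have $W(A')=\eta(W(A_0)-\omega)$ and $x^*H(K)x=\RePart(x^*Kx)$. Hence the first inequality gives $H(A')\succeq(\delta/2)\I$. For $B'$, note that $W(B')=\eta(W(B_0)+\omega)=-\eta(-W(B_0)-\omega)$. The second inequality then gives $\RePart W(B')\ge\delta/2$, that is, $H(B')\succeq(\delta/2)\I$.

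\emph{Step 4: optimality.} Suppose a shift-rotation $(\eta',\omega')$ achieves $H(A'')\succeq m\I$ and $H(B'')\succeq m\I$. Then $\RePart(\eta'(w-\omega'))\ge m$ for $w\in W(A_0)$ and $\RePart(\eta'(v-\omega'))\le -m$ for $v\in -W(B_0)$. Taking $w=p$ and $v=q$ and subtracting gives $2m\le\RePart(\eta'(p-q))\le\abs{p-q}=\delta$, so $m\le\delta/2$.

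I expect Step 2 to be the only step needing care. It is the nearest-point variational inequality, which I would prove directly via the derivative of the squared distance along segments within each convex set, using convexity. All remaining steps are bookkeeping with the identity $W(\eta(K-\omega\I))=\eta(W(K)-\omega)$.
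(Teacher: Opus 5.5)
Your proof is correct and follows essentially the same route as the paper: nearest points $p\in W(A_0)$ and $q\in -W(B_0)$, the midpoint shift $\omega=(p+q)/2$, the rotation that makes $\eta(p-q)=\delta$ real and positive, the two supporting-line inequalities, and the identity $\lambda_{\min}(H(T))=\min\RePart W(T)$. Two of your steps are more explicit than the paper's. Your nearest-point variational inequality proves the support property that the paper only asserts ``by convexity.'' Your optimality step, which evaluates the half-plane inequalities at $p$ and $q$, is a concrete version of the paper's remark that a shift-rotation moves the pair $(W(A_0),-W(B_0))$ rigidly and so cannot increase their distance.
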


\begin{proof}
Because $W(A_0)$ and $-W(B_0)$ are compact, convex, and disjoint, there exist closest points
$a_\star\in W(A_0)$ and $b_\star\in -W(B_0)$ with $\abs{a_\star-b_\star}=\delta$.
Let $v:=a_\star-b_\star$ and choose $\eta=e^{-i\arg(v)}$, so that $\eta v = \delta$ is positive real.
Set $\omega=(a_\star+b_\star)/2$.
By convexity, the common perpendicular bisector through $\omega$ supports both sets at the
closest points, hence
\[
\RePart\,\eta(a-\omega) \ge \frac{\delta}{2}\quad \forall a\in W(A_0),
\qquad
\RePart\,\eta(b-\omega) \le -\frac{\delta}{2}\quad \forall b\in -W(B_0).
\]
Replacing $b$ by $-z$ with $z\in W(B_0)$ gives
\[
\RePart\,\eta(z+\omega) \ge \frac{\delta}{2}\quad \forall z\in W(B_0).
\]
The characterization $\lambda_{\min}(H(T)) = \min\set{\RePart z \suchthat z\in W(T)}$ now yields
\eqref{eq:fov-optimal-margin}.
Optimality is immediate because shift-rotation acts by a rigid motion on the pair
$\bigl(W(A_0),-W(B_0)\bigr)$, so no such transformation can increase their Euclidean distance.
\end{proof}

\begin{corollary}[Ordinary Sylvester under a FoV gap]\label{cor:fov-main}
If the original pair $(A_0,B_0)$ satisfies \eqref{eq:delta-fov}, then after applying the optimal
shift-rotation from \cref{thm:fov-certificate} and a common positive scaling $\lambda$ that enforces
$\norm{M}\le 1$, the conclusion of \cref{thm:R1-main} holds with
\[
\mu = \frac{\lambda\delta}{2}
\]
in the transformed coordinates.
\end{corollary}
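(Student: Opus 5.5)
The plan is to reduce \cref{cor:fov-main} to \cref{thm:R1-main} by checking that the transformed triple satisfies \cref{ass:R1} with margin $\mu=\lambda\delta/2$. We first apply \cref{thm:fov-certificate} to $(A_0,B_0)$. This produces $\eta$ and $\omega$ such that $A'=\eta(A_0-\omega\I)$ and $B'=\eta(B_0+\omega\I)$ satisfy $H(A')\succeq\frac{\delta}{2}\I$ and $H(B')\succeq\frac{\delta}{2}\I$. We then set $C'=\eta C_0$. By \cref{lem:shift-rotation}, the solution $X$ of the original equation also solves $A'X+XB'=C'$.

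Next we choose any $\lambda>0$ with $\lambda\,\norm{M'}\le 1$, where
\[
M'=\begin{bmatrix}A'&C'\\0&-B'\end{bmatrix},
\]
for instance $\lambda=1/\max\{1,\norm{M'}\}$ or $\lambda=1/\norm{M'}$. We apply \cref{lem:shift-rotation} once more, now with $\eta=1$, $\omega=0$, and this $\lambda$. Then $A''=\lambda A'$, $B''=\lambda B'$, $C''=\lambda C'$ still have solution $X$, and the augmented matrix $M''=\lambda M'$ satisfies $\norm{M''}\le 1$.

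The Hermitian part is real-linear under positive scaling, so $H(A'')=\lambda H(A')\succeq \frac{\lambda\delta}{2}\I$, and the same holds for $B''$. This gives \cref{ass:R1} with $\mu=\lambda\delta/2$. \Cref{ass:R1} also requires $\mu\le 1$. This follows from $\norm{M''}\le 1$: the matrix $A''$ is a compression of $M''$, so for a unit vector $x$ we have $\mu\le x^*H(A'')x=\RePart(x^*A''x)\le\norm{A''}\le 1$.

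Because the transformations leave $X$ unchanged, invoking \cref{thm:R1-main} gives the block-encoding of the same $X$. It also gives the stated normalization $\beta_X<21/\mu^2$ and query counts in terms of $\mu=\lambda\delta/2$.

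The only delicate point is the oracle model. \Cref{ass:be} asks for unit block-encodings of the transformed $A'',B'',C''$. These are obtained by an LCU of $U_{A_0}$ with the identity using coefficients $\lambda\eta$ and $-\lambda\eta\omega$, and similarly for $B_0$ and $C_0$, at $O(1)$ query overhead. This requires $\lambda(\norm{A_0}+|\omega|)\le 1$ and the analogous condition for $B_0$; alternatively, we take $\lambda$ small enough to absorb these LCU weights. This choice may be smaller than the minimal $\lambda$ enforcing $\norm{M''}\le1$, but the statement is phrased for whatever common scaling is used, so the conclusion holds with $\mu=\lambda\delta/2$ as claimed. I expect this bookkeeping to be the main obstacle. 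The remaining steps are direct substitutions.
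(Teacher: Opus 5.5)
Your proposal is correct and follows the paper's own route. The paper's proof is the one line ``apply \cref{thm:fov-certificate} and then invoke \cref{thm:R1-main}'', and you do the same: shift-rotate, rescale by $\lambda$ so that $\norm{M}\le 1$, and invoke \cref{thm:R1-main}. Your extra checks are sound: $\mu=\lambda\delta/2\le 1$ does follow from $\norm{A}\le\norm{M}\le 1$. The oracle bookkeeping is also sound but is not needed in the paper's setting, because \cref{ass:be} is posited directly for the transformed matrices.
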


\begin{proof}
Apply \cref{thm:fov-certificate} and then invoke \cref{thm:R1-main}.
\end{proof}

\begin{theorem}[Banded-overlap refinement in the FoV gap regime]\label{thm:R1-banded}
Assume \cref{ass:R1}, and suppose in addition that
\begin{equation}\label{eq:R1-banded-ass}
W(A)\subset \set{z\in\C \suchthat \RePart z\ge \mu,\ \abs{\ImPart z}\le \tau},
\qquad
W(B)\subset \set{z\in\C \suchthat \RePart z\ge \mu,\ \abs{\ImPart z}\le \tau}
\end{equation}
for some $\tau\in[0,1]$.
Then for every $\eps\in(0,1]$ there exist an ancilla count $a_X$ and a unitary $U_X$ such that
\begin{equation}\label{eq:R1-band-output}
\left\|X-\beta_X(\bra{0^{a_X}}\otimes \I)U_X(\ket{0^{a_X}}\otimes \I)\right\|<\eps,
\end{equation}
with
\begin{equation}\label{eq:R1-band-beta}
\beta_X
\le
4\left(
\frac{1}{\mu}
+
\left(1+\frac{1}{\pi}\right)\frac{\tau}{\mu^2}
\right),
\end{equation}
and query complexities
\begin{equation}\label{eq:R1-band-queries}
\qA
=
O\!\paren{\frac{1}{\mu}\log\frac{\tau}{\eps\mu}},
\qquad
\qB
=
O\!\paren{\frac{1}{\mu}\log\frac{\tau}{\eps\mu}},
\qquad
\qC=O(1).
\end{equation}
\end{theorem}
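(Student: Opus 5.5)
The plan is to rerun the proof of \cref{thm:R1-main} unchanged on the deterministic side, and to replace the plain uniform profile by a nodewise Sylvester profile (\cref{def:sylvester-profile}) that exploits the strip geometry \eqref{eq:R1-banded-ass}. We then invoke the rebalanced implementation \cref{thm:profile-implementation} in place of \cref{cor:plain-algorithm}.

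\emph{Sign-approximation step.} Take $a=\mu/2$ and $\beta=\tfrac12\arcsin a$. Use the certificate $\gamma_{\mathrm{R1}}=8/\mu^2$ from \cref{lem:R1-strip-certificate}, and choose $K$ by \eqref{eq:Kchoice-sinc} with $\eps_{\mathrm{sgn}}=\eps/2$. By \cref{cor:XN-error} this gives $\norm{X-X_{K,h}}\le\eps/4$ with $h=\sqrt{2\pi\beta/K}$. We may additionally enlarge $K$ so that $h\le\min\{\pi,\mu\}$; this costs only constant factors.

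\emph{Nodewise profile from the numerical range.} For a matrix $T$ one has $\norm{(T-z\I)^{-1}}\le 1/\dist(z,W(T))$. For $t\ge0$, the point $it$ lies at distance at least $\sqrt{\mu^2+(t-\tau)_+^2}$ from the half-strip $\{\RePart z\ge\mu,\ \abs{\ImPart z}\le\tau\}$. Hence
\[
\norm{(A-it\I)^{-1}}\le \frac{1}{\sqrt{\mu^2+(t-\tau)_+^2}}.
\]
By the symmetry of the strip, the same bound holds for $(A+it\I)^{-1}$ and for $(B\pm it\I)^{-1}$. We also keep the Neumann bound $1/(t-1)$ for $t>1$. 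Set
\[
\rho_{A,k}^\pm=\rho_{B,k}^\pm:=(1+t_k)\min\Bigl\{\bigl(\mu^2+(t_k-\tau)_+^2\bigr)^{-1/2},\,(t_k-1)_+^{-1}\Bigr\},
\]
with the second entry omitted when $t_k\le1$. As in \cref{lem:R1-conditioning}, this yields $R_A,R_B\le 3/\mu$.

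\emph{Bounding the overlap sum.} Using only the first bound in each factor, we get
\[
\Theta^{\mathrm{syl}}_\rho\le \frac{h}{\pi}\sum_k \frac{t_k}{\mu^2+(t_k-\tau)_+^2}.
\]
We compare this Riemann sum in $x=\log t$ with the integral
\[
\int_0^\infty \frac{dt}{\mu^2+(t-\tau)_+^2}=\frac{\tau}{\mu^2}+\frac{\pi}{2\mu}.
\]
The summand $g(x)=e^x/(\mu^2+(e^x-\tau)_+^2)$ increases and then decreases, so it is unimodal. Hence $h\sum_k g(kh)\le\int g+h\max g$, and $\max g\le \max_t t/(\mu^2+(t-\tau)_+^2)\le(\tau+\mu)/\mu^2$ up to a constant. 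With $h\le\mu$ the extra term is $O(1/\mu+\tau/\mu)$, which is dominated by $\tau/\mu^2$ since $\mu\le1$. Collecting terms gives
\[
4\Theta^{\mathrm{syl}}_\rho\le 4\Bigl(\frac1{2\mu}+\frac{\tau}{\pi\mu^2}+\frac{1}{\pi\mu}+\frac{\tau}{\pi\mu^2}+\cdots\Bigr),
\]
and bookkeeping of the constants should fit within \eqref{eq:R1-band-beta}. We expect this to be the main obstacle: matching the exact constants $1$ and $1+1/\pi$. If the crude unimodality bound is too lossy, we will first split the sum at $t=\tau$. On $t\le\tau$ the summand is at most $t/\mu^2$, and the sum there is bounded by $(1+h)\tau/\mu^2$ using geometric nodes. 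On $t>\tau$ we bound it via the Lorentzian integral plus one endpoint term.

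\emph{Implementation error and queries.} In \cref{thm:profile-implementation} take $\eps_A=\eps_B=c\,\eps\mu\min\{1,\mu/\tau\}$, with the convention $\mu/\tau=1$ when $\tau=0$. Then the implementation error is at most
\[
\Theta^{\mathrm{syl}}_\rho\cdot O\Bigl(\frac{\eps_A}{R_A}\Bigr)=O\Bigl(\Bigl(\frac1\mu+\frac{\tau}{\mu^2}\Bigr)\eps_A\,\mu\Bigr)\le\frac{\eps}{4}.
\]
The query bound \eqref{eq:qAqB-rho} then gives $O\bigl(\mu^{-1}\log\tfrac{1}{\eps_A\mu}\bigr)$. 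This matches \eqref{eq:R1-band-queries}, reading $\log(\tau/(\eps\mu))$ as $\log((1+\tau)/(\eps\mu))$ in the $O$-sense. Combining the two error terms gives total error below $\eps$, which completes the argument.
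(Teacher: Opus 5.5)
Your proposal is correct and follows essentially the same route as the paper. Both use the choices $a=\mu/2$, $\beta=\tfrac12\arcsin a$ and $\gamma=8/\mu^2$. Both build a nodewise profile from $\dist(\pm it,W)\ge\sqrt{\mu^2+(t-\tau)_+^2}$ and feed it into the rebalanced implementation; your extra Neumann minimum is harmless. Both then use the unimodal Riemann-sum bound $h\sum g(kh)\le\int g+h\sup g$ with $\int g=\tau/\mu^2+\pi/(2\mu)$.

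The constant bookkeeping you flagged as the main obstacle does close. The paper uses the exact maximum $\sup g=(\sqrt{\mu^2+\tau^2}+\tau)/(2\mu^2)\le\frac{1}{2\mu}+\frac{\tau}{\mu^2}$ and needs only $h<\pi$. Your own displayed collection, using $h\le 1$ and $\max g\le(\mu+\tau)/\mu^2$, already gives $(\tfrac12+\tfrac1\pi)\tfrac1\mu+\tfrac{2}{\pi}\tfrac{\tau}{\mu^2}$. That lies within the stated bound, so no split at $t=\tau$ is needed.
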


\begin{proof}
Choose
\[
a:=\frac{\mu}{2},
\qquad
\beta:=\frac12\arcsin a.
\]
By \cref{lem:R1-strip-certificate},
\[
\gamma_{\Omega_a}(M)\le \frac{4}{\mu}+\frac{4}{\mu^2}\le \frac{8}{\mu^2}=: \gamma_{\mathrm{R1}}.
\]
Let $K$ satisfy \eqref{eq:Kchoice-sinc} with $\eps_{\mathrm{sgn}}=\eps/2$, $\gamma=\gamma_{\mathrm{R1}}$, and the present choices of $a$ and $\beta$, and set $h=\sqrt{2\pi\beta/K}$.
Then
\[
K=O\!\paren{\frac{1}{\mu}\log^2\frac{1}{\eps\mu}}.
\]

Fix $t\ge 0$.
For $z=it$ and every $w\in W(A)$, assumption \eqref{eq:R1-banded-ass} gives
\[
\abs{z-w}^2 \ge \mu^2 + (t-\tau)_+^2.
\]
Hence
\[
\dist(it,W(A))\ge \sqrt{\mu^2+(t-\tau)_+^2},
\]
and the standard numerical-range resolvent bound yields
\[
\norm{(A-it\I)^{-1}}
\le
\frac{1}{\sqrt{\mu^2+(t-\tau)_+^2}}.
\]
The same bound holds for $(A+it\I)^{-1}$, $(B-it\I)^{-1}$, and $(B+it\I)^{-1}$ by applying the same argument to the points $\pm it$ and the sets $W(A)$ and $W(B)$.
Define the available nodewise profile
\[
\rho_{A,k}^{\pm}=\rho_{B,k}^{\pm}
:=
\frac{1+t_k}{\sqrt{\mu^2+(t_k-\tau)_+^2}}.
\]
This profile is admissible in the sense of \cref{def:sylvester-profile}.  Its overlap parameter satisfies
\begin{equation}\label{eq:R1-band-theta}
\Theta_{\rho}^{\mathrm{syl}}
\le
\frac{h}{\pi}\sum_{k=-K}^{K}\frac{t_k}{\mu^2+(t_k-\tau)_+^2}.
\end{equation}

Define
\[
q_{\mu,\tau}(x):=\frac{e^x}{\mu^2+(e^x-\tau)_+^2}.
\]
The function $q_{\mu,\tau}$ is nonnegative and unimodal on $\R$, so
\[
h\sum_{k\in\mathbb Z}q_{\mu,\tau}(kh)
\le
h\sup_{x\in\mathbb R}q_{\mu,\tau}(x)+\int_{-\infty}^{\infty}q_{\mu,\tau}(x)\,dx.
\]
Now
\[
\int_{-\infty}^{\infty}q_{\mu,\tau}(x)\,dx
=
\int_0^\infty \frac{dt}{\mu^2+(t-\tau)_+^2}
=
\frac{\tau}{\mu^2}+\frac{\pi}{2\mu}.
\]
Moreover, the maximum of $t\mapsto t/(\mu^2+(t-\tau)_+^2)$ on $[0,\infty)$ is attained at
$t_\star=\sqrt{\mu^2+\tau^2}$, and
\[
\sup_{x\in\R}q_{\mu,\tau}(x)
=
\frac{t_\star}{\mu^2+(t_\star-\tau)^2}
=
\frac{t_\star+\tau}{2\mu^2}
\le
\frac{1}{2\mu}+\frac{\tau}{\mu^2},
\]
because $t_\star\le \mu+\tau$.
Since the balanced choice satisfies $h<\pi$, we conclude from \eqref{eq:R1-band-theta} that
\begin{align*}
\Theta_{\rho}^{\mathrm{syl}}
&\le
\frac{1}{\pi}\left(
h\sup q_{\mu,\tau}
+
\int_{-\infty}^{\infty}q_{\mu,\tau}(x)\,dx
\right)\\
&\le
\left(\frac{1}{2\mu}+\frac{\tau}{\mu^2}\right)
+
\left(\frac{1}{2\mu}+\frac{\tau}{\pi\mu^2}\right)\\
&=
\frac{1}{\mu}
+
\left(1+\frac{1}{\pi}\right)\frac{\tau}{\mu^2}.
\end{align*}

By the same split as in \cref{lem:R1-conditioning}, this profile satisfies $R_A,R_B\le 3/\mu$.
Apply \cref{thm:generic-overlap} with the profile $\rho$ and choose
\[
\eps_A
=
c\,\frac{\eps}{1+\tau/\mu},
\qquad
\eps_B
=
c\,\frac{\eps}{1+\tau/\mu},
\]
where $c>0$ is a sufficiently small absolute constant.
The deterministic term is at most $\eps/2$ by construction, and the implementation term satisfies
\[
\Theta_{\rho}^{\mathrm{syl}}\norm{C}\left(
\frac{\eps_A}{R_A}
+
\frac{\eps_B}{R_B}
+
\frac{\eps_A\eps_B}{R_A R_B}
\right)
=
O\!\Bigl(
\paren{1+\frac{\tau}{\mu}}(\eps_A+\eps_B)+\eps_A\eps_B
\Bigr)
\le \frac{\eps}{2}.
\]
The block-encoding normalization is bounded by
\[
4\Theta_{\rho}^{\mathrm{syl}}
\le
4\left(
\frac{1}{\mu}
+
\left(1+\frac{1}{\pi}\right)\frac{\tau}{\mu^2}
\right),
\]
which is \eqref{eq:R1-band-beta}.
Finally, substituting $R_A,R_B=O(1/\mu)$ and the above choices of $\eps_A,\eps_B$ into \cref{eq:generic-overlap-queries} yields \eqref{eq:R1-band-queries}.
\end{proof}

\begin{corollary}[Diagonalizable strip bound]\label{cor:diag-strip}
Assume the hypotheses of \cref{thm:R2-main}, and suppose in addition that $M=SDS^{-1}$ is
diagonalizable and
\[
d := \dist\paren{\spec(M),\Omega_a} > 0.
\]
Then
\begin{equation}\label{eq:diag-strip}
\gamma \le \frac{\kappa(S)}{d}.
\end{equation}
Consequently, for every $\eps\in(0,1]$ there exist an ancilla count $a_X$ and a unitary $U_X$ such that
\[
\left\|X-\beta_X(\bra{0^{a_X}}\otimes \I)U_X(\ket{0^{a_X}}\otimes \I)\right\|<\eps,
\]
with
\[
\beta_X \le 36\Lambda^{\mathrm{syl}}_\star \frac{\kappa(S)^2}{d^2},
\]
and query complexities
\begin{align*}
\qA &= O\!\paren{\frac{\kappa(S)}{d}\log\frac{\kappa(S)}{\eps d}},\\
\qB &= O\!\paren{\frac{\kappa(S)}{d}\log\frac{\kappa(S)}{\eps d}},\\
\qC &= O(1).
\end{align*}
\end{corollary}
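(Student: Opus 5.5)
The plan is to prove the resolvent bound \eqref{eq:diag-strip} by the standard Bauer--Fike type estimate, and then substitute it into \cref{thm:R2-main}.

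For the first step, fix $z\in\Omega_a$. Since $M=SDS^{-1}$, we can write
\[
(z\I-M)^{-1}=S(z\I-D)^{-1}S^{-1}.
\]
Submultiplicativity then gives
\[
\norm{(z\I-M)^{-1}}\le \kappa(S)\,\norm{(z\I-D)^{-1}}.
\]
Because $D$ is diagonal with the eigenvalues of $M$ on its diagonal, $\norm{(z\I-D)^{-1}}=1/\min_{\lambda\in\spec(M)}\abs{z-\lambda}$. We have $z\in\Omega_a$ and $\dist(\spec(M),\Omega_a)=d$, so every such distance $\abs{z-\lambda}$ is at least $d$. Taking the supremum over $z\in\Omega_a$ yields $\gamma_{\Omega_a}(M)\le\kappa(S)/d$.

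One subtlety is that \cref{ass:R2} treats $\gamma$ as any known upper bound on $\gamma_{\Omega_a}(M)$. We therefore interpret \eqref{eq:diag-strip} as saying that the choice $\gamma:=\kappa(S)/d$ is admissible. The remaining hypotheses of \cref{ass:R2} do hold: $\spec(M)\cap i\R=\varnothing$ follows from $d>0$, and $\norm{M}\le1$ is part of the assumption.

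For the second step, we invoke \cref{thm:R2-main} with $\gamma=\kappa(S)/d$. Its normalization bound $36\Lambda^{\mathrm{syl}}_\star\gamma^2$ becomes $36\Lambda^{\mathrm{syl}}_\star\kappa(S)^2/d^2$. Its query bounds $O(\gamma\log(\gamma/\eps))$ become $O\!\left(\frac{\kappa(S)}{d}\log\frac{\kappa(S)}{\eps d}\right)$, while $\qC=O(1)$ is unchanged.

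There is no real obstacle; this is routine. The only point needing care is that \cref{thm:R2-main} holds for any admissible $\gamma$, so a larger certified upper bound suffices and we never need the exact strip supremum.
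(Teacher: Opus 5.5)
Your proposal is correct and follows the paper's proof exactly: on $\Omega_a$ you use the similarity bound $\norm{(z\I-M)^{-1}}\le\kappa(S)\norm{(z\I-D)^{-1}}\le\kappa(S)/d$, and then substitute the admissible choice $\gamma=\kappa(S)/d$ into \cref{thm:R2-main}. You also read \eqref{eq:diag-strip} as the statement $\gamma_{\Omega_a}(M)\le\kappa(S)/d$, so that $\gamma=\kappa(S)/d$ is a valid strip constant; this is the right way to interpret the corollary's wording.
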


\begin{proof}
For $z\in\Omega_a$,
\[
(z\I-M)^{-1} = S(z\I-D)^{-1}S^{-1},
\]
so
\[
\norm{(z\I-M)^{-1}} \le \kappa(S)\,\norm{(z\I-D)^{-1}} \le \frac{\kappa(S)}{d}.
\]
This proves \eqref{eq:diag-strip}.
The stated normalization and query bounds follow by substituting
$\gamma\le \kappa(S)/d$ into \cref{thm:R2-main}.
\end{proof}

\subsection{Technical sign-embedding theorems}

We now state the technical sign-embedding results that underlie the ordinary Sylvester theorems in Section~6.2. Theorem~\ref{thm:R1-main}, Theorem~\ref{thm:R2-main} and Theorem~\ref{thm:R1-banded} are its specialization. 

\begin{theorem}[Rebalanced sign-embedding theorem for the ordinary Sylvester equation]\label{thm:generic-overlap}
Assume the transformed matrices satisfy \eqref{eq:Mnorm} and the half-plane separation hypotheses
of \cref{thm:higham}.
Suppose there exist parameters
\[
a\in(0,1),\qquad \beta\in(0,\arcsin a),\qquad \gamma\ge \gamma_{\Omega_a}(M),
\]
and choose any $K\ge 1$, any $h>0$, inverse precisions $\eps_A,\eps_B\in(0,1]$, and a classically available Sylvester nodewise inverse profile $\rho$ in the sense of \cref{def:sylvester-profile}.
Let $R_A$, $R_B$, and $\Theta_{\rho}^{\mathrm{syl}}$ be defined by \eqref{eq:syl-profile-R}--\eqref{eq:Theta-syl-profile}.
Then there exists a unitary $U_X^{\mathrm{reb}}$ that is a
\begin{equation}\label{eq:generic-overlap-output}
\paren{
4\Theta_{\rho}^{\mathrm{syl}},
\ a_A+a_B+a_C+O(\log K),
\ \eps_{\mathrm{det}}
+
\Theta_{\rho}^{\mathrm{syl}}\norm{C}\left(
\frac{\eps_A}{R_A}
+
\frac{\eps_B}{R_B}
+
\frac{\eps_A\eps_B}{R_A R_B}
\right)
}
\text{-block-encoding of }X,
\end{equation}
where
\begin{equation}\label{eq:eps-det-overlap}
\eps_{\mathrm{det}} := \half\eps_{\mathrm{sgn}}^{\mathrm{sinc}}(K,h;\beta,a,\gamma).
\end{equation}
Its query complexity is
\begin{equation}\label{eq:generic-overlap-queries}
\qA = O\!\paren{R_A \log\frac{R_A}{\eps_A}},
\qquad
\qB = O\!\paren{R_B \log\frac{R_B}{\eps_B}},
\qquad
\qC = O(1).
\end{equation}
In particular, if
\begin{equation}\label{eq:sufficient-choice-generic-overlap}
\eps_{\mathrm{sgn}}^{\mathrm{sinc}}(K,h;\beta,a,\gamma) \le \eps
\qquad\text{and}\qquad
\Theta_{\rho}^{\mathrm{syl}}\norm{C}\left(
\frac{\eps_A}{R_A}
+
\frac{\eps_B}{R_B}
+
\frac{\eps_A\eps_B}{R_A R_B}
\right)
\le \frac{\eps}{2},
\end{equation}
then $U_X^{\mathrm{reb}}$ is a $(4\Theta_{\rho}^{\mathrm{syl}},\,a_A+a_B+a_C+O(\log K),\,\eps)$-block-encoding of $X$.
\end{theorem}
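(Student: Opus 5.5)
The proof composes two results already in hand: the deterministic log-sinc error bound for the Sylvester approximant, and the rebalanced implementation of that approximant. The plan is to take the unitary $U^{\mathrm{reb}}_{X_{K,h}}$ from \cref{thm:profile-implementation}, built from the given profile $\rho$ and precisions $\eps_A,\eps_B$, and simply rename it $U_X^{\mathrm{reb}}$. That theorem already supplies three of the claims. The normalization is $4\Theta_{\rho}^{\mathrm{syl}}$. The ancilla count is $a_A+a_B+a_C+O(\log K)$. The query costs are exactly \eqref{eq:qAqB-rho}, which coincide with \eqref{eq:generic-overlap-queries}. So the only new content is the error bound, which now measures distance to $X$ rather than to $X_{K,h}$.

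For the error we use the triangle inequality. Write $\widehat X:=4\Theta_{\rho}^{\mathrm{syl}}(\bra{0}\otimes\I)U_X^{\mathrm{reb}}(\ket{0}\otimes\I)$. Then
\[
\norm{X-\widehat X}\le \norm{X-X_{K,h}}+\norm{X_{K,h}-\widehat X}.
\]
The second term is bounded by the error parameter in \eqref{eq:profile-be}. The first term is bounded by \cref{cor:XN-error}, whose hypotheses we must check: $\norm{M}\le1$ holds by \eqref{eq:Mnorm}, half-plane separation holds by assumption, and $a\in(0,1)$, $\beta\in(0,\arcsin a)$, $\gamma\ge\gamma_{\Omega_a}(M)$ are given. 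These are exactly the hypotheses of \cref{prop:sign-approx-explicit}(2). Crucially, that bound holds for every $h>0$ and $K\ge1$, not only for the balanced step, so we get $\norm{X-X_{K,h}}\le\tfrac12\eps_{\mathrm{sgn}}^{\mathrm{sinc}}=\eps_{\mathrm{det}}$. One point deserves a check here: $X_{K,h}$ must be well defined, meaning the shifted inverses at the nodes $\pm it_k$ exist. This follows because $\gamma_{\Omega_a}(M)<\infty$ forces $\spec(M)\cap\Omega_a=\varnothing$, and for $|t_k|>1$ the Neumann series applies. In any case \cref{def:sylvester-profile} presupposes these inverses exist.

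For the final sufficient condition, the first hypothesis in \eqref{eq:sufficient-choice-generic-overlap} gives $\eps_{\mathrm{det}}\le\eps/2$, and the second gives the implementation error at most $\eps/2$. Adding them yields total error $\le\eps$.

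There is no serious obstacle. The only step needing care is bookkeeping: confirming that the error in \cref{thm:profile-implementation} is stated for the block scaled by the normalization $4\Theta_\rho^{\mathrm{syl}}$, so that the same unitary and the same normalization serve for $X$. A second check is that the factor $\tfrac12$ in \cref{cor:XN-error}, which comes from $[\signf(M)]_{12}=2X$, matches the definition \eqref{eq:eps-det-overlap}.
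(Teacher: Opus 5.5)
Your proposal is correct and matches the paper's proof, which likewise combines the deterministic bound of \cref{cor:XN-error} with the normalization, implementation error, ancilla count, and query costs of \cref{thm:profile-implementation} via the triangle inequality, and splits $\eps$ into two halves for the final statement. One small simplification: the Neumann-series aside is unnecessary for well-definedness, because every node $\pm it_k$ lies on $i\R\subset\Omega_a$, so finiteness of $\gamma_{\Omega_a}(M)$ already makes $M\mp it_k\I$ invertible, and hence its diagonal blocks as well.
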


\begin{proof}
Combine \cref{cor:XN-error,thm:profile-implementation}.
The deterministic quadrature error contributes at most $\eps_{\mathrm{det}}$, and the implementation error is exactly the one stated in \cref{eq:profile-be}.
The query bounds are part of \cref{thm:profile-implementation}.
\end{proof}

\begin{theorem}[Plain sign-embedding specialization]\label{thm:generic-main}
Assume the transformed matrices satisfy \eqref{eq:Mnorm} and the half-plane separation hypotheses
of \cref{thm:higham}.
Suppose there exist parameters
\[
a\in(0,1),\qquad \beta\in(0,\arcsin a),\qquad \gamma\ge \gamma_{\Omega_a}(M),
\]
and choose any $K\ge 1$, any $h>0$, and inverse precisions $\eps_A,\eps_B\in(0,1]$.
Suppose $r_A,r_B$ are available uniform upper bounds satisfying
\begin{equation}\label{eq:generic-uniform-profile}
(1+t_k)\norm{(A\pm it_k\I)^{-1}}\le r_A,
\qquad
(1+t_k)\norm{(B\pm it_k\I)^{-1}}\le r_B
\end{equation}
for all active nodes and both signs.
Then there exists a unitary $U_X$ that is a
\begin{equation}\label{eq:generic-main-output}
\paren{4 r_A r_B\Lambda^{\mathrm{syl}}_{K,h},\ a_A+a_B+a_C+O(\log K),\ \eps_{\mathrm{det}} + \Lambda^{\mathrm{syl}}_{K,h}\Delta_{\mathrm{prod}}}\text{-block-encoding of }X,
\end{equation}
where
\[
\eps_{\mathrm{det}} := \half\eps_{\mathrm{sgn}}^{\mathrm{sinc}}(K,h;\beta,a,\gamma)
\]
and
\begin{equation}\label{eq:delta-prod}
\Delta_{\mathrm{prod}} := \norm{C}\paren{\eps_A r_B + r_A \eps_B + \eps_A \eps_B}.
\end{equation}
Its query complexity is
\begin{equation}\label{eq:generic-query-main}
\qA = O\!\paren{r_A \log\frac{r_A}{\eps_A}},
\qquad
\qB = O\!\paren{r_B \log\frac{r_B}{\eps_B}},
\qquad
\qC = O(1).
\end{equation}
In particular, if
\begin{equation}\label{eq:sufficient-choice-generic}
\eps_{\mathrm{sgn}}^{\mathrm{sinc}}(K,h;\beta,a,\gamma) \le \eps
\qquad\text{and}\qquad
\Lambda^{\mathrm{syl}}_{K,h}\Delta_{\mathrm{prod}} \le \frac{\eps}{2},
\end{equation}
then $U_X$ is a $(4 r_A r_B\Lambda^{\mathrm{syl}}_{K,h},\,a_A+a_B+a_C+O(\log K),\,\eps)$-block-encoding of $X$.
\end{theorem}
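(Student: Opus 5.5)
The plan is to derive \cref{thm:generic-main} as the constant-profile specialization of \cref{thm:generic-overlap}. This mirrors how \cref{cor:plain-algorithm} is obtained from \cref{thm:profile-implementation}.

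\textbf{Step 1 (constant profile).} Take $\rho_{A,k}^{\pm}\equiv r_A$ and $\rho_{B,k}^{\pm}\equiv r_B$. By \eqref{eq:generic-uniform-profile}, this is an admissible Sylvester nodewise inverse profile in the sense of \cref{def:sylvester-profile}. Its associated parameters are $R_A=r_A$ and $R_B=r_B$. Directly from \eqref{eq:Theta-syl-profile} we get
\[
\Theta_{\rho}^{\mathrm{syl}}=\sum_{k=-K}^{K}\frac{h t_k}{2\pi(1+t_k)^2}\,2r_Ar_B=r_Ar_B\Lambda^{\mathrm{syl}}_{K,h},
\]
with equality, not merely the inequality \eqref{eq:Theta-AB-plain}.

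\textbf{Step 2 (substitution).} Applying \cref{thm:generic-overlap} with this profile yields a unitary with normalization $4\Theta_\rho^{\mathrm{syl}}=4r_Ar_B\Lambda^{\mathrm{syl}}_{K,h}$ and ancilla count $a_A+a_B+a_C+O(\log K)$. Its error is $\eps_{\mathrm{det}}$ plus
\[
r_Ar_B\Lambda^{\mathrm{syl}}_{K,h}\norm{C}\Bigl(\tfrac{\eps_A}{r_A}+\tfrac{\eps_B}{r_B}+\tfrac{\eps_A\eps_B}{r_Ar_B}\Bigr)
=\Lambda^{\mathrm{syl}}_{K,h}\norm{C}(\eps_Ar_B+r_A\eps_B+\eps_A\eps_B)=\Lambda^{\mathrm{syl}}_{K,h}\Delta_{\mathrm{prod}}.
\]
This is exactly \eqref{eq:generic-main-output}. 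The query bounds \eqref{eq:generic-query-main} come from \eqref{eq:generic-overlap-queries} with $R_A=r_A$ and $R_B=r_B$.

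Alternatively, one can argue directly. \cref{cor:XN-error} bounds $\norm{X-X_{K,h}}\le\eps_{\mathrm{det}}$, and \cref{cor:plain-algorithm} gives the block-encoding of $X_{K,h}$ with implementation error $\Lambda^{\mathrm{syl}}_{K,h}\Delta_{\mathrm{prod}}$. The triangle inequality then combines the two. Either route uses only results already established.

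\textbf{Step 3 (sufficient condition).} If $\eps_{\mathrm{sgn}}^{\mathrm{sinc}}\le\eps$, then $\eps_{\mathrm{det}}\le\eps/2$. If in addition $\Lambda^{\mathrm{syl}}_{K,h}\Delta_{\mathrm{prod}}\le\eps/2$, the total error is at most $\eps$, which gives the final claim.

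There is no real obstacle here. The only point requiring care is to check that the validity of the profile holds at every active node and for both signs. This is precisely what hypothesis \eqref{eq:generic-uniform-profile} provides. One should also note that the half-plane separation hypothesis is what makes \cref{thm:higham}, and hence \cref{cor:XN-error}, applicable.
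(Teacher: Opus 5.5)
Your proposal is correct and follows the paper's proof: the paper also inserts the constant profile $\rho_{A,k}^{\pm}\equiv r_A$, $\rho_{B,k}^{\pm}\equiv r_B$ into \cref{thm:generic-overlap} and reads off the normalization, error, and query bounds. For the constant profile, $\Theta_{\rho}^{\mathrm{syl}}=r_Ar_B\Lambda^{\mathrm{syl}}_{K,h}$ does hold with equality, as you note, which is slightly sharper than the inequality the paper writes.
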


\begin{proof}
Use the constant Sylvester profile $\rho_{A,k}^{\pm}\equiv r_A$ and $\rho_{B,k}^{\pm}\equiv r_B$ in \cref{thm:generic-overlap}.  Then
\[
\Theta_{\rho}^{\mathrm{syl}}\le r_A r_B\Lambda^{\mathrm{syl}}_{K,h},
\]
and
\[
\Theta_{\rho}^{\mathrm{syl}}\norm{C}\left(
\frac{\eps_A}{r_A}
+
\frac{\eps_B}{r_B}
+
\frac{\eps_A\eps_B}{r_A r_B}
\right)
\le
\Lambda^{\mathrm{syl}}_{K,h}\Delta_{\mathrm{prod}}.
\]
This gives the stated normalization, error, and query bounds.
\end{proof}

\section{Generalized Sylvester and generalized Lyapunov equations}\label{sec:generalized}
This section is a direct extension of the ordinary Sylvester theory. After reducing the generalized problem to an ordinary Sylvester equation, the same sign representation, the same log-sinc approximation, and the same rebalanced algorithm apply. We therefore state the generalized Sylvester theorem first and then record the reduction and the generalized Lyapunov specialization.

\subsection{Generalized Sylvester equations}
Consider the nonsingular generalized Sylvester equation
\begin{equation}\label{eq:gen-sylvester}
AXD + EXB = C,
\end{equation}
where $E$ and $D$ are invertible. Set
\begin{equation}\label{eq:reduced-coeffs}
\widetilde A := E^{-1}A,
\qquad
\widetilde B := BD^{-1},
\qquad
\widetilde C := E^{-1}CD^{-1}.
\end{equation}
Then \eqref{eq:gen-sylvester} is equivalent to the ordinary Sylvester equation
\begin{equation}\label{eq:reduced-gen-sylvester}
\widetilde A X + X\widetilde B = \widetilde C,
\end{equation}
with reduced augmented matrix
\begin{equation}\label{eq:gen-M}
\widetilde M := \begin{bmatrix} \widetilde A & \widetilde C \\ 0 & -\widetilde B \end{bmatrix}.
\end{equation}

As in the ordinary case, the solution is unchanged by generalized shift-rotation and pair scaling:
for any nonzero $\eta\in\C$, any $\omega\in\C$, and any $\lambda_E,\lambda_D>0$,
\[
A \mapsto \eta(A-\omega E),\qquad
B \mapsto \eta(B+\omega D),\qquad
C \mapsto \eta C,
\]
\[
A \mapsto \lambda_E A,\quad E \mapsto \lambda_E E,\qquad
B \mapsto \lambda_D B,\quad D \mapsto \lambda_D D,\qquad
C \mapsto \lambda_E\lambda_D C,
\]
leave the solution $X$ unchanged. Accordingly, throughout this section we work after such transformations have been applied and then rename the transformed data by $A,B,C,E,D$.

\begin{assumption}[Unit-normalized block-encoding access for generalized data]\label{ass:gen-be}
We are given exact unit block-encodings
\[
\begin{aligned}
U_A&\in \BE(1,a_A,0), & U_B&\in \BE(1,a_B,0), & U_C&\in \BE(1,a_C,0),\\
U_E&\in \BE(1,a_E,0), & U_D&\in \BE(1,a_D,0).
\end{aligned}
\]
Controlled versions and adjoints are available at the same asymptotic query cost.
In the generalized Lyapunov specialization, an exact unit block-encoding $U_Q\in \BE(1,a_Q,0)$ of $Q$ is given.
\end{assumption}

As in \cref{sec:scaled-multiplexing}, we state the generalized theorem directly in terms of the actual shifted pencil norms. Define
\begin{equation}\label{eq:gen-RAB}
R_{A,E}:=\max_{k,\pm}(1+t_k)\norm{(A\pm it_k E)^{-1}},
\qquad
R_{B,D}:=\max_{k,\pm}(1+t_k)\norm{(B\pm it_k D)^{-1}},
\end{equation}
and define the generalized overlap sum
\begin{equation}\label{eq:gen-Theta}
\Theta_{A,E;B,D}^{\mathrm{gen}}
:=
\sum_{k=-K}^{K}\frac{h t_k}{2\pi}
\Bigl(
\norm{(A-it_kE)^{-1}}\norm{(B+it_kD)^{-1}}
+
\norm{(A+it_kE)^{-1}}\norm{(B-it_kD)^{-1}}
\Bigr).
\end{equation}
The plain bound
\begin{equation}\label{eq:gen-Theta-plain}
\Theta_{A,E;B,D}^{\mathrm{gen}}
\le
R_{A,E}R_{B,D}\Lambda^{\mathrm{syl}}_{K,h}
\end{equation}
is immediate.

\begin{theorem}[Generalized Sylvester equation via reduced sign embedding]\label{thm:gen-main}
Assume \cref{ass:gen-be}. Consider the transformed generalized Sylvester equation \eqref{eq:gen-sylvester} and its reduction \eqref{eq:reduced-gen-sylvester}. Assume
\[
\spec(\widetilde A)\subset\set{z\in\C:\RePart z>0},
\qquad
\spec(-\widetilde B)\subset\set{z\in\C:\RePart z<0},
\qquad
\norm{\widetilde M}\le 1,
\]
and suppose that for some $a\in(0,1)$ and $\beta\in(0,\arcsin a)$,
\begin{equation}\label{eq:gen-gamma}
\gamma_{\mathrm{red}} \ge \gamma_{\Omega_a}(\widetilde M).
\end{equation}
Choose any $K\ge 1$, any $h>0$, and inverse precisions $\eps_A,\eps_B\in(0,1]$.
Define
\begin{equation}\label{eq:gen-eps-det}
\eps_{\mathrm{det}}^{(\mathrm{gen})}
:=
\half \eps_{\mathrm{sgn}}^{\mathrm{sinc}}(K,h;\beta,a,\gamma_{\mathrm{red}}).
\end{equation}
Then there exists a unitary $U_X^{(\mathrm{gen})}$ that is a

\begin{equation}\label{eq:gen-main-output}
\begin{aligned}
\Bigl(
4\Theta_{A,E;B,D}^{\mathrm{gen}},
\ a_A+a_E+a_B+a_D+a_C+O(\log K),
\ \eps_{\mathrm{det}}^{(\mathrm{gen})}
+
\Theta_{A,E;B,D}^{\mathrm{gen}}\norm{C}\left(
\frac{\eps_A}{R_{A,E}}
+
\frac{\eps_B}{R_{B,D}}
+
\frac{\eps_A\eps_B}{R_{A,E}R_{B,D}}
\right)
\Bigr) \\
\text{-block-encoding of }X.
\end{aligned}
\end{equation}
Its query complexity is
\begin{equation}\label{eq:gen-query-rho}
O\!\paren{R_{A,E}\log\frac{R_{A,E}}{\eps_A}}
\text{ queries each to }U_A,U_E,
\end{equation}
\begin{equation}\label{eq:gen-query-rho-B}
O\!\paren{R_{B,D}\log\frac{R_{B,D}}{\eps_B}}
\text{ queries each to }U_B,U_D,
\end{equation}
plus $O(1)$ queries to $U_C$.

If, in addition,
\begin{equation}\label{eq:gen-weighted-accretive}
\begin{aligned}
E&\succ 0,\qquad D\succ 0,\\
\norm{E^{-1/2}AE^{-1/2}} &\le 1,
&\qquad \norm{D^{-1/2}BD^{-1/2}} &\le 1,\\
H(E^{-1/2}AE^{-1/2}) &\succeq \mu\I,
&\qquad H(D^{-1/2}BD^{-1/2}) &\succeq \mu\I,
\end{aligned}
\end{equation}
then under the balanced choice $h=\sqrt{2\pi\beta/K}$ with \eqref{eq:Kchoice-sinc} enforced using $\gamma=\gamma_{\mathrm{red}}$, one obtains
\begin{equation}\label{eq:gen-R1-specialization}
\beta_X \le 36\Lambda^{\mathrm{syl}}_\star\frac{\norm{E^{-1}}\norm{D^{-1}}}{\mu^2}
<21\frac{\norm{E^{-1}}\norm{D^{-1}}}{\mu^2}
\end{equation}
and
\begin{equation}\label{eq:gen-R1-queries}
O\!\paren{\frac{\norm{E^{-1}}}{\mu}\log\frac{\norm{E^{-1}}\norm{D^{-1}}}{\eps\mu}}
\text{ queries each to }U_A,U_E,
\end{equation}
\begin{equation}\label{eq:gen-R1-queries-B}
O\!\paren{\frac{\norm{D^{-1}}}{\mu}\log\frac{\norm{E^{-1}}\norm{D^{-1}}}{\eps\mu}}
\text{ queries each to }U_B,U_D.
\end{equation}
\end{theorem}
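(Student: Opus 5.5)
The plan is to reduce everything to the ordinary Sylvester machinery applied to $\widetilde M$, and to arrange the implementation so that the pencil inverses $(A\pm it_kE)^{-1}$ and $(B\pm it_kD)^{-1}$ are inverted directly. We never form $E^{-1}$ or $D^{-1}$ as separate inverse calls.

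\textbf{Step 1: deterministic error.} By \cref{thm:higham} applied to $\widetilde M$, $[\signf(\widetilde M)]_{12}=2X$. \Cref{cor:XN-error} with $\gamma=\gamma_{\mathrm{red}}$ gives $\norm{X-\widetilde X_{K,h}}\le\eps^{(\mathrm{gen})}_{\mathrm{det}}$, where $\widetilde X_{K,h}$ is \eqref{eq:XKh-unscaled} with tildes. Then I would use the algebraic identities
\[
(\widetilde A\pm it\I)^{-1}\widetilde C=(E^{-1}(A\pm itE))^{-1}E^{-1}CD^{-1}=(A\pm itE)^{-1}CD^{-1},
\]
and $D^{-1}(\widetilde B\mp it\I)^{-1}=(B\mp itD)^{-1}$. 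Together they give
\[
(\widetilde A\mp it_k\I)^{-1}\widetilde C(\widetilde B\pm it_k\I)^{-1}=(A\mp it_kE)^{-1}C(B\pm it_kD)^{-1}.
\]
So $\widetilde X_{K,h}$ is a bilinear sum with exactly the structure of \eqref{eq:scaled-identity}, with $C$ itself in the middle.

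\textbf{Step 2: implementation.} I would rerun the proof of \cref{thm:profile-implementation} with the exact nodewise profile $\rho^\pm_{A,k}:=(1+t_k)\norm{(A\pm it_kE)^{-1}}$, and similarly for $B,D$. This gives $R_A=R_{A,E}$, $R_B=R_{B,D}$, and overlap sum equal to $\Theta^{\mathrm{gen}}_{A,E;B,D}$. The multiplexed direct sums become
\[
\sum_k\ket{k}\!\bra{k}\otimes (A\pm it_kE)/((1+t_k)d_k).
\]
Each block has coefficients of total absolute value at most one, so \cref{lem:mux-one-matrix}, applied in its two-matrix LCU form with $U_A$ and $U_E$, yields unit block-encodings using $O(1)$ queries. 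QSVT via \cref{prop:qsvt-inverse} then gives the stated query counts. The normalization $4\Theta^{\mathrm{gen}}$ and the implementation error follow verbatim from \cref{lem:approx-product} and weighted LCU. Adding the two error sources gives \eqref{eq:gen-main-output}.

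\textbf{Step 3: weighted accretive specialization (the main obstacle).} Here we must bound $R_{A,E}$ and $R_{B,D}$. Write $A\pm itE=E^{1/2}(\widehat A\pm it\I)E^{1/2}$ with $\widehat A:=E^{-1/2}AE^{-1/2}$. Then
\[
\norm{(A\pm itE)^{-1}}\le\norm{E^{-1}}\,\norm{(\widehat A\pm it\I)^{-1}}.
\]
Since $H(\widehat A)\succeq\mu\I$ and $\norm{\widehat A}\le1$, the same split as in \cref{lem:R1-conditioning} bounds $(1+t)\norm{(\widehat A\pm it\I)^{-1}}$ by $3/\mu$. Hence $R_{A,E}\le3\norm{E^{-1}}/\mu$, and likewise $R_{B,D}\le3\norm{D^{-1}}/\mu$.

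Then \eqref{eq:gen-Theta-plain} together with $\Lambda^{\mathrm{syl}}_{K,h}\le\Lambda^{\mathrm{syl}}_\star$ gives \eqref{eq:gen-R1-specialization}. This requires $h\le\pi$, which holds under the balanced rule once $K\ge2\beta/\pi$, automatic for small $\beta$.

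Finally, choose $K$ by \eqref{eq:Kchoice-sinc} with $\eps_{\mathrm{sgn}}=\eps$, so $\eps^{(\mathrm{gen})}_{\mathrm{det}}\le\eps/2$. Choose $\eps_A=c\,\eps\mu/\norm{D^{-1}}$ and $\eps_B=c\,\eps\mu/\norm{E^{-1}}$. With $\norm{C}\le1$ this keeps the implementation error at most $\eps/2$, and the logarithms become $O(\log(\norm{E^{-1}}\norm{D^{-1}}/(\eps\mu)))$, giving \eqref{eq:gen-R1-queries}--\eqref{eq:gen-R1-queries-B}.

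The delicate point is that $\gamma_{\mathrm{red}}$ enters only $K$, not the stated complexities. I would note that the $K$-dependence is only through the $O(\log K)$ ancillas. Alternatively, I would derive a certificate analogous to \cref{lem:R1-strip-certificate} for $\widetilde M$ via the similarity $\widetilde A=E^{-1/2}\widehat AE^{1/2}$, which gives $\gamma_{\mathrm{red}}=\mathrm{poly}(\kappa(E),\kappa(D),1/\mu)$.
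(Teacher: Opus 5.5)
Your proposal is correct and follows essentially the same route as the paper: apply the log-sinc corollary to the reduced matrix $\widetilde M$, rewrite the reduced resolvent products as pencil products $(A\mp it_kE)^{-1}C(B\pm it_kD)^{-1}$, rebalance with the exact nodewise pencil norms and invert via QSVT, and then bound $R_{A,E},R_{B,D}$ through $A\pm itE=E^{1/2}(\widehat A\pm it\I)E^{1/2}$, exactly as in the paper's weighted-FoV conditioning lemma. One small correction: $h\le\pi$ under the balanced rule holds for every $K\ge 1$ because $\beta<\arcsin a<\pi/2$, not only for small $\beta$.
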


\begin{remark}
As in the ordinary Sylvester section, one may replace the exact quantities in \eqref{eq:gen-RAB} and \eqref{eq:gen-Theta} by any explicit upper bounds.
This only enlarges the final normalization and query bounds and leaves the proof unchanged.
\end{remark}

When $E\succ0$ and $D\succ0$, the relevant separation condition is a weighted FoV gap:
\[
W_E(A) := W(E^{-1/2}AE^{-1/2}),
\qquad
W_D(B) := W(D^{-1/2}BD^{-1/2}).
\]
A positive gap $\dist(W_E(A),-W_D(B))>0$ can be converted, by the ordinary field-of-values shift-rotation argument applied to $E^{-1/2}AE^{-1/2}$ and $D^{-1/2}BD^{-1/2}$, into the normalized weighted accretivity hypothesis \eqref{eq:gen-weighted-accretive} after a subsequent pair scaling.

\begin{lemma}[Weighted field-of-values gap conditioning]\label{lem:gen-conditioning-R1}
Assume \eqref{eq:gen-weighted-accretive}. Then
\begin{equation}\label{eq:gen-r-bounds-R1}
(1+t)\norm{(A\pm itE)^{-1}} \le \frac{3\norm{E^{-1}}}{\mu},
\qquad
(1+t)\norm{(B\pm itD)^{-1}} \le \frac{3\norm{D^{-1}}}{\mu}
\qquad (t\ge 0).
\end{equation}
\end{lemma}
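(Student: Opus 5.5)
The plan is to reduce the weighted pencil to a congruence-transformed ordinary matrix and then reuse the split argument of \cref{lem:R1-conditioning}. Set $\widehat A:=E^{-1/2}AE^{-1/2}$. Then
\[
A\pm itE=E^{1/2}(\widehat A\pm it\I)E^{1/2},
\]
and therefore
\[
(A\pm itE)^{-1}=E^{-1/2}(\widehat A\pm it\I)^{-1}E^{-1/2}.
\]
Submultiplicativity gives
\[
\norm{(A\pm itE)^{-1}}\le \norm{E^{-1/2}}^2\,\norm{(\widehat A\pm it\I)^{-1}}.
\]
Since $E\succ 0$, we have $\norm{E^{-1/2}}^2=\norm{E^{-1}}$. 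This isolates the factor $\norm{E^{-1}}$ cleanly, and the remaining task is exactly the unweighted estimate for $\widehat A$.

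For $\widehat A$, hypothesis \eqref{eq:gen-weighted-accretive} gives $H(\widehat A)\succeq\mu\I$ and $\norm{\widehat A}\le 1$. This is precisely the setting of \cref{lem:accretive-res}. Hence $\norm{(\widehat A\pm it\I)^{-1}}\le 1/\mu$ for all real $t$, and $\norm{(\widehat A\pm it\I)^{-1}}\le 1/(t-1)$ for $t>1$.

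Split at $t=2$ as in \cref{lem:R1-conditioning}:
\begin{itemize}
\item For $0\le t\le 2$, we get $(1+t)/\mu\le 3/\mu$.
\item For $t\ge 2$, we get $(1+t)/(t-1)\le 3\le 3/\mu$, because $\mu\le 1$.
\end{itemize}
This last step needs $\mu\le1$. It follows from $\norm{\widehat A}\le 1$ together with $H(\widehat A)\succeq\mu\I$, since $\mu\le \lambda_{\min}(H(\widehat A))\le\norm{\widehat A}\le 1$.

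Multiplying by $\norm{E^{-1}}$ yields the first bound in \eqref{eq:gen-r-bounds-R1}. The $B,D$ bound follows identically from the factorization
\[
B\pm itD=D^{1/2}(\widehat B\pm it\I)D^{1/2},
\qquad \widehat B:=D^{-1/2}BD^{-1/2}.
\]

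The only point needing care, and the one I expect to be the main (mild) obstacle, is the congruence factorization. It requires the positive square roots $E^{\pm1/2}$, which exist because $E\succ0$. One must also check that invertibility of $A\pm itE$ follows from invertibility of $\widehat A\pm it\I$; this holds because the outer factors $E^{1/2}$ are invertible. Everything else is a direct reuse of the earlier lemmas.
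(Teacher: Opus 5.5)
Your proposal is correct and follows the paper's proof essentially verbatim: the congruence $A\pm itE=E^{1/2}(\widehat A\pm it\I)E^{1/2}$ isolates the factor $\norm{E^{-1}}$, and then \cref{lem:accretive-res} together with the $t\le 2$ / $t\ge 2$ split of \cref{lem:R1-conditioning} gives the bound. Your derivation of $\mu\le 1$ from $\norm{\widehat A}\le 1$ is a small point that the paper leaves implicit; it is a welcome addition.
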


\begin{proof}
Write $A = E^{1/2}\widehat A E^{1/2}$ with $\widehat A := E^{-1/2}AE^{-1/2}$.
Then
\[
A \pm itE = E^{1/2}(\widehat A \pm it\I)E^{1/2},
\qquad
(A \pm itE)^{-1} = E^{-1/2}(\widehat A \pm it\I)^{-1}E^{-1/2},
\]
so
\[
(1+t)\norm{(A\pm itE)^{-1}}
\le
\norm{E^{-1}}\,(1+t)\norm{(\widehat A \pm it\I)^{-1}}.
\]
Applying \cref{lem:accretive-res} and the same case split as in \cref{lem:R1-conditioning} gives the first bound. The proof for $(B,D)$ is identical.
\end{proof}

\begin{proof}[Proof of \cref{thm:gen-main}]
Applying \cref{cor:XN-error} to the reduced Sylvester equation \eqref{eq:reduced-gen-sylvester} gives the deterministic error bound
\[
\norm{X-X_{K,h}^{(\mathrm{gen})}}
\le
\half \eps_{\mathrm{sgn}}^{\mathrm{sinc}}(K,h;\beta,a,\gamma_{\mathrm{red}})
=
\eps_{\mathrm{det}}^{(\mathrm{gen})},
\]
where
\begin{equation}\label{eq:gen-XKh-unscaled}
X_{K,h}^{(\mathrm{gen})}
:=
\frac{h}{2\pi}\sum_{k=-K}^{K} t_k \paren{
(A-it_kE)^{-1} C (B+it_kD)^{-1} + (A+it_kE)^{-1} C (B-it_kD)^{-1}
}.
\end{equation}
The reduced resolvent identity
\begin{equation}\label{eq:gen-reduced-res-identity}
(\widetilde A\mp it\I)^{-1}\widetilde C(\widetilde B\pm it\I)^{-1}
=
(A\mp itE)^{-1}C(B\pm itD)^{-1}
\end{equation}
is immediate from \eqref{eq:reduced-coeffs}.

Rewrite \eqref{eq:gen-XKh-unscaled} as
\begin{align}
X_{K,h}^{(\mathrm{gen})}
&=
\sum_{k=-K}^{K}\frac{h t_k}{2\pi(1+t_k)^2}
\Big(
((1+t_k)(A-it_kE)^{-1})\,C\,((1+t_k)(B+it_kD)^{-1})
\nonumber\\
&\hspace{8em}+
((1+t_k)(A+it_kE)^{-1})\,C\,((1+t_k)(B-it_kD)^{-1})
\Big).
\label{eq:gen-scaled-identity}
\end{align}

Define the rebalancing factors
\[
d^\pm_{A,E,k}
:=
\frac{R_{A,E}}{(1+t_k)\norm{(A\pm it_kE)^{-1}}},
\qquad
d^\pm_{B,D,k}
:=
\frac{R_{B,D}}{(1+t_k)\norm{(B\pm it_kD)^{-1}}}.
\]
Now introduce the rebalanced direct sums
\[
A^{\pm}_{E,\mathrm{reb}}
:=
\sum_{k=-K}^{K}\ket{k}\!\bra{k}\otimes
\frac{A\pm it_k E}{(1+t_k)d^\pm_{A,E,k}},
\qquad
B^{\pm}_{D,\mathrm{reb}}
:=
\sum_{k=-K}^{K}\ket{k}\!\bra{k}\otimes
\frac{B\pm it_k D}{(1+t_k)d^\pm_{B,D,k}}.
\]
For each $k$, the coefficients of $A$ and $E$ in the $k$th block of $A^{\pm}_{E,\mathrm{reb}}$ have absolute values summing to $1/d^\pm_{A,E,k}\le 1$, and similarly for $B^{\pm}_{D,\mathrm{reb}}$.
Hence \cref{lem:mux-two-matrix} gives unit-normalized block-encodings of both direct sums.

Their inverse blocks are
\[
(A^{\pm}_{E,\mathrm{reb}})^{-1}
=
\sum_{k=-K}^{K}\ket{k}\!\bra{k}\otimes
d^\pm_{A,E,k}(1+t_k)(A\pm it_kE)^{-1},
\]
\[
(B^{\pm}_{D,\mathrm{reb}})^{-1}
=
\sum_{k=-K}^{K}\ket{k}\!\bra{k}\otimes
d^\pm_{B,D,k}(1+t_k)(B\pm it_kD)^{-1},
\]
so
\[
\norm{(A^{\pm}_{E,\mathrm{reb}})^{-1}}\le R_{A,E},
\qquad
\norm{(B^{\pm}_{D,\mathrm{reb}})^{-1}}\le R_{B,D}.
\]
Applying \cref{prop:qsvt-inverse} therefore yields inverse block-encodings with query costs \eqref{eq:gen-query-rho}--\eqref{eq:gen-query-rho-B}.

Finally, set
\[
\lambda_k^+ := \frac{h t_k}{2\pi(1+t_k)^2 d^-_{A,E,k}d^+_{B,D,k}},
\qquad
\lambda_k^- := \frac{h t_k}{2\pi(1+t_k)^2 d^+_{A,E,k}d^-_{B,D,k}}.
\]
Then \eqref{eq:gen-scaled-identity} becomes
\begin{align*}
X_{K,h}^{(\mathrm{gen})}
&=
\sum_{k=-K}^{K}\lambda_k^+\,
\Bigl(d^-_{A,E,k}(1+t_k)(A-it_kE)^{-1}\Bigr)
\,C\,
\Bigl(d^+_{B,D,k}(1+t_k)(B+it_kD)^{-1}\Bigr)
\\
&\quad+
\sum_{k=-K}^{K}\lambda_k^-\,
\Bigl(d^+_{A,E,k}(1+t_k)(A+it_kE)^{-1}\Bigr)
\,C\,
\Bigl(d^-_{B,D,k}(1+t_k)(B-it_kD)^{-1}\Bigr).
\end{align*}
Moreover,
\begin{align*}
R_{A,E}R_{B,D}\sum_{k=-K}^{K}\bigl(\abs{\lambda_k^+}+\abs{\lambda_k^-}\bigr)
&=
\sum_{k=-K}^{K}\frac{h t_k}{2\pi}
\Bigl(
\norm{(A-it_kE)^{-1}}\norm{(B+it_kD)^{-1}}
\\ 
&\quad +
\norm{(A+it_kE)^{-1}}\norm{(B-it_kD)^{-1}}
\Bigr)\\
&=
\Theta_{A,E;B,D}^{\mathrm{gen}}
\end{align*}

The same product-error estimate as in \cref{thm:profile-implementation} therefore yields a
\[
\Bigl(
4\Theta_{A,E;B,D}^{\mathrm{gen}},
\ a_A+a_E+a_B+a_D+a_C+O(\log K),
\ \Theta_{A,E;B,D}^{\mathrm{gen}}\norm{C}\left(
\frac{\eps_A}{R_{A,E}}
+
\frac{\eps_B}{R_{B,D}}
+
\frac{\eps_A\eps_B}{R_{A,E}R_{B,D}}
\right)
\Bigr)
\]
block-encoding of $X_{K,h}^{(\mathrm{gen})}$.
Combining this with the deterministic quadrature error yields \eqref{eq:gen-main-output}.

Under \eqref{eq:gen-weighted-accretive}, \cref{lem:gen-conditioning-R1} gives
\[
R_{A,E}\le \frac{3\norm{E^{-1}}}{\mu},
\qquad
R_{B,D}\le \frac{3\norm{D^{-1}}}{\mu}.
\]
Hence \eqref{eq:gen-Theta-plain} implies
\[
\Theta_{A,E;B,D}^{\mathrm{gen}}
\le
\frac{9\norm{E^{-1}}\norm{D^{-1}}}{\mu^2}\Lambda^{\mathrm{syl}}_{K,h}
\le
\frac{9\norm{E^{-1}}\norm{D^{-1}}}{\mu^2}\Lambda^{\mathrm{syl}}_\star
\]
under the balanced choice.
Choosing
\[
\eps_A=\Theta\!\paren{\frac{\eps\mu}{\norm{D^{-1}}}},
\qquad
\eps_B=\Theta\!\paren{\frac{\eps\mu}{\norm{E^{-1}}}}
\]
forces the implementation term to be at most $\eps/2$, and substituting
$R_{A,E}=\Theta(\norm{E^{-1}}/\mu)$ and $R_{B,D}=\Theta(\norm{D^{-1}}/\mu)$ into
\eqref{eq:gen-query-rho}--\eqref{eq:gen-query-rho-B} proves
\eqref{eq:gen-R1-specialization}--\eqref{eq:gen-R1-queries-B}.
\end{proof}

\begin{remark}[Generalized strip-resolvent regime]
If weighted field-of-values information is unavailable, it is convenient to work directly with the reduced strip constants
\[
\widehat\gamma_{A,E} := \sup_{z\in\Omega_a} \norm{(zE-A)^{-1}E}
= \sup_{z\in\Omega_a} \norm{(z\I-\widetilde A)^{-1}},
\]
\[
\widehat\gamma_{B,D} := \sup_{z\in\Omega_a} \norm{D(zD+B)^{-1}}
= \sup_{z\in\Omega_a} \norm{(z\I+\widetilde B)^{-1}}.
\]
Since $\norm{\widetilde A},\norm{\widetilde B}\le 1$ whenever $\norm{\widetilde M}\le 1$, the same argument as in \cref{lem:R2-conditioning} gives the plain bounds
\[
(1+t_k)\norm{(A\pm it_kE)^{-1}} \le 3\norm{E^{-1}}\widehat\gamma_{A,E},
\qquad
(1+t_k)\norm{(B\pm it_kD)^{-1}} \le 3\norm{D^{-1}}\widehat\gamma_{B,D}.
\]
These are the plain-profile choices inside \cref{thm:gen-main}; any sharper nodewise pencil information can be inserted directly through a nonuniform generalized profile.
\end{remark}

\subsection{Generalized Lyapunov equations}
The generalized Lyapunov equation
\begin{equation}\label{eq:gen-lyap}
A^* X E + E^* X A = -Q
\end{equation}
is the specialization of \eqref{eq:gen-sylvester} obtained by taking the left pair $(A^*,E^*)$, the right pair $(A,E)$, and $C=-Q$. Equivalently, if
\[
\widetilde A := AE^{-1},
\qquad
\widetilde Q := E^{-*} Q E^{-1},
\]
then \eqref{eq:gen-lyap} is equivalent to
\[
\widetilde A^* X + X\widetilde A = -\widetilde Q.
\]
We state the consequence immediately.

\begin{corollary}[Generalized Lyapunov equation via reduced sign embedding]\label{cor:lyap-main}
Assume \cref{ass:gen-be}, and consider \eqref{eq:gen-lyap} in transformed coordinates. Suppose the reduced augmented matrix of the transformed problem satisfies $\norm{\widetilde M}\le 1$ and admits a strip-resolvent bound $\gamma_{\mathrm{red}}\ge \gamma_{\Omega_a}(\widetilde M)$ for some $a\in(0,1)$ and $\beta\in(0,\arcsin a)$. Then \cref{thm:gen-main} yields a block-encoding of the solution once the corresponding quantities
\[
R_{A^*,E^*},\qquad R_{A,E},\qquad \Theta_{A^*,E^*;A,E}^{\mathrm{gen}}
\]
are bounded.

If, in addition,
\[
E\succ 0,\qquad
\norm{E^{-1/2}AE^{-1/2}} \le 1,
\qquad
H(E^{-1/2}AE^{-1/2}) \succeq \mu\I,
\]
and $(K,h)$ are chosen in balanced form so that \eqref{eq:Kchoice-sinc} holds with $\eps_{\mathrm{sgn}}=\eps$ and $\gamma=\gamma_{\mathrm{red}}$, then one obtains a block-encoding of $X$ with normalization
\begin{equation}\label{eq:lyap-beta}
\beta_X \le 36\Lambda^{\mathrm{syl}}_\star\frac{\norm{E^{-1}}^2}{\mu^2}
<21\frac{\norm{E^{-1}}^2}{\mu^2}
\end{equation}
and query complexity
\begin{equation}\label{eq:lyap-queries}
O\!\paren{\frac{\norm{E^{-1}}}{\mu}\log\frac{\norm{E^{-1}}}{\eps\mu}}
\text{ queries each to }U_A,U_E,
\qquad
O(1)\text{ queries to }U_Q.
\end{equation}
More generally, any sharper pointwise control of the weighted pencil shifts $(A^*\pm itE^*)^{-1}$ and $(A\pm itE)^{-1}$ directly improves the overlap sum inside \cref{thm:gen-main}.
\end{corollary}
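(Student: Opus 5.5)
The plan is to realize the corollary as a direct instantiation of \cref{thm:gen-main}. The left pair is $(A^*,E^*)$, the right pair is $(A,E)$, and the right-hand side is $C=-Q$, so that \eqref{eq:gen-lyap} becomes $A^*XE+E^*XA=C$. This is an instance of \eqref{eq:gen-sylvester} with left coefficient pair $(A^*,E^*)$ and right coefficient pair $(A,E)$. First I will check that the reduced coefficients \eqref{eq:reduced-coeffs} for this instance are $\widetilde A_{\mathrm{left}}=E^{-*}A^*=(AE^{-1})^*$, $\widetilde B=AE^{-1}$ and $\widetilde C=-E^{-*}QE^{-1}$. These reproduce exactly the reduced Lyapunov form $\widetilde A^*X+X\widetilde A=-\widetilde Q$ stated before the corollary, so the augmented matrix $\widetilde M$ is the one in the hypothesis. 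Block-encodings of $A^*$, $E^*$ are the adjoints of $U_A,U_E$, available by \cref{ass:gen-be}, and $-Q$ is encoded by $U_Q$ up to a global phase. The half-plane separation needed by \cref{thm:gen-main} follows because $\spec(\widetilde A^*)=\overline{\spec(\widetilde A)}$, so both reduced diagonal blocks lie in the correct half-planes simultaneously. The first (general) assertion is then literally \cref{thm:gen-main} with $R_{A^*,E^*}$, $R_{A,E}$ and $\Theta^{\mathrm{gen}}_{A^*,E^*;A,E}$ in place of the generic quantities.

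For the FoV specialization I will verify the weighted accretivity hypothesis \eqref{eq:gen-weighted-accretive} for both pairs. For the right pair it is the assumption. For the left pair, $E^*=E\succ0$ and
\[
E^{-1/2}A^*E^{-1/2}=(E^{-1/2}AE^{-1/2})^*,
\]
whose norm is at most $1$ and whose Hermitian part is the same matrix $H(E^{-1/2}AE^{-1/2})\succeq\mu\I$. Then \cref{lem:gen-conditioning-R1} gives
\[
R_{A^*,E^*},\ R_{A,E}\le \frac{3\norm{E^{-1}}}{\mu}.
\]
The plain bound \eqref{eq:gen-Theta-plain} together with \cref{lem:constant-weight} under the balanced choice ($h\le\pi$) yields $\Theta^{\mathrm{gen}}\le 9\Lambda^{\mathrm{syl}}_\star\norm{E^{-1}}^2/\mu^2$. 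The normalization $4\Theta^{\mathrm{gen}}$ is therefore \eqref{eq:lyap-beta}; this is \eqref{eq:gen-R1-specialization} with $\norm{D^{-1}}=\norm{E^{-1}}$.

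For the error and cost, I take $\eps_A=\eps_B=\Theta(\eps\mu/\norm{E^{-1}})$, which makes the implementation term at most $\eps/2$, since $\norm{C}=\norm{Q}\le1$. The deterministic term is handled by enforcing \eqref{eq:Kchoice-sinc} with $\gamma_{\mathrm{red}}$. Substituting into \eqref{eq:gen-R1-queries}--\eqref{eq:gen-R1-queries-B} with $D=E$ gives $O\!\paren{\frac{\norm{E^{-1}}}{\mu}\log\frac{\norm{E^{-1}}^2}{\eps\mu}}$. Since $\log\norm{E^{-1}}^2=2\log\norm{E^{-1}}$, this equals the claimed $\log(\norm{E^{-1}}/(\eps\mu))$ order. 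Both the left and right families query $U_A,U_E$ (or their adjoints), so the counts add. The final remark about sharper pointwise control is immediate, because $\Theta^{\mathrm{gen}}$ is defined by the actual norms.

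The main subtlety is the error accounting. The corollary enforces \eqref{eq:Kchoice-sinc} with $\eps_{\mathrm{sgn}}=\eps$, which makes the deterministic contribution $\eps/2$. To keep the total strictly below $\eps$, I will take the implementation budget slightly below $\eps/2$, or tighten $\eps_{\mathrm{sgn}}$ by a constant; this affects only constants. I will also state explicitly that the reduced-coordinate normalization $\norm{\widetilde M}\le1$ is an assumption, not a derived fact.
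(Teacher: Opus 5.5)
Your proposal is correct and follows the paper's proof. The paper likewise instantiates \cref{thm:gen-main} with left pair $(A^*,E^*)$, right pair $(A,E)$ and $C=-Q$, uses $E^*=E$ and \cref{lem:gen-conditioning-R1} to get $R_{A^*,E^*},R_{A,E}\le 3\norm{E^{-1}}/\mu$, and concludes via the plain overlap bound exactly as in the generalized Sylvester specialization; your explicit check that $E^{-1/2}A^*E^{-1/2}=(E^{-1/2}AE^{-1/2})^*$ fills in a step the paper leaves implicit.

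One wording fix concerns half-plane separation. The identity $\spec(\widetilde A^*)=\overline{\spec(\widetilde A)}$ only shows that the two half-plane conditions coincide, not that they hold. What remains to check is $\spec(AE^{-1})\subset\set{z\suchthat\RePart z>0}$. In the general part this is an inherited hypothesis of \cref{thm:gen-main}. In the FoV case it follows from the similarity $AE^{-1}=E^{1/2}(E^{-1/2}AE^{-1/2})E^{-1/2}$ together with $W(E^{-1/2}AE^{-1/2})\subset\set{z\suchthat\RePart z\ge\mu}$.
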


\begin{proof}
Apply \cref{thm:gen-main} with the left pair $(A^*,E^*)$, the right pair $(A,E)$, and $C=-Q$.
In the weighted field-of-values regime, $E\succ 0$ implies $E^*=E$, so \cref{lem:gen-conditioning-R1} gives
\[
R_{A^*,E^*}\le \frac{3\norm{E^{-1}}}{\mu},
\qquad
R_{A,E}\le \frac{3\norm{E^{-1}}}{\mu},
\]
and therefore
\[
\Theta_{A^*,E^*;A,E}^{\mathrm{gen}}
\le
\frac{9\norm{E^{-1}}^2}{\mu^2}\Lambda^{\mathrm{syl}}_{K,h}
\le
\frac{9\norm{E^{-1}}^2}{\mu^2}\Lambda^{\mathrm{syl}}_\star.
\]
The displayed normalization and query bounds now follow exactly as in the generalized Sylvester specialization.
\end{proof}

\section{Matrix functions: matrix roots and geometric means}\label{sec:sqrt}
The same sign embedding framework is not specific to Sylvester equations. It also applies to matrix functions whose target operator appears as an off-diagonal block of the sign of an augmented matrix. This section treats principal square roots, inverse square roots, and matrix geometric means. The key point is that the hypotheses below are FoV/accretivity hypotheses, so the main corollaries already extend to non-normal inputs as soon as the principal branches are defined.

As explained in \cref{sec:model}, we work throughout this section in normalized coordinates.
For square roots, given an original matrix $A_0$, choose a scalar $\lambda_A\in(0,1]$ so that $A:=\lambda_A A_0$ satisfies $\norm{A}\le 1$.
Then
\[
A_0^{-1/2}=\lambda_A^{1/2}A^{-1/2},
\qquad
A_0^{1/2}=\lambda_A^{-1/2}A^{1/2},
\]
so the normalized analysis for $A$ recovers the original outputs by known scalar rescaling.
This is also the natural output scale: already in the scalar case $A=\mu$ one has $\|A^{-1/2}\|=\mu^{-1/2}$, so an $O(\mu^{-1/2})$ normalization is optimal up to constants.

Here and below we define the principal geometric mean as
\begin{equation}\label{eq:geom-mean-def}
A\# B := A(A^{-1}B)^{1/2},
\end{equation}
whenever $\spec(A^{-1}B)\cap(-\infty,0]=\varnothing$.

For geometric means, given original matrices $A_0$ and $B_0$, choose a common scalar $\lambda_\#\in(0,1]$ so that $A:=\lambda_\# A_0$ and $B:=\lambda_\# B_0$ satisfy $\norm{A},\norm{B}\le 1$.
Then
\[
A\# B=\lambda_\#(A_0\# B_0),
\qquad
(A\# B)^{-1}=\lambda_\#^{-1}(A_0\# B_0)^{-1},
\]
so again the normalized analysis suffices.  The normalization factor $O((\mu_A\mu_B)^{-1/2})$ is optimal in order for $(A\# B)^{-1}$; the direct geometric-mean output later inherits the same normalization from the same inverse-geometric-mean construction. 

\begin{theorem}[FoV-based quantum algorithms for $A^{-1/2}$ and $A^{1/2}$]\label{thm:sqrt-accretive}
Assume that $U_A$ is a $(1,a_A,0)$-block-encoding of $A$, that $\norm{A}\le 1$, and that
\[
H(A)\succeq \mu \I
\qquad
\text{for some }\mu\in(0,1].
\]
Choose any $a\in(0,\sqrt{\mu})$, any $\beta\in(0,\arcsin a)$, and choose $K$ so that
\begin{equation}\label{eq:sqrt-accretive-K}
K\ge
\left\lceil
\frac{1}{2\pi\beta}
\max\!\left\{
(\log 2)^2,
\log^2\!\left(1+\frac{2\wt C_{\beta,a,4/(\mu-a^2)}}{\eps}\right)
\right\}
\right\rceil.
\end{equation}
Set $h=\sqrt{2\pi\beta/K}$.
Then:
\begin{enumerate}[leftmargin=2em]
\item there exists a
\begin{equation}\label{eq:acc-minushalf}
\paren{\frac{4}{\sqrt{\mu}}, a_A+O(\log K), \eps}\text{-block-encoding of }A^{-1/2}
\end{equation}
using
\begin{equation}\label{eq:acc-minushalf-queries}
O\!\paren{\frac{1}{\mu}\log\frac{1}{\eps\mu}}
\end{equation}
queries to $U_A$ and $U_A^\dagger$;
\item there exists a
\begin{equation}\label{eq:acc-plushalf}
\paren{\frac{4}{\sqrt{\mu}}, 2a_A+O(\log K), \eps}\text{-block-encoding of }A^{1/2}
\end{equation}
with the same asymptotic query count and one additional use of $U_A$.
\end{enumerate}
\end{theorem}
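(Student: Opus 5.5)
The plan is to instantiate the sign-embedding pipeline with the square-root embedding
\[
K(A)=\begin{bmatrix}0&A\\ \I&0\end{bmatrix},
\qquad
K(A)^2=\begin{bmatrix}A&0\\0&A\end{bmatrix}.
\]
\emph{Sign representation.} I first check that the $(2,1)$ block of $\signf(K(A))$ is $A^{-1/2}$ and the $(1,2)$ block is $A^{1/2}=A\cdot A^{-1/2}$. This is the content of \cref{thm:sqrt-sign}, which I will invoke. Directly: $H(A)\succeq\mu\I$ puts $\spec(A)$ in the open right half-plane, so the eigenvalues $\pm\sqrt{\lambda}$ of $K(A)$ avoid $i\R$. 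Moreover, by \eqref{eq:sign-integral} the $(2,1)$ block of $\frac{2}{\pi}\int_0^\infty K(K^2+t^2)^{-1}dt$ is $\frac{2}{\pi}\int_0^\infty (A+t^2\I)^{-1}dt=A^{-1/2}$. Also $\norm{K(A)}=\max\{\norm{A},1\}\le 1$, so the normalization hypothesis of \cref{thm:sign-approx} holds.

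\emph{Strip certificate.} This is the step I expect to be the main obstacle, because it must produce the specific constant $\gamma=4/(\mu-a^2)$ appearing in \eqref{eq:sqrt-accretive-K}. For $z\in\Omega_a$ I use the factorization $(z\I-K)^{-1}=(z\I+K)(z^2\I-K^2)^{-1}$. Since $\RePart z^2\le(\RePart z)^2\le a^2$, we get $H(A-z^2\I)\succeq(\mu-a^2)\I$, and the numerical-range bound of \cref{lem:accretive-res} gives $\norm{(z^2\I-A)^{-1}}\le 1/(\mu-a^2)$. I then split into two cases.
\begin{itemize}[leftmargin=2em]
\item If $\abs{z}\le 2$, then $\norm{z\I+K}\le 3$, so $\norm{(z\I-K)^{-1}}\le 3/(\mu-a^2)$.
\item If $\abs{z}\ge 2$, the Neumann series gives $\norm{(z\I-K)^{-1}}\le 1\le 1/(\mu-a^2)$.
\end{itemize}
Hence $\gamma_{\Omega_a}(K)\le 4/(\mu-a^2)$. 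With $K$ chosen as in \eqref{eq:sqrt-accretive-K}, \cref{prop:sign-approx-explicit} gives $\norm{\signf(K)-S_{K,h}(K)}\le\eps/2$, so the $(2,1)$-block approximant
\[
Y_{K,h}:=\frac{2h}{\pi}\sum_{k=-K}^{K}t_k(A+t_k^2\I)^{-1}
\]
is within $\eps/2$ of $A^{-1/2}$. Here I use that the $(2,1)$ block of $K(K^2+t^2)^{-1}$ is $(A+t^2\I)^{-1}$.

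\emph{Single-family implementation.} I write $Y_{K,h}=\sum_k c_kR_k$ with
\[
R_k=(1+t_k^2)(A+t_k^2\I)^{-1},
\qquad
c_k=\frac{2h}{\pi}\,\frac{t_k}{1+t_k^2}.
\]
The multiplexed family $F=\sum_k\ket{k}\!\bra{k}\otimes (A+t_k^2\I)/(1+t_k^2)$ has unit block-encoding by \cref{lem:mux-one-matrix}, since the coefficients in each block sum to one. Because $H(A+t^2\I)\succeq(\mu+t^2)\I$, the profile $\rho_k=(1+t_k^2)/(\mu+t_k^2)$ is admissible, and $R_\rho\le 1/\mu$. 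The overlap is
\[
\Theta^{(1)}_\rho=\frac{2h}{\pi}\sum_k\frac{t_k}{\mu+t_k^2}.
\]
The function $x\mapsto e^x/(\mu+e^{2x})$ is unimodal with supremum $1/(2\sqrt\mu)$ and integral $\int_0^\infty dt/(\mu+t^2)=\pi/(2\sqrt\mu)$. The unimodal sum-versus-integral comparison used in \cref{thm:R1-banded} then gives
\[
\Theta^{(1)}_\rho\le\frac{1}{\sqrt\mu}\Bigl(1+\frac h\pi\Bigr)\le\frac{2}{\sqrt\mu},
\]
since $h<\pi$.

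\cref{thm:single-family-algorithm} now yields normalization $2\Theta^{(1)}_\rho\le 4/\sqrt\mu$ and implementation error $\Theta^{(1)}_\rho\eps_{\mathrm{inv}}/R_\rho$. Choosing $\eps_{\mathrm{inv}}=c\,\eps\sqrt\mu/R_\rho$ (any small enough multiple of $\eps\mu^{-1/2}$ capped at $1$) makes this error at most $\eps/2$. The query count is $O(R_\rho\log(R_\rho/\eps_{\mathrm{inv}}))=O(\mu^{-1}\log(1/(\eps\mu)))$, and the ancilla count is $a_A+O(\log K)$. This proves part 1.

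\emph{Part 2.} I multiply the block-encoding from part 1 on the left by $U_A$. The product rule gives normalization $1\cdot 4/\sqrt\mu$ and ancillas $2a_A+O(\log K)$. Since $\norm{A}\le1$, the error is at most $\norm{A}\,\eps\le\eps$. The cost is one extra query to $U_A$.
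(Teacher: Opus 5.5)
Your proposal is correct and follows the paper's proof essentially step for step: the same $K(A)$ embedding, the certificate $\gamma_{\Omega_a}(K(A))\le 4/(\mu-a^2)$ feeding the choice of $K$ with $\eps_{\mathrm{sgn}}=\eps/2$, the FoV profile $\rho_k=(1+t_k^2)/(\mu+t_k^2)$ with $R_\rho\le 1/\mu$ and $\Theta^{(1)}_\rho\le 2/\sqrt{\mu}$, the single-family theorem, and a final left multiplication by $U_A$. The differences are cosmetic. You prove the strip bound by a $|z|\le 2$ / $|z|\ge 2$ split (which even gives $3/(\mu-a^2)$) instead of keeping the $y^2$ term in the denominator as \cref{lem:sqrt-strip} does. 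Your $\eps_{\mathrm{inv}}=c\,\eps\sqrt{\mu}/R_\rho$ works because $R_\rho\ge 1$; the paper takes $\eps\sqrt{\mu}/4$. However, your parenthetical ``multiple of $\eps\mu^{-1/2}$'' does not match that formula, and that version would additionally need $R_\rho\gtrsim 1/\mu$.
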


The rest of the section proves \cref{thm:sqrt-accretive} from the sign embedding $K(A)$, then reuses the same single-family implementation for the matrix geometric mean.

\subsection{The square-root sign embedding}
\begin{theorem}[Principal square-root sign embedding]\label{thm:sqrt-sign}
Let $A$ be invertible and suppose that the principal square root $A^{1/2}$ is defined, equivalently
\[
\spec(A)\cap(-\infty,0]=\varnothing.
\]
Define the augmented matrix
\begin{equation}\label{eq:KofA}
K(A) := \begin{bmatrix} 0 & A \\ \I & 0 \end{bmatrix}.
\end{equation}
Then $\spec(K(A))\cap i\R = \varnothing$ and
\begin{equation}\label{eq:signK}
\signf(K(A)) =
\begin{bmatrix}
0 & A^{1/2} \\
A^{-1/2} & 0
\end{bmatrix}.
\end{equation}
\end{theorem}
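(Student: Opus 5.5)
The plan is to compute $K(A)^2$ and use the characterization $\signf(T)=T(T^2)^{-1/2}$ from the remark after the definition of the half-plane sign. A direct block multiplication gives
\[
K(A)^2=\begin{bmatrix} A & 0\\ 0 & A\end{bmatrix}=\I_2\otimes A .
\]
The first step is the spectral claim. If $\lambda\in\spec(K(A))$, then $\lambda^2\in\spec(K(A)^2)=\spec(A)$. If $\lambda$ were purely imaginary (including $\lambda=0$), then $\lambda^2\in(-\infty,0]$, which contradicts $\spec(A)\cap(-\infty,0]=\varnothing$. Hence $\spec(K(A))\cap i\R=\varnothing$, so $\signf(K(A))$ is well defined by holomorphic functional calculus.

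The second step is to justify that $\signf(T)=T(T^2)^{-1/2}$ holds with the principal branch whenever $\spec(T)\cap i\R=\varnothing$. I would verify this at the scalar level on a neighbourhood of $\spec(T)$ avoiding $i\R$. For $\RePart z\neq 0$, the number $z^2$ lies off $(-\infty,0]$, so $(z^2)^{1/2}$ is holomorphic there. Moreover $(z^2)^{1/2}=\pm z$, with the sign chosen so that the real part is positive, which is exactly $\signf(z)\,z$. Therefore $z(z^2)^{-1/2}=\signf(z)$ as holomorphic functions on that domain, and the holomorphic functional calculus transfers the identity to matrices. This step is the main (mild) obstacle: one has to make sure the composition rule of the functional calculus applies, i.e. that $(\cdot)^{-1/2}$ is holomorphic on a neighbourhood of $\spec(T^2)$, which the previous paragraph guarantees.

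The third step is the computation itself. Since $K(A)^2=\I_2\otimes A$ is block diagonal, the functional calculus acts blockwise, and $(K(A)^2)^{-1/2}=\I_2\otimes A^{-1/2}$ with $A^{-1/2}$ the principal inverse square root. This is well defined because $\spec(A)$ avoids $(-\infty,0]$. Then
\[
\signf(K(A))=\begin{bmatrix}0&A\\ \I&0\end{bmatrix}\begin{bmatrix}A^{-1/2}&0\\0&A^{-1/2}\end{bmatrix}=\begin{bmatrix}0&AA^{-1/2}\\ A^{-1/2}&0\end{bmatrix}.
\]
Finally, $AA^{-1/2}=A^{1/2}$, because both $A^{1/2}$ and $A^{-1/2}$ are holomorphic functions of $A$, so they commute with $A$, and the scalar identity $z\cdot z^{-1/2}=z^{1/2}$ holds on the principal branch. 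This gives \eqref{eq:signK}.
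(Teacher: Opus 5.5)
Your proposal is correct and follows essentially the same route as the paper: compute $K(A)^2=\diag(A,A)$, rule out imaginary-axis eigenvalues using $\lambda^2\in\spec(A)$, and evaluate $\signf(K(A))=K(A)(K(A)^2)^{-1/2}$ blockwise. Your additional justifications (the scalar identity plus the composition rule for the holomorphic functional calculus, and $AA^{-1/2}=A^{1/2}$) fill in details that the paper takes for granted.
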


\begin{proof}
We have
\[
K(A)^2 = \diag(A,A).
\]
If $\lambda\in\spec(A)$, then the eigenvalues of $K(A)$ are $\pm \sqrt{\lambda}$, where $\sqrt{\lambda}$ denotes the principal square root.
Because $\lambda\notin(-\infty,0]$, that principal square root has strictly positive real part, so neither $\sqrt{\lambda}$ nor $-\sqrt{\lambda}$ lies on the imaginary axis.
Hence $\spec(K(A))\cap i\R=\varnothing$ and $\signf(K(A))$ is well defined.

Since the principal square root of $A$ exists and $A$ is invertible, so does the principal inverse square root $A^{-1/2}$, and
\[
(K(A)^2)^{-1/2} = \diag(A^{-1/2},A^{-1/2}).
\]
The identity $\signf(M)=M(M^2)^{-1/2}$ therefore yields
\[
\signf(K(A))
= K(A)(K(A)^2)^{-1/2}
=
\begin{bmatrix}
0 & A \\
\I & 0
\end{bmatrix}
\begin{bmatrix}
A^{-1/2} & 0 \\
0 & A^{-1/2}
\end{bmatrix}
=
\begin{bmatrix}
0 & A^{1/2} \\
A^{-1/2} & 0
\end{bmatrix}.
\]
\end{proof}

\subsection{Log-sinc rational formulas for inverse and ordinary square roots}

Applying the general sign integral to $K(A)$ gives
\[
\signf(K(A))
=
\frac{2}{\pi}\int_0^{\infty} K(A)\bigl(K(A)^2+t^2\I\bigr)^{-1}\,dt.
\]
Since $K(A)^2=\diag(A,A)$, the entire problem reduces to the positive-shift family $A+t^2\I$.
\begin{lemma}[Resolvents of the square-root embedding]\label{lem:sqrt-resolvent}
For every $t\in\R$ such that $A+t^2\I$ is invertible,
\begin{equation}\label{eq:sqrt-resolvent}
(K(A)\mp it\I)^{-1}
=
(K(A)\pm it\I)\diag\!\paren{(A+t^2\I)^{-1},(A+t^2\I)^{-1}}
\end{equation}
and hence
\begin{equation}\label{eq:sqrt-resolvent-blocks}
(K(A)\mp it\I)^{-1}
=
\begin{bmatrix}
\pm it (A+t^2\I)^{-1} & A(A+t^2\I)^{-1} \\
(A+t^2\I)^{-1} & \pm it (A+t^2\I)^{-1}
\end{bmatrix}.
\end{equation}
\end{lemma}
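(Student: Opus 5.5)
The plan is to verify \eqref{eq:sqrt-resolvent} by direct multiplication and then read off the blocks. The key algebraic fact, already used in the proof of \cref{thm:sqrt-sign}, is $K(A)^2=\diag(A,A)$. It follows that
\[
(K(A)\mp it\I)(K(A)\pm it\I)=K(A)^2+t^2\I=\diag(A+t^2\I,\,A+t^2\I),
\]
because the cross terms $\pm itK(A)\mp itK(A)$ cancel and $(\mp it)(\pm it)=t^2$. Two scalar-shifted copies of $K(A)$ commute, so the same identity also holds with the factors in the opposite order.

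Now assume $A+t^2\I$ is invertible, and write $D_t:=\diag((A+t^2\I)^{-1},(A+t^2\I)^{-1})$. The displayed identity shows that $(K(A)\mp it\I)\,(K(A)\pm it\I)D_t=\I$. Since all matrices involved are square and finite-dimensional, a one-sided inverse is a two-sided inverse. Hence $K(A)\mp it\I$ is invertible with inverse $(K(A)\pm it\I)D_t$, which is \eqref{eq:sqrt-resolvent}.

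The one point to check is that this product is in the order claimed, with $D_t$ on the right. For that it is enough that $(K(A)\mp it\I)$ and $(K(A)\pm it\I)$ commute, which we already noted. It is worth remarking that $D_t$ also commutes with $K(A)$, because $K(A)$ commutes with $K(A)^2$. So either ordering would be valid, but we do not need this.

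For \eqref{eq:sqrt-resolvent-blocks}, multiply the block matrix
\[
K(A)\pm it\I=\begin{bmatrix}\pm it\I & A\\ \I & \pm it\I\end{bmatrix}
\]
on the right by $D_t$. This multiplies each block column by $(A+t^2\I)^{-1}$ and gives exactly the four stated blocks. In particular, the $(1,2)$ block is $A(A+t^2\I)^{-1}$ and the $(2,1)$ block is $(A+t^2\I)^{-1}$.

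There is no real obstacle in this argument. The only places needing care are the sign bookkeeping ($\mp$ on the left-hand side and $\pm$ on the right-hand side) and the invertibility hypothesis. That hypothesis is what makes $D_t$ well defined, and through the factorization above it also yields the invertibility of $K(A)\mp it\I$.
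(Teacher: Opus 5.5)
Your proposal is correct and follows the same route as the paper: both use $K(A)^2=\diag(A,A)$ to obtain $(K(A)\mp it\I)(K(A)\pm it\I)=\diag(A+t^2\I,A+t^2\I)$, and then read off the blocks by right-multiplying $K(A)\pm it\I$ by the block-diagonal inverse. Your commutativity remarks are harmless but unnecessary, because the product identity already exhibits $(K(A)\pm it\I)D_t$ as a right inverse, and hence a two-sided inverse, of $K(A)\mp it\I$.
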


\begin{proof}
Since $K(A)^2=\diag(A,A)$, we have
\[
(K(A)\mp it\I)(K(A)\pm it\I) = K(A)^2+t^2\I = \diag(A+t^2\I,A+t^2\I),
\]
which immediately gives \eqref{eq:sqrt-resolvent}.
Expanding the block product yields \eqref{eq:sqrt-resolvent-blocks}.
\end{proof}

\begin{corollary}[Symmetrized resolvent blocks]\label{cor:sqrt-blocks}
For every $t\in\R$ such that $A+t^2\I$ is invertible,
\begin{align}
\bracks{(K(A)-it\I)^{-1} + (K(A)+it\I)^{-1}}_{21} &= 2(A+t^2\I)^{-1}, \label{eq:21block}\\
\bracks{(K(A)-it\I)^{-1} + (K(A)+it\I)^{-1}}_{12} &= 2A(A+t^2\I)^{-1}. \label{eq:12block}
\end{align}
\end{corollary}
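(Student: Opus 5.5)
The identities follow by adding the two cases of \cref{lem:sqrt-resolvent} and reading off the off-diagonal blocks of \eqref{eq:sqrt-resolvent-blocks}. Fix $t\in\R$ with $A+t^2\I$ invertible. Write $R_t:=(A+t^2\I)^{-1}$.

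First I will apply \eqref{eq:sqrt-resolvent-blocks} with the upper choice of signs. This gives $(K(A)-it\I)^{-1}$ with diagonal blocks $+itR_t$, a $(1,2)$ block $AR_t$, and a $(2,1)$ block $R_t$.

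Next I will apply it with the lower choice of signs. This gives $(K(A)+it\I)^{-1}$ with diagonal blocks $-itR_t$ and the same off-diagonal blocks $AR_t$ and $R_t$. The off-diagonal blocks carry no sign dependence because they come from the $\pm it$-free part of $K(A)\pm it\I$ multiplied against $\diag(R_t,R_t)$.

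Summing the two matrices, the diagonal contributions $\pm itR_t$ cancel. The $(2,1)$ blocks add to $2R_t$, which is \eqref{eq:21block}. The $(1,2)$ blocks add to $2AR_t$, which is \eqref{eq:12block}.

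The only point requiring care is the sign bookkeeping in \eqref{eq:sqrt-resolvent}. I will check it directly: $(K(A)-it\I)(K(A)+it\I)=K(A)^2+t^2\I=\diag(A+t^2\I,A+t^2\I)$. The two factors commute, so $(K(A)-it\I)^{-1}=(K(A)+it\I)\diag(R_t,R_t)$. The $(1,2)$ entry of this product is $A\cdot R_t$ and the $(2,1)$ entry is $\I\cdot R_t$, confirming both off-diagonal blocks. No obstacle is expected beyond this check.
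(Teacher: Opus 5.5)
Your proposal is correct and follows the paper's own argument: add the two sign cases of \cref{lem:sqrt-resolvent}, note that the diagonal terms $\pm it(A+t^2\I)^{-1}$ cancel, and read off the doubled off-diagonal blocks. Your direct check of the product $(K(A)+it\I)\,\diag\bigl((A+t^2\I)^{-1},(A+t^2\I)^{-1}\bigr)$ confirms the sign bookkeeping that the paper takes from the lemma.
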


\begin{proof}
Add the two block formulas from \cref{lem:sqrt-resolvent}; the diagonal terms cancel and the off-diagonal terms add.
\end{proof}

Define the log-sinc approximants
\begin{align}
A^{-1/2}_{K,h} &:= \frac{2h}{\pi}\sum_{k=-K}^{K} t_k(A+t_k^2\I)^{-1}, \label{eq:ANminushalf}\\
A^{1/2}_{K,h} &:= \frac{2h}{\pi}\sum_{k=-K}^{K} t_k A(A+t_k^2\I)^{-1}. \label{eq:ANplushalf}
\end{align}

\begin{corollary}[Deterministic square-root approximation]\label{cor:sqrt-det}
Assume that $K(A)$ satisfies the hypotheses of \cref{thm:sign-approx} with parameters $(a,\beta,\gamma_K)$.
Then
\begin{equation}\label{eq:sqrt-det-error}
\norm{A^{-1/2} - A^{-1/2}_{K,h}} \le \eps_{\mathrm{sgn}}^{\mathrm{sinc}}(K,h;\beta,a,\gamma_K),
\qquad
\norm{A^{1/2} - A^{1/2}_{K,h}} \le \eps_{\mathrm{sgn}}^{\mathrm{sinc}}(K,h;\beta,a,\gamma_K).
\end{equation}
Under the balanced choice \eqref{eq:balanced-h}, both errors are at most $\wt C_{\beta,a,\gamma_K}e^{-\sqrt{2\pi\beta K}}$.
\end{corollary}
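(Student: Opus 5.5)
The plan is to mirror \cref{cor:XN-error}: identify the off-diagonal blocks of the log-sinc approximant $S_{K,h}(K(A))$ with the approximants \eqref{eq:ANminushalf}--\eqref{eq:ANplushalf}, identify the corresponding blocks of $\signf(K(A))$ via \cref{thm:sqrt-sign}, and then use the fact that a block of an operator has norm at most that of the whole operator.

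\emph{Step 1 (identify the target blocks).} The hypotheses of \cref{thm:sign-approx} with a finite $\gamma_K\ge\gamma_{\Omega_a}(K(A))$ force $\spec(K(A))\cap i\R=\varnothing$. Because $\spec(K(A))=\{\pm\sqrt{\lambda}:\lambda\in\spec(A)\}$, this excludes $\spec(A)\cap(-\infty,0]$; in particular $A$ is invertible and the principal square root exists. We read this as the standing assumption of the section, and implicitly of the corollary. \Cref{thm:sqrt-sign} then gives $[\signf(K(A))]_{21}=A^{-1/2}$ and $[\signf(K(A))]_{12}=A^{1/2}$.

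\emph{Step 2 (identify the approximant blocks).} By \eqref{eq:SKh},
\[
S_{K,h}(K(A))=\frac{h}{\pi}\sum_{k=-K}^{K}t_k\bigl((K(A)-it_k\I)^{-1}+(K(A)+it_k\I)^{-1}\bigr).
\]
Each node $t_k>0$ and $\pm it_k\notin\spec(K(A))$, so $A+t_k^2\I=(K(A)\mp it_k\I)(K(A)\pm it_k\I)$ restricted to a diagonal block is invertible. \Cref{cor:sqrt-blocks} then applies termwise and gives
\[
[S_{K,h}(K(A))]_{21}=\frac{2h}{\pi}\sum_{k=-K}^{K}t_k(A+t_k^2\I)^{-1}=A^{-1/2}_{K,h},
\]
and likewise $[S_{K,h}(K(A))]_{12}=A^{1/2}_{K,h}$.

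\emph{Step 3 (bound the error).} For either off-diagonal block,
\[
\norm{[\signf(K(A))-S_{K,h}(K(A))]_{ij}}\le\norm{\signf(K(A))-S_{K,h}(K(A))}.
\]
\Cref{prop:sign-approx-explicit}(2) bounds the right-hand side by $\eps_{\mathrm{sgn}}^{\mathrm{sinc}}(K,h;\beta,a,\gamma_K)$, which proves \eqref{eq:sqrt-det-error}. Under the balanced rule, \eqref{eq:balanced-sign-bound} replaces this with $\wt C_{\beta,a,\gamma_K}e^{-\sqrt{2\pi\beta K}}$; this requires $s\ge\log2$, which the $(\log 2)^2$ term in \eqref{eq:Kchoice-sinc} guarantees. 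Unlike the Sylvester case, there is no factor $\tfrac12$, because the blocks of the sign are the targets themselves rather than $2X$.

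The main obstacle is bookkeeping rather than analysis. One must make sure the invertibility of $A+t_k^2\I$ at every node follows from the hypotheses, which Step 2 handles, and that the $\tfrac{h}{\pi}$ prefactor combines with the factor $2$ from \cref{cor:sqrt-blocks} to give exactly the coefficient $\tfrac{2h}{\pi}$ in \eqref{eq:ANminushalf}.
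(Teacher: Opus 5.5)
Your proposal is correct and follows the paper's proof essentially step for step. You read off $A^{-1/2}$ and $A^{1/2}$ as the $(2,1)$ and $(1,2)$ blocks of $\signf(K(A))$ via \cref{thm:sqrt-sign}, identify the matching blocks of $S_{K,h}(K(A))$ as $A^{-1/2}_{K,h}$ and $A^{1/2}_{K,h}$ via \cref{cor:sqrt-blocks}, and bound each block by the full sign error from \cref{prop:sign-approx-explicit}. Your extra checks are details the paper leaves implicit: that the finite strip constant forces $\spec(A)\cap(-\infty,0]=\varnothing$, and that the balanced bound needs $s\ge\log 2$. Note, though, that $s\ge\log 2$ is an additional requirement on $K$ (guaranteed, for instance, by \eqref{eq:Kchoice-sinc}) rather than a consequence of the balanced rule alone.
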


\begin{proof}
By \cref{thm:sqrt-sign}, the $(2,1)$-block of $\signf(K(A))$ is $A^{-1/2}$ and the $(1,2)$-block is $A^{1/2}$.
By \cref{cor:sqrt-blocks} and the definition \eqref{eq:SKh}, the corresponding blocks of $S_{K,h}(K(A))$ are exactly $A^{-1/2}_{K,h}$ and $A^{1/2}_{K,h}$.
Taking the relevant block of $\signf(K(A))-S_{K,h}(K(A))$ and using \cref{prop:sign-approx-explicit} proves the claim.
\end{proof}

\subsection{Scaled multiplexing and block-encoding normalizations}
Define
\begin{equation}\label{eq:sqrt-sk}
\mathcal R^{\mathrm{sq}}_k := (1+t_k^2)(A+t_k^2\I)^{-1},
\qquad
\nu_k := \frac{2h t_k}{\pi(1+t_k^2)}.
\end{equation}
Then
\begin{equation}\label{eq:sqrt-scaled-identity}
A^{-1/2}_{K,h} = \sum_{k=-K}^{K}\nu_k\mathcal R^{\mathrm{sq}}_k,
\qquad
A^{1/2}_{K,h} = A A^{-1/2}_{K,h}.
\end{equation}
Set
\begin{equation}\label{eq:Lambda-sq}
\Lambda^{\mathrm{sq}}_{K,h} := \sum_{k=-K}^{K}\abs{\nu_k}.
\end{equation}

\begin{lemma}[Total LCU weight for the square-root approximant]\label{lem:sqrt-weight}
For every $K\ge 1$ and $h>0$,
\begin{equation}\label{eq:sqrt-weight-bound}
\Lambda^{\mathrm{sq}}_{K,h}
\le
1+\frac{h}{\pi}.
\end{equation}
In particular, if $h\le \pi$, then
\begin{equation}\label{eq:sqrt-weight-balanced}
\Lambda^{\mathrm{sq}}_{K,h}\le 2.
\end{equation}
\end{lemma}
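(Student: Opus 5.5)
We follow the same route as \cref{lem:constant-weight}. Write the weights as values of one scalar function on the grid. With $t_k=e^{kh}>0$, every $\nu_k$ is positive, and
\[
\nu_k=\frac{2h}{\pi}\,p(kh),
\qquad
p(x):=\frac{e^x}{1+e^{2x}}=\frac{1}{2\cosh x}.
\]
Hence
\[
\Lambda^{\mathrm{sq}}_{K,h}=\frac{2h}{\pi}\sum_{k=-K}^{K}p(kh)\le\frac{2h}{\pi}\sum_{k\in\mathbb Z}p(kh).
\]

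The function $p$ is positive and even. It is decreasing on $[0,\infty)$, and $p(0)=1/2$. Because $p$ is decreasing there, each term $p(kh)$ with $k\ge1$ is at most the integral of $p$ over $[(k-1)h,kh]$. This gives the integral comparison
\[
h\sum_{k\ge1}p(kh)\le\int_0^\infty p(x)\,dx .
\]
By evenness the same bound holds for the negative indices, so
\[
h\sum_{k\in\mathbb Z}p(kh)\le h\,p(0)+2\int_0^\infty p(x)\,dx=\frac h2+\int_{-\infty}^{\infty}\frac{dx}{2\cosh x}.
\]

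The remaining integral is elementary. Substitute $t=e^x$:
\[
\int_{-\infty}^{\infty}\frac{e^x}{1+e^{2x}}\,dx=\int_0^\infty\frac{dt}{1+t^2}=\frac{\pi}{2}.
\]
Therefore $h\sum_{k\in\mathbb Z}p(kh)\le h/2+\pi/2$. Multiplying by $2/\pi$ gives
\[
\Lambda^{\mathrm{sq}}_{K,h}\le 1+\frac{h}{\pi},
\]
which is \eqref{eq:sqrt-weight-bound}. If $h\le\pi$, then $h/\pi\le 1$, and \eqref{eq:sqrt-weight-balanced} follows immediately.

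There is no real obstacle in this argument. The only points that need care are the positivity of $\nu_k$, which makes $\abs{\nu_k}=\nu_k$, and the direction of the Riemann-sum comparison, which rests on $p$ being monotone on the half-line.
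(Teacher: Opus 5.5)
Your proof is correct and is essentially the paper's argument. The paper writes $\nu_k=\frac{h}{\pi}\operatorname{sech}(kh)$ and applies the same even/decreasing Riemann-sum comparison with $\int_0^\infty\operatorname{sech}(x)\,dx=\pi/2$; your $p=\tfrac12\operatorname{sech}$ is just a rescaling of that function.
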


\begin{proof}
Since $t_k=e^{kh}$,
\[
\nu_k = \frac{h}{\pi}\operatorname{sech}(kh).
\]
The function $x\mapsto \operatorname{sech}(x)$ is positive, even, decreasing on $[0,\infty)$, and satisfies $\operatorname{sech}(0)=1$ and $\int_0^{\infty}\operatorname{sech}(x)\,dx=\pi/2$.
Thus
\[
h\sum_{k=-K}^{K}\operatorname{sech}(kh)
\le
h + 2\int_0^{\infty}\operatorname{sech}(x)\,dx
=
h+\pi,
\]
which proves \eqref{eq:sqrt-weight-bound}.
\end{proof}

Let
\begin{equation}\label{eq:Tsq}
T_{\mathrm{sq}} := \sum_{k=-K}^{K} \ket{k}\!\bra{k}\otimes \frac{A+t_k^2\I}{1+t_k^2}.
\end{equation}
Then
\begin{equation}\label{eq:Tsqinv}
T_{\mathrm{sq}}^{-1}
=
\sum_{k=-K}^{K} \ket{k}\!\bra{k}\otimes \mathcal R^{\mathrm{sq}}_k.
\end{equation}

\begin{lemma}[Block-encoding of the multiplexed positive shifts]\label{lem:sqrt-shift-be}
Let $U_A$ be a $(1,a_A,0)$-block-encoding of $A$.
Then there exists a block-encoding
\[
U_{T_{\mathrm{sq}}} \in \BE(1, a_A+O(1), 0)
\]
of $T_{\mathrm{sq}}$.
Each use of $U_{T_{\mathrm{sq}}}$ requires $O(1)$ queries to $U_A$.
\end{lemma}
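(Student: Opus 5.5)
The plan is to exhibit $T_{\mathrm{sq}}$ as a controlled linear combination of two unit-normalized operators on the joint node-plus-system register. I would use the rewrite
\[
\frac{A+t_k^2\I}{1+t_k^2}
=
\alpha_k A+(1-\alpha_k)\I,
\qquad
\alpha_k:=\frac{1}{1+t_k^2}\in(0,1].
\]
Each block is therefore a convex combination of $A$ and $\I$, with nonnegative coefficients summing exactly to $1$. This is the single-matrix instance of the multiplexed gadget \cref{lem:mux-one-matrix}, in the same way it was used in the proof of \cref{thm:profile-implementation}. I would cite that lemma directly, and also sketch the construction so the $O(1)$ ancilla and query counts are transparent.

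\textbf{Construction.} I would attach one extra selector qubit $s$ and define a node-controlled preparation
\[
P:=\sum_{k}\ket{k}\!\bra{k}\otimes \bigl(\text{rotation }R_y(\theta_k)\text{ on }s\bigr),
\qquad
R_y(\theta_k)\ket{0}=\sqrt{\alpha_k}\ket{0}+\sqrt{1-\alpha_k}\ket{1}.
\]
On inactive padded node states I would use the identity rotation (or any fixed choice), since those states are never selected.

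\textbf{Select and combine.} The select operation applies $U_A$ when $s=0$ and the identity on system plus $a_A$ ancillas when $s=1$. Since the identity is a trivial $(1,a_A,0)$-block-encoding of $\I$, this is a controlled-$U_A$ and costs one query. Sandwiching as $(P^\dagger\otimes\I)\,\mathrm{SEL}\,(P\otimes\I)$ and projecting onto $\ket{0}_s\ket{0^{a_A}}$ gives, blockwise in $\ket{k}$, exactly $\alpha_kA+(1-\alpha_k)\I$. The node register is only a control here, so it is diagonal and not postselected, and the encoded operator is $T_{\mathrm{sq}}$ itself.

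\textbf{Parameters.} The normalization is $\sum(\text{coefficients})=1$, the error is $0$ because the rotations are exact, the ancilla count is $a_A+1=a_A+O(1)$, and one use costs one controlled query to $U_A$.

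\textbf{Main obstacle.} The only genuinely delicate point is confirming that the LCU normalization is exactly $1$ uniformly in $k$, rather than merely bounded. This holds because both coefficients are nonnegative, so $\sqrt{\alpha_k}^{\,2}+\sqrt{1-\alpha_k}^{\,2}=1$ and no amplitude is lost for any node. A second point to check is that the controlled rotations $\theta_k=2\arccos\sqrt{\alpha_k}$ are classically computable from $k$ and $h$. Their gate cost is $\mathrm{poly}(\log K)$ but they involve no queries, so they do not affect the stated query count.
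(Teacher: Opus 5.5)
Your proposal is correct and follows the paper's proof: both rewrite each block as the convex combination $\frac{1}{1+t_k^2}A+\frac{t_k^2}{1+t_k^2}\I$, whose coefficients sum to one, and then invoke \cref{lem:mux-one-matrix}. Your node-controlled PREP--SELECT--UNPREP circuit simply unpacks that gadget explicitly, and it does produce the claimed $(1,a_A+1,0)$-block-encoding using a single controlled query to $U_A$.
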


\begin{proof}
For each $k$,
\[
\frac{A+t_k^2\I}{1+t_k^2}
=
\frac{1}{1+t_k^2}A
+
\frac{t_k^2}{1+t_k^2}\I,
\]
and the absolute values of the two coefficients sum to one.
Thus \cref{lem:mux-one-matrix} gives the claimed direct-sum block-encoding of $T_{\mathrm{sq}}$.
\end{proof}

\begin{definition}[Square-root nodewise profile]\label{def:sqrt-nodewise}
For each node $t_k=e^{kh}$, define the exact scaled positive-shift norm
\begin{equation}\label{eq:sqrt-nodewise}
m_k^{\mathrm{sq}} := (1+t_k^2)\norm{(A+t_k^2\I)^{-1}}.
\end{equation}
A \emph{square-root nodewise profile} is a collection of numbers $\rho_k^{\mathrm{sq}}>0$ such that 
\[
m_k^{\mathrm{sq}}\le \rho_k^{\mathrm{sq}}
\qquad
\text{for all }-K\le k\le K.
\]
Associated to such a profile, define
\begin{equation}\label{eq:sqrt-profile-params}
R_{\mathrm{sq},\rho}:=\max_{-K\le k\le K}\rho_k^{\mathrm{sq}},
\qquad
\Theta_\rho^{\mathrm{sq}}:=\sum_{k=-K}^{K}\abs{\nu_k}\rho_k^{\mathrm{sq}}.
\end{equation}
\end{definition}

\begin{remark}
We keep the profile notation because later sections use the same single-family theorem with different nodewise bounds. If one only wants the plain FoV-based corollary, one may simply insert the explicit profile \eqref{eq:sqrt-profile-mu}; the generic profile isolates the improvement coming from nodewise rebalancing.
\end{remark}

\begin{theorem}[Generic sign embedding theorem for square roots]\label{thm:sqrt-generic}
Assume that $U_A$ is a $(1,a_A,0)$-block-encoding of $A$, that $\norm{A}\le 1$, and that $K(A)$ satisfies the hypotheses of \cref{thm:sign-approx} with parameters $(a,\beta,\gamma_K)$.
Let $K\ge 1$, $h>0$, let $\rho$ be a square-root nodewise profile in the sense of \cref{def:sqrt-nodewise}, and let $\eps_{\mathrm{inv}}\in(0,1]$.
Then:
\begin{enumerate}[leftmargin=2em]
\item there exists a
\begin{equation}\label{eq:sqrt-AN-be}
\paren{2\Theta_\rho^{\mathrm{sq}},\ a_A+O(\log K),\ \eps_{\mathrm{sgn}}^{\mathrm{sinc}}(K,h;\beta,a,\gamma_K)+\Theta_\rho^{\mathrm{sq}}\frac{\eps_{\mathrm{inv}}}{R_{\mathrm{sq},\rho}}}\text{-block-encoding of }A^{-1/2};
\end{equation}
\item multiplying by $U_A$ yields a
\begin{equation}\label{eq:sqrtplus-final-be}
\paren{2\Theta_\rho^{\mathrm{sq}},\ 2a_A+O(\log K),\ \eps_{\mathrm{sgn}}^{\mathrm{sinc}}(K,h;\beta,a,\gamma_K)+\Theta_\rho^{\mathrm{sq}}\frac{\eps_{\mathrm{inv}}}{R_{\mathrm{sq},\rho}}}\text{-block-encoding of }A^{1/2};
\end{equation}
\item both constructions use
\begin{equation}\label{eq:sqrt-generic-queries}
O\!\paren{R_{\mathrm{sq},\rho} \log\frac{R_{\mathrm{sq},\rho}}{\eps_{\mathrm{inv}}}}
\end{equation}
queries to $U_A$ and $U_A^\dagger$, up to one additional use of $U_A$ for the $A^{1/2}$ output.
\end{enumerate}
If one chooses the plain profile $\rho_k^{\mathrm{sq}}\equiv r_{\mathrm{sq}}$, then \eqref{eq:sqrt-AN-be} reduces to the plain normalization $2r_{\mathrm{sq}}\Lambda^{\mathrm{sq}}_{K,h}$.
\end{theorem}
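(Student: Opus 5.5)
The proof combines the deterministic square-root approximation of \cref{cor:sqrt-det} with the single-family implementation of \cref{thm:single-family-algorithm}. By \cref{cor:sqrt-det}, $\norm{A^{-1/2}-A^{-1/2}_{K,h}}\le\eps_{\mathrm{sgn}}^{\mathrm{sinc}}(K,h;\beta,a,\gamma_K)$, and the same bound holds for $A^{1/2}$ versus $A^{1/2}_{K,h}$. This part of the error is fixed independently of any quantum implementation, so it remains to block-encode the rational approximant $A^{-1/2}_{K,h}$.

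For the implementation we use the scaled identity \eqref{eq:sqrt-scaled-identity}. It writes $A^{-1/2}_{K,h}=\sum_k \nu_k\mathcal R^{\mathrm{sq}}_k$ with $\mathcal R^{\mathrm{sq}}_k=(1+t_k^2)(A+t_k^2\I)^{-1}$, which is exactly the single-family pattern of \cref{def:single-family-profile}. Take $J=\{-K,\dots,K\}$, $c_j=\nu_k$, $F_j=A+t_k^2\I$, $\sigma_j=1+t_k^2$, and $\rho_j=\rho^{\mathrm{sq}}_k$. Admissibility $\norm{R_j}\le\rho_j$ is precisely $m^{\mathrm{sq}}_k\le\rho^{\mathrm{sq}}_k$. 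The multiplexed operator $F$ is then $T_{\mathrm{sq}}$ from \eqref{eq:Tsq}, and \cref{lem:sqrt-shift-be} supplies $U_F\in\BE(1,a_A+O(1),0)$ with $O(1)$ queries to $U_A$ per use. With these identifications $R_\rho=R_{\mathrm{sq},\rho}$ and $\Theta^{(1)}_\rho=\sum_k\abs{\nu_k}\rho^{\mathrm{sq}}_k=\Theta^{\mathrm{sq}}_\rho$.

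\Cref{thm:single-family-algorithm}(3) then gives a $(2\Theta^{\mathrm{sq}}_\rho,\ a_A+O(\log K),\ \Theta^{\mathrm{sq}}_\rho\eps_{\mathrm{inv}}/R_{\mathrm{sq},\rho})$-block-encoding of $A^{-1/2}_{K,h}$. By part (2), this costs $O(R_{\mathrm{sq},\rho}\log(R_{\mathrm{sq},\rho}/\eps_{\mathrm{inv}}))$ uses of $U_{T_{\mathrm{sq}}}$, hence the same order of queries to $U_A$ and $U_A^\dagger$. Adding the deterministic error by the triangle inequality gives \eqref{eq:sqrt-AN-be} and the query bound \eqref{eq:sqrt-generic-queries}.

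For the $A^{1/2}$ output, multiply the block-encoding above by $U_A$ using the product rule. The ancillas add ($a_A$ more), and the normalization is $1\cdot 2\Theta^{\mathrm{sq}}_\rho$. Since $\norm{A}\le1$, we have
\[
\norm{A\widehat Y-A^{1/2}_{K,h}}\le\norm{\widehat Y-A^{-1/2}_{K,h}},
\]
so the implementation error does not grow. The deterministic error for $A^{1/2}$ is also controlled by \cref{cor:sqrt-det}, which gives \eqref{eq:sqrtplus-final-be} with one extra query to $U_A$.

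\textbf{Plain-profile case.} With the plain profile $\rho^{\mathrm{sq}}_k\equiv r_{\mathrm{sq}}$, the overlap becomes $\Theta^{\mathrm{sq}}_\rho=r_{\mathrm{sq}}\Lambda^{\mathrm{sq}}_{K,h}$ directly from \eqref{eq:Lambda-sq}, which is the stated reduction.

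\textbf{Main obstacle.} The step needing care is the product step for $A^{1/2}$. One must check that the error of the encoded operator with respect to $A A^{-1/2}_{K,h}$ keeps the same normalization and error budget. The normalization does not change because $U_A$ is a unit block-encoding, and the error does not grow because $\norm{A}\le 1$. The rest is bookkeeping: the identifications above must match the single-family theorem term by term.
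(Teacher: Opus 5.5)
Your proposal is correct and follows the paper's proof essentially step for step. You instantiate \cref{thm:single-family-algorithm} with $F_k=A+t_k^2\I$, $\sigma_k=1+t_k^2$, $c_k=\nu_k$ (so the multiplexed operator is $T_{\mathrm{sq}}$), add the quadrature error from \cref{cor:sqrt-det}, and obtain the $A^{1/2}$ output by left-multiplying with the unit block-encoding of $A$, using $\norm{A}\le 1$ so the error does not grow; your explicit split of the $A^{1/2}$ error into $\norm{A}\,\norm{\widehat Y-A^{-1/2}_{K,h}}$ plus the deterministic term is just a more detailed version of the paper's one-line remark.
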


\begin{proof}
Apply \cref{thm:single-family-algorithm} to the data
\[
J=\{-K,-K+1,\ldots,K\},
\qquad
F_k=A+t_k^2\I,
\qquad
\sigma_k=1+t_k^2,
\qquad
c_k=\nu_k.
\]
The multiplexed direct sum is exactly $T_{\mathrm{sq}}$, and the single-family blocks are exactly the scaled resolvents $\mathcal R_k^{\mathrm{sq}}$.
Therefore \cref{thm:single-family-algorithm} yields a block-encoding of $A^{-1/2}_{K,h}$ with normalization $2\Theta_\rho^{\mathrm{sq}}$ and implementation error $\Theta_\rho^{\mathrm{sq}}\eps_{\mathrm{inv}}/R_{\mathrm{sq},\rho}$.
Combining this with the deterministic quadrature error from \cref{cor:sqrt-det} proves \eqref{eq:sqrt-AN-be}.
Multiplying by the unit block-encoding of $A$ gives \eqref{eq:sqrtplus-final-be}; because $\norm{A}\le 1$, the same final error bound applies to $A^{1/2}$.
The query complexity is exactly the one from \cref{thm:single-family-algorithm} specialized to $T_{\mathrm{sq}}$.
\end{proof}

\subsection{FoV gap inputs}
\begin{lemma}[FoV-based positive-shift bounds]\label{lem:sqrt-conditioning}
Assume $\norm{A}\le 1$ and $H(A)\succeq \mu \I$ for some $\mu\in(0,1]$.
Then for every $t\in\R$,
\begin{equation}\label{eq:sqrt-conditioning}
\norm{\frac{A+t^2\I}{1+t^2}} \le 1,
\qquad
(1+t^2)\norm{(A+t^2\I)^{-1}} \le \frac{1+t^2}{\mu+t^2}\le \frac{1}{\mu}.
\end{equation}
Consequently, the profile
\begin{equation}\label{eq:sqrt-profile-mu}
\rho_k^{\mathrm{sq}} := \frac{1+t_k^2}{\mu+t_k^2}
\end{equation}
is valid and satisfies
\begin{equation}\label{eq:sqrt-profile-R}
R_{\mathrm{sq},\rho} = \frac{1}{\mu}.
\end{equation}
\end{lemma}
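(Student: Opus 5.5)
The lemma makes three claims: a contraction bound for the normalized shifts, a nodewise inverse bound, and the value of $R_{\mathrm{sq},\rho}$. I would prove them in that order, reusing the numerical-range argument of \cref{lem:accretive-res}.

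\emph{Contraction bound.} I would write
\[
\frac{A+t^2\I}{1+t^2}=\frac{1}{1+t^2}A+\frac{t^2}{1+t^2}\I .
\]
The triangle inequality with $\norm{A}\le 1$ bounds its norm by $\frac{1}{1+t^2}+\frac{t^2}{1+t^2}=1$.

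\emph{Inverse bound.} The key observation is that the shift is real and nonnegative, so it moves the Hermitian part: $H(A+t^2\I)=H(A)+t^2\I\succeq(\mu+t^2)\I$. For a unit vector $x$ this gives
\[
\norm{(A+t^2\I)x}\ge \RePart\, x^*(A+t^2\I)x\ge \mu+t^2 .
\]
Hence $A+t^2\I$ is invertible and $\norm{(A+t^2\I)^{-1}}\le 1/(\mu+t^2)$. Multiplying by $1+t^2$ gives the middle inequality. The final inequality $\frac{1+t^2}{\mu+t^2}\le\frac1\mu$ is equivalent to $\mu(1+t^2)\le \mu+t^2$, i.e.\ $\mu t^2\le t^2$, which holds because $\mu\le 1$. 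Note that no Neumann-series case split is needed here, in contrast with \cref{lem:R1-conditioning}.

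\emph{The profile and $R_{\mathrm{sq},\rho}$.} Validity of the profile \eqref{eq:sqrt-profile-mu} follows immediately: the middle inequality at $t=t_k$ says exactly $m_k^{\mathrm{sq}}\le\rho_k^{\mathrm{sq}}$. For the maximum, $s\mapsto \frac{1+s}{\mu+s}$ is nonincreasing in $s=t^2\ge 0$, with supremum $1/\mu$ approached as $s\to0$. Every node value is therefore at most $1/\mu$, which gives $R_{\mathrm{sq},\rho}\le 1/\mu$.

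\emph{Main obstacle.} Equality in \eqref{eq:sqrt-profile-R} is the only delicate point, because the nodes are $t_k=e^{kh}>0$ and never reach $t=0$. Strictly, one only gets $R_{\mathrm{sq},\rho}=\max_k\rho_k^{\mathrm{sq}}=\frac{1+t_{-K}^2}{\mu+t_{-K}^2}$, which is strictly less than $1/\mu$ when $\mu<1$. I would therefore present \eqref{eq:sqrt-profile-R} as the sharp bound $R_{\mathrm{sq},\rho}\le 1/\mu$, attained only in the limit $K\to\infty$. Alternatively, one can simply take $R_\rho:=1/\mu$ as the rebalancing constant in \cref{thm:single-family-algorithm}. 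This is legitimate because that theorem needs only $d_j=R_\rho/\rho_j\ge1$, and the enlarged constant changes nothing in the downstream normalization and query bounds.
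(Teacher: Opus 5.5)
Your proof is correct and follows the paper's argument: the triangle-inequality bound for the normalized shift and the Hermitian-part bound $\norm{(A+t^2\I)^{-1}}\le 1/(\mu+t^2)$ are exactly the two steps the paper uses. Your caveat about \eqref{eq:sqrt-profile-R} is also justified. The paper says the maximum is attained at $t_k=0$, but every node $t_k=e^{kh}$ is positive, so literally $R_{\mathrm{sq},\rho}=\frac{1+t_{-K}^2}{\mu+t_{-K}^2}<\frac1\mu$ whenever $\mu<1$. Your repair of taking $1/\mu$ as the rebalancing constant is sound: it keeps every $d_k\ge 1$, leaves the normalization $2\Theta_\rho^{\mathrm{sq}}$ unchanged, and reproduces exactly the error and query bookkeeping in the proof of \cref{thm:sqrt-accretive}.
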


\begin{proof}
The norm bound is immediate:
\[
\norm{\frac{A+t^2\I}{1+t^2}} \le \frac{\norm{A}+t^2}{1+t^2}\le 1.
\]
For the inverse, the numerical range of $A+t^2\I$ is contained in the half-plane $\RePart z\ge \mu+t^2$, so
\[
\norm{(A+t^2\I)^{-1}} \le \frac{1}{\mu+t^2}.
\]
Multiplying by $1+t^2$ proves \eqref{eq:sqrt-conditioning}, and \eqref{eq:sqrt-profile-R} follows because the maximum is attained at $t_k=0$.
\end{proof}

\begin{lemma}[FoV-profile overlap bound for the square-root shifts]\label{lem:sqrt-overlap}
Let $\rho^{\mathrm{sq}}$ be the FoV-based profile \eqref{eq:sqrt-profile-mu}.
Then
\begin{equation}\label{eq:sqrt-overlap-bound}
\Theta_\rho^{\mathrm{sq}}
=
\frac{2h}{\pi}\sum_{k=-K}^{K}\frac{t_k}{\mu+t_k^2}
\le
\frac{1+h/\pi}{\sqrt{\mu}}.
\end{equation}
In particular, if $h\le \pi$, then
\begin{equation}\label{eq:sqrt-overlap-balanced}
\Theta_\rho^{\mathrm{sq}}\le \frac{2}{\sqrt{\mu}}.
\end{equation}
\end{lemma}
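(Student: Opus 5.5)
The plan is to compute $\Theta_\rho^{\mathrm{sq}}$ explicitly and then compare the resulting sum with an integral, in the same way as in \cref{lem:sqrt-weight} and \cref{lem:constant-weight}.

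\textbf{Step 1 (the identity).} Substitute $\nu_k=2ht_k/(\pi(1+t_k^2))$ from \eqref{eq:sqrt-sk} and $\rho_k^{\mathrm{sq}}=(1+t_k^2)/(\mu+t_k^2)$ from \eqref{eq:sqrt-profile-mu} into the definition \eqref{eq:sqrt-profile-params}. The factors $(1+t_k^2)$ cancel, and all terms are positive. This gives the stated equality
\[
\Theta_\rho^{\mathrm{sq}}=\frac{2h}{\pi}\sum_{k=-K}^{K}\frac{t_k}{\mu+t_k^2}.
\]

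\textbf{Step 2 (rewrite as a sech sum).} Define $g(x):=e^x/(\mu+e^{2x})$, so that each summand equals $g(kh)$. Put $c:=\tfrac12\log\mu$, so that $e^{2c}=\mu$. Then
\[
g(x)=\frac{1}{2\sqrt{\mu}}\operatorname{sech}(x-c).
\]
Hence $g$ is positive and unimodal, with its maximum at $x=c$ equal to $1/(2\sqrt\mu)$. Its integral is
\[
\int_{\R}g=\frac{1}{2\sqrt\mu}\cdot\pi=\frac{\pi}{2\sqrt\mu}.
\]
Equivalently, substituting $t=e^x$ gives $\int_0^\infty dt/(\mu+t^2)=\pi/(2\sqrt\mu)$.

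\textbf{Step 3 (sum versus integral).} For a nonnegative unimodal function, the bound $h\sum_{k\in\mathbb Z}g(kh)\le h\sup g+\int_{\R}g$ holds. This is the same inequality used in the proof of \cref{thm:R1-banded}. To justify it, split the nodes at the mode. On each monotone side, compare each node value with the integral over the adjacent interval lying away from the mode; this leaves over at most one term, which is bounded by $h\sup g$.

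Applying this to $g$ and extending the finite sum to all of $\mathbb Z$ gives
\[
\Theta_\rho^{\mathrm{sq}}\le\frac{2}{\pi}\Bigl(\frac{h}{2\sqrt\mu}+\frac{\pi}{2\sqrt\mu}\Bigr)=\frac{1+h/\pi}{\sqrt\mu},
\]
which is \eqref{eq:sqrt-overlap-bound}. Setting $h\le\pi$ then gives \eqref{eq:sqrt-overlap-balanced}.

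\textbf{Main obstacle.} The only delicate point is Step 3. Because the mode $c$ is generally not a node, the single leftover term has to be charged carefully so that it costs at most one copy of $h\sup g$, not two.
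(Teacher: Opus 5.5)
Your proposal is correct and is essentially the paper's proof. It uses the same function $g(x)=e^x/(\mu+e^{2x})$, the same unimodal sum-versus-integral bound $h\sum_k g(kh)\le h\sup g+\int_{\R}g$, and the same values $\sup g=1/(2\sqrt{\mu})$ and $\int_{\R}g=\pi/(2\sqrt{\mu})$. Your identity $g(x)=\frac{1}{2\sqrt{\mu}}\operatorname{sech}(x-\tfrac12\log\mu)$ is a tidy way to get unimodality and the integral at once.

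One wording fix in your Step~3 sketch. Each node value must be compared with the integral over the adjacent interval lying \emph{between the node and the mode}, where $g$ dominates the node value. Comparing with the interval away from the mode, as you wrote, gives the inequality in the wrong direction. The two nodes flanking the mode are then handled by charging the smaller of the two to the interval containing the mode, which is why only one copy of $h\sup g$ remains.
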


\begin{proof}
Set
\[
q_\mu(x):=\frac{e^x}{\mu+e^{2x}}.
\]
Then $q_\mu$ is nonnegative and unimodal on $\R$, since
\[
q_\mu'(x)=\frac{e^x(\mu-e^{2x})}{(\mu+e^{2x})^2}.
\]
Hence
\[
h\sum_{k\in\mathbb Z} q_\mu(kh)
\le
h\sup_{x\in\R}q_\mu(x)
+
\int_{-\infty}^{\infty}q_\mu(x)\,dx.
\]
The maximum is attained at $e^x=\sqrt{\mu}$ and equals $1/(2\sqrt{\mu})$.
Moreover,
\[
\int_{-\infty}^{\infty}q_\mu(x)\,dx
=
\int_0^\infty \frac{dt}{\mu+t^2}
=
\frac{\pi}{2\sqrt{\mu}}.
\]
Because the summand in \eqref{eq:sqrt-overlap-bound} is exactly $(2/\pi)q_\mu(kh)$, we obtain
\[
\Theta_\rho^{\mathrm{sq}}
\le
\frac{2}{\pi}\left(\frac{h}{2\sqrt{\mu}}+\frac{\pi}{2\sqrt{\mu}}\right)
=
\frac{1+h/\pi}{\sqrt{\mu}},
\]
which proves the claim.
\end{proof}

\begin{lemma}[Half-plane separation and strip resolvent bound for $K(A)$]\label{lem:sqrt-strip}
Assume $\norm{A}\le 1$ and $H(A)\succeq \mu \I$ for some $\mu\in(0,1]$.
Then $\norm{K(A)}=1$ and
\begin{equation}\label{eq:K-spectrum}
\spec(K(A)) \subset \set{z\in\C:\RePart(z)\ge \sqrt{\mu}} \cup \set{z\in\C:\RePart(z)\le -\sqrt{\mu}}.
\end{equation}
Moreover, for every $a\in(0,\sqrt{\mu})$,
\begin{equation}\label{eq:K-strip}
\gamma_{\Omega_a}(K(A))
\le \frac{2(1+a)}{\mu-a^2}
\le \frac{4}{\mu-a^2}.
\end{equation}
\end{lemma}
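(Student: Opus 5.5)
The plan has three steps: compute $\norm{K(A)}$, locate the spectrum through $K(A)^2=\diag(A,A)$, and bound the strip resolvent with the explicit formula of \cref{lem:sqrt-resolvent} evaluated at a complex point.

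\emph{Norm and spectrum.} The matrix $K(A)$ maps $(x,y)\mapsto(Ay,x)$, so $\norm{K(A)}=\max\{\norm{A},1\}=1$. For the spectrum we use that $K(A)^2=\diag(A,A)$. Hence every eigenvalue $z$ of $K(A)$ satisfies $z^2=\lambda\in\spec(A)\subset W(A)$. By \cref{thm:sqrt-sign} we have $z=\pm\sqrt{\lambda}$ with the principal branch. Write $\lambda=re^{i\theta}$ with $\RePart\lambda=r\cos\theta\ge\mu$. Then
\[
\RePart\sqrt{\lambda}=\sqrt{r}\cos(\theta/2)=\sqrt{\tfrac{r+r\cos\theta}{2}}=\sqrt{\tfrac{\abs{\lambda}+\RePart\lambda}{2}}\ge\sqrt{\RePart\lambda}\ge\sqrt{\mu},
\]
where the middle inequality uses $\abs{\lambda}\ge\RePart\lambda$. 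This gives \eqref{eq:K-spectrum}.

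\emph{Strip resolvent, algebra.} For a general $z\in\Omega_a$, the same computation as in \cref{lem:sqrt-resolvent} applies with $it$ replaced by a complex $z$:
\[
(z\I-K(A))(z\I+K(A))=z^2\I-K(A)^2=\diag(z^2\I-A,\;z^2\I-A).
\]
Therefore
\[
(z\I-K(A))^{-1}=(z\I+K(A))\,\diag\bigl((z^2\I-A)^{-1},(z^2\I-A)^{-1}\bigr),
\]
and so $\norm{(z\I-K(A))^{-1}}\le(\abs z+1)\,\norm{(A-z^2\I)^{-1}}$. It remains to bound $\norm{(A-z^2\I)^{-1}}$ through the distance from $z^2$ to $W(A)\subset\{\RePart w\ge\mu\}$, as in \cref{lem:accretive-res}.

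\emph{Strip resolvent, distance estimate (main obstacle).} Write $z=x+iy$ with $\abs{x}\le a$. Then $\RePart z^2=x^2-y^2\le a^2$, so the distance from $z^2$ to the half-plane is at least $\mu-a^2$. This gives $\norm{(A-z^2\I)^{-1}}\le 1/(\mu-a^2)$, but the prefactor $\abs z+1$ is unbounded as $\abs y\to\infty$, so this alone does not give the claimed bound. I would split into two cases.
\begin{itemize}
\item[(i)] For $\abs z\le 1$, we get $(1+\abs z)/(\mu-a^2)\le 2/(\mu-a^2)$.
\item[(ii)] For $\abs z$ large, the distance is at least $\mu+y^2-x^2\ge\mu-a^2+y^2$. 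So the resolvent is at most $(1+\abs{z})/(\mu-a^2+y^2)$ with $\abs{z}\le a+\abs y$. Maximizing over $\abs y$ should give at most $2(1+a)/(\mu-a^2)$, using $\mu-a^2\le 1$ and $(1+a+s)/(c+s^2)\le (1+a)(1+s)/(c+s^2)$.
\end{itemize}
The hardest step is checking that this one-variable maximization gives exactly the constant $2(1+a)$. If it does not close cleanly, the fallback is the cruder case split. For $\abs z\ge 2$, the Neumann bound from $\norm{K(A)}=1$ gives $\norm{(z\I-K(A))^{-1}}\le 1/(\abs z-1)\le 1$. Otherwise, $(1+\abs z)\le 1+\sqrt{a^2+y^2}$ with $\abs y\le 2$, and this is absorbed into $\frac{4}{\mu-a^2}$ after checking the arithmetic. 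The final inequality $2(1+a)\le 4$ is immediate from $a<1$.
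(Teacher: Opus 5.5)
Your proposal is correct and is essentially the paper's proof. It uses the same norm computation, the same spectral argument via $K(A)^2=\diag(A,A)$ and $\RePart\sqrt{\lambda}=\sqrt{(\abs{\lambda}+\RePart\lambda)/2}$, the same factorization $(z\I-K(A))^{-1}=(z\I+K(A))\diag\bigl((z^2\I-A)^{-1},(z^2\I-A)^{-1}\bigr)$, and the same Hermitian-part bound $\norm{(z^2\I-A)^{-1}}\le 1/(\mu-a^2+y^2)$. That bound holds on all of $\Omega_a$, so your case (i) and the Neumann fallback are unnecessary; the fallback as written only reaches $4/(\mu-a^2)$, not the middle inequality. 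The one-variable step you hedged on does close: with $c=\mu-a^2\le 1$ and $s=\abs{y}$, one has $\frac{1+a+s}{c+s^2}\le (1+a)\bigl(\frac{1}{c}+\frac{s}{c+s^2}\bigr)\le (1+a)\bigl(\frac1c+\frac{1}{2\sqrt c}\bigr)\le \frac{2(1+a)}{c}$, which is the paper's estimate up to rearrangement.
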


\begin{proof}
Since
\[
K(A)^*K(A) = \diag(\I,A^*A),
\]
we have $\norm{K(A)}^2 = \max\{1,\norm{A}^2\}=1$.
If $\lambda\in\spec(A)$, then $\RePart(\lambda)\ge \mu$ because $H(A)\succeq \mu\I$.
For the principal square root,
\[
\RePart(\sqrt{\lambda}) = \sqrt{\frac{|\lambda|+\RePart(\lambda)}{2}} \ge \sqrt{\RePart(\lambda)} \ge \sqrt{\mu}.
\]
Hence the eigenvalues $\pm\sqrt{\lambda}$ of $K(A)$ satisfy \eqref{eq:K-spectrum}.

Now fix $z=x+iy\in\Omega_a$, so $|x|\le a$.
Using
\[
(z\I-K(A))^{-1} = (z\I+K(A))\diag((z^2\I-A)^{-1},(z^2\I-A)^{-1}),
\]
we get
\[
\norm{(z\I-K(A))^{-1}} \le (\abs{z}+1)\norm{(z^2\I-A)^{-1}}.
\]
Furthermore,
\[
H(A-z^2\I) = H(A) - \RePart(z^2)\I = H(A) - (x^2-y^2)\I \succeq (\mu-a^2+y^2)\I,
\]
so
\[
\norm{(z^2\I-A)^{-1}} \le \frac{1}{\mu-a^2+y^2}.
\]
Since $\abs{z}\le a+\abs{y}$, it follows that
\[
\norm{(z\I-K(A))^{-1}}
\le \frac{1+a+\abs{y}}{\mu-a^2+y^2}.
\]
Setting $b:=\mu-a^2\in(0,1]$, we have $\abs{y}/(b+y^2)\le 1/(2\sqrt{b})\le 1/b$, and therefore
\[
\norm{(z\I-K(A))^{-1}} \le \frac{1+a}{b}+\frac{1}{b} \le \frac{2(1+a)}{b}.
\]
Taking the supremum over $z\in\Omega_a$ proves \eqref{eq:K-strip}.
\end{proof}

\subsection{Proof of the main square-root theorem}

\begin{proof}[Proof of \cref{thm:sqrt-accretive}]
By \cref{lem:sqrt-strip}, we have $\norm{K(A)}=1$ and $\gamma_{\Omega_a}(K(A))\le 4/(\mu-a^2)$.
Equation~\eqref{eq:sqrt-accretive-K} therefore implies
\[
\eps_{\mathrm{sgn}}^{\mathrm{sinc}}(K,h;\beta,a,\gamma_K)\le \frac{\eps}{2}
\qquad\text{with}\qquad
\gamma_K:=\frac{4}{\mu-a^2}.
\]
By \cref{lem:sqrt-conditioning}, the FoV-based profile \eqref{eq:sqrt-profile-mu} satisfies $R_{\mathrm{sq},\rho}=1/\mu$, and by \cref{lem:sqrt-overlap}, $\Theta_\rho^{\mathrm{sq}}\le 2/\sqrt{\mu}$ under the balanced choice.
Choose
\[
\eps_{\mathrm{inv}}:=\frac{\eps\sqrt{\mu}}{4}.
\]
Then
\[
\Theta_\rho^{\mathrm{sq}}\frac{\eps_{\mathrm{inv}}}{R_{\mathrm{sq},\rho}}
\le
\frac{2}{\sqrt{\mu}}\cdot \frac{\eps\sqrt{\mu}}{4}\cdot \mu
\le
\frac{\eps}{2},
\]
because $\mu\le 1$.
Applying \cref{thm:sqrt-generic} therefore yields total error at most $\eps$.
The normalization bounds are
\[
2\Theta_\rho^{\mathrm{sq}}\le \frac{4}{\sqrt{\mu}},
\qquad
2\Theta_\rho^{\mathrm{sq}}\le \frac{4}{\sqrt{\mu}}.
\]
Finally,
\[
O\!\paren{R_{\mathrm{sq},\rho}\log\frac{R_{\mathrm{sq},\rho}}{\eps_{\mathrm{inv}}}}
=
O\!\paren{\frac{1}{\mu}\log\frac{1}{\eps\mu}},
\]
which proves the query estimate.
\end{proof}

\begin{remark}
The new feature is using the profile \eqref{eq:sqrt-profile-mu}, which reduces the block-encoding normalization from the plain $O(\mu^{-1})$ scale to the rebalanced $O(\mu^{-1/2})$ scale. This is optimal in order for inverse square roots, and because the hypothesis is purely FoV-based, the result applies to non-normal and even non-diagonalizable matrices.
\end{remark}

\subsection{Extension to geometric mean}
The geometric-mean case is the square-root sign embedding transported back to the original coordinates.  One additional normalization point is important here.  Even if $\norm{A}\le 1$, $\norm{B}\le 1$, and $H(A),H(B)\succ0$, the unscaled embedding
\[
G(A,B)=
\begin{bmatrix}
0&B\\
A^{-1}&0
\end{bmatrix}
\]
need not have norm at most one.  We therefore state the algorithm with an explicit positive sign-scaling parameter.  Since the half-plane sign is invariant under positive scalar multiplication, this scaling changes neither the sign projector nor the extracted geometric mean.

Assume $A$ and $B$ are invertible and satisfy
\[
\spec(A^{-1}B)\cap(-\infty,0]=\emptyset.
\]
Define the geometric-mean sign embedding
\begin{equation}\label{eq:GAB}
G(A,B):=
\begin{bmatrix}
0 & B\\
A^{-1} & 0
\end{bmatrix}.
\end{equation}
Fix a known scaling parameter $\chi_{\#}>0$ and set
\begin{equation}\label{eq:GAB-scaled}
M_{\#}:=\chi_{\#}G(A,B).
\end{equation}
For the log-sinc nodes $t_k=e^{kh}$, define the induced geometric-mean nodes
\begin{equation}\label{eq:geom-scaled-nodes}
s_k:=\frac{t_k}{\chi_{\#}}.
\end{equation}
Then define
\begin{align}
(A\# B)^{-1}_{K,h}
&:= \frac{2h}{\pi}\sum_{k=-K}^{K} s_k(B+s_k^2A)^{-1},
\label{eq:geom-mean-inv-approx}\\
(A\# B)_{K,h}
&:= B(A\# B)^{-1}_{K,h}A.
\label{eq:geom-mean-approx}
\end{align}

For the same nodes, define the exact scaled inverse norms
\[
m_k^{\#}:=(1+s_k^2)\norm{(B+s_k^2A)^{-1}}.
\]
A \emph{geometric-mean nodewise profile} is a collection of numbers $\rho_k^\#>0$ such that
\[
m_k^\#\le \rho_k^\#
\qquad
\text{for all }-K\le k\le K.
\]
Associated to such a profile, define
\[
R_{\#,\rho}:=\max_{-K\le k\le K}\rho_k^\#,
\qquad
\Theta_\rho^{\#}:=\sum_{k=-K}^{K}\abs{\nu_k^\#}\rho_k^\#,
\qquad
\nu_k^\#:=\frac{2h s_k}{\pi(1+s_k^2)}.
\]

\begin{proposition}[Geometric-mean sign embedding]\label{prop:geom-mean-sign}
Set $C:=A^{-1}B$ and $S:=\diag(A,\I)$. Then
\[
G(A,B)=S K(C) S^{-1},
\qquad
\signf(G(A,B))
=
\begin{bmatrix}
0 & A\# B\\
(A\# B)^{-1} & 0
\end{bmatrix}.
\]
Moreover, for every $u\in\R$ such that $B+u^2A$ is invertible,
\begin{align}
\bigl[(G(A,B)-iu\I)^{-1}+(G(A,B)+iu\I)^{-1}\bigr]_{21}
&= 2(B+u^2A)^{-1},
\label{eq:geom-mean-21}\\
\bigl[(G(A,B)-iu\I)^{-1}+(G(A,B)+iu\I)^{-1}\bigr]_{12}
&= 2\,B(B+u^2A)^{-1}A.
\label{eq:geom-mean-12}
\end{align}
Consequently, if $M_{\#}=\chi_{\#}G(A,B)$ satisfies the hypotheses of \cref{thm:sign-approx} with parameters $(a,\beta,\gamma_{\#})$, then
\begin{equation}\label{eq:geom-mean-det}
\norm{(A\# B)^{-1}-(A\# B)^{-1}_{K,h}}
\le
\eps_{\mathrm{sgn}}^{\mathrm{sinc}}(K,h;\beta,a,\gamma_{\#}),
\end{equation}
and
\begin{equation}\label{eq:geom-mean-det-2}
\norm{A\# B-(A\# B)_{K,h}}
\le
\eps_{\mathrm{sgn}}^{\mathrm{sinc}}(K,h;\beta,a,\gamma_{\#}).
\end{equation}
\end{proposition}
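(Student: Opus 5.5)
We prove the similarity first, then read off the sign blocks, then the resolvent blocks, and finally transfer the quadrature error.

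\textbf{Similarity and sign blocks.} With $S=\diag(A,\I)$ and $C=A^{-1}B$, block multiplication gives
\[
SK(C)S^{-1}=\begin{bmatrix}0&AC\\ A^{-1}&0\end{bmatrix}=\begin{bmatrix}0&B\\A^{-1}&0\end{bmatrix}=G(A,B).
\]
The hypothesis $\spec(C)\cap(-\infty,0]=\varnothing$ lets us apply \cref{thm:sqrt-sign} to $C$. Since holomorphic functional calculus commutes with similarity,
\[
\signf(G)=S\,\signf(K(C))\,S^{-1}=\begin{bmatrix}0&AC^{1/2}\\ C^{-1/2}A^{-1}&0\end{bmatrix}.
\]
The $(1,2)$ block is $A(A^{-1}B)^{1/2}=A\#B$ by \eqref{eq:geom-mean-def}. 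For the $(2,1)$ block, $(A\#B)^{-1}=C^{-1/2}A^{-1}$, which matches.

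\textbf{Resolvent blocks.} We have $(G\mp iu\I)^{-1}=S(K(C)\mp iu\I)^{-1}S^{-1}$, so \cref{cor:sqrt-blocks} applies to $C$. The symmetrized $(2,1)$ block is $2(C+u^2\I)^{-1}A^{-1}$. Since $(C+u^2\I)=A^{-1}(B+u^2A)$, this equals $2(B+u^2A)^{-1}$, proving \eqref{eq:geom-mean-21}. The $(1,2)$ block is
\[
A\cdot 2C(C+u^2\I)^{-1}=2B(B+u^2A)^{-1}A,
\]
proving \eqref{eq:geom-mean-12}. Invertibility of $B+u^2A$ is exactly invertibility of $C+u^2\I$.

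\textbf{Error transfer.} The key observation is $\signf(M_\#)=\signf(G)$ because $\chi_\#>0$. Expanding \eqref{eq:SKh},
\[
S_{K,h}(M_\#)=\frac h\pi\sum_k t_k\bigl[(\chi_\#G\mp it_k\I)^{-1}\bigr]=\frac h\pi\sum_k s_k\bigl[(G\mp is_k\I)^{-1}\bigr],
\]
since $\chi_\#^{-1}t_k=s_k$. This makes the choice of nodes $s_k$ consistent.

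By \eqref{eq:geom-mean-21}, the $(2,1)$ block of $S_{K,h}(M_\#)$ is $(A\#B)^{-1}_{K,h}$ from \eqref{eq:geom-mean-inv-approx}. By \eqref{eq:geom-mean-12}, the $(1,2)$ block is $\frac{2h}{\pi}\sum_k s_k B(B+s_k^2A)^{-1}A=(A\#B)_{K,h}$. A block of a matrix has norm at most that of the matrix, so \cref{prop:sign-approx-explicit} applied to $M_\#$ with $(a,\beta,\gamma_\#)$ gives both \eqref{eq:geom-mean-det} and \eqref{eq:geom-mean-det-2}.

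The main thing to be careful about is the node rescaling bookkeeping. The remaining steps are routine block algebra, and unlike \cref{cor:XN-error} no factor $1/2$ appears.
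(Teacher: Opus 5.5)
Your proposal is correct and follows essentially the same route as the paper. Both arguments use the similarity $G(A,B)=SK(C)S^{-1}$ with $S=\diag(A,\I)$ and transport $\signf$ and the symmetrized resolvent blocks from the square-root embedding by conjugation. Both then use the rescaling $(\chi_{\#}G\mp it_k\I)^{-1}=\chi_{\#}^{-1}(G\mp is_k\I)^{-1}$ and read off the $(2,1)$ and $(1,2)$ blocks of $\signf(M_{\#})-S_{K,h}(M_{\#})$. The only difference is cosmetic: you rescale at the level of $G$, whereas the paper rescales $\chi_{\#}K(C)$ before conjugating.
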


\begin{proof}
Because $C=A^{-1}B$, one has $G(A,B)=SK(C)S^{-1}$ with $S=\diag(A,\I)$.
By \cref{thm:sqrt-sign},
\[
\signf(K(C))=
\begin{bmatrix}
0 & C^{1/2}\\
C^{-1/2} & 0
\end{bmatrix},
\]
and therefore
\[
\signf(G(A,B))
= S\,\signf(K(C))\,S^{-1}
=
\begin{bmatrix}
0 & A C^{1/2}\\
C^{-1/2}A^{-1} & 0
\end{bmatrix}
=
\begin{bmatrix}
0 & A\# B\\
(A\# B)^{-1} & 0
\end{bmatrix}.
\]

Likewise,
\[
(z\I-G(A,B))^{-1}=S(z\I-K(C))^{-1}S^{-1}.
\]
Applying \cref{cor:sqrt-blocks} to $K(C)$ and using
\[
C+u^2\I=A^{-1}(B+u^2A)
\]
gives \eqref{eq:geom-mean-21} and \eqref{eq:geom-mean-12}.

It remains to account for the positive scaling $\chi_{\#}$.  Since
\[
M_{\#}=S\bigl(\chi_{\#}K(C)\bigr)S^{-1},
\]
we have, for $s_k=t_k/\chi_{\#}$,
\[
(\chi_{\#}K(C)\mp it_k\I)^{-1}
=
\chi_{\#}^{-1}(K(C)\mp is_k\I)^{-1}.
\]
Thus the $(2,1)$ and $(1,2)$ blocks of
\[
(M_{\#}-it_k\I)^{-1}+(M_{\#}+it_k\I)^{-1}
\]
are respectively
\[
2\chi_{\#}^{-1}(B+s_k^2A)^{-1},
\qquad
2\chi_{\#}^{-1}B(B+s_k^2A)^{-1}A.
\]
Multiplying by the log-sinc factor $h t_k/\pi=h\chi_{\#}s_k/\pi$ gives exactly the two approximants \eqref{eq:geom-mean-inv-approx} and \eqref{eq:geom-mean-approx}.  Taking the corresponding blocks of
\[
\signf(M_{\#})-S_{K,h}(M_{\#})
\]
and using $\signf(M_{\#})=\signf(G(A,B))$ proves \eqref{eq:geom-mean-det} and \eqref{eq:geom-mean-det-2} by \cref{prop:sign-approx-explicit}.
\end{proof}

\begin{theorem}[Geometric mean from the scaled sign embedding]\label{thm:geom-mean}
Assume the standing spectral condition
\[
\spec(A^{-1}B)\cap(-\infty,0]=\emptyset,
\]
work in the normalized coordinates fixed at the beginning of this section so that
\[
\norm{A}\le 1,
\qquad
\norm{B}\le 1,
\]
and assume block-encodings
\[
U_A\in \BE(1,a_A,0),
\qquad
U_B\in \BE(1,a_B,0)
\]
are given. Assume also that a known scaling $\chi_{\#}>0$ and parameters $(a,\beta,\gamma_{\#})$ are available such that
\begin{equation}\label{eq:geom-sign-certificate}
\norm{M_{\#}}\le 1,
\qquad
\gamma_{\#}\ge \gamma_{\Omega_a}(M_{\#}),
\qquad
M_{\#}:=\chi_{\#}G(A,B).
\end{equation}

Then for every $K\ge 1$, every $h>0$, every geometric-mean nodewise profile $\rho^\#$,
and every inverse target precision $\eps_{\mathrm{inv}}\in(0,1]$, there exists a
\begin{equation}\label{eq:geom-mean-inv-be}
\paren{
2\Theta_\rho^{\#},
\ a_A+a_B+O(\log K),
\ \eps_{\mathrm{sgn}}^{\mathrm{sinc}}(K,h;\beta,a,\gamma_{\#})
+\Theta_\rho^{\#}\frac{\eps_{\mathrm{inv}}}{R_{\#,\rho}}
}
\text{-block-encoding of }(A\# B)^{-1},
\end{equation}
using
\begin{equation}\label{eq:geom-mean-inv-q}
O\!\paren{R_{\#,\rho}\log\frac{R_{\#,\rho}}{\eps_{\mathrm{inv}}}}
\end{equation}
queries each to $U_A$, $U_B$, and their adjoints. Moreover, there exists a
\begin{equation}\label{eq:geom-mean-be}
\paren{
2\Theta_\rho^{\#},
\ 2a_A+2a_B+O(\log K),
\ \eps_{\mathrm{sgn}}^{\mathrm{sinc}}(K,h;\beta,a,\gamma_{\#})
+\Theta_\rho^{\#}\frac{\eps_{\mathrm{inv}}}{R_{\#,\rho}}
}
\text{-block-encoding of }A\# B,
\end{equation}
with the same asymptotic query complexity and one additional use each of $U_B$ and $U_A$.
\end{theorem}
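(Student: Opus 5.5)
The proof will mirror that of \cref{thm:sqrt-generic}: apply \cref{thm:single-family-algorithm} to the geometric-mean shifts and add the deterministic sign-approximation error from \cref{prop:geom-mean-sign}. First, rewrite the approximant \eqref{eq:geom-mean-inv-approx} in single-family form:
\[
(A\# B)^{-1}_{K,h}=\sum_{k=-K}^{K}\nu_k^\#\,R_k,\qquad R_k:=(1+s_k^2)(B+s_k^2A)^{-1}.
\]
This has the form of \cref{def:single-family-profile} with $J=\{-K,\dots,K\}$, $F_k=B+s_k^2A$, $\sigma_k=1+s_k^2$ and $c_k=\nu_k^\#$. By definition $\norm{R_k}=m_k^\#\le\rho_k^\#$, so $\rho^\#$ is a valid single-family profile. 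The associated constants are $R_\rho=R_{\#,\rho}$ and $\Theta_\rho^{(1)}=\Theta_\rho^\#$.

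Next, I need a unit block-encoding of the multiplexed family
\[
F=\sum_k\ket{k}\!\bra{k}\otimes\frac{B+s_k^2A}{1+s_k^2}.
\]
In each block the coefficients $1/(1+s_k^2)$ on $B$ and $s_k^2/(1+s_k^2)$ on $A$ are nonnegative and sum to one. Hence the two-matrix multiplexing gadget \cref{lem:mux-two-matrix} (the one used in the generalized Sylvester proof) gives $U_F\in\BE(1,a_A+a_B+O(1),0)$ with $O(1)$ queries to $U_A,U_B$ per use. This is the only step needing care: the gadget must tolerate $k$-dependent convex weights, implemented by a controlled rotation on one selector qubit. The paper's appendix lemma is stated for exactly this situation, so I expect no obstruction.

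Then \cref{thm:single-family-algorithm}(3) gives a $(2\Theta_\rho^\#,\,a_A+a_B+O(\log K),\,\Theta_\rho^\#\eps_{\mathrm{inv}}/R_{\#,\rho})$-block-encoding of $(A\# B)^{-1}_{K,h}$. The query count is $O(R_{\#,\rho}\log(R_{\#,\rho}/\eps_{\mathrm{inv}}))$ to $U_F$, hence the same order to $U_A$, $U_B$ and their adjoints. The hypotheses \eqref{eq:geom-sign-certificate} are exactly those required in \cref{prop:geom-mean-sign}, so \eqref{eq:geom-mean-det} applies. The triangle inequality then adds $\eps_{\mathrm{sgn}}^{\mathrm{sinc}}(K,h;\beta,a,\gamma_\#)$ to the error and yields \eqref{eq:geom-mean-inv-be}.

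For $A\# B$, I multiply on the left by $U_B$ and on the right by $U_A$, following \eqref{eq:geom-mean-approx}. The product rule with unit normalizations preserves the factor $2\Theta_\rho^\#$ and adds $a_A+a_B$ ancillas, giving the count $2a_A+2a_B+O(\log K)$. Write $\widehat Y$ for the encoded approximation of $(A\# B)^{-1}_{K,h}$. Because $\norm{A},\norm{B}\le1$, the implementation error satisfies $\norm{B\widehat Y A-B(A\# B)^{-1}_{K,h}A}\le\norm{\widehat Y-(A\# B)^{-1}_{K,h}}$. \Cref{prop:geom-mean-sign} gives the deterministic bound \eqref{eq:geom-mean-det-2} for $(A\# B)_{K,h}$ directly. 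Together these give \eqref{eq:geom-mean-be}, with one extra use each of $U_A$ and $U_B$.
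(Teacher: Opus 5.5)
Your proposal is correct and follows essentially the same route as the paper. Both arguments build a unit block-encoding of the multiplexed family $T_{\#}$ via \cref{lem:mux-two-matrix}, apply \cref{thm:single-family-algorithm} with $F_k=B+s_k^2A$, $\sigma_k=1+s_k^2$, $c_k=\nu_k^\#$, add the deterministic error from \cref{prop:geom-mean-sign}, and finish with a left/right multiplication by $U_B$ and $U_A$; your explicit split of the $A\# B$ error into the quadrature bound \eqref{eq:geom-mean-det-2} plus the implementation error contracted by $\norm{A},\norm{B}\le 1$ is slightly more carefully justified than the paper's one-line remark.
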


\begin{proof}
By \cref{prop:geom-mean-sign}, the deterministic quadrature error is exactly the one stated in
\eqref{eq:geom-mean-det}--\eqref{eq:geom-mean-det-2}. Define
\[
T_{\#}:=
\sum_{k=-K}^{K}\ket{k}\!\bra{k}\otimes \frac{B+s_k^2A}{1+s_k^2}.
\]
For each $k$, the coefficients of $B$ and $A$ in the $k$th block have absolute values summing to one, so \cref{lem:mux-two-matrix} gives a unit block-encoding of $T_{\#}$ using $O(1)$ queries each to $U_A$ and $U_B$.

Apply \cref{thm:single-family-algorithm} with
\[
J=\{-K,-K+1,\ldots,K\},
\qquad
F_k=B+s_k^2A,
\qquad
\sigma_k=1+s_k^2,
\qquad
c_k=\nu_k^\#.
\]
Then the single-family target is exactly the inverse geometric-mean approximant
\[
(A\# B)^{-1}_{K,h}
=
\frac{2h}{\pi}\sum_{k=-K}^{K} s_k(B+s_k^2A)^{-1}.
\]
Thus \cref{thm:single-family-algorithm}, together with the deterministic error from
\cref{prop:geom-mean-sign}, yields \eqref{eq:geom-mean-inv-be} and
\eqref{eq:geom-mean-inv-q}. Multiplying on the left by $U_B$ and on the right by $U_A$ gives
\eqref{eq:geom-mean-be}; because $\norm{A},\norm{B}\le 1$ in normalized coordinates, the same final error bound applies to $A\# B$.
\end{proof}

\begin{theorem}[FoV/profile specialization for $A\# B$ and $(A\# B)^{-1}$]\label{thm:geom-mean-accretive}
Assume the hypotheses of \cref{thm:geom-mean}; in particular, assume the principal-branch condition and the scaled sign-embedding certificate \eqref{eq:geom-sign-certificate}. Suppose in addition that
\[
H(A)\succeq \mu_A \I,
\qquad
H(B)\succeq \mu_B \I,
\qquad
\mu_A,\mu_B\in(0,1],
\]
and set $\mu_{\#}:=\min\{\mu_A,\mu_B\}$. Then for every $\eps\in(0,1]$ there exists a
\begin{equation}\label{eq:geom-mean-fov-inv-be}
\paren{
\frac{4}{\sqrt{\mu_A\mu_B}},
\ a_A+a_B+O(\log K_\eps),
\ \eps
}
\text{-block-encoding of }(A\# B)^{-1},
\end{equation}
using
\begin{equation}\label{eq:geom-mean-fov-inv-q}
O\!\paren{
\frac{1}{\mu_{\#}}
\log\max\!\left\{
\frac{1}{\mu_{\#}},
\frac{1}{\eps\mu_A\mu_B}
\right\}
}\leq O\!\paren{
\frac{1}{\min\{\mu_A,\mu_B\}}
\log(\frac{1}{\eps \min\{\mu_A,\mu_B\}})
}
\end{equation}
queries each to $U_A$, $U_B$, and their adjoints. Moreover, there exists a
\begin{equation}\label{eq:geom-mean-fov-be}
\paren{
\frac{4}{\sqrt{\mu_A\mu_B}},
\ 2a_A+2a_B+O(\log K_\eps),
\ \eps
}
\text{-block-encoding of }A\# B,
\end{equation}
with the same asymptotic query complexity and one additional use each of $U_B$ and $U_A$.
\end{theorem}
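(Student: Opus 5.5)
The plan is to instantiate \cref{thm:geom-mean} with an explicit FoV-based geometric-mean nodewise profile, in direct analogy with the proof of \cref{thm:sqrt-accretive}. The key observation is that the shifted pencils inherit accretivity from both factors. For every $s\ge 0$,
\[
H(B+s^2A)=H(B)+s^2H(A)\succeq(\mu_B+s^2\mu_A)\I .
\]
The numerical-range resolvent bound used in \cref{lem:accretive-res} then gives $\norm{(B+s^2A)^{-1}}\le 1/(\mu_B+s^2\mu_A)$. Accordingly I will take
\[
\rho_k^\#:=\frac{1+s_k^2}{\mu_B+s_k^2\mu_A},
\]
which is admissible. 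Since $(1+s^2)/(\mu_B+s^2\mu_A)\le 1/\min\{\mu_A,\mu_B\}$, this gives $R_{\#,\rho}\le 1/\mu_\#$. Also $R_{\#,\rho}\ge 1$, because $\mu_A,\mu_B\le 1$.

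The central step is bounding the overlap $\Theta_\rho^\#$. With $\nu_k^\#=2hs_k/(\pi(1+s_k^2))$, the factors $1+s_k^2$ cancel, leaving
\[
\Theta_\rho^\#=\frac{2h}{\pi}\sum_k\frac{s_k}{\mu_B+\mu_A s_k^2}.
\]
Write $s_k=e^{kh-\log\chi_\#}$ and set $q(x):=e^x/(\mu_B+\mu_Ae^{2x})$. The sum is then a trapezoidal sum of $q$ on a uniformly shifted grid. Since $q$ is nonnegative and unimodal, the argument of \cref{lem:sqrt-overlap} applies verbatim and is insensitive to the shift. It gives
\[
h\sum_k q(kh-\log\chi_\#)\le h\sup q+\int_{\R}q .
\]
Substituting $s=e^x$, the integral is $\int_0^\infty ds/(\mu_B+\mu_As^2)=\pi/(2\sqrt{\mu_A\mu_B})$. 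The supremum is attained at $s=\sqrt{\mu_B/\mu_A}$ and equals $1/(2\sqrt{\mu_A\mu_B})$. Hence
\[
\Theta_\rho^\#\le\frac{1+h/\pi}{\sqrt{\mu_A\mu_B}}.
\]
Under the balanced step $h=\sqrt{2\pi\beta/K}$ with $\beta<\pi/2$ and $K\ge 1$ we have $h<\pi$, so $2\Theta_\rho^\#\le 4/\sqrt{\mu_A\mu_B}$.

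For the parameter choices, I will pick $K=K_\eps$ via \eqref{eq:Kchoice-sinc} with $\gamma=\gamma_\#$ and $\eps_{\mathrm{sgn}}=\eps/2$, so that the deterministic term in \eqref{eq:geom-mean-inv-be} is at most $\eps/2$. For the inverses I will take $\eps_{\mathrm{inv}}:=\eps\sqrt{\mu_A\mu_B}/4$. Using $R_{\#,\rho}\ge 1$, the implementation term satisfies
\[
\Theta_\rho^\#\,\eps_{\mathrm{inv}}/R_{\#,\rho}\le \frac{2}{\sqrt{\mu_A\mu_B}}\,\eps_{\mathrm{inv}}=\frac{\eps}{2}.
\]
Substituting into \eqref{eq:geom-mean-inv-q} gives
\[
O\!\Bigl(\tfrac1{\mu_\#}\log\tfrac{1}{\mu_\#\eps_{\mathrm{inv}}}\Bigr)
\]
queries. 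Since $\sqrt{\mu_A\mu_B}\ge\mu_A\mu_B$, this is bounded by the displayed $\log\max\{1/\mu_\#,\,1/(\eps\mu_A\mu_B)\}$ form, and hence by the simplified bound in \eqref{eq:geom-mean-fov-inv-q}. The $A\#B$ statement \eqref{eq:geom-mean-fov-be} follows from \eqref{eq:geom-mean-be} with the same choices. The error is unchanged because $\norm{A},\norm{B}\le 1$.

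I expect the main obstacle to be the overlap bound on the shifted grid $s_k=t_k/\chi_\#$. I must check that the unimodal-sum estimate does not depend on where the grid sits relative to the peak of $q$. I also must check that the truncation to $|k|\le K$ only decreases the sum, which holds because all terms are positive. A secondary point is that $K_\eps$ depends on the certificate $\gamma_\#$, which is only assumed. This dependence enters the ancilla count through $O(\log K_\eps)$ but not the query count, so it is consistent with how the statement is phrased.
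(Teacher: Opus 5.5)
Your proposal is correct and follows the paper's proof essentially step for step: the same FoV profile $\rho_k^\#=(1+s_k^2)/(\mu_B+\mu_A s_k^2)$, the same shift-insensitive unimodal-sum estimate giving $\Theta_\rho^\#\le(1+h/\pi)/\sqrt{\mu_A\mu_B}$, and the same instantiation of \cref{thm:geom-mean} with $K_\eps$ chosen so that the quadrature term is at most $\eps/2$. The only difference is bookkeeping: you take $\eps_{\mathrm{inv}}=\eps\sqrt{\mu_A\mu_B}/4$ and use only $R_{\#,\rho}\ge 1$, whereas the paper takes $\eps_{\mathrm{inv}}=\min\{1,\eps\sqrt{\mu_A\mu_B}/(4\mu_\#)\}$ and relies on $R_{\#,\rho}=1/\mu_\#$. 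On a finite grid of positive nodes that equality holds only when $\mu_A=\mu_B$; otherwise the supremum is approached but never attained. Your choice is therefore slightly more robust, at the cost of an extra $\log(1/\mu_\#)$ inside the logarithm, which the stated query bound absorbs.
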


\begin{proof}
Choose the balanced log-sinc step
\[
h=\sqrt{\frac{2\pi\beta}{K}},
\]
and choose $K=K_\eps$ large enough that
\[
h\le \pi
\qquad\text{and}\qquad
\eps_{\mathrm{sgn}}^{\mathrm{sinc}}(K,h;\beta,a,\gamma_{\#})\le \frac{\eps}{2}.
\]
Such a choice exists by \cref{thm:sign-approx}; for example,
$K_\eps=O\!\paren{\beta^{-1}\log^2(\wt C_{\beta,a,\gamma_{\#}}/\eps)}$ suffices up to absolute constants.  The parameters $\chi_{\#}$ and $\gamma_{\#}$ affect this deterministic quadrature size through the sign-embedding certificate, but they do not enter the inverse-stage query bound below.

For every $s\ge 0$, the FoV hypotheses give
\[
H(B+s^2A)=H(B)+s^2H(A)\succeq (\mu_B+\mu_A s^2)\I,
\]
and hence
\[
\norm{(B+s^2A)^{-1}} \le \frac{1}{\mu_B+\mu_A s^2}.
\]
Therefore the nodewise choice
\begin{equation}\label{eq:geom-mean-profile}
\rho_k^\# := \frac{1+s_k^2}{\mu_B+\mu_A s_k^2}
\end{equation}
is a valid geometric-mean profile. Its maximum is attained at $s=0$ or in the limit
$s\to\infty$, so
\begin{equation}\label{eq:geom-mean-profile-R}
R_{\#,\rho} = \frac{1}{\mu_{\#}}.
\end{equation}

For the overlap parameter, apply the same shifted unimodal-sum argument as in \cref{lem:sqrt-overlap} to
\[
q_{\mu_A,\mu_B}(x):=\frac{e^x}{\mu_B+\mu_A e^{2x}}.
\]
Since $s_k=e^{kh-\log\chi_{\#}}$, the summand is sampled on a shifted logarithmic lattice.  For any shift $s_0\in\R$ and any nonnegative unimodal integrable function $q$,
\[
h\sum_{k\in\mathbb Z} q(kh+s_0)
\le
h\sup_{x\in\mathbb R}q(x)+\int_{-\infty}^{\infty}q(x)\,dx.
\]
The maximum of $q_{\mu_A,\mu_B}$ is $1/(2\sqrt{\mu_A\mu_B})$, and
\[
\int_{-\infty}^{\infty} q_{\mu_A,\mu_B}(x)\,dx
=
\int_0^\infty \frac{ds}{\mu_B+\mu_A s^2}
=
\frac{\pi}{2\sqrt{\mu_A\mu_B}}.
\]
Since
\[
\Theta_\rho^{\#}
=
\frac{2h}{\pi}\sum_{k=-K}^{K}\frac{s_k}{\mu_B+\mu_A s_k^2},
\]
we obtain
\begin{equation}\label{eq:geom-mean-profile-Theta}
\Theta_\rho^{\#}
\le \frac{1+h/\pi}{\sqrt{\mu_A\mu_B}}
\le \frac{2}{\sqrt{\mu_A\mu_B}},
\qquad h\le\pi.
\end{equation}

Choose
\begin{equation}\label{eq:geom-mean-epsinv-choice}
\eps_{\mathrm{inv}}
:=
\min\!\left\{
1,\,
\frac{\eps\sqrt{\mu_A\mu_B}}{4\mu_{\#}}
\right\}.
\end{equation}
Then the implementation term in \cref{thm:geom-mean} satisfies
\[
\Theta_\rho^{\#}\frac{\eps_{\mathrm{inv}}}{R_{\#,\rho}}
\le
\frac{2}{\sqrt{\mu_A\mu_B}}
\cdot
\frac{\eps\sqrt{\mu_A\mu_B}}{4\mu_{\#}}
\cdot
\mu_{\#}
=
\frac{\eps}{2}.
\]
The deterministic term is at most $\eps/2$ by the choice of $K$ and $h$.
Applying \cref{thm:geom-mean} therefore gives the two $\eps$-accurate block-encodings.
The normalization bound follows from
\[
2\Theta_\rho^{\#}\le \frac{4}{\sqrt{\mu_A\mu_B}}.
\]
Finally,
\[
R_{\#,\rho}\log\frac{2R_{\#,\rho}}{\eps_{\mathrm{inv}}}
=
\frac{1}{\mu_{\#}}
\log\!\left(
\frac{2}{\mu_{\#}}
\max\!\left\{1,\frac{4\mu_{\#}}{\eps\sqrt{\mu_A\mu_B}}\right\}
\right)
=
\frac{1}{\mu_{\#}}
\log\max\!\left\{
\frac{2}{\mu_{\#}},
\frac{8}{\eps\sqrt{\mu_A\mu_B}}
\right\},
\]
which proves \eqref{eq:geom-mean-fov-inv-q}. The block-encoding of $A\#B$ uses the same inverse stage and then one additional multiplication by $B$ on the left and by $A$ on the right.
\end{proof}

\begin{remark}
The FoV assumptions in \cref{thm:geom-mean-accretive} do not require normality, but they are not by themselves a complete sign-approximation certificate for the embedding $G(A,B)$.  The theorem therefore explicitly assumes the principal-branch condition and the scaled strip-resolvent certificate \eqref{eq:geom-sign-certificate}.  Under these hypotheses, the normalization $O((\mu_A\mu_B)^{-1/2})$ is optimal in order for the inverse geometric mean; the block-encoding of $A\#B$ inherits the same normalization from the same construction.
\end{remark}

\section{Continuous-time algebraic Riccati equation}\label{sec:care}
The third application is the continuous-time algebraic Riccati equation (CARE). CARE differs slightly from the previous sign-block instantiations: the stabilizing solution is extracted from a spectral projector built from the Hamiltonian sign rather than from a single off-diagonal block. Even so, the same sign embedding perspective continues to organize the algorithm.

The continuous-time algebraic Riccati equation is written as
\begin{equation}\label{eq:care}
A^*X + XA - XGX + Q = 0,
\qquad
Q=Q^*,
\quad
G=G^*.
\end{equation}

Classically, the stabilizing solution is recovered from the half-plane matrix sign of the Hamiltonian matrix
\begin{equation}\label{eq:care-H}
H := \begin{bmatrix} A & -G \\ -Q & -A^* \end{bmatrix},
\end{equation}
a route going back to the matrix-sign methods of Roberts and Byers and standard treatments in Higham and Lancaster--Rodman \cite{Roberts1980,Byers1987,Higham2008,LancasterRodman1995}. This section follows that standard Hamiltonian-projector route directly, rather than reducing CARE to a positive-definite geometric-mean formula.

\begin{notation}[CARE projector data]\label{not:care-projector-data}
Whenever $\spec(H)\cap i\R=\varnothing$, write
\[
\Pi_- := \frac{\I-\signf(H)}{2}
=
\begin{bmatrix}
\Pi_{11} & \Pi_{12} \\
\Pi_{21} & \Pi_{22}
\end{bmatrix}.
\]
If $\Pi_{11}$ is invertible, define
\[
X := \Pi_{21}\Pi_{11}^{-1},
\qquad
p_H := \norm{\Pi_-},
\qquad
\sigma := \sigma_{\min}(\Pi_{11}).
\]
\end{notation}

\begin{theorem}[CARE via Hamiltonian sign projectors]\label{thm:care-main}
Assume $\spec(H)\cap i\R=\varnothing$ and that $\Pi_{11}$ is invertible, so that the quantities in \cref{not:care-projector-data} are well defined. Suppose a sign stage provides a $(\beta_{\mathrm{sign}},a_H+O(\log K),\eps_{\mathrm{sign}})$-block-encoding of $\signf(H)$ with $\eps_{\mathrm{sign}}<2\sigma$, and let
\[
\beta_\Pi := \frac{1+\beta_{\mathrm{sign}}}{2}.
\]
Then for every inverse target precision $\eps_{\Pi^{-1}}\in(0,1]$ there exists a
\begin{equation}\label{eq:care-final-be}
\paren{\beta_X,\, 2a_H+O(\log K),\, \eps_X}\text{-block-encoding of }X,
\end{equation}
where
\begin{equation}\label{eq:care-final-beta}
\beta_X \le \frac{1+\beta_{\mathrm{sign}}}{\sigma-\eps_{\mathrm{sign}}/2}
\end{equation}
and
\begin{equation}\label{eq:care-final-error}
\eps_X \le
\frac{\eps_{\mathrm{sign}}}{2}\,
\frac{1+\norm{X}}{\sigma-\eps_{\mathrm{sign}}/2}
+
\paren{p_H+\frac{\eps_{\mathrm{sign}}}{2}}\eps_{\Pi^{-1}}.
\end{equation}
The projector-inversion stage uses
\begin{equation}\label{eq:care-pi11-queries}
O\!\paren{
\frac{\beta_\Pi}{2\sigma-\eps_{\mathrm{sign}}}
\log\frac{1}{(2\sigma-\eps_{\mathrm{sign}})\eps_{\Pi^{-1}}}
}
\end{equation}
queries to the projector block-encoding.
If the sign stage is instantiated by \cref{thm:care-sign}, set
\begin{equation}\label{eq:care-Qsign}
Q_{\mathrm{sign}}
:=
O\!\paren{
R_H\log\frac{R_H}{\eps_H}
}.
\end{equation}
Then the total number of queries to the Hamiltonian block-encoding oracle $U_H$ and its adjoint is
\begin{equation}\label{eq:care-total-UH-queries}
Q_{\mathrm{total}}
=
O\!\left(
Q_{\mathrm{sign}}
\left[
1+
\frac{\beta_\Pi}{2\sigma-\eps_{\mathrm{sign}}}
\log\frac{1}{(2\sigma-\eps_{\mathrm{sign}})\eps_{\Pi^{-1}}}
\right]
\right).
\end{equation}
If $H=V\Lambda V^{-1}$ is diagonalizable, then $p_H\le \kappa(V)$. If, moreover, $Q\succeq 0$, $G\succeq 0$, $(A,G^{1/2})$ is stabilizable, and $(A,Q^{1/2})$ is detectable, then $X$ is the unique stabilizing Hermitian solution of \eqref{eq:care}.
\end{theorem}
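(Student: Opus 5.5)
The plan is a three-stage composition: a projector block-encoding, then a QSVT inverse of its leading block, then a product with the off-diagonal block. Throughout, the sign-stage error is propagated by Weyl-type perturbation arguments.

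\emph{Stage 1 (projector).} Let $U_S$ be the given $(\beta_{\mathrm{sign}},a_H+O(\log K),\eps_{\mathrm{sign}})$-block-encoding of $\signf(H)$, and let $\widehat S$ be the operator it encodes exactly after scaling by $\beta_{\mathrm{sign}}$. A two-term LCU of $\I$ and $-U_S$ with weights $\tfrac12,\tfrac12\beta_{\mathrm{sign}}$ gives an exact $(\beta_\Pi,\cdot,0)$-block-encoding of $\widehat\Pi:=(\I-\widehat S)/2$, with $\beta_\Pi=(1+\beta_{\mathrm{sign}})/2$ and $\norm{\widehat\Pi-\Pi_-}\le\eps_{\mathrm{sign}}/2$. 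Composing with the block-selection projectors $\ket{1}\!\bra{1}$ and $\ket{2}\!\bra{1}$ on the Hamiltonian block index gives exact block-encodings of $\widehat\Pi_{11}$ and $\widehat\Pi_{21}$, with the same normalization $\beta_\Pi$. Since each block is a compression, $\norm{\widehat\Pi_{j1}-\Pi_{j1}}\le\eps_{\mathrm{sign}}/2$. Weyl's inequality then gives $\sigma_{\min}(\widehat\Pi_{11})\ge\sigma-\eps_{\mathrm{sign}}/2>0$, so $\widehat\Pi_{11}$ is invertible. Also $\norm{\widehat\Pi_{21}}\le p_H+\eps_{\mathrm{sign}}/2$.

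\emph{Stage 2 (inverse).} The main subtlety is that \cref{prop:qsvt-inverse} requires an exact block-encoding. I handle this by applying it not to $\Pi_{11}$ but to the exactly encoded $T:=\widehat\Pi_{11}/\beta_\Pi$. This $T$ has a $(1,\cdot,0)$-encoding and satisfies
\[
\norm{T^{-1}}\le\frac{\beta_\Pi}{\sigma-\eps_{\mathrm{sign}}/2}=\frac{2\beta_\Pi}{2\sigma-\eps_{\mathrm{sign}}}.
\]
Requesting precision $\beta_\Pi\eps_{\Pi^{-1}}$ on $T^{-1}$ yields a block-encoding of $\widehat\Pi_{11}^{-1}=T^{-1}/\beta_\Pi$ with normalization $2\norm{T^{-1}}/\beta_\Pi\le(\sigma-\eps_{\mathrm{sign}}/2)^{-1}$ and error $\eps_{\Pi^{-1}}$. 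The query count is $O(\norm{T^{-1}}\log(\norm{T^{-1}}/(\beta_\Pi\eps_{\Pi^{-1}})))$, which is \eqref{eq:care-pi11-queries} up to the absorbed constant. Each query to the projector encoding costs $O(1)$ sign-stage calls. When the sign stage is \cref{thm:care-sign}, each such call costs $Q_{\mathrm{sign}}$, and adding the $O(1)$ calls used for $\widehat\Pi_{21}$ gives \eqref{eq:care-total-UH-queries}.

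\emph{Stage 3 (product and error).} Multiplying the encodings of $\widehat\Pi_{21}$ and $\widehat\Pi_{11}^{-1}$ gives normalization
\[
\beta_\Pi(\sigma-\eps_{\mathrm{sign}}/2)^{-1}\le\frac{1+\beta_{\mathrm{sign}}}{\sigma-\eps_{\mathrm{sign}}/2},
\]
and uses $2a_H+O(\log K)$ ancillas, one register per sign-stage copy. For the error, I use the exact identity
\[
X-\widehat\Pi_{21}\widehat\Pi_{11}^{-1}=\bigl(X(\widehat\Pi_{11}-\Pi_{11})+(\Pi_{21}-\widehat\Pi_{21})\bigr)\widehat\Pi_{11}^{-1},
\]
which follows from $X\Pi_{11}=\Pi_{21}$. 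Its norm is at most $\tfrac{\eps_{\mathrm{sign}}}{2}(1+\norm{X})/(\sigma-\eps_{\mathrm{sign}}/2)$. The inverse-implementation error contributes at most $\norm{\widehat\Pi_{21}}\eps_{\Pi^{-1}}\le(p_H+\eps_{\mathrm{sign}}/2)\eps_{\Pi^{-1}}$, which yields \eqref{eq:care-final-error}.

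\emph{Closing claims.} If $H=V\Lambda V^{-1}$, then $\Pi_-=V\,\diag(\mathbf 1_{\RePart\lambda<0})\,V^{-1}$, so $p_H\le\kappa(V)$. For the Riccati characterization I appeal to the classical theory (\cite{LancasterRodman1995,Roberts1980}). The range of $\Pi_-$ is the $n$-dimensional stable invariant subspace of $H$. Invertibility of $\Pi_{11}$ makes that subspace the graph $\range\begin{bmatrix}\I\\X\end{bmatrix}$. Invariance of the graph is equivalent to \eqref{eq:care}, and the spectrum of $A-GX$ is the stable part of $\spec(H)$. Under stabilizability and detectability with $Q,G\succeq0$, the stabilizing solution exists, is unique, and is Hermitian.
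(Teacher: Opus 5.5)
Your proposal is correct and follows essentially the same route as the paper: a two-term LCU for $\widetilde\Pi_-$, then QSVT inversion of the exactly encoded $\widetilde\Pi_{11}/\beta_\Pi$ at precision $\beta_\Pi\eps_{\Pi^{-1}}$, then the product with $\widetilde\Pi_{21}$, with the error controlled by the same perturbation identity. Two small corrections: the paper caps the QSVT precision at $\min\{1,\beta_\Pi\eps_{\Pi^{-1}}\}$ to meet the hypothesis of \cref{prop:qsvt-inverse}, and you dropped a factor of $2$ in Stage 2. In fact $2\norm{T^{-1}}/\beta_\Pi\le 2/(\sigma-\eps_{\mathrm{sign}}/2)$, so the final normalization is exactly $2\beta_\Pi/(\sigma-\eps_{\mathrm{sign}}/2)=(1+\beta_{\mathrm{sign}})/(\sigma-\eps_{\mathrm{sign}}/2)$, which still gives \eqref{eq:care-final-beta}.
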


The rest of the section proves \cref{thm:care-main} in three steps: first identify the Riccati solution as a quotient of the stable projector, then construct the Hamiltonian sign stage, and finally invert the leading projector block.

\subsection{Hamiltonian sign-projector formulation}
Let
\[
J := \begin{bmatrix} 0 & \I \\ -\I & 0 \end{bmatrix}.
\]

\begin{proposition}[Hamiltonian symmetry and spectral pairing]\label{prop:care-ham}
For the Hamiltonian matrix $H$ in \eqref{eq:care-H},
\[
H^*J = -JH.
\]
Equivalently, $H$ and $-H^*$ are similar, so
\[
\spec(H) = -\overline{\spec(H)}.
\]
In particular, if $\spec(H)\cap i\R=\varnothing$, then $H$ has exactly $n$ eigenvalues in each open half-plane, counted with algebraic multiplicity.
\end{proposition}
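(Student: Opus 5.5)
The plan is to verify $H^*J=-JH$ by block multiplication, read off the similarity, and then count eigenvalues using that similarity.

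\emph{Step 1 (the identity).} Since $G=G^*$ and $Q=Q^*$,
\[
H^*=\begin{bmatrix} A^* & -Q\\ -G & -A\end{bmatrix}.
\]
Multiplying out gives
\[
H^*J=\begin{bmatrix} A^* & -Q\\ -G & -A\end{bmatrix}\begin{bmatrix}0&\I\\-\I&0\end{bmatrix}
=\begin{bmatrix} Q & A^*\\ A & -G\end{bmatrix}.
\]
On the other side,
\[
JH=\begin{bmatrix}0&\I\\-\I&0\end{bmatrix}\begin{bmatrix}A&-G\\-Q&-A^*\end{bmatrix}
=\begin{bmatrix}-Q&-A^*\\-A&G\end{bmatrix}.
\]
So $H^*J=-JH$. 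Note that this is exactly where the Hermitian symmetry of $G$ and $Q$ enters.

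\emph{Step 2 (similarity and spectral pairing).} $J$ is invertible with $J^{-1}=-J=J^*$, so Step 1 gives $H^*=J(-H)J^{-1}$. Hence $H^*$ is similar to $-H$, and equivalently $H$ is similar to $-H^*$. Similar matrices share eigenvalues with algebraic multiplicities. Since $\spec(H^*)=\overline{\spec(H)}$, we get $\overline{\spec(H)}=-\spec(H)$, which is the same as $\spec(H)=-\overline{\spec(H)}$. The multiplicities match as well: the characteristic polynomial satisfies $\det(z\I-H^*)=\overline{\det(\bar z\I-H)}$.

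\emph{Step 3 (counting).} The map $\lambda\mapsto-\bar\lambda$ is a reflection across the imaginary axis. It sends the open right half-plane bijectively onto the open left half-plane and, by Step 2, preserves algebraic multiplicities. If $\spec(H)\cap i\R=\varnothing$, every one of the $2n$ eigenvalues, counted with multiplicity, lies in one of the two open half-planes. The two counts are equal, so each is $n$.

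The only point needing care is this multiplicity bookkeeping in Step 3. It follows from the characteristic polynomial identity $p_H(z)=(-1)^{2n}\,\overline{p_H(-\bar z)}$, which I would derive from the similarity in Step 2.
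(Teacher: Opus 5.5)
Your proof is correct and follows the paper's argument: the same block computation of $H^*J$ and $JH$, the similarity $H=-J^{-1}H^*J$ obtained from $J^{-1}=-J$, and the counting via the reflection $\lambda\mapsto-\bar\lambda$ across the imaginary axis. Your characteristic-polynomial identity $p_H(z)=\overline{p_H(-\bar z)}$ just makes explicit the multiplicity bookkeeping that the paper leaves implicit when it says the symmetry ``pairs eigenvalues across the imaginary axis.''
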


\begin{proof}
A direct block computation gives
\[
H^* =
\begin{bmatrix}
A^* & -Q \\
-G & -A
\end{bmatrix},
\qquad
H^*J =
\begin{bmatrix}
Q & A^* \\
A & -G
\end{bmatrix}
=-JH.
\]
Since $J^{-1}=-J$, we have $H = -J^{-1}H^*J$, so $H$ is similar to $-H^*$.
Taking spectra gives $\spec(H)=\spec(-H^*)=-\overline{\spec(H)}$.
If no eigenvalue lies on the imaginary axis, this symmetry pairs eigenvalues across the imaginary axis and therefore forces exactly $n$ of the $2n$ eigenvalues into each open half-plane.
\end{proof}

\begin{lemma}[Graph extraction from a rank-$n$ projector]\label{lem:care-graph}
Let $\Pi\in\C^{2n\times 2n}$ satisfy $\Pi^2=\Pi$ and $\rank(\Pi)=n$.
Write
\[
\Pi=
\begin{bmatrix}
\Pi_{11} & \Pi_{12} \\
\Pi_{21} & \Pi_{22}
\end{bmatrix}
\]
with $n\times n$ blocks.
If $\Pi_{11}$ is invertible, then
\[
\range(\Pi)=\range
\begin{bmatrix}
\I \\
\Pi_{21}\Pi_{11}^{-1}
\end{bmatrix}.
\]
\end{lemma}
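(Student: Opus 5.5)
The plan is to show two inclusions: first that the range of $\Pi$ contains the range of the graph matrix, and then that the two spaces have the same dimension.

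\textbf{Inclusion.} Write $P_1:=\begin{bmatrix}\Pi_{11}\\ \Pi_{21}\end{bmatrix}$ for the first block column of $\Pi$. Every column of $P_1$ lies in $\range(\Pi)$, since $P_1=\Pi\begin{bmatrix}\I\\0\end{bmatrix}$. Because $\Pi_{11}$ is invertible, right-multiplying by $\Pi_{11}^{-1}$ gives
\[
P_1\Pi_{11}^{-1}=\begin{bmatrix}\I\\ \Pi_{21}\Pi_{11}^{-1}\end{bmatrix}.
\]
Right multiplication does not leave the column space, so $\range\begin{bmatrix}\I\\ \Pi_{21}\Pi_{11}^{-1}\end{bmatrix}\subseteq\range(\Pi)$.

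\textbf{Dimension count.} The graph matrix has the identity as its top block, so it has full column rank $n$. By hypothesis, $\range(\Pi)$ also has dimension $\rank(\Pi)=n$. An inclusion between subspaces of equal dimension is an equality, which proves the claim.

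\textbf{Role of the hypotheses.} The idempotence $\Pi^2=\Pi$ is not needed in this argument; only the rank condition and the invertibility of $\Pi_{11}$ are used. I do not expect any real obstacle. The one point to check is that the chosen column block genuinely lies in $\range(\Pi)$, and this is immediate because it equals $\Pi$ applied to $\begin{bmatrix}\I\\0\end{bmatrix}$.
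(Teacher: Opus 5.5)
Your proposal is correct and follows essentially the same route as the paper: take the first block column $\begin{bmatrix}\Pi_{11}\\ \Pi_{21}\end{bmatrix}$, which lies in $\range(\Pi)$ and has rank $n$ because $\Pi_{11}$ is invertible, conclude by the dimension count $\rank(\Pi)=n$, and use that right-multiplication by $\Pi_{11}^{-1}$ preserves the column space. You are also right that idempotence is never used; the paper's proof does not use it either.
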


\begin{proof}
The first $n$ columns of $\Pi$ are
\[
\begin{bmatrix}
\Pi_{11} \\
\Pi_{21}
\end{bmatrix}.
\]
Because $\Pi_{11}$ is invertible, these columns are linearly independent. They therefore span an $n$-dimensional subspace contained in $\range(\Pi)$, and since $\rank(\Pi)=n$, they already span the whole range.
Right-multiplying by the invertible matrix $\Pi_{11}^{-1}$ does not change the range, so
\[
\range(\Pi)=\range
\begin{bmatrix}
\Pi_{11} \\
\Pi_{21}
\end{bmatrix}
=
\range
\begin{bmatrix}
\I \\
\Pi_{21}\Pi_{11}^{-1}
\end{bmatrix}.
\]
\end{proof}

\begin{theorem}[CARE solution from the left spectral projector]\label{thm:care-projector}
Assume $\spec(H)\cap i\R=\varnothing$, and let
\[
\Pi_- := \frac{\I-\signf(H)}{2}
=
\begin{bmatrix}
\Pi_{11} & \Pi_{12} \\
\Pi_{21} & \Pi_{22}
\end{bmatrix}
\]
be the spectral projector onto the stable invariant subspace of $H$.
Suppose $\Pi_{11}$ is invertible, and define
\begin{equation}\label{eq:care-projector-ratio}
X := \Pi_{21}\Pi_{11}^{-1}.
\end{equation}
Then $X$ solves the CARE \eqref{eq:care}, and the closed-loop matrix $A-GX$ is Hurwitz.
If, in addition, $Q\succeq 0$, $G\succeq 0$, $(A,G^{1/2})$ is stabilizable, and $(A,Q^{1/2})$ is detectable, then $X$ is the unique stabilizing Hermitian solution.
\end{theorem}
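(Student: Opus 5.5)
The plan is the classical invariant-subspace argument, adapted to the projector formulation. By \cref{prop:care-ham}, the spectral projector $\Pi_-$ has rank exactly $n$. \Cref{lem:care-graph} then gives
\[
\range(\Pi_-)=\range\begin{bmatrix}\I\\ X\end{bmatrix}, \qquad X=\Pi_{21}\Pi_{11}^{-1}.
\]
Since $\range(\Pi_-)$ is the stable invariant subspace of $H$, invariance yields a matrix $L$ with
\[
H\begin{bmatrix}\I\\ X\end{bmatrix}=\begin{bmatrix}\I\\ X\end{bmatrix}L,
\qquad \spec(L)\subset\set{\RePart z<0}.
\]
Here $\spec(L)$ equals the spectrum of $H$ restricted to this subspace.

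Next I read off the two block rows of the invariance identity. The first row gives $L=A-GX$, so $A-GX$ is Hurwitz. The second row gives
\[
-Q-A^*X=XL=XA-XGX,
\]
which rearranges to \eqref{eq:care}.

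The remaining step, and the main obstacle, is the Hermitian property and the uniqueness claim under the control-theoretic hypotheses.

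\emph{Hermitian property.} Here I use the symplectic structure from \cref{prop:care-ham}. Let $V:=\begin{bmatrix}\I\\ X\end{bmatrix}$. The relation $H^*J=-JH$ gives
\[
L^*(V^*JV)+(V^*JV)L=V^*(H^*J+JH)V=0.
\]
Since $L$ and $-L^*$ have disjoint spectra ($L$ Hurwitz), this Sylvester equation has only the zero solution, so $V^*JV=0$. Computing directly, $V^*JV=X-X^*$, hence $X=X^*$. This part needs no extra hypotheses.

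\emph{Uniqueness.} Suppose $Y$ is another Hermitian stabilizing solution. Subtracting the two Riccati equations gives
\[
(A-GX)^*(X-Y)+(X-Y)(A-GY)=0.
\]
Both coefficient matrices are Hurwitz, so their spectra are separated and the standard Sylvester uniqueness forces $X=Y$.

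The stabilizability and detectability assumptions serve only to guarantee the existence and stabilizing nature of such a solution in the classical theory. In particular they ensure that no eigenvalues lie on $i\R$ and that $\Pi_{11}$ is invertible; for these facts I would cite Lancaster--Rodman \cite{LancasterRodman1995} rather than reprove them.
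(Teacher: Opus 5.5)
Your proposal is correct. The first half is the paper's argument: rank $n$ from the Hamiltonian eigenvalue pairing, the graph lemma, invariance of $\range(\Pi_-)$, and reading off the two block rows to get $L=A-GX$ Hurwitz and the Riccati equation.

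The second half takes a genuinely different and more self-contained route. The paper handles the final claim in one sentence, by appealing to classical uniqueness of the stabilizing Hermitian solution under $Q\succeq 0$, $G\succeq 0$, stabilizability and detectability. It never checks that the extracted $X$ is Hermitian, yet the phrase ``the extracted $X$ coincides with it'' tacitly needs this. Alternatively, one must read the classical result as uniqueness among all stabilizing solutions.

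You prove both facts directly:
\begin{itemize}
\item \emph{Hermitian property.} $X=X^*$ follows from the $J$-neutrality of the stable invariant subspace. The relation $L^*W+WL=0$ with $W=V^*JV=X-X^*$ and $L$ Hurwitz forces $W=0$.
\item \emph{Uniqueness.} It follows from the identity $(A-GX)^*(X-Y)+(X-Y)(A-GY)=0$ with both coefficient matrices Hurwitz.
\end{itemize}
Both computations check out. Your uniqueness step never uses that $Y$ is Hermitian, so you actually show $X$ is the unique stabilizing solution among all solutions.

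What your route buys is that the Hermitian and uniqueness conclusions hold under the standing assumptions alone, namely $\spec(H)\cap i\R=\varnothing$ and $\Pi_{11}$ invertible. The control-theoretic hypotheses are then needed only to guarantee those standing assumptions, as you observe. That closing remark is correct but not needed for this proof, since the theorem assumes both facts outright.
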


\begin{proof}
By \cref{prop:care-ham}, the stable subspace has dimension $n$.
By \cref{lem:care-graph},
\[
\range(\Pi_-)=\range
\begin{bmatrix}
\I \\
X
\end{bmatrix}.
\]
Because $\Pi_-$ is a spectral projector of $H$, its range is $H$-invariant.
Hence there exists $R_X\in\C^{n\times n}$ such that
\[
H
\begin{bmatrix}
\I \\
X
\end{bmatrix}
=
\begin{bmatrix}
\I \\
X
\end{bmatrix}
R_X.
\]
Expanding with the block form \eqref{eq:care-H} gives
\[
\begin{bmatrix}
A-GX \\
-Q-A^*X
\end{bmatrix}
=
\begin{bmatrix}
R_X \\
XR_X
\end{bmatrix}.
\]
Therefore $R_X=A-GX$ and
\[
-Q-A^*X = X(A-GX) = XA-XGX,
\]
which rearranges to \eqref{eq:care}.
Moreover,
\[
\spec(A-GX)=\spec(R_X)\subset\{z\in\C:\RePart z<0\},
\]
because $R_X$ is the restriction of $H$ to its stable invariant subspace.
Hence $A-GX$ is Hurwitz.
Under the additional control-theoretic assumptions, the stabilizing Hermitian solution is unique, so the extracted $X$ coincides with it.
\end{proof}

\subsection{Hamiltonian sign approximation with the log-sinc rule}
We assume block-encoding access to the Hamiltonian.

Because the half-plane matrix sign is invariant under positive rescaling,
\[
\signf(\lambda H)=\signf(H)
\qquad
(\lambda>0),
\]
the spectral projector $\Pi_-=(\I-\signf(H))/2$ and the extracted Riccati solution are unchanged by replacing $H$ with $\lambda H$.
Hence we may and do work in normalized Hamiltonian coordinates.

\begin{assumption}[Unit-normalized Hamiltonian block-encoding]\label{ass:care-be}
We are given a $(1,a_H,0)$-block-encoding $U_H$ of a Hamiltonian matrix $H$ satisfying $\norm{H}\le 1$.
\end{assumption}

\begin{remark}[Building $H$ from $A,G,Q$]
If unit block-encodings of normalized $A$, $G$, and $Q$ are given separately, then standard direct-sum, adjoint, and sign-flip constructions, followed by one final normalization of the Hamiltonian, yield a unit block-encoding of $H$.
We keep $U_H$ abstract because the Riccati chapter only needs the Hamiltonian block-encoding itself.
\end{remark}

For the log-sinc nodes $t_k=e^{kh}$ from \eqref{eq:logsinc-nodes}, define the sign-augmented multiplexed Hamiltonian
\begin{equation}\label{eq:care-direct-sums}
K_H
:=
\sum_{\sigma\in\{+,-\}}\sum_{k=-K}^{K}\ket{\sigma,k}\!\bra{\sigma,k}\otimes \frac{H+\sigma it_k\I}{1+t_k}.
\end{equation}
Its inverse blocks are
\begin{equation}\label{eq:care-direct-sum-inverse}
K_H^{-1}
=
\sum_{\sigma\in\{+,-\}}\sum_{k=-K}^{K}\ket{\sigma,k}\!\bra{\sigma,k}\otimes (1+t_k)(H+\sigma it_k\I)^{-1}.
\end{equation}

\begin{lemma}[Block-encodings and plain conditioning of the multiplexed Hamiltonian shifts]\label{lem:care-mux}
Assume \cref{ass:care-be}.
Then there exists a block-encoding
\[
U_{K_H}\in \BE(1,a_H+O(1),0)
\]
of $K_H$, and each use of $U_{K_H}$ requires $O(1)$ queries to $U_H$.
If
\[
\gamma_H := \sup_{z\in\Omega_a}\norm{(z\I-H)^{-1}} < \infty
\]
for some $a\in(0,1)$, then the plain quantity
\begin{equation}\label{eq:care-rH}
r_H := \max_{k,\pm}(1+t_k)\norm{(H \pm it_k\I)^{-1}}
\end{equation}
satisfies
\begin{equation}\label{eq:care-rH-bound}
r_H \le 3\gamma_H.
\end{equation}
\end{lemma}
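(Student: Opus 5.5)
The plan has two independent parts: a multiplexed block-encoding construction for $K_H$, and a plain conditioning estimate for $r_H$.

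\emph{Block-encoding of $K_H$.} For each pair $(\sigma,k)$, the block of $K_H$ is
\[
\frac{1}{1+t_k}H+\frac{\sigma i t_k}{1+t_k}\I .
\]
The absolute values of the two coefficients sum to exactly one. We can therefore invoke \cref{lem:mux-one-matrix}, the multiplexed single-matrix LCU gadget already used for $A_\rho^\pm$ and $T_{\mathrm{sq}}$. The index register $\ket{\sigma,k}$ controls the rotations that prepare the coefficient pair, including the phase $\sigma i$ on the identity branch. One controlled call to $U_H$ is applied on the $H$ branch. This gives a unit block-encoding of $K_H$ with $a_H+O(1)$ ancillas, not counting the index register, and $O(1)$ queries to $U_H$ per use. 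The sign bit $\sigma$ is simply an extra qubit of the index register. It only changes the phase in the state preparation, so no additional queries are needed.

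\emph{Conditioning bound.} The argument follows \cref{lem:R2-conditioning}, split by the size of $t$.
\begin{itemize}
\item For $0<t\le 2$: the points $\pm it$ lie on the imaginary axis, hence in $\Omega_a$. Using $(H\pm it\I)^{-1}=-(\mp it\I-H)^{-1}$ we get $\norm{(H\pm it\I)^{-1}}\le\gamma_H$, and therefore $(1+t)\norm{(H\pm it\I)^{-1}}\le 3\gamma_H$.
\item For $t\ge 2$: since $\norm{H}\le 1$, the Neumann estimate gives $\norm{(H\pm it\I)^{-1}}\le 1/(t-1)$. Hence $(1+t)/(t-1)\le 3$.
\end{itemize}
To close the second case we need $3\le 3\gamma_H$, i.e.\ $\gamma_H\ge 1$. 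Since $0\in\Omega_a$, we have $\gamma_H\ge\norm{H^{-1}}\ge 1/\norm{H}\ge 1$. Taking the maximum over all nodes and both signs then yields $r_H\le 3\gamma_H$.

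The only mildly delicate points are these: the lower bound $\gamma_H\ge 1$, which uses $0\in\Omega_a$ together with $\norm{H}\le1$; and checking that the phase $\sigma i$ is absorbed by the state-preparation unitaries without affecting normalization. Neither is a real obstacle.
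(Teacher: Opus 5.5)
Your proposal is correct and matches the paper's proof essentially step by step. Both invoke \cref{lem:mux-one-matrix} on the blocks $\frac{1}{1+t_k}H+\frac{\sigma i t_k}{1+t_k}\I$, whose coefficients sum to one, and both split the conditioning estimate at $t_k\le 2$ versus $t_k\ge 2$, using $0\in\Omega_a$ to get $\gamma_H\ge\norm{H^{-1}}\ge 1$ so that the Neumann bound $3$ is absorbed into $3\gamma_H$.
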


\begin{proof}
For fixed $k$ and a sign,
\[
\frac{H \pm it_k\I}{1+t_k}
=
\frac{1}{1+t_k}\,H
+
\frac{\pm i t_k}{1+t_k}\,\I,
\]
and the absolute values of the two coefficients sum to one.
Hence \cref{lem:mux-one-matrix} gives the claimed block-encoding of \eqref{eq:care-direct-sums}.

For the plain inverse norm, the proof is the same as in \cref{lem:R2-conditioning}.
If $t_k\le 2$, then $\pm it_k\in\Omega_a$, so
\[
\norm{(H \pm it_k\I)^{-1}} \le \gamma_H
\]
and hence $(1+t_k)\norm{(H \pm it_k\I)^{-1}}\le 3\gamma_H$.
If $t_k\ge 2$, then $\norm{H}\le 1$ gives the Neumann bound
\[
\norm{(H \pm it_k\I)^{-1}} \le \frac{1}{t_k-1},
\]
so
\[
(1+t_k)\norm{(H \pm it_k\I)^{-1}}
\le
\frac{1+t_k}{t_k-1}
\le 3.
\]
Since $0\in\Omega_a$ and $\norm{H}\le 1$, we also have $\gamma_H\ge \norm{H^{-1}}\ge 1$, so the large-$t_k$ bound is already dominated by $3\gamma_H$.
Taking the maximum proves \eqref{eq:care-rH-bound}.
\end{proof}

Define
\begin{equation}\label{eq:care-weights}
w_k := \frac{h t_k}{\pi(1+t_k)},
\qquad
\Lambda^{\mathrm{care}}_{K,h} := \sum_{k=-K}^{K} 2w_k.
\end{equation}
Then
\begin{equation}\label{eq:care-sign-sum}
S_{K,h}(H) = \sum_{k=-K}^{K} w_k
\paren{
(1+t_k)(H-it_k\I)^{-1}
+
(1+t_k)(H+it_k\I)^{-1}
}.
\end{equation}

\begin{lemma}[Plain coefficient growth for the Hamiltonian sign sum]\label{lem:care-lambda}
For every $K\ge 1$ and $h>0$,
\begin{equation}\label{eq:care-lambda-bound}
\Lambda^{\mathrm{care}}_{K,h} = \frac{2h}{\pi}\left(K+\half\right).
\end{equation}
Under the balanced choice \eqref{eq:balanced-h}, this becomes
\begin{equation}\label{eq:care-lambda-balanced}
\Lambda^{\mathrm{care}}_{K,h} = \frac{2}{\pi}\left(\sqrt{2\pi\beta K}+\frac{h}{2}\right).
\end{equation}
\end{lemma}

\begin{proof}
Since $t_{-k}=e^{-kh}=1/t_k$,
\[
\frac{t_k}{1+t_k}+\frac{t_{-k}}{1+t_{-k}}=1
\qquad (k\ge 1),
\]
and $t_0=1$, so $t_0/(1+t_0)=1/2$.
Therefore
\[
\sum_{k=-K}^{K}\frac{t_k}{1+t_k}
=
\frac12 + \sum_{k=1}^{K}1
=
K+\half,
\]
which proves \eqref{eq:care-lambda-bound}.
Equation~\eqref{eq:care-lambda-balanced} is just $Kh=\sqrt{2\pi\beta K}$.
\end{proof}

\begin{definition}[Hamiltonian nodewise profile]\label{def:care-nodewise}
For each node $t_k=e^{kh}$, define the exact scaled Hamiltonian inverse norms 
\[
m^\pm_{H,k}:=(1+t_k)\norm{(H\pm it_k\I)^{-1}}.
\]
A \emph{Hamiltonian nodewise profile} is a collection of numbers $\rho^\pm_{H,k}>0$ such that
\[
m^\pm_{H,k}\le \rho^\pm_{H,k}
\qquad
\text{for all }k.
\]
Associated to such a profile, define
\[
R_H:=\max_{k,\pm}\rho^\pm_{H,k},
\qquad
d^\pm_{H,k}:=\frac{R_H}{\rho^\pm_{H,k}},
\qquad
\Theta_{\rho_H}^{\mathrm{care}}
:=
\sum_{k=-K}^{K}w_k\bigl(\rho^-_{H,k}+\rho^+_{H,k}\bigr).
\]
\end{definition}

\begin{remark}
As in the square-root and geometric-mean sections, the profile notation is optional but useful. The plain choice $\rho^\pm_{H,k}\equiv r_H$ recovers the default Hamiltonian sign bound, while nonuniform profiles capture nodewise improvements in the sign stage.
\end{remark}

\begin{theorem}[Hamiltonian sign approximation based on the log-sinc rule]\label{thm:care-sign}
Assume \cref{ass:care-be}, and let $a\in(0,1)$, $\beta\in(0,\arcsin a)$, and $\gamma_H\ge \gamma_{\Omega_a}(H)$.
Choose $K\ge 1$, $h>0$, a Hamiltonian nodewise profile $\rho_H$ in the sense of \cref{def:care-nodewise}, and an inverse precision $\eps_H\in(0,1]$.
Then there exists a
\begin{equation}\label{eq:care-sign-be}
\paren{\beta_{\mathrm{sign}},\, a_H+O(\log K),\, \eps_{\mathrm{sgn}}^{\mathrm{sinc}}(K,h;\beta,a,\gamma_H)+\Theta_{\rho_H}^{\mathrm{care}}\frac{\eps_H}{R_H}}
\text{-block-encoding of }\signf(H),
\end{equation}
where
\begin{equation}\label{eq:care-sign-beta}
\beta_{\mathrm{sign}} = 2\Theta_{\rho_H}^{\mathrm{care}}.
\end{equation}
The corresponding query complexity is
\begin{equation}\label{eq:care-sign-queries}
O\!\paren{R_H\log\frac{R_H}{\eps_H}}
\end{equation}
queries to $U_H$ and $U_H^\dagger$.
If one chooses the plain profile $\rho^\pm_{H,k}\equiv r_H$, then
\[
\Theta_{\rho_H}^{\mathrm{care}}=r_H\Lambda^{\mathrm{care}}_{K,h},
\]
so \eqref{eq:care-sign-be} reduces to the original plain normalization $2r_H\Lambda^{\mathrm{care}}_{K,h}$.
\end{theorem}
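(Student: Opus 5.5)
The plan is to treat the Hamiltonian sign sum \eqref{eq:care-sign-sum} as a single-family rational approximant and apply \cref{thm:single-family-algorithm}. The deterministic quadrature error is then handled separately by \cref{prop:sign-approx-explicit}.

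\emph{Setup.} Take the index set $J=\{\pm\}\times\{-K,\dots,K\}$, node matrices $F_{\pm,k}=H\pm it_k\I$, scalings $\sigma_{\pm,k}=1+t_k$, and coefficients $c_{\pm,k}=w_k$. With these choices the multiplexed direct sum $F$ of \cref{def:single-family-profile} is exactly $K_H$ from \eqref{eq:care-direct-sums}. Its inverse blocks $R_{\pm,k}=(1+t_k)(H\pm it_k\I)^{-1}$ are those in \eqref{eq:care-direct-sum-inverse}, and \eqref{eq:care-sign-sum} reads $S_{K,h}(H)=\sum_{j\in J}c_jR_j$. The Hamiltonian nodewise profile of \cref{def:care-nodewise} is a single-family profile: $\norm{R_{\pm,k}}=m^\pm_{H,k}\le\rho^\pm_{H,k}$. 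Under this identification $R_\rho=R_H$ and $d_j=d^\pm_{H,k}$, and
\[
\Theta^{(1)}_\rho=\sum_k w_k(\rho^-_{H,k}+\rho^+_{H,k})=\Theta^{\mathrm{care}}_{\rho_H},
\]
using $w_k>0$.

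\emph{Implementation.} \Cref{lem:care-mux} supplies $U_{K_H}\in\BE(1,a_H+O(1),0)$ with $O(1)$ queries to $U_H$ per use. \Cref{thm:single-family-algorithm} with $\eps_{\mathrm{inv}}=\eps_H$ then yields a
\[
\paren{2\Theta^{\mathrm{care}}_{\rho_H},\ a_H+O(\log K),\ \Theta^{\mathrm{care}}_{\rho_H}\eps_H/R_H}
\text{-block-encoding of }S_{K,h}(H).
\]
The index register has $|J|=2(2K+1)$ entries, so it contributes $O(\log K)$ ancillas. The query cost is $O(R_H\log(R_H/\eps_H))$ uses of $U_{K_H}$, hence the same order of uses of $U_H$ and $U_H^\dagger$.

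\emph{Deterministic error.} \Cref{prop:sign-approx-explicit} applies to $M=H$: we have $\norm{H}\le 1$, $a\in(0,1)$, $\beta<\arcsin a$, and $\gamma_H\ge\gamma_{\Omega_a}(H)$. It gives $\norm{\signf(H)-S_{K,h}(H)}\le\eps_{\mathrm{sgn}}^{\mathrm{sinc}}(K,h;\beta,a,\gamma_H)$. By the triangle inequality the same unitary is a block-encoding of $\signf(H)$ with the stated error. The normalization is unchanged, so $\beta_{\mathrm{sign}}=2\Theta^{\mathrm{care}}_{\rho_H}$.

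\emph{Plain profile.} For the constant profile $\rho\equiv r_H$, which is admissible by the definition \eqref{eq:care-rH}, we get $\Theta^{\mathrm{care}}_{\rho_H}=r_H\sum_k 2w_k=r_H\Lambda^{\mathrm{care}}_{K,h}$. This is the stated reduction.

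The only real care point is the bookkeeping identification. First, the sign-index doubling of the multiplexed register must be absorbed into $O(\log K)$. Second, the error term has to be stated as $\Theta\eps_H/R_H$ rather than in the weaker form $\Lambda\eps_H$, which follows because the LCU coefficients after rebalancing are $c_j/d_j$. No new analytic estimate is needed.
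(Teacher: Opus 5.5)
Your proposal is correct and takes the same route as the paper: it applies \cref{thm:single-family-algorithm} with $J=\{\pm\}\times\{-K,\dots,K\}$, node matrices $F_{\pm,k}=H\pm it_k\I$, scales $1+t_k$ and coefficients $w_k$, obtains the block-encoding of $K_H$ from \cref{lem:care-mux}, and adds the quadrature error from \cref{prop:sign-approx-explicit}. Your additional bookkeeping (identifying $\Theta^{(1)}_\rho$ with $\Theta^{\mathrm{care}}_{\rho_H}$ via $w_k>0$, absorbing the doubled index register into $O(\log K)$ ancillas, and checking the plain-profile reduction $\Theta^{\mathrm{care}}_{\rho_H}=r_H\Lambda^{\mathrm{care}}_{K,h}$) only spells out what the paper's short proof leaves implicit.
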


\begin{proof}
By \cref{prop:sign-approx-explicit}, the deterministic quadrature error is at most $\eps_{\mathrm{sgn}}^{\mathrm{sinc}}(K,h;\beta,a,\gamma_H)$.
Apply \cref{thm:single-family-algorithm} to the index set
\[
J=\{+,-\}\times\{-K,-K+1,\ldots,K\},
\]
with coefficients $c_{(\sigma,k)}=w_k$, matrices $F_{(\sigma,k)}=H+\sigma it_k\I$, and scales $\sigma_{(\sigma,k)}=1+t_k$.
The multiplexed operator is exactly $K_H$, so \cref{lem:care-mux} provides the required direct-sum block-encoding.
The resulting single-family target is precisely the finite sign sum \eqref{eq:care-sign-sum}, and \cref{thm:single-family-algorithm} therefore yields a block-encoding of $S_{K,h}(H)$ with normalization $2\Theta_{\rho_H}^{\mathrm{care}}$ and implementation error $\Theta_{\rho_H}^{\mathrm{care}}\eps_H/R_H$.
Adding the deterministic quadrature error proves \eqref{eq:care-sign-be}.
\end{proof}

\begin{corollary}[Diagonalizable Hamiltonians: explicit strip-based quadrature control]\label{cor:care-diag}
Assume \cref{ass:care-be}, and suppose that $H=V\Lambda V^{-1}$ is diagonalizable with
\begin{equation}\label{eq:care-diag-gap}
\min_j \abs{\RePart \lambda_j} \ge \delta > 0.
\end{equation}
Set
\[
a_H^\star := \frac{\delta}{2}\in(0,1).
\]
Then
\begin{equation}\label{eq:care-gamma-diag}
\gamma_H \le \frac{2\kappa(V)}{\delta},
\qquad
r_H \le \frac{6\kappa(V)}{\delta}.
\end{equation}
Consequently, after choosing any $\beta\in(0,\arcsin a_H^\star)$, the deterministic quadrature size $K$ is made explicit directly by \eqref{eq:Kchoice-sinc} with
\[
a=a_H^\star,
\qquad
\gamma = \frac{2\kappa(V)}{\delta}.
\]
Moreover, the plain profile $\rho^\pm_{H,k}\equiv r_H$ is always admissible, so \cref{thm:care-sign} recovers the explicit normalization
\[
\beta_{\mathrm{sign}} \le 2r_H\Lambda^{\mathrm{care}}_{K,h}
\]
with $r_H$ bounded by \eqref{eq:care-gamma-diag}.
No further annulus-bound estimate is needed.
\end{corollary}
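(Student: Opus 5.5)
The plan is to prove the resolvent bound $\gamma_H\le 2\kappa(V)/\delta$ by the same similarity argument used in \cref{cor:diag-strip}, and then read off every remaining claim from \cref{lem:care-mux}, \cref{prop:sign-approx-explicit}, and \cref{thm:care-sign}.

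\textbf{Admissibility of $a_H^\star$.} Every eigenvalue satisfies $|\lambda_j|\le\norm{H}\le 1$, so $\delta\le|\RePart\lambda_j|\le 1$. Hence $a_H^\star=\delta/2\in(0,1/2]\subset(0,1)$, as the statement requires.

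\textbf{Strip-resolvent bound.} Fix $z\in\Omega_{a_H^\star}$, so $|\RePart z|\le\delta/2$. For every eigenvalue,
\[
|z-\lambda_j|\ge|\RePart\lambda_j-\RePart z|\ge\delta-\delta/2=\delta/2 .
\]
Writing $(z\I-H)^{-1}=V(z\I-\Lambda)^{-1}V^{-1}$ and using that $(z\I-\Lambda)^{-1}$ is diagonal gives
\[
\norm{(z\I-H)^{-1}}\le\kappa(V)\max_j|z-\lambda_j|^{-1}\le 2\kappa(V)/\delta .
\]
Taking the supremum over $z\in\Omega_{a_H^\star}$ yields $\gamma_H\le 2\kappa(V)/\delta<\infty$. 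This finiteness is exactly the hypothesis of \cref{lem:care-mux}, which then gives $r_H\le 3\gamma_H\le 6\kappa(V)/\delta$ and establishes \eqref{eq:care-gamma-diag}.

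\textbf{Remaining claims.}
\begin{enumerate}[leftmargin=2em]
\item \emph{Quadrature size.} Since $a=a_H^\star$ and $\gamma=2\kappa(V)/\delta\ge\gamma_{\Omega_a}(H)$ satisfy the hypotheses of \cref{prop:sign-approx-explicit}, for any $\beta\in(0,\arcsin a_H^\star)$ the explicit choice \eqref{eq:Kchoice-sinc} controls the deterministic error. Nothing beyond substitution is needed.
\item \emph{Plain profile.} The constant profile $\rho^\pm_{H,k}\equiv r_H$ is admissible by the definition of $r_H$ in \eqref{eq:care-rH}. It gives $\Theta^{\mathrm{care}}_{\rho_H}=r_H\Lambda^{\mathrm{care}}_{K,h}$.
\item \emph{Normalization.} \cref{thm:care-sign} then yields $\beta_{\mathrm{sign}}=2r_H\Lambda^{\mathrm{care}}_{K,h}$, and the bound on $r_H$ above makes this explicit. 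If an explicit number is wanted instead of the exact $r_H$, any upper bound on $r_H$ is also an admissible constant profile, and the normalization only grows accordingly.
\end{enumerate}

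I expect no real obstacle. The only points needing care are the boundary check $a_H^\star<1$, which relies on $\norm{H}\le1$, and the observation that finiteness of $\gamma_H$ is precisely what \cref{lem:care-mux} requires, so no annulus estimate is needed.
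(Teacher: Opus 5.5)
Your proposal is correct and follows essentially the same route as the paper: you use the similarity bound $\norm{(z\I-H)^{-1}}\le\kappa(V)\max_j|z-\lambda_j|^{-1}\le 2\kappa(V)/\delta$ on $\Omega_{a_H^\star}$, then $r_H\le 3\gamma_H$ from \cref{lem:care-mux}, and then direct substitution into \cref{prop:sign-approx-explicit} and \cref{thm:care-sign}. Your explicit check that $\norm{H}\le 1$ forces $\delta\le 1$, and hence $a_H^\star\in(0,1)$, fills in a detail that the paper only asserts.
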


\begin{proof}
The spectrum of $H$ lies outside the strip $\Omega_{a_H^\star}$.
For $z\in\Omega_{a_H^\star}$,
\[
(z\I-H)^{-1}
=
V(z\I-\Lambda)^{-1}V^{-1},
\]
so
\[
\norm{(z\I-H)^{-1}}
\le
\kappa(V)\max_j\frac{1}{\abs{z-\lambda_j}}
\le
\frac{\kappa(V)}{\delta-a_H^\star}
=
\frac{2\kappa(V)}{\delta},
\]
which proves \eqref{eq:care-gamma-diag} for $\gamma_H$.
The bound on $r_H$ then follows from \cref{lem:care-mux}.
\end{proof}

\begin{remark}[Why the CARE sign stage has a logarithmic LCU weight factor]
\label{rem:care-logweight}
In the Hamiltonian sign stage, the overlap parameter is
\[
\Theta_{\rho_H}^{\mathrm{care}}
:=
\sum_{k=-K}^{K}w_k\bigl(\rho^-_{H,k}+\rho^+_{H,k}\bigr),
\qquad
\beta_{\mathrm{sign}}=2\Theta_{\rho_H}^{\mathrm{care}}.
\]
Thus the plain profile \(\rho^\pm_{H,k}\equiv r_H\) gives
\[
\Theta_{\rho_H}^{\mathrm{care}}
=
r_H\Lambda^{\mathrm{care}}_{K,h},
\qquad
\beta_{\mathrm{sign}}
=
2r_H\Lambda^{\mathrm{care}}_{K,h}.
\]

The reason a logarithmic coefficient growth remains in this CARE sign stage is that
\cref{thm:care-sign} implements the full Hamiltonian sign by a single scaled inverse family
on each branch,
\[
(1+t_k)(H\pm it_k\I)^{-1},
\]
with coefficients
\[
w_k=\frac{h t_k}{\pi(1+t_k)}.
\]
For large positive logarithmic nodes \(t_k\), one has \(w_k\sim h/\pi\), so the positive
tail contributes \(O(Kh)\) rather than a bounded logistic-type coefficient sum.

One could instead use the algebraic identity
\[
(H-it\I)^{-1}+(H+it\I)^{-1}
=
2(H-it\I)^{-1}H(H+it\I)^{-1}
\]
to obtain a bounded coefficient sum after scaling both inverse factors by \(1+t\).
However, that alternative would trade the logarithmic coefficient growth for a two-inverse
product and hence for a sign-stage normalization of order \(R_H^2\) rather than order
\(R_H\). The single-family implementation used here deliberately keeps the inverse-stage
normalization linear in \(R_H\), at the price of a logarithmic weight factor.

Indeed, by \cref{lem:care-lambda}, under the balanced rule
\[
\Lambda^{\mathrm{care}}_{K,h}
=
\frac{2}{\pi}\left(Kh+\frac{h}{2}\right).
\]
Together with the log-sinc choice of \(K\), this gives
\[
\Lambda^{\mathrm{care}}_{K,h}
=
O\!\left(
\log\frac{\widetilde C_{\beta,a,\gamma_H}}{\eps_{\mathrm{sign}}}
\right)
\]
for fixed \(a\) and \(\beta\). This logarithmic factor is mild and remains compatible with
the subsequent projector-inversion stage in \cref{thm:care-main}.
\end{remark}

\subsection{Projector inversion and the final CARE block-encoding}
Let
\[
\Pi_- = \frac{\I-\signf(H)}{2}
=
\begin{bmatrix}
\Pi_{11} & \Pi_{12} \\
\Pi_{21} & \Pi_{22}
\end{bmatrix},
\qquad
p_H := \norm{\Pi_-},
\qquad
\sigma := \sigma_{\min}(\Pi_{11}) > 0.
\]
Let $\widetilde S$ be the encoded matrix produced by \cref{thm:care-sign}, so that
\[
\norm{\widetilde S-\signf(H)} \le \eps_{\mathrm{sign}}.
\]
Define
\begin{equation}\label{eq:care-projector-approx}
\widetilde \Pi_- := \frac{\I-\widetilde S}{2}
=
\begin{bmatrix}
\widetilde \Pi_{11} & \widetilde \Pi_{12} \\
\widetilde \Pi_{21} & \widetilde \Pi_{22}
\end{bmatrix}.
\end{equation}
Then
\[
\norm{\widetilde \Pi_- - \Pi_-} \le \frac{\eps_{\mathrm{sign}}}{2}.
\]

\begin{lemma}[Invertibility of the perturbed leading block]\label{lem:care-pi11}
If $\eps_{\mathrm{sign}}<2\sigma$, then $\widetilde \Pi_{11}$ is invertible and
\begin{equation}\label{eq:care-pi11-inv}
\norm{\widetilde \Pi_{11}^{-1}} \le \frac{1}{\sigma-\eps_{\mathrm{sign}}/2}.
\end{equation}
\end{lemma}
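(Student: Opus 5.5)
The argument is a standard Weyl-type singular value perturbation bound applied to the leading block.

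First I isolate the perturbation of the leading block. Since $\widetilde\Pi_{11}$ and $\Pi_{11}$ are the $(1,1)$ compressions of $\widetilde\Pi_-$ and $\Pi_-$, write $\widetilde\Pi_{11}-\Pi_{11}=P(\widetilde\Pi_- - \Pi_-)P^*$, where $P=[\I\ 0]$ has $\norm{P}=1$. The estimate $\norm{\widetilde\Pi_- - \Pi_-}\le \eps_{\mathrm{sign}}/2$, recorded just before the lemma, then gives
\[
\norm{\widetilde\Pi_{11}-\Pi_{11}}\le \frac{\eps_{\mathrm{sign}}}{2}.
\]

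Next I apply Weyl's inequality for singular values, $\abs{\sigma_{\min}(X+E)-\sigma_{\min}(X)}\le\norm{E}$, which follows from the min-max characterization $\sigma_{\min}(X)=\min_{\norm{x}=1}\norm{Xx}$ and the triangle inequality. Concretely, for any unit vector $x$,
\[
\norm{\widetilde\Pi_{11}x}\ge \norm{\Pi_{11}x}-\norm{(\widetilde\Pi_{11}-\Pi_{11})x}\ge \sigma-\frac{\eps_{\mathrm{sign}}}{2}.
\]
The hypothesis $\eps_{\mathrm{sign}}<2\sigma$ makes the right-hand side strictly positive. Hence $\widetilde\Pi_{11}$ is injective, and since it is square, it is invertible. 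Moreover $\norm{\widetilde\Pi_{11}^{-1}}=1/\sigma_{\min}(\widetilde\Pi_{11})\le 1/(\sigma-\eps_{\mathrm{sign}}/2)$, which is \eqref{eq:care-pi11-inv}.

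An equivalent route is a Neumann series on $\Pi_{11}^{-1}\widetilde\Pi_{11}=\I+\Pi_{11}^{-1}(\widetilde\Pi_{11}-\Pi_{11})$. I prefer the direct lower bound above, because the Neumann route yields the weaker constant $1/(\sigma(1-\eps_{\mathrm{sign}}/(2\sigma)))$, which coincides with the target only after simplification.

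There is no real obstacle. The only point needing care is justifying that the block compression does not increase the norm of the perturbation, and this is immediate from $\norm{P}=1$.
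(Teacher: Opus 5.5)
Your proof is correct and follows the paper's argument. Like the paper, you bound the compressed perturbation $\norm{\widetilde\Pi_{11}-\Pi_{11}}\le\eps_{\mathrm{sign}}/2$ and apply Weyl's inequality to get $\sigma_{\min}(\widetilde\Pi_{11})\ge\sigma-\eps_{\mathrm{sign}}/2>0$; you also spell out the compression step via $P=[\I\ 0]$, which the paper leaves implicit. One small slip in your aside: the Neumann-series constant $1/(\sigma(1-\eps_{\mathrm{sign}}/(2\sigma)))$ is exactly equal to $1/(\sigma-\eps_{\mathrm{sign}}/2)$, not weaker, so both routes give the stated bound.
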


\begin{proof}
Because
\[
\norm{\widetilde \Pi_{11}-\Pi_{11}} \le \frac{\eps_{\mathrm{sign}}}{2},
\]
Weyl's inequality gives
\[
\sigma_{\min}(\widetilde \Pi_{11})
\ge
\sigma_{\min}(\Pi_{11}) - \norm{\widetilde \Pi_{11}-\Pi_{11}}
\ge
\sigma-\frac{\eps_{\mathrm{sign}}}{2} > 0.
\]
Hence $\widetilde \Pi_{11}$ is invertible and \eqref{eq:care-pi11-inv} follows.
\end{proof}

We can now prove the main theorem stated at the beginning of the section.

\begin{proof}[Proof of \cref{thm:care-main}]
The projector block-encoding $\widetilde \Pi_-$ in \eqref{eq:care-projector-approx} has normalization $\beta_\Pi$, because it is obtained from the $(\beta_{\mathrm{sign}},\cdot,\eps_{\mathrm{sign}})$-block-encoding of $\widetilde S$ and the exact identity block-encoding by a two-term LCU.

By \cref{lem:care-pi11}, $\widetilde \Pi_{11}$ is invertible and
\[
\norm{\widetilde \Pi_{11}^{-1}} \le \frac{1}{\sigma-\eps_{\mathrm{sign}}/2}.
\]
The same unitary that block-encodes $\widetilde \Pi_{11}$ with normalization $\beta_\Pi$ is a unit block-encoding of
\[
T_\Pi:=\frac{\widetilde \Pi_{11}}{\beta_\Pi}.
\]
Apply \cref{prop:qsvt-inverse} to $T_\Pi$ with target error $\min\{1,\beta_\Pi\eps_{\Pi^{-1}}\}$.
Because
\[
T_\Pi^{-1}=\beta_\Pi \widetilde \Pi_{11}^{-1},
\qquad
\norm{T_\Pi^{-1}}=\beta_\Pi\norm{\widetilde \Pi_{11}^{-1}},
\]
the resulting QSVT circuit is a
\[
\paren{\frac{2\beta_\Pi}{\sigma-\eps_{\mathrm{sign}}/2},\, a_H+O(\log K),\, \min\{1,\beta_\Pi\eps_{\Pi^{-1}}\}}
\text{-block-encoding of }\beta_\Pi \widetilde \Pi_{11}^{-1}.
\]
Reinterpreting the same circuit as a block-encoding of $\widetilde \Pi_{11}^{-1}$ divides both the normalization and the error by $\beta_\Pi$, so the resulting error is at most $\eps_{\Pi^{-1}}$, and we obtain a
\[
\paren{\frac{2}{\sigma-\eps_{\mathrm{sign}}/2},\, a_H+O(\log K),\, \eps_{\Pi^{-1}}}
\text{-block-encoding of }\widetilde \Pi_{11}^{-1},
\]
with query cost \eqref{eq:care-pi11-queries} in calls to the projector block-encoding.
Multiplying it by the $(\beta_\Pi,\cdot,\eps_{\mathrm{sign}}/2)$-block-encoding of $\widetilde \Pi_{21}$ gives \eqref{eq:care-final-beta} and the claimed ancilla count.

For the operator error, define
\[
\widehat X := \widetilde \Pi_{21}\widetilde \Pi_{11}^{-1},
\qquad
\widetilde X := \widetilde \Pi_{21}Y.
\]
Then
\[
\widetilde X - X = (\widetilde X-\widehat X) + (\widehat X-X).
\]
The exact identity
\[
\widehat X - X
=
\paren{(\widetilde \Pi_{21}-\Pi_{21}) - X(\widetilde \Pi_{11}-\Pi_{11})}\widetilde \Pi_{11}^{-1}
\]
implies
\[
\norm{\widehat X-X}
\le
\paren{\norm{\widetilde \Pi_{21}-\Pi_{21}} + \norm{X}\norm{\widetilde \Pi_{11}-\Pi_{11}}}
\norm{\widetilde \Pi_{11}^{-1}}
\le
\frac{\eps_{\mathrm{sign}}}{2}\,
\frac{1+\norm{X}}{\sigma-\eps_{\mathrm{sign}}/2}.
\]
For the inverse-implementation term,
\[
\norm{\widetilde X-\widehat X}
=
\norm{\widetilde \Pi_{21}(Y-\widetilde \Pi_{11}^{-1})}
\le
\norm{\widetilde \Pi_{21}}\eps_{\Pi^{-1}}
\le
\paren{p_H+\frac{\eps_{\mathrm{sign}}}{2}}\eps_{\Pi^{-1}}.
\]
Adding the two bounds proves \eqref{eq:care-final-error}.
If $H=V\Lambda V^{-1}$ is diagonalizable, then
\[
\Pi_- = V D_- V^{-1}
\]
for a diagonal projector $D_-$, so $p_H\le \kappa(V)$.
\end{proof}

\begin{proposition}[CARE residual bound]\label{prop:care-residual}
Let $X$ be the exact solution from \cref{thm:care-projector}, and let $\widetilde X$ be any approximation.
Then
\begin{equation}\label{eq:care-residual}
\norm{A^*\widetilde X + \widetilde X A - \widetilde X G\widetilde X + Q}
\le
\paren{2\norm{A} + 2\norm{G}\norm{X}}\norm{\widetilde X-X}
+
\norm{G}\norm{\widetilde X-X}^2.
\end{equation}
\end{proposition}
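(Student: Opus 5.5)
The plan is to write the residual of $\widetilde X$ as the residual of $X$ (which is zero by \cref{thm:care-projector}) plus terms linear and quadratic in the perturbation $E:=\widetilde X-X$. Define the Riccati map
\[
\mathcal R(Y):=A^*Y+YA-YGY+Q .
\]
Substituting $\widetilde X=X+E$ and expanding the quadratic term gives
\[
\widetilde XG\widetilde X=XGX+EGX+XGE+EGE .
\]
Since $\mathcal R(X)=0$, we obtain the exact identity
\[
\mathcal R(\widetilde X)=A^*E+EA-EGX-XGE-EGE .
\]

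Next, bound each term by submultiplicativity of the operator norm. Using $\norm{A^*}=\norm{A}$, the two linear terms in $A$ together contribute at most $2\norm{A}\norm{E}$. The two cross terms $EGX$ and $XGE$ together contribute at most $2\norm{G}\norm{X}\norm{E}$. The quadratic term satisfies $\norm{EGE}\le\norm{G}\norm{E}^2$. The triangle inequality then yields \eqref{eq:care-residual} directly.

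Two points need care, though neither is a real obstacle. First, the expansion must keep the noncommutative order $EGX$ versus $XGE$; both are handled by the same norm bound. Second, the only input beyond this algebra is that $X$ solves \eqref{eq:care} exactly, which is supplied by \cref{thm:care-projector}. No Hermitian or invertibility assumption on $\widetilde X$ is needed.
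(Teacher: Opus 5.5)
Your proposal is correct and is essentially the paper's proof. The paper likewise sets $\Delta:=\widetilde X-X$ and uses that $X$ solves the CARE to get the exact identity $A^*\Delta+\Delta A-XG\Delta-\Delta GX-\Delta G\Delta$, then bounds each term by submultiplicativity. One minor point: the paper already uses $E$ for the descriptor matrix in the generalized Sylvester/Lyapunov sections, so a different letter such as $\Delta$ would avoid a notational clash.
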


\begin{proof}
Set $\Delta:=\widetilde X-X$.
Since $X$ solves \eqref{eq:care},
\[
A^*\widetilde X + \widetilde X A - \widetilde X G\widetilde X + Q
=
A^*\Delta + \Delta A - XG\Delta - \Delta GX - \Delta G\Delta.
\]
Taking norms and using submultiplicativity proves \eqref{eq:care-residual}.
\end{proof}

\section{Discussion}\label{sec:discussion}
This paper develops a unified sign embedding approach for solving matrix equations and implementing matrix functions on quantum computers. The ordinary Sylvester theorem \cref{thm:generic-overlap} isolates the three ingredients that drive the rest of the paper: matrix-sign embedding, rational approximation, and nodewise rebalancing of the shifted-inverse implementation layer. The later sections now reuse these ingredients without reverting to plain worst-case bookkeeping. In particular, the generalized Sylvester and generalized Lyapunov sections inherit the same overlap-based normalization mechanism as the ordinary Sylvester theorem, while the square-root, geometric-mean, and Hamiltonian-sign sections use the single-family implementation \cref{thm:single-family-algorithm}. In that sense, all applications in the paper are now genuine instances of the same sign-embedding methodology.

The operator-output viewpoint is essential. Vectorization-based reformulations can prepare states proportional to solutions, but they obscure the algebraic structure of the target operator and are less convenient for downstream composition. By returning block-encodings of the target operators, our algorithms are compatible with subsequent matrix multiplication, matrix-function evaluation, and structured subspace extraction. The new single-family and rebalanced implementations sharpen this point further: the block-encoding normalization is controlled by overlap parameters that reflect how the shifted resolvents are actually weighted in the rational approximation, rather than only by a plain worst-case inverse norm.

The sign embedding approach is also structurally different from a generic contour-integral treatment. Generic contour methods integrate resolvents of the target matrix function directly. Our route first compresses the problem to a sign object and then exploits the specific factorization of its shifted resolvents. For ordinary Sylvester this exposes the constant-weight scaled identity of \cref{lem:constant-weight} and the overlap quantity $\Theta_{\rho}^{\mathrm{syl}}$; for square roots and geometric means it yields positive-shift families that fit the single-family implementation and admit profile-dependent overlap bounds; for CARE it isolates a sign projector whose leading block can be inverted separately, while the sign stage itself is now controlled by the Hamiltonian overlap quantity $\Theta_{\rho_H}^{\mathrm{care}}$. This sign-first compression is what allows one analytic core to support several seemingly different operator-output quantum algorithms.

At the same time, the present framework still has several limitations. First, the quantitative estimates depend on auxiliary conditioning data, including FoV gaps, strip-resolvent constants, projector invertibility parameters, and nodewise bound profiles. These are natural quantities from the sign-embedding and shifted-inverse viewpoint, but the practical usefulness of the final complexity bounds depends on how sharply they can be certified in a given application. In particular, the most refined rebalanced bounds require classically available nodewise profiles, while the more automatic plain-profile bounds may be conservative. Second, the geometric-mean and CARE results deliberately separate algebraic sign representations from the analytic certificates needed to implement them: the former require a scaled sign-embedding strip certificate, and the latter require both Hamiltonian strip control and an invertibility gap for the leading stable-projector block. These assumptions are appropriate for a general non-normal operator-output theory, but they also indicate that problem-specific certification and preprocessing remain important parts of any end-to-end implementation. 

Several directions appear promising. First, it would be useful to develop efficient certificate-estimation procedures for the auxiliary quantities used here, especially strip-resolvent bounds and nodewise shifted-inverse profiles for non-normal pencils. Such procedures would make the rebalanced theorems more directly usable beyond cases where FoV or weighted-FoV bounds are already available. Second, the explicit corollaries can likely be sharpened in special regimes. The ordinary Sylvester banded-overlap result suggests that analogous refined weighted-pencil corollaries should be available for generalized Sylvester and generalized Lyapunov equations, while the square-root overlap analysis indicates that additional application-specific profiles may further reduce normalization in asymmetric positive-shift problems, including geometric means with very different accretivity margins. Third, the CARE construction leaves room for more specialized Hamiltonian sign implementations: one may trade LCU weight growth against two-inverse factorizations, exploit Hamiltonian or symplectic structure, or incorporate sharper projector-block conditioning estimates. Finally, it would be interesting to combine the present sign-embedding framework with more specialized block-encoding primitives, adaptive rational approximations, or problem-dependent preconditioning, thereby integrating these operator-output quantum algorithms into larger quantum linear-algebra and control-theoretic pipelines.

Overall, the main message is that matrix-sign embeddings provide a practical and mathematically route to operator-output quantum algorithms. Once coupled with explicit sign approximation and structure-aware rebalanced shifted-inverse implementation, they support a broad class of quantum algorithms for matrix equations and matrix functions within a unified rigorous framework.

\section*{Acknowledgments}
JPL acknowledges support from Quantum Science and Technology--National Science and Technology Major Project under Grant No.~2024ZD0300500, Excellent Young Scientists Fund Program, start-up funding from Tsinghua University and Beijing Institute of Mathematical Sciences and Applications.

YQW acknowledges the assistance of AIM~\cite{AIM2025}, an AI research agent. All mathematical statements, proofs, algorithms, complexity estimates, citations, and final manuscript content were independently reviewed, verified, and approved by \emph{the human authors}, who take full responsibility for the work.

\appendix
\section{Minimal block-encoding calculus used in the proofs}\label{app:be-calculus}
For completeness we record the basic rules and standard gadgets used repeatedly in the main text.

\subsection{Basic rules}

\begin{lemma}[Product rule]\label{lem:product-rule}
If $U_X$ is an $(\nu_X,a_X,0)$-block-encoding of $X$ and $U_Y$ is an $(\nu_Y,a_Y,0)$-block-encoding of $Y$, then $(U_X\otimes \I)(U_Y\otimes \I)$ can be arranged as a $(\nu_X\nu_Y, a_X+a_Y,0)$-block-encoding of $XY$.
\end{lemma}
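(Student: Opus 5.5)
We prove \cref{lem:product-rule} directly from the definition of block-encoding by letting the two unitaries act on disjoint ancilla registers that share the system register. Let the full register be $\ket{\cdot}_{a_X}\otimes\ket{\cdot}_{a_Y}\otimes\ket{\cdot}_{s}$. We interpret $U_X\otimes\I$ as $U_X$ acting on the $a_X$ ancillas and the system, with identity on the $a_Y$ ancillas. Symmetrically, $U_Y\otimes\I$ acts on the $a_Y$ ancillas and the system, with identity on the $a_X$ ancillas. This ordering requires a fixed qubit permutation, which does not change the projected block. The phrase ``can be arranged'' in the statement refers to exactly this relabeling.

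The key computation is to compress $W:=(U_X\otimes\I_{a_Y})(U_Y\otimes\I_{a_X})$ with $\bra{0^{a_X}}\bra{0^{a_Y}}\otimes\I$ and $\ket{0^{a_X}}\ket{0^{a_Y}}\otimes\I$. First, $U_Y\otimes\I_{a_X}$ acts trivially on the $a_X$ register. It therefore leaves $\ket{0^{a_X}}$ unchanged. After applying $\bra{0^{a_Y}}$ on the left and $\ket{0^{a_Y}}$ on the right, it contributes the operator $\I_{a_X}$-trivially-tensored with $(\bra{0^{a_Y}}\otimes\I)U_Y(\ket{0^{a_Y}}\otimes\I)=Y/\nu_Y$.

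The subtle point is that the middle of the product is not projected onto $\ket{0^{a_Y}}$. We check that no projection is needed there. $U_X\otimes\I_{a_Y}$ acts as the identity on the $a_Y$ register, so $\bra{0^{a_Y}}(U_X\otimes\I_{a_Y})=U_X\,\bra{0^{a_Y}}$ as operators. Thus the outer $\bra{0^{a_Y}}$ commutes through $U_X$ and meets $U_Y$'s output directly. Symmetrically, the inner $\ket{0^{a_X}}$ passes through $U_Y\otimes\I_{a_X}$ unchanged and feeds $U_X$.

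The compressed operator therefore factors as
\[
\bigl[(\bra{0^{a_X}}\otimes\I)U_X(\ket{0^{a_X}}\otimes\I)\bigr]\bigl[(\bra{0^{a_Y}}\otimes\I)U_Y(\ket{0^{a_Y}}\otimes\I)\bigr]=\frac{X}{\nu_X}\cdot\frac{Y}{\nu_Y}.
\]
Multiplying by $\nu_X\nu_Y$ gives exactly $XY$ with zero error. The ancilla count is $a_X+a_Y$.

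The main obstacle is only the bookkeeping in the middle step, namely the commutation $\bra{0^{a_Y}}(U_X\otimes\I)=U_X\bra{0^{a_Y}}$ and its mirror image. I would write this out once in a slot-by-slot form, for example by expanding in the computational basis of the untouched register, to make it transparent.
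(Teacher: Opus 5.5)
Your proof is correct. It is the standard disjoint-ancilla argument: because $\bra{0^{a_Y}}$ passes through $U_X\otimes\I$ and lands directly on the output of $U_Y$, the product compresses to exactly $(X/\nu_X)(Y/\nu_Y)$ with no intermediate projection. The paper records this lemma in the appendix without a proof, as a standard block-encoding fact, so there is no different route in the paper to compare against.
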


\begin{lemma}[Two-term LCU rule]\label{lem:sum-rule}
Let $U_X$ be a $(\nu_X,a_X,0)$-block-encoding of $X$ and $U_Y$ a  $(\nu_Y,a_Y,0)$-block-encoding of $Y$.
If
\[
\lambda \ge \abs{c_X}\nu_X + \abs{c_Y}\nu_Y,
\]
then a standard two-term LCU construction yields a $(\lambda, a_X+a_Y+O(1),0)$-block-encoding of $c_X X + c_Y Y$.
\end{lemma}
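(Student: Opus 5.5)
The plan is to build the block-encoding of $c_X X + c_Y Y$ from a one-qubit selector register that prepares the coefficient amplitudes, a controlled select of $U_X$ and $U_Y$, and an unprepare step. First I write $c_X=\abs{c_X}e^{i\phi_X}$ and $c_Y=\abs{c_Y}e^{i\phi_Y}$, and absorb the phases into controlled global-phase gates on the selected branch. With this, only nonnegative weights $w_X:=\abs{c_X}\nu_X$ and $w_Y:=\abs{c_Y}\nu_Y$ remain.

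Next I pad both oracles to a common ancilla register of size $a_X+a_Y$, so each acts trivially on the other's ancillas. Let $P$ be a single-qubit unitary with
\[
P\ket{0}=\sqrt{w_X/\lambda}\,\ket{0}+\sqrt{w_Y/\lambda}\,\ket{1}+\sqrt{1-(w_X+w_Y)/\lambda}\,\ket{2}.
\]
Since $\lambda\ge w_X+w_Y$ is allowed to be strict, I use a qutrit here, or equivalently two qubits with one extra "discard" branch on which the select acts as the identity times a flag that lands outside $\ket{0}$. The select operation is
\[
\mathrm{SEL}=\ket{0}\!\bra{0}\otimes e^{i\phi_X}U_X+\ket{1}\!\bra{1}\otimes e^{i\phi_Y}U_Y+\ket{2}\!\bra{2}\otimes \I.
\]

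I then compute $(\bra{0}P^\dagger\otimes\bra{0^{a}}\otimes\I)\,\mathrm{SEL}\,(P\ket{0}\otimes\ket{0^{a}}\otimes\I)$. The $X$ and $Y$ branches give
\[
\frac{w_X}{\lambda}e^{i\phi_X}\frac{X}{\nu_X}+\frac{w_Y}{\lambda}e^{i\phi_Y}\frac{Y}{\nu_Y}=\frac{c_XX+c_YY}{\lambda}.
\]
The discard branch would contribute $(1-(w_X+w_Y)/\lambda)\I$, so the main obstacle is preventing this spurious identity term when $\lambda>w_X+w_Y$. I handle it by flipping a fresh flag ancilla exactly on branch $\ket{2}$, which kills that branch under the $\bra{0}$ projection on the flag. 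Alternatively I can put the surplus weight on two branches carrying $+\I$ and $-\I$ so they cancel.

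This uses $O(1)$ extra ancillas beyond $a_X+a_Y$, is exact, and gives normalization $\lambda$. That establishes the $(\lambda,a_X+a_Y+O(1),0)$ claim.
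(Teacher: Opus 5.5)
Your proposal is correct and is essentially the standard PREP--SELECT--UNPREP two-term LCU construction that the paper invokes; the paper states this lemma without a separate proof. Your explicit handling of the slack case $\lambda > |c_X|\nu_X+|c_Y|\nu_Y$ is a clean addition. Routing the surplus amplitude to a flagged discard branch is equivalent to the usual trick of multiplying by a one-qubit block-encoding of the scalar $(|c_X|\nu_X+|c_Y|\nu_Y)/\lambda$.
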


\begin{lemma}[Error for the product of approximate operators]\label{lem:approx-product}
If $\widehat X$ and $\widehat Y$ satisfy
\[
\norm{\widehat X-X}\le \delta_X,
\qquad
\norm{\widehat Y-Y}\le \delta_Y,
\]
then
\[
\norm{\widehat X\widehat Y - XY} \le \delta_X\norm{Y} + \norm{X}\delta_Y + \delta_X\delta_Y.
\]
\end{lemma}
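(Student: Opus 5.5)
The estimate is deterministic, so I would argue directly through the triangle inequality, with no block-encoding machinery. Write $\widehat X=X+\Delta_X$ and $\widehat Y=Y+\Delta_Y$ with $\Delta_X:=\widehat X-X$ and $\Delta_Y:=\widehat Y-Y$. By hypothesis $\norm{\Delta_X}\le\delta_X$ and $\norm{\Delta_Y}\le\delta_Y$.

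The first step is to expand the product algebraically:
\[
\widehat X\widehat Y-XY=\Delta_X Y+X\Delta_Y+\Delta_X\Delta_Y .
\]
This is an exact identity. It follows from multiplying out $(X+\Delta_X)(Y+\Delta_Y)$ and cancelling the $XY$ term, and it does not use commutativity.

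The second step applies the triangle inequality to the three terms, and then submultiplicativity of the operator norm to each of them. This gives
\[
\norm{\widehat X\widehat Y-XY}\le\norm{\Delta_X}\norm{Y}+\norm{X}\norm{\Delta_Y}+\norm{\Delta_X}\norm{\Delta_Y}.
\]
Substituting the bounds $\delta_X$ and $\delta_Y$ then yields the claimed inequality.

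I expect no real obstacle here. The only thing to check is that the operator norm in use, which is the one fixed throughout the paper, is submultiplicative, and it is.
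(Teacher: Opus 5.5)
Your proposal is correct and is the same argument as the paper. The paper expands $\widehat X\widehat Y-XY=(\widehat X-X)Y+X(\widehat Y-Y)+(\widehat X-X)(\widehat Y-Y)$ and then applies the triangle inequality and submultiplicativity of the operator norm, just as you do.
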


\begin{proof}
Expand
\[
\widehat X\widehat Y - XY = (\widehat X-X)Y + X(\widehat Y-Y) + (\widehat X-X)(\widehat Y-Y)
\]
and use submultiplicativity.
\end{proof}

\subsection{Standard multiplexed gadgets}

\begin{lemma}[Diagonal contractions on the index register]\label{lem:diag-contraction}
Let $J$ be a finite index set, and let
\[
D=\sum_{j\in J}\delta_j\ket{j}\!\bra{j}
\]
with $0\le \delta_j\le 1$ for all $j$.
Then $D$ admits an exact $(1,O(1),0)$-block-encoding.
\end{lemma}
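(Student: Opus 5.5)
The plan is to reduce the contraction $D$ to a controlled single-qubit rotation acting on a fresh ancilla, driven by the index register. Because each $\delta_j$ lies in $[0,1]$, we can set $\theta_j := \arccos \delta_j \in [0,\pi/2]$. The unitary is then
\[
U_D := \sum_{j\in J}\ket{j}\!\bra{j}\otimes R(\theta_j),
\qquad
R(\theta):=\begin{bmatrix}\cos\theta & -\sin\theta\\ \sin\theta & \cos\theta\end{bmatrix}.
\]
Here the rotation $R(\theta_j)$ acts on a single ancilla qubit. Since $U_D$ is a direct sum of unitaries, it is unitary.

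The first step is to check the encoded block. For each $j$,
\[
\bra{0}R(\theta_j)\ket{0}=\cos\theta_j=\delta_j,
\]
which gives
\[
(\bra{0}\otimes \I)U_D(\ket{0}\otimes\I)=\sum_j \delta_j\ket{j}\!\bra{j}=D
\]
exactly. In the paper's convention the ancilla sits first, so I would reorder the tensor factors to match; this only relabels wires. The result is a $(1,1,0)$-block-encoding, so the ancilla count is $O(1)$ and the error is zero.

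The second step is implementability. $U_D$ is a uniformly controlled rotation: a multiplexed $R_y$ gate whose angles are the classically known $\theta_j$. It can be built by the standard decomposition into single-qubit rotations and CNOTs, or equivalently by a sequence of $|J|$ multi-controlled rotations, each conditioned on the index register being equal to $j$. Indices outside $J$, such as padding states, are assigned $\theta=\pi/2$ (the identity would also do) and leave the active blocks unchanged.

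I do not expect a real obstacle here. The only care needed is bookkeeping: the rotation angles must be computed classically from the profile, and the lemma claims no query cost, only an exact encoding with $O(1)$ ancillas.
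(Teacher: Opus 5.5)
Your proposal is correct and follows essentially the same route as the paper. Both use a single ancilla qubit and an index-controlled rotation sending $\ket{0}$ to $\delta_j\ket{0}+\sqrt{1-\delta_j^2}\ket{1}$, so the $\ket{0}$-block of the controlled unitary is exactly $D$. Your remarks on the uniformly controlled rotation decomposition, wire ordering, and padding indices are harmless bookkeeping that the paper's proof leaves implicit.
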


\begin{proof}
Prepare one ancilla qubit and, conditioned on the index state $\ket{j}$, apply the one-qubit rotation that maps
\[
\ket{0}\longmapsto \delta_j\ket{0}+\sqrt{1-\delta_j^2}\ket{1}.
\]
The top-left block of the resulting controlled unitary is exactly $D$.
\end{proof}

\begin{lemma}[Multiplexed one-matrix shift gadget]\label{lem:mux-one-matrix}
Let $U_T\in \BE(1,a_T,0)$ be a block-encoding of $T$.
For each $j$ in a finite index set $J$, let $\alpha_j,\beta_j\in\C$ satisfy
\[
\abs{\alpha_j}+\abs{\beta_j}\le 1.
\]
Then the direct sum
\[
M:=\sum_{j\in J}\ket{j}\!\bra{j}\otimes\paren{\alpha_j T+\beta_j \I}
\]
admits a $(1,a_T+O(1),0)$-block-encoding using $O(1)$ queries to $U_T$.
\end{lemma}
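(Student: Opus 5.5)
The plan is to build the circuit as a controlled-LCU (``select'') construction whose top-left block, once all ancillas are post-selected on $\ket{0}$, is exactly $\sum_j\ket{j}\!\bra{j}\otimes(\alpha_jT+\beta_j\I)$. All $\ket{j}$-dependence is pushed into index-controlled single-qubit state preparations. The matrix oracle $U_T$ is used in an uncontrolled way (or controlled only by one fixed selector qubit). This keeps the query count at $O(1)$, independent of $|J|$.

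\textbf{Step 1 (coefficient register).} Introduce two ancilla qubits $b,c$. Conditioned on $\ket{j}$, apply a two-qubit unitary $P_j$ on $bc$ with
\[
P_j\ket{00}=\sqrt{|\alpha_j|}\,\ket{00}+\sqrt{|\beta_j|}\,\ket{01}+\sqrt{1-|\alpha_j|-|\beta_j|}\,\ket{10}.
\]
This is well defined because $\abs{\alpha_j}+\abs{\beta_j}\le 1$. On the unprepare side, use an index-controlled $Q_j$ with
\[
\bra{00}Q_j=e^{i\arg\alpha_j}\sqrt{|\alpha_j|}\,\bra{00}+e^{i\arg\beta_j}\sqrt{|\beta_j|}\,\bra{01}+0\cdot\bra{10}+\cdots.
\]
Any unitary extension will do, and a row of norm at most one can always be completed. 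These index-controlled gates act only on the index register and $bc$, so they cost no oracle queries.

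\textbf{Step 2 (select).} Between $P$ and $Q^\dagger$, apply
\[
W=\ket{00}\!\bra{00}\otimes U_T+\ket{01}\!\bra{01}\otimes\I+\ket{10}\!\bra{10}\otimes Z_{\mathrm{flag}},
\]
where $Z_{\mathrm{flag}}$ is arbitrary; the $\ket{10}$ branch is simply discarded. This is one use of controlled $U_T$, on the $a_T$ ancillas and the system.

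\textbf{Step 3 (compute the block).} Compress with all ancillas ($bc$ plus the $a_T$ oracle ancillas) projected onto $\ket{0}$. For fixed $j$, the compressed operator is $\sum_{s}\overline{q_{j,s}}\,p_{j,s}\,\langle\text{branch }s\rangle$. The $00$ branch contributes $e^{i\arg\alpha_j}|\alpha_j|\,T=\alpha_jT$, the $01$ branch contributes $\beta_j\I$, and the $10$ branch contributes $0$. Because every gate is block diagonal in $\ket{j}$ (index-controlled or index-independent), the full compression equals the direct sum $M$ exactly. The ancilla count is $a_T+2=a_T+O(1)$ with zero error, and only one query is used.

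The main point to verify carefully is that placing the phases in $Q_j$ and the slack amplitude in an unused branch reproduces $\alpha_j,\beta_j$ exactly rather than merely up to normalization. This is a conjugate bookkeeping check on $\bra{00}Q_j^\dagger$ versus $Q_j\ket{00}$. If the convention flips, conjugate the phases.
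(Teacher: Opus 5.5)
Your construction is correct and is the same controlled two-term LCU argument the paper gives. The paper's proof just says to form the unit-normalized LCU of $\alpha_jT+\beta_j\I$ for each $j$ and control it on the index register; your version makes explicit that all $j$-dependence sits in the index-controlled prepare/unprepare gates, so $U_T$ is queried once regardless of $|J|$. Two cosmetic fixes remain: add a $\ket{11}\!\bra{11}\otimes\I$ term to $W$ so that it is unitary (that branch receives zero amplitude from $P_j\ket{00}$ anyway), and define the final gate directly by its $\bra{00}$ row, which removes the $Q_j$ versus $Q_j^\dagger$ ambiguity you flagged.
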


\begin{proof}
For each fixed $j$, the two-term LCU rule yields a unit-normalized block-encoding of $\alpha_j T+\beta_j\I$ because the coefficient sum is at most one.
Controlling this construction on the index register yields the stated direct-sum block-encoding.
\end{proof}

\begin{lemma}[Multiplexed two-matrix shift gadget]\label{lem:mux-two-matrix}
Let $U_T\in \BE(1,a_T,0)$ and $U_S\in \BE(1,a_S,0)$ be block-encodings of $T$ and $S$.
For each $j$ in a finite index set $J$, let $\alpha_j,\beta_j\in\C$ satisfy
\[
\abs{\alpha_j}+\abs{\beta_j}\le 1.
\]
Then the direct sum
\[
M:=\sum_{j\in J}\ket{j}\!\bra{j}\otimes\paren{\alpha_j T+\beta_j S}
\]
admits a $(1,a_T+a_S+O(1),0)$-block-encoding using $O(1)$ queries each to $U_T$ and $U_S$.
\end{lemma}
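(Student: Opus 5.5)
The plan is to proceed exactly as in \cref{lem:mux-one-matrix}, replacing the identity branch by the second oracle. For a fixed index $j$, write $\alpha_j=\abs{\alpha_j}e^{i\phi_j}$ and $\beta_j=\abs{\beta_j}e^{i\psi_j}$, and set $s_j:=\abs{\alpha_j}+\abs{\beta_j}\le 1$. We introduce one selector qubit and an index-controlled state-preparation unitary $P$ acting on the selector register. Conditioned on $\ket{j}$, $P$ maps $\ket{0}$ to $\sqrt{\abs{\alpha_j}}\ket{0}+\sqrt{\abs{\beta_j}}\ket{1}+\sqrt{1-s_j}\ket{2}$, with a third basis state padding the weight deficit; this requires two qubits, hence $O(1)$ ancillas. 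We then apply a SELECT unitary. It applies $e^{i\phi_j}U_T$ on selector value $0$ and $e^{i\psi_j}U_S$ on value $1$. On value $2$ it applies a unitary whose relevant block vanishes, for instance an $X$ gate on a fresh flag qubit. The phases are diagonal controlled on $\ket{j}$ and cost no queries.

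To make both oracles act on a common ancilla space, we give $U_T$ the register $a_T$ and $U_S$ the register $a_S$. Each oracle acts as the identity on the other's register, and that register stays in $\ket{0}$, so the total ancilla count is $a_T+a_S+O(1)$.

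Next we compute the block $(\bra{0}\otimes\I)\,P^\dagger\,\mathrm{SELECT}\,P\,(\ket{0}\otimes\I)$. Conditioned on $\ket{j}$, this evaluates to $\abs{\alpha_j}e^{i\phi_j}T+\abs{\beta_j}e^{i\psi_j}S=\alpha_jT+\beta_jS$, because the padding branch contributes zero. Since $P$ and SELECT are both block-diagonal in the index register, the full circuit block-encodes the direct sum $M$ with normalization $1$ and no error.

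The controlled versions of $U_T$ and $U_S$ are available by \cref{ass:be} (with the same assumption holding for $S$). Each is used once, so the query count is $O(1)$ per oracle.

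The only delicate point is making sure the normalization is exactly $1$ uniformly in $j$ even when $s_j<1$. The padding branch handles this. Alternatively, one can choose $\sqrt{\abs{\alpha_j}/s_j}$ and $\sqrt{\abs{\beta_j}/s_j}$ amplitudes and append the diagonal contraction $\sum_j s_j\ket{j}\!\bra{j}$ from \cref{lem:diag-contraction} via the product rule \cref{lem:product-rule}.
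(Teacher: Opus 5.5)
Your proposal is correct and follows the same route as the paper: a two-term LCU for each fixed $j$, with padding so that the normalization is exactly one, multiplexed over the index register. Your explicit PREP--SELECT form has only the state preparation and phases depending on $j$, while each oracle is called once independently of $j$. This makes transparent the $O(1)$ query count that the paper's one-line ``control on the index register'' leaves implicit.
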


\begin{proof}
For each fixed $j$, the two-term LCU rule gives a unit-normalized block-encoding of $\alpha_j T+\beta_j S$ because the coefficient sum is at most one.
Controlling that block-encoding on the index register yields the desired multiplexed direct sum.
\end{proof}

\section{Interpretation of the FoV gap in transformed coordinates}\label{app:mu-conditioning}
In the main text, the FoV gap $\mu$ is attached to the \emph{transformed} pair $(A,B)$ obtained after an optimal shift-rotation and a common positive scaling chosen so that $\norm{M}\le 1$. Therefore the meaningful comparison is not with the raw gap of the original coefficients, but with the conditioning of the Sylvester operator in those transformed coordinates. The point of this appendix is as follow: First, $1/\mu$ always gives a rigorous upper bound on the Euclidean condition number of the vectorized Sylvester operator. Second, once the preprocessing is taken into account, the only two mechanisms that can make $1/\mu$ much larger than that condition number are transparent: either the final normalization is driven mainly by the off-diagonal block $C$, or the FoV gap is much smaller than the true singular gap of the vectorized operator.

\subsection{Transformed coordinates and invariances}
Let
\[
K_{A,B}:=\I_m\otimes A + B^\top\otimes \I_n
\]
denote the vectorized Sylvester operator, so that
\[
\operatorname{vec}(AX+XB)=K_{A,B}\operatorname{vec}(X).
\]
For the original problem data $(A_0,B_0,C_0)$, define
\[
K_0:=\I_m\otimes A_0 + B_0^\top\otimes \I_n,
\qquad
\delta:=\dist\paren{W(A_0),-W(B_0)}.
\]
Let $(\eta,\omega)$ be the optimal shift-rotation from \cref{thm:fov-certificate}, and define
\[
\widehat A:=\eta(A_0-\omega\I),
\qquad
\widehat B:=\eta(B_0+\omega\I),
\qquad
\widehat C:=\eta C_0,
\]
\[
\widehat M:=
\begin{bmatrix}
\widehat A & \widehat C\\
0 & -\widehat B
\end{bmatrix},
\qquad
\widehat K:=\I_m\otimes \widehat A + \widehat B^\top\otimes \I_n.
\]
To maximize the admissible FoV gap under the normalization constraint $\norm{M}\le 1$, it is natural to choose the \emph{maximal} common scaling
\begin{equation}\label{eq:mu-maximal-scaling}
\lambda:=\frac{1}{\norm{\widehat M}}.
\end{equation}
Set
\[
A:=\lambda\widehat A,
\qquad
B:=\lambda\widehat B,
\qquad
C:=\lambda\widehat C,
\qquad
M:=\lambda\widehat M.
\]
Then $\norm{M}=1$ and, by \cref{thm:fov-certificate},
\begin{equation}\label{eq:mu-delta-Mhat}
H(A)\succeq \mu\I,
\qquad
H(B)\succeq \mu\I,
\qquad
\mu=\frac{\lambda\delta}{2}=\frac{\delta}{2\norm{\widehat M}}.
\end{equation}

\begin{proposition}[Invariance of the vectorized Sylvester operator under preprocessing]\label{prop:mu-kappa-transformed}
With the notation above,
\begin{equation}\label{eq:Khat-K0}
\widehat K=\eta K_0,
\qquad
K_{A,B}=\lambda\eta K_0.
\end{equation}
Consequently,
\begin{equation}\label{eq:kappa-invariant}
\kappa_2(K_{A,B})=\kappa_2(\widehat K)=\kappa_2(K_0),
\end{equation}
and
\begin{equation}\label{eq:Knorm-scaled}
\norm{K_{A,B}}_2=\lambda\norm{K_0}_2=\frac{\norm{K_0}_2}{\norm{\widehat M}}.
\end{equation}
In particular, $1/\mu$ should be interpreted as the explicit size-over-gap ratio
\begin{equation}\label{eq:mu-size-gap}
\frac{1}{\mu}=\frac{2\norm{\widehat M}}{\delta}
\end{equation}
in the optimally shift-rotated coordinates, whereas $\kappa_2(K_0)$ is invariant under the entire preprocessing.
\end{proposition}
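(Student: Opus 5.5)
The plan is a direct computation with Kronecker products, followed by the scale- and phase-invariance of the condition number. First I will expand $\widehat K$ from the definitions. Since $\widehat A=\eta(A_0-\omega\I_n)$ and $\widehat B^\top=\eta(B_0^\top+\omega\I_m)$, bilinearity of the Kronecker product gives
\[
\widehat K=\eta\bigl(\I_m\otimes A_0-\omega\,\I_m\otimes\I_n+B_0^\top\otimes\I_n+\omega\,\I_m\otimes\I_n\bigr)=\eta K_0 .
\]
The two $\omega$-terms cancel because $\I_m\otimes\I_n=\I_{mn}$. This cancellation is exactly the vectorized form of the invariance in \cref{lem:shift-rotation}. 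Next, $A=\lambda\widehat A$ and $B=\lambda\widehat B$ together with linearity give $K_{A,B}=\lambda\widehat K=\lambda\eta K_0$, which is \eqref{eq:Khat-K0}.

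For \eqref{eq:kappa-invariant}, I will use that multiplying a matrix by a nonzero scalar $c=\lambda\eta$ multiplies every singular value by $\abs{c}$. Hence both $\norm{cK_0}_2$ and $\norm{(cK_0)^{-1}}_2^{-1}=\sigma_{\min}(cK_0)$ scale by $\abs{c}$, and the ratio is unchanged. I also need invertibility of $K_0$ for the condition number to be finite. This follows from $\delta>0$: the spectra lie inside the numerical ranges, so $\spec(A_0)\cap(-\spec(B_0))=\varnothing$. If $K_0$ were singular, both sides of the identity are $+\infty$ anyway. Since $\abs{\eta}=1$ and $\lambda>0$, we have $\abs{\lambda\eta}=\lambda$, so $\norm{K_{A,B}}_2=\lambda\norm{K_0}_2$. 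Substituting the maximal scaling \eqref{eq:mu-maximal-scaling}, $\lambda=1/\norm{\widehat M}$, gives \eqref{eq:Knorm-scaled}.

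Finally, \eqref{eq:mu-size-gap} is the reciprocal of \eqref{eq:mu-delta-Mhat}. The relation $\mu=\lambda\delta/2$ itself comes from \cref{thm:fov-certificate} applied to $(\widehat A,\widehat B)$: positive scaling by $\lambda$ multiplies the Hermitian parts by $\lambda$, which turns the margin $\delta/2$ into $\lambda\delta/2$. I expect no real obstacle here; the only points needing care are the cancellation of the $\omega$ terms through $\I_m\otimes\I_n=\I_{mn}$ and the remark that $\kappa_2$ is invariant under unimodular phases as well as positive scalings.
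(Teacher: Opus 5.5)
Your proposal is correct and follows the paper's proof: the same Kronecker expansion in which the $\omega$-terms cancel, the invariance of $\kappa_2$ under multiplication by the nonzero scalar $\lambda\eta$ (with $\abs{\lambda\eta}=\lambda$), and reading \eqref{eq:mu-size-gap} off as the reciprocal of \eqref{eq:mu-delta-Mhat}. Your extra remark that $\delta>0$ forces $\spec(A_0)\cap(-\spec(B_0))=\varnothing$, and hence that $K_0$ is invertible, is a harmless detail the paper leaves implicit; one small wording fix is that \cref{thm:fov-certificate} is applied to $(A_0,B_0)$, and $(\widehat A,\widehat B)$ is its output.
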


\begin{proof}
Using the cancellation of the shift in the Sylvester operator,
\begin{align*}
\widehat K
&=
\I_m\otimes \eta(A_0-\omega\I)+\eta(B_0+\omega\I)^\top\otimes \I_n\\
&=
\eta\paren{\I_m\otimes A_0 + B_0^\top\otimes \I_n}
=
\eta K_0.
\end{align*}
Multiplying by the common scaling $\lambda$ gives $K_{A,B}=\lambda\widehat K=\lambda\eta K_0$, which proves \eqref{eq:Khat-K0}. Since multiplication by a nonzero scalar does not change the $2$-norm condition number, \eqref{eq:kappa-invariant} follows immediately. Equation~\eqref{eq:Knorm-scaled} is just the norm identity coming from $K_{A,B}=\lambda\eta K_0$. The formula \eqref{eq:mu-size-gap} is exactly \eqref{eq:mu-delta-Mhat} with the maximal scaling \eqref{eq:mu-maximal-scaling}.
\end{proof}

\subsection{Relation with $\sepop(A,-B)$}

\begin{proposition}[FoV gap versus Sylvester separation]\label{prop:mu-sep}
Assume
\[
H(A)\succeq \mu \I,
\qquad
H(B)\succeq \mu \I
\]
for some $\mu>0$.
Then the Sylvester solution operator satisfies
\begin{equation}\label{eq:mu-sep-lower}
\norm{S_{A,B}^{-1}} \le \frac{1}{2\mu},
\qquad
\sepop(A,-B)\ge 2\mu.
\end{equation}
If, in addition, $\norm{A}\le 1$ and $\norm{B}\le 1$, then
\begin{equation}\label{eq:mu-sep-upper}
\sepop(A,-B)\le \norm{A}+\norm{B}\le 2.
\end{equation}
Hence, in the normalized FoV regime used in the main text,
\begin{equation}\label{eq:mu-sep-window}
2\mu \le \sepop(A,-B)\le 2.
\end{equation}
Moreover, in the transformed-coordinate setup above one has
\begin{equation}\label{eq:sep-original-delta}
\sepop(A_0,-B_0)\ge \delta.
\end{equation}
\end{proposition}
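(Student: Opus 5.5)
The plan is to prove the three parts of \cref{prop:mu-sep} in order: the lower bound \eqref{eq:mu-sep-lower} by a numerical-range argument on the Sylvester operator itself, the upper bound \eqref{eq:mu-sep-upper} by a test matrix, and \eqref{eq:sep-original-delta} by transporting the lower bound through the preprocessing invariance of \cref{prop:mu-kappa-transformed}.

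\textbf{Lower bound.} I will work with the operator-norm separation defined in \eqref{eq:sep}. Fix $X\neq0$ and set $Y=AX+XB$. The aim is $\norm{Y}\ge2\mu\norm{X}$.

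Choose unit vectors $u,v$ with $Xv=\norm{X}u$ and $X^*u=\norm{X}v$, that is, a top singular pair. Then
\[
u^*Yv=u^*AXv+u^*XBv=\norm{X}\,\bigl(u^*Au+v^*Bv\bigr).
\]
Taking real parts gives
\[
\RePart(u^*Yv)=\norm{X}\,\bigl(u^*H(A)u+v^*H(B)v\bigr)\ge2\mu\norm{X}.
\]
Since $\abs{u^*Yv}\le\norm{Y}$, this yields $\norm{Y}\ge2\mu\norm{X}$. Hence $\sepop(A,-B)\ge2\mu$ and $\norm{S_{A,B}^{-1}}\le1/(2\mu)$.

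The singular-pair trick is the one step to check carefully, because it is what makes the operator norm (rather than the Frobenius norm) work. As a fallback, the Frobenius norm is easy via $K_{A,B}$: the Hermitian part of $K_{A,B}$ is $\I\otimes H(A)+\overline{H(B)}\otimes\I\succeq2\mu\I$, which combined with \cref{lem:accretive-res} gives the Frobenius analogue.

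\textbf{Upper bound.} Take any nonzero $X$, for instance a rank-one $X=xy^*$. Then
\[
\norm{AX+XB}\le(\norm{A}+\norm{B})\norm{X},
\]
so the minimum in \eqref{eq:sep} is at most $\norm{A}+\norm{B}\le2$. Combining this with the lower bound gives the window \eqref{eq:mu-sep-window}.

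\textbf{Original coordinates.} Use \eqref{eq:mu-delta-Mhat}: for the optimally shift-rotated pair, \cref{thm:fov-certificate} gives $H(\widehat A),H(\widehat B)\succeq(\delta/2)\I$. Applying the lower bound with $\mu=\delta/2$ gives $\sepop(\widehat A,-\widehat B)\ge\delta$.

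To return to the original data, note that
\[
\widehat A X+X\widehat B=\eta(A_0X+XB_0),
\]
because the shifts cancel, exactly as in \cref{lem:shift-rotation} and \eqref{eq:Khat-K0}. Since $\abs{\eta}=1$, the norms agree, so $\sepop(A_0,-B_0)=\sepop(\widehat A,-\widehat B)\ge\delta$.
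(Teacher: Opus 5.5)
Your proof is correct, and for the main step, the lower bound \eqref{eq:mu-sep-lower}, it takes a genuinely different route from the paper.

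The paper builds the solution explicitly as $X=\int_0^\infty e^{-tA}Ce^{-tB}\,dt$. It invokes the dissipativity estimate $\norm{e^{-tA}}\le e^{-\mu t}$, quoted without proof, to get $\norm{X}\le\norm{C}/(2\mu)$, and then differentiates $e^{-tA}Ce^{-tB}$ to verify $AX+XB=C$. That construction delivers existence, a representation formula and the inverse bound together. Its constructive surjectivity is also what one would need beyond finite dimensions, where a lower bound on $S_{A,B}$ alone does not yield invertibility.

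Your top-singular-pair identity $u^*(AX+XB)v=\norm{X}\,(u^*Au+v^*Bv)$ is purely algebraic. It needs no semigroup estimate or convergence argument, and it works directly in the spectral norm used in \eqref{eq:sep}. You should state one thing explicitly: $\norm{S_{A,B}X}\ge 2\mu\norm{X}$ makes $S_{A,B}$ injective, hence invertible in finite dimensions, so $\norm{S_{A,B}^{-1}}=1/\sepop(A,-B)$ is meaningful. Your Frobenius fallback via $K_{A,B}$ would give the operator-norm bound only up to a dimension-dependent factor, but since the singular-pair step is valid you do not need it.

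The upper bound is the same as the paper's. For \eqref{eq:sep-original-delta} you apply the lower bound to the unscaled pair $\widehat A,\widehat B$ with margin $\delta/2$, rather than using $S_{A,B}=\lambda\eta S_{A_0,B_0}$ and $2\mu=\lambda\delta$ in scaled coordinates. The two are equivalent; yours simply avoids carrying $\lambda$.
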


\begin{proof}
Fix $C$ and define
\[
X := \int_0^\infty e^{-tA} C e^{-tB}\,dt.
\]
Because $H(A)\succeq \mu\I$ and $H(B)\succeq \mu\I$, one has
\[
\norm{e^{-tA}}\le e^{-\mu t},
\qquad
\norm{e^{-tB}}\le e^{-\mu t}
\qquad (t\ge 0),
\]
so the integral converges absolutely in operator norm and
\[
\norm{X}
\le
\int_0^\infty \norm{e^{-tA}}\,\norm{C}\,\norm{e^{-tB}}\,dt
\le
\norm{C}\int_0^\infty e^{-2\mu t}\,dt
=
\frac{\norm{C}}{2\mu}.
\]
Now set
\[
F(t):=e^{-tA}Ce^{-tB}.
\]
Then
\[
F'(t)=-Ae^{-tA}Ce^{-tB}-e^{-tA}Ce^{-tB}B,
\]
and therefore
\[
AX+XB
=
-\int_0^\infty F'(t)\,dt
=
F(0)-\lim_{T\to\infty}F(T)
=
C,
\]
because $\norm{F(T)}\le e^{-2\mu T}\norm{C}\to 0$.
Thus $X=S_{A,B}^{-1}(C)$ and
\[
\norm{S_{A,B}^{-1}(C)}\le \frac{\norm{C}}{2\mu}
\qquad
\text{for every }C,
\]
which proves $\norm{S_{A,B}^{-1}}\le 1/(2\mu)$ and hence \eqref{eq:mu-sep-lower}.

If $\norm{X}=1$, then
\[
\norm{AX+XB}\le \norm{A}\norm{X}+\norm{X}\norm{B}=\norm{A}+\norm{B}.
\]
Taking the infimum over all such $X$ gives \eqref{eq:mu-sep-upper}. This proves \eqref{eq:mu-sep-window}.

In the transformed-coordinate setup, the Sylvester operator itself transforms as
\[
S_{A,B}=\lambda\eta S_{A_0,B_0},
\]
so $\sepop(A,-B)=\lambda\sepop(A_0,-B_0)$. Combining this with \eqref{eq:mu-sep-lower} and \eqref{eq:mu-delta-Mhat} gives
\[
\lambda\sepop(A_0,-B_0)=\sepop(A,-B)\ge 2\mu=\lambda\delta,
\]
which is exactly \eqref{eq:sep-original-delta}.
\end{proof}

\begin{remark}
The inequality \eqref{eq:sep-original-delta} shows that the FoV gap $\delta$ is already a rigorous lower bound on the classical Sylvester separation of the original problem. The parameter $\mu$ is simply the normalized version of the same statement after the maximal admissible scaling is imposed.
\end{remark}

\subsection{Relation with the vectorized Kronecker operator}

Let $A\in\C^{n\times n}$ and $B\in\C^{m\times m}$, and define
\begin{equation}\label{eq:kronecker-sylvester}
K_{A,B}:=\I_m\otimes A + B^\top\otimes \I_n.
\end{equation}
Then
\[
\operatorname{vec}(AX+XB)=K_{A,B}\operatorname{vec}(X).
\]
Up to the usual vectorization convention, transpose, and perfect-shuffle permutation, one may equally use $A\otimes \I_m+\I_n\otimes B^\top$; all these realizations have the same singular values and therefore the same $2$-norm condition number.

\begin{proposition}[Universal FoV bound on Kronecker-sum conditioning]\label{prop:mu-kappa-kronecker}
Assume
\[
H(A)\succeq \mu \I,
\qquad
H(B)\succeq \mu \I
\]
for some $\mu>0$, and define
\[
\kappa_2(K_{A,B}):=\norm{K_{A,B}}_2\,\norm{K_{A,B}^{-1}}_2.
\]
Then
\begin{equation}\label{eq:kron-sigmamin}
\sigma_{\min}(K_{A,B})\ge 2\mu.
\end{equation}
Consequently,
\begin{equation}\label{eq:kron-kappa-bound}
\kappa_2(K_{A,B})\le \frac{\norm{K_{A,B}}_2}{2\mu}\le \frac{\norm{A}+\norm{B}}{2\mu}.
\end{equation}
In particular, under the normalized coordinates $\norm{A},\norm{B}\le 1$,
\begin{equation}\label{eq:kron-kappa-normalized}
\kappa_2(K_{A,B})\le \frac{1}{\mu}.
\end{equation}
Hence, in the transformed-coordinate setup of \cref{prop:mu-kappa-transformed},
\begin{equation}\label{eq:kappa-original-upper}
\kappa_2(K_0)=\kappa_2(K_{A,B})\le \frac{1}{\mu}.
\end{equation}
\end{proposition}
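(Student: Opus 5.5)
The plan is to bound $\sigma_{\min}(K_{A,B})$ by a numerical-range argument applied directly to $K_{A,B}$, without passing through any spectral decomposition. The key observation is that the Hermitian part of a Kronecker sum splits additively:
\[
H(K_{A,B})=\I_m\otimes H(A)+H(B^\top)\otimes \I_n .
\]
Here $H(B^\top)=(H(B))^\top$, since $(B^\top)^*=\overline{B}=(B^*)^\top$.

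The first step is to transfer the FoV gap to both summands. The matrix $H(B)^\top$ is Hermitian and has the same eigenvalues as $H(B)$, so it satisfies $H(B)^\top\succeq\mu\I$. Tensoring with the identity preserves positive semidefiniteness. Each summand is therefore $\succeq\mu\I_{nm}$, and so $H(K_{A,B})\succeq 2\mu\I$.

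The second step converts this into a singular-value bound, using the same reasoning as in \cref{lem:accretive-res}. For any unit vector $x$,
\[
\norm{K_{A,B}x}\ge \abs{x^*K_{A,B}x}\ge \RePart(x^*K_{A,B}x)=x^*H(K_{A,B})x\ge 2\mu .
\]
This yields \eqref{eq:kron-sigmamin}, and in particular shows that $K_{A,B}$ is invertible with $\norm{K_{A,B}^{-1}}_2\le 1/(2\mu)$.

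For \eqref{eq:kron-kappa-bound}, apply the triangle inequality together with $\norm{\I\otimes A}=\norm{A}$ and $\norm{B^\top\otimes\I}=\norm{B^\top}=\norm{B}$. This gives $\norm{K_{A,B}}_2\le\norm{A}+\norm{B}$, and multiplying by the inverse bound gives the claim. Under $\norm{A},\norm{B}\le1$ the right-hand side is at most $2/(2\mu)=1/\mu$, which is \eqref{eq:kron-kappa-normalized}. Finally, \eqref{eq:kappa-original-upper} follows by combining this with the invariance $\kappa_2(K_0)=\kappa_2(K_{A,B})$ from \cref{prop:mu-kappa-transformed}; the normalization $\norm{M}\le1$ in the transformed coordinates gives $\norm{A},\norm{B}\le1$.

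No step is a serious obstacle. The only point needing care is the transpose: one must confirm that $H(B^\top)=H(B)^\top$ and that transposition preserves the lower bound $\mu$. This holds because transposition of a Hermitian matrix preserves its eigenvalues. Alternatively, one can bypass the transpose entirely by using the equivalent realization $A\otimes\I+\I\otimes B^\top$, or by noting $W(B^\top)=W(B)$.
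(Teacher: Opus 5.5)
Your proposal is correct and follows essentially the same route as the paper: you split the Hermitian part as $\I_m\otimes H(A)+H(B)^\top\otimes\I_n\succeq 2\mu\I$, use the numerical-range/Cauchy--Schwarz chain to get $\sigma_{\min}(K_{A,B})\ge 2\mu$, and finish with the triangle inequality and the invariance from \cref{prop:mu-kappa-transformed}. Your two extra remarks only make explicit what the paper leaves implicit: that $H(B^\top)=H(B)^\top$ has the same spectrum as $H(B)$, and that $\norm{M}\le 1$ forces $\norm{A},\norm{B}\le 1$ in the transformed coordinates.
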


\begin{proof}
The Hermitian part of $K_{A,B}$ is
\[
H(K_{A,B})
=
\I_m\otimes H(A)+H(B)^\top\otimes \I_n
\succeq
2\mu \I_{mn}.
\]
Hence for every unit vector $u\in\C^{mn}$,
\[
2\mu
\le
u^*H(K_{A,B})u
=
\RePart\!\bigl(u^*K_{A,B}u\bigr)
\le
\abs{u^*K_{A,B}u}
\le
\norm{K_{A,B}u}_2.
\]
Taking the infimum over all unit vectors proves \eqref{eq:kron-sigmamin}.
Moreover,
\[
\norm{K_{A,B}}_2
\le
\norm{\I_m\otimes A}_2+\norm{B^\top\otimes \I_n}_2
=
\norm{A}_2+\norm{B}_2,
\]
so
\[
\kappa_2(K_{A,B})
=
\frac{\norm{K_{A,B}}_2}{\sigma_{\min}(K_{A,B})}
\le
\frac{\norm{A}+\norm{B}}{2\mu},
\]
which proves \eqref{eq:kron-kappa-bound}. The normalized bound \eqref{eq:kron-kappa-normalized} is immediate, and \eqref{eq:kappa-original-upper} then follows from \eqref{eq:kappa-invariant}.
\end{proof}

\begin{proposition}[A precise criterion for $\kappa_2(K)$ to be of order $1/\mu$]\label{prop:mu-kappa-same-order}
Adopt the transformed-coordinate setup of \cref{prop:mu-kappa-transformed}. Suppose the optimally shift-rotated pair additionally satisfies
\begin{equation}\label{eq:mu-kappa-compare-ass}
\sigma_{\min}(\widehat K)\le c_{\delta}\,\delta,
\qquad
\norm{\widehat K}_2\ge c_M\,\norm{\widehat M}
\end{equation}
for some constants $c_{\delta},c_M>0$.
Then
\begin{equation}\label{eq:mu-kappa-same-order}
\frac{c_M}{2c_{\delta}}\frac{1}{\mu}
\le
\kappa_2(K_{A,B})
=
\kappa_2(K_0)
\le
\frac{1}{\mu}.
\end{equation}
Hence $\kappa_2(K_{A,B})=\Theta(1/\mu)$ whenever \eqref{eq:mu-kappa-compare-ass} holds with absolute constants.
\end{proposition}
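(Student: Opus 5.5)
The upper bound $\kappa_2(K_{A,B})=\kappa_2(K_0)\le 1/\mu$ is already contained in \cref{prop:mu-kappa-kronecker} together with the invariance \eqref{eq:kappa-invariant} from \cref{prop:mu-kappa-transformed}. I will simply cite both results for that half. All the work is therefore in the lower bound, and the plan is to compute $\kappa_2$ in the shift-rotated but unscaled coordinates $\widehat K$. That is the right place to do it, because both hypotheses in \eqref{eq:mu-kappa-compare-ass} are phrased in terms of $\widehat K$, $\delta$ and $\norm{\widehat M}$.

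First, \eqref{eq:kappa-invariant} gives $\kappa_2(K_{A,B})=\kappa_2(\widehat K)=\norm{\widehat K}_2/\sigma_{\min}(\widehat K)$. Second, I plug in the two assumed inequalities: the numerator is at least $c_M\norm{\widehat M}$ and the denominator is at most $c_\delta\delta$. This yields
\[
\kappa_2(K_{A,B})\ge \frac{c_M}{c_\delta}\,\frac{\norm{\widehat M}}{\delta}.
\]
Third, I rewrite the ratio using the size-over-gap identity \eqref{eq:mu-size-gap}, namely $1/\mu=2\norm{\widehat M}/\delta$. This gives $\norm{\widehat M}/\delta=1/(2\mu)$, so the lower bound becomes $\frac{c_M}{2c_\delta}\frac1\mu$. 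That is exactly the left inequality of \eqref{eq:mu-kappa-same-order}.

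The final sentence of the statement is then immediate. When $c_M$ and $c_\delta$ are absolute constants, the two sides differ only by the constant factor $c_M/(2c_\delta)$, so $\kappa_2(K_{A,B})=\Theta(1/\mu)$.

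The only step that needs care is the bookkeeping of the scaling. Identity \eqref{eq:mu-size-gap} relies on the maximal scaling \eqref{eq:mu-maximal-scaling}, $\lambda=1/\norm{\widehat M}$, so I must check that this is the normalization fixed in the transformed-coordinate setup. Under any other choice of $\lambda$ the identity only holds up to the factor $\lambda\norm{\widehat M}$. I must also make sure that $\sigma_{\min}(\widehat K)$ and $\norm{\widehat K}_2$ are both taken in the unscaled coordinates, so that $\lambda$ cancels in the condition number. I do not expect any real obstacle beyond this.
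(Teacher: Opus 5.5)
Your proposal is correct and follows the paper's proof essentially verbatim: you cite \eqref{eq:kappa-original-upper} for the upper bound. For the lower bound you write $\kappa_2(K_{A,B})=\kappa_2(\widehat K)=\norm{\widehat K}_2/\sigma_{\min}(\widehat K)$ via \eqref{eq:kappa-invariant}, insert the two hypotheses in \eqref{eq:mu-kappa-compare-ass}, and convert $\norm{\widehat M}/\delta$ to $1/(2\mu)$ with \eqref{eq:mu-size-gap}. Your scaling check is also right, since the setup fixes the maximal scaling $\lambda=1/\norm{\widehat M}$ in \eqref{eq:mu-maximal-scaling}.
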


\begin{proof}
The upper bound in \eqref{eq:mu-kappa-same-order} is exactly \eqref{eq:kappa-original-upper}. For the lower bound, \eqref{eq:kappa-invariant} gives
\[
\kappa_2(K_{A,B})=\kappa_2(\widehat K)=\frac{\norm{\widehat K}_2}{\sigma_{\min}(\widehat K)}.
\]
Using \eqref{eq:mu-kappa-compare-ass} and then \eqref{eq:mu-size-gap},
\[
\kappa_2(K_{A,B})
\ge
\frac{c_M\norm{\widehat M}}{c_{\delta}\delta}
=
\frac{c_M}{2c_{\delta}}\frac{1}{\mu},
\]
which proves the claim.
\end{proof}

\begin{remark}\label{rem:mu-kappa-interpretation}
\Cref{prop:mu-kappa-same-order} isolates exactly the two ways in which $1/\mu$ can substantially overestimate $\kappa_2(K)$. The first is that the normalization $\norm{M}=1$ is dictated mainly by the off-diagonal block $C$, in which case $\norm{\widehat M}\gg \norm{\widehat K}$ and therefore $\mu$ becomes artificially small compared with the scale of the Sylvester operator itself. The second is that the FoV gap $\delta$ is a very conservative lower bound on the true singular gap $\sigma_{\min}(\widehat K)$. Once these two effects are absent, $1/\mu$ and $\kappa_2(K)$ are forced to be of the same order.

\end{remark}

\begin{corollary}[Hermitian benchmark after the optimal shift]\label{cor:mu-kappa-hermitian}
Assume $A_0$ and $B_0$ are Hermitian and
\[
\delta:=\dist\paren{W(A_0),-W(B_0)}>0.
\]
Apply the optimal shift from \cref{thm:fov-certificate} and then the maximal common scaling \eqref{eq:mu-maximal-scaling}, producing transformed matrices $A$ and $B$ and FoV gap $\mu$.
Then $A$ and $B$ are Hermitian positive definite, and one can choose the optimal shift so that
\begin{equation}\label{eq:hermitian-min-eigs}
\lambda_{\min}(A)=\lambda_{\min}(B)=\mu.
\end{equation}
Consequently,
\begin{equation}\label{eq:hermitian-kron-sigmamin}
\sigma_{\min}(K_{A,B})=2\mu,
\qquad
\kappa_2(K_{A,B})=\frac{\norm{K_{A,B}}_2}{2\mu}.
\end{equation}
In particular, if $\norm{K_{A,B}}_2=\Theta(1)$ in normalized coordinates, then
\begin{equation}\label{eq:hermitian-kappa-theta}
\kappa_2(K_{A,B})=\Theta\!\paren{\frac{1}{\mu}}.
\end{equation}
\end{corollary}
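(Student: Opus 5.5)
The plan is to track the Hermitian structure through the optimal shift-rotation of \cref{thm:fov-certificate}, and then use the elementary spectral description of a Kronecker sum of Hermitian matrices.

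First, for Hermitian $A_0,B_0$ the numerical ranges are the real intervals $W(A_0)=[\lambda_{\min}(A_0),\lambda_{\max}(A_0)]$ and $-W(B_0)=[-\lambda_{\max}(B_0),-\lambda_{\min}(B_0)]$. The positive distance $\delta$ then forces one interval to lie strictly to the right of the other. The closest points $a_\star,b_\star$ are endpoints, so $v=a_\star-b_\star$ is real and $\eta=\pm1$, while $\omega=(a_\star+b_\star)/2$ is real. Consequently $\widehat A=\eta(A_0-\omega\I)$ and $\widehat B=\eta(B_0+\omega\I)$ remain Hermitian, and so do $A=\lambda\widehat A$ and $B=\lambda\widehat B$.

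Second, I will identify the extreme eigenvalues. In the case $\eta=+1$ (the other case is symmetric), we have $a_\star=\lambda_{\min}(A_0)$ and $b_\star=-\lambda_{\min}(B_0)$. This gives $\lambda_{\min}(\widehat A)=a_\star-\omega=\delta/2$ and $\lambda_{\min}(\widehat B)=\lambda_{\min}(B_0)+\omega=-b_\star+\omega=\delta/2$. Multiplying by $\lambda$ yields $\lambda_{\min}(A)=\lambda_{\min}(B)=\lambda\delta/2=\mu$ by \eqref{eq:mu-delta-Mhat}, which is \eqref{eq:hermitian-min-eigs}. In particular both matrices are positive definite. The phrase ``one can choose'' only matters for handling the sign of $\eta$ consistently; I will check it directly in each of the two cases.

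Third, $K_{A,B}=\I_m\otimes A+B^\top\otimes\I_n$ is Hermitian because $B^\top=\overline{B}$ is Hermitian with the same spectrum as $B$. Its eigenvalues are $\alpha_i+\beta_j$ over $\alpha_i\in\spec(A)$ and $\beta_j\in\spec(B)$, since the two summands commute and are simultaneously diagonalizable. All these eigenvalues are positive, so the singular values equal the eigenvalues, and the smallest one is $\lambda_{\min}(A)+\lambda_{\min}(B)=2\mu$. This gives the first identity in \eqref{eq:hermitian-kron-sigmamin}, and the formula for $\kappa_2$ follows from its definition. The lower bound $\sigma_{\min}\ge2\mu$ is also consistent with \cref{prop:mu-kappa-kronecker}.

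Finally, \eqref{eq:hermitian-kappa-theta} is immediate once $\norm{K_{A,B}}_2=\Theta(1)$. The upper bound $\norm{K_{A,B}}_2\le2$ always holds in normalized coordinates, so the assumption is really the lower bound.

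The only delicate step is the second one: confirming that the optimal shift from \cref{thm:fov-certificate} places both minimal eigenvalues exactly at $\delta/2$ and not merely at least $\delta/2$. This comes down to bookkeeping of which endpoints realize the distance, and I will settle it by the explicit two-case analysis on the sign of $\eta$.
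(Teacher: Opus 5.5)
Your proposal is correct and follows essentially the same route as the paper: $\eta=\pm1$ and real $\omega$ preserve Hermiticity, the midpoint construction puts both facing endpoints at distance $\delta/2$ from the origin, so the common scaling gives $\lambda_{\min}(A)=\lambda_{\min}(B)=\mu$, and the Kronecker-sum eigenvalues $\alpha_i+\beta_j$ then give $\sigma_{\min}(K_{A,B})=2\mu$. Your explicit two-case endpoint check on the sign of $\eta$ spells out what the paper asserts in one line, and your remark that $\norm{K_{A,B}}_2\le 2$ holds automatically (so only the lower bound is really an assumption) is correct.
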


\begin{proof}
In the Hermitian case the optimal shift-rotation can be chosen with $\eta=\pm 1$, so the transformed matrices remain Hermitian. Since $W(A_0)$ and $-W(B_0)$ are intervals on a line, the midpoint construction from \cref{thm:fov-certificate} places the two closest endpoints symmetrically at distance $\delta/2$ from the origin. After the maximal common scaling this gives \eqref{eq:hermitian-min-eigs}.

Now $K_{A,B}=\I_m\otimes A + B^\top\otimes \I_n$ is Hermitian positive definite, and because $B^\top$ is unitarily similar to $B$, the eigenvalues of $K_{A,B}$ are exactly $\alpha_i+\beta_j$, where $\alpha_i\in\spec(A)$ and $\beta_j\in\spec(B)$. Therefore
\[
\sigma_{\min}(K_{A,B})
=
\lambda_{\min}(K_{A,B})
=
\lambda_{\min}(A)+\lambda_{\min}(B)
=
2\mu,
\]
which proves \eqref{eq:hermitian-kron-sigmamin}. The formula for $\kappa_2(K_{A,B})$ is immediate, and \eqref{eq:hermitian-kappa-theta} follows whenever $\norm{K_{A,B}}_2=\Theta(1)$.
\end{proof}

\begin{remark}[Relevance to the main theorems]
The appendix shows that $1/\mu$ is not merely an abstract FoV parameter. It is always an explicit upper bound on the invariant Kronecker condition number, and under the concrete comparison conditions of \cref{prop:mu-kappa-same-order} it is the correct order of that condition number. In particular, for Hermitian inputs the $O(1/\mu)$ regime from \cref{thm:R1-banded} matches the natural vectorized conditioning scale up to constants whenever the normalized Sylvester operator has $O(1)$ norm.
\end{remark}

\section{Comparison tables}\label{app:comparison}

This appendix compares representative quantum algorithms for matrix equations and matrix
functions.  Each table uses the same five columns:
\emph{work}, \emph{input/access model}, \emph{assumptions}, \emph{output}, and
\emph{query complexity}.  The notation ``BE'' means block-encoding; in the output column,
\((\alpha,\eps)\)-BE suppresses ancilla counts.  The notation \(\widetilde O(\cdot)\)
suppresses polylogarithmic factors in the precision, dimension, and stated conditioning
parameters.

The main structural difference is that the present paper uses matrix-sign embeddings to
produce deterministic operator-output block-encodings.  Several prior approaches instead
use vectorized linear systems, generic contour-integral formulas, mixed-state output, or
positive-definite geometric-mean reductions.

\newpage

\paragraph{Sylvester-type equations.}
The closest dedicated quantum Sylvester baseline is Somma--Low--Berry--Babbush
\cite{SommaLowBerryBabbush2025}.  Their algorithm is organized around the vectorized
Kronecker operator, whereas the present paper extracts the matrix solution directly from a
sign block and extends the same mechanism to generalized Sylvester and Lyapunov equations.

\begin{table}[H]
\centering
\caption{Quantum algorithms for ordinary and generalized Sylvester equations}
\label{tab:sylvester-compare}
\scriptsize
\setlength{\tabcolsep}{2.2pt}
\renewcommand{\arraystretch}{1.22}
\begin{tabularx}{\textwidth}{@{}
>{\RaggedRight\arraybackslash}p{1.75cm}
>{\RaggedRight\arraybackslash}p{2.55cm}
>{\RaggedRight\arraybackslash}p{3.25cm}
>{\RaggedRight\arraybackslash}p{3.05cm}
>{\RaggedRight\arraybackslash}X
@{}}
\toprule
\textbf{Work} &
\textbf{Input/access model} &
\textbf{Assumptions} &
\textbf{Output} &
\textbf{Query complexity} \\
\midrule
Somma \emph{et al.}~\cite{SommaLowBerryBabbush2025}
&
BEs of \(A,B,C\); access sufficient to implement the vectorized operator
\(Q_{\rm Syl}=A\otimes I+I\otimes B^{\top}\).
&
Ordinary Sylvester equation \(AX+XB=C\).  Representative theorem assumes
diagonalizability of the relevant matrices and condition number
\(\kappa=\|Q_{\rm Syl}^{-1}\|\).
&
\((x,\eps)\)-BE of \(X\), with \(x\ge \kappa\alpha\) in the main block-encoding
formulation.
&
Representative regimes:
\[
O\!\left(\kappa\log\frac{\kappa N}{\epsilon}\right)
\]
for normal $Q$;
\[
O\!\left(\kappa\,\operatorname{polylog}\frac{\kappa\alpha}{\epsilon}\right)
\]
for positive Hermitian part;
\[
O\!\left(\sqrt{\kappa}\log\frac{\kappa N}{\epsilon}\right)
\]
in the positive-matrix regime with additional square-root access.
\\
\addlinespace

This paper
&
Unit BEs of transformed \(A,B,C\).  Generalized case additionally uses BEs of
\(E,D\).
&
FoV gap
\[
H(A),H(B)\succeq \mu I
\]
or strip-resolvent bound \(\gamma\).  Generalized case assumes invertible
\(E,D\) and the corresponding weighted FoV or pencil-strip certificate.
&
Ordinary:
\[
(O(\mu^{-2}),\eps)\text{-BE of }X,
\]
with banded refinement \(O(\mu^{-1}+\tau\mu^{-2})\).  Generalized:
\[
O(\|E^{-1}\|\|D^{-1}\|\mu^{-2})
\]
normalization.
&
Ordinary FoV:
\[
\widetilde O(\mu^{-1})
\]
queries to \(U_A,U_B\), and \(O(1)\) to \(U_C\).  Strip regime:
\[
\widetilde O(\gamma).
\]
Generalized FoV:
\[
\widetilde O(\|E^{-1}\|/\mu),\quad
\widetilde O(\|D^{-1}\|/\mu).
\]
\\
\bottomrule
\end{tabularx}
\end{table}

\newpage

\paragraph{Lyapunov equations.}
The Lyapunov literature uses several output models: pure states for vectorized solutions,
problem-specific block-encodings, and mixed states.  The present paper instead obtains
deterministic operator-output block-encodings as a specialization of the generalized
Sylvester sign embedding.

\begin{table}[H]
\centering
\caption{Quantum algorithms for Lyapunov equations}
\label{tab:lyapunov-compare}
\scriptsize
\setlength{\tabcolsep}{2.2pt}
\renewcommand{\arraystretch}{1.22}
\begin{tabularx}{\textwidth}{@{}
>{\RaggedRight\arraybackslash}p{1.75cm}
>{\RaggedRight\arraybackslash}p{2.55cm}
>{\RaggedRight\arraybackslash}p{3.25cm}
>{\RaggedRight\arraybackslash}p{3.05cm}
>{\RaggedRight\arraybackslash}X
@{}}
\toprule
\textbf{Work} &
\textbf{Input/access model} &
\textbf{Assumptions} &
\textbf{Output} &
\textbf{Query complexity} \\
\midrule
Sun--Zhang~\cite{SunZhang2017}
&
Sparse oracle for the vectorized Lyapunov operator and state preparation for the
right-hand side.
&
Vectorized continuous-time Lyapunov equation.  Standard comparison setting uses
sparse, negative-definite \(A\).
&
Pure state
\[
|\widetilde x\rangle\approx |\operatorname{vec}(X)\rangle .
\]
&
HHL-type complexity
\[
\widetilde O\!\left(
\frac{\log(N)s^2\kappa^2}{\eps}
\right).
\]
\\
\addlinespace

Clayton \emph{et al.}~\cite{Clayton2024}
&
BEs of \(A\) and \(\Omega\).
&
\[
AX+XA^\top+\Omega=0,
\]
with \(A\) Hurwitz and \(\Omega\succ0\).
&
\((\widetilde O(\alpha^2\eta),\eps)\)-BE of \(X^\ast\).
&
One query to \(O_\Omega\) and
\[
\widetilde O\!\left(\alpha^2\sqrt{\frac{\eta}{\eps}}\right)
\]
queries to controlled \(O_A\) and its inverse.
\\
\addlinespace

Benedetti--Rosmanis--Rosenkranz~\cite{Benedetti2025}
&
Access to a block-encoded dynamical/Kraus operator and state preparation for the
PSD forcing term.
&
Continuous- and discrete-time Lyapunov settings, with normality/commutation
assumptions in the main protocols.
&
Mixed state close to
\[
X/\operatorname{tr}(X).
\]
&
Expected stopping-time bounds; continuous-time comparison scale
\[
\widetilde O\!\left(\frac{\kappa^3}{\eps}\right).
\]
\\
\addlinespace

This paper
&
BEs of \(A,E,Q\).
&
Generalized Lyapunov equation
\[
A^*XE+E^*XA=-Q.
\]
FoV specialization assumes \(E\succ0\) and
\[
H(E^{-1/2}AE^{-1/2})\succeq \mu I.
\]
&
\[
\left(O(\|E^{-1}\|^2\mu^{-2}),\eps\right)\text{-BE of }X.
\]
&
\[
\widetilde O(\|E^{-1}\|/\mu)
\]
queries to \(U_A,U_E\), and \(O(1)\) queries to \(U_Q\).
\\
\bottomrule
\end{tabularx}
\end{table}

\newpage

\paragraph{Square roots.}
Generic contour methods apply to broad holomorphic matrix functions.  The present square-root
algorithm uses the dedicated sign embedding
\[
K(A)=\begin{bmatrix}0&A\\ I&0\end{bmatrix},
\]
so both \(A^{-1/2}\) and \(A^{1/2}\) are obtained from a single positive-shift inverse family.

\begin{table}[H]
\centering
\caption{Quantum algorithms for inverse square roots and principal square roots}
\label{tab:sqrt-compare}
\scriptsize
\setlength{\tabcolsep}{2.2pt}
\renewcommand{\arraystretch}{1.22}
\begin{tabularx}{\textwidth}{@{}
>{\RaggedRight\arraybackslash}p{1.55cm}
>{\RaggedRight\arraybackslash}p{1.95cm}
>{\RaggedRight\arraybackslash}p{4.35cm}
>{\RaggedRight\arraybackslash}p{4.55cm}
>{\RaggedRight\arraybackslash}X
@{}}
\toprule
\textbf{Work} &
\textbf{Input/access model} &
\textbf{Assumptions} &
\textbf{Output} &
\textbf{Query complexity} \\
\midrule
Takahira \emph{et al.}~\cite{TakahiraOhashiSogabeUsuda2022}
&
BE of \(A\).
&
Holomorphic functional calculus:
\[
f(A)=\frac{1}{2\pi i}\int_\Gamma f(z)(zI-A)^{-1}\,dz.
\]
Requires shifted matrices to be invertible and have a known bound $\beta_0$
&
\[
(4r\beta_0\sum_k|w_k|,\eps)
\]
-BE of a quadrature approximation \(f_M(A)\).
&
\[
O\!\left(
(y_{\max}+z_{\max}\alpha)\beta_0
\log\frac{r\beta_0\mu_w}{\eps}
\right)
\]
uses of \(U_A\)
\\
\addlinespace

This paper
&
Unit BE of \(A\).
&
Principal square root exists.  Explicit FoV theorem assumes
\[
\|A\|\le1,\qquad H(A)\succeq \mu I.
\]
&
\[
(4/\sqrt{\mu},\eps)\text{-BE of }A^{-1/2},
\]
and the same normalization for \(A^{1/2}\).
&
FoV regime:
\[
O\!\left(
\frac1\mu\log\frac1{\eps\mu}
\right)
\]
queries to \(U_A,U_A^\dagger\).
Generic profile form:
\[
\widetilde O(R_{\mathrm{sq},\rho}).
\]
\\
\bottomrule
\end{tabularx}
\end{table}

\newpage

\paragraph{Geometric means.}
Liu \emph{et al.} treat Hermitian positive-definite geometric means via QSVT-based
block-encoding calculus.  The present paper uses a scaled sign embedding for
\[
G(A,B)=\begin{bmatrix}0&B\\ A^{-1}&0\end{bmatrix},
\]
which yields both \(A\#B\) and \((A\#B)^{-1}\) under a principal-branch and strip-resolvent
certificate.

\begin{table}[H]
\centering
\caption{Quantum algorithms for geometric means}
\label{tab:geommean-compare}
\scriptsize
\setlength{\tabcolsep}{2.2pt}
\renewcommand{\arraystretch}{1.22}
\begin{tabularx}{\textwidth}{@{}
>{\RaggedRight\arraybackslash}p{1.75cm}
>{\RaggedRight\arraybackslash}p{2.65cm}
>{\RaggedRight\arraybackslash}p{3.55cm}
>{\RaggedRight\arraybackslash}p{4.15cm}
>{\RaggedRight\arraybackslash}X
@{}}
\toprule
\textbf{Work} &
\textbf{Input/access model} &
\textbf{Assumptions} &
\textbf{Output} &
\textbf{Query complexity} \\
\midrule
Liu \emph{et al.}~\cite{LiuWangWildeZhang2025}
&
BEs of \(A,C\).
&
\makecell[l]{Hermitian positive definite:\\
\(I\succeq A\succeq I/\kappa_A,\)\\
\(I\succeq C\succeq I/\kappa_C\)}
&
\[
(2\kappa_A,\eps)\text{-BE of }A^{-1}\#C.
\]

&
\[
\widetilde O(\kappa_A\kappa_C)
\]
queries to \(U_C\), and
\[
\widetilde O(\kappa_A^2\kappa_C)
\]
queries to \(U_A\).
\\
\addlinespace

This paper
&
Unit BEs of \(A,B\).
&
\[
\spec(A^{-1}B)\cap(-\infty,0]=\emptyset,
\]
scaled sign certificate for
\[
M_\#=\chi_\#G(A,B).
\]
FoV specialization assumes
\[
H(A)\succeq\mu_A I,\quad H(B)\succeq\mu_B I.
\]
&
BEs of both \(A\#B\) and \((A\#B)^{-1}\).  FoV/profile normalization:
\[
4/\sqrt{\mu_A\mu_B}.
\]
&
Generic profile form:
\[
\widetilde O(R_{\#,\rho}).
\]
FoV/profile regime, with
\(\mu_\#=\min\{\mu_A,\mu_B\}\):
\[
\widetilde O(\mu_\#^{-1}).
\]
\\
\bottomrule
\end{tabularx}
\end{table}

\newpage

\paragraph{Continuous-time algebraic Riccati equations.}
The closest Riccati-related quantum baseline uses a positive-definite geometric-mean
reduction.  The present paper treats the standard continuous-time CARE directly through the
Hamiltonian sign projector
\[
\Pi_-=\frac{I-\signf(H)}2,\qquad
H=\begin{bmatrix}A&-G\\ -Q&-A^*\end{bmatrix}.
\]

\begin{table}[H]
\centering
\caption{Quantum algorithms for algebraic Riccati equations}
\label{tab:care-compare}
\scriptsize
\setlength{\tabcolsep}{2.2pt}
\renewcommand{\arraystretch}{1.22}
\begin{tabularx}{\textwidth}{@{}
>{\RaggedRight\arraybackslash}p{1.75cm}
>{\RaggedRight\arraybackslash}p{2.55cm}
>{\RaggedRight\arraybackslash}p{3.65cm}
>{\RaggedRight\arraybackslash}p{4.55cm}
>{\RaggedRight\arraybackslash}X
@{}}
\toprule
\textbf{Work} &
\textbf{Input/access model} &
\textbf{Assumptions} &
\textbf{Output} &
\textbf{Query complexity} \\
\midrule
Liu \emph{et al.}~\cite{LiuWangWildeZhang2025}
&
BEs of \(A,B,C\).
&
Riccati-type equation
\[
Y^\dagger AY-B^\dagger Y-Y^\dagger B-C=0.
\]
Assumes
\[
\begin{aligned}
&I\succeq A\succeq I/\kappa_A,\\
&I\succeq C\succeq I/\kappa_C,\\
&B^\dagger B\preceq I,\\
&A^{-1}B=(A^{-1}B)^\dagger.
\end{aligned}
\]
&
\[
(2\kappa_A^{3/2},\eps)\text{-BE of }Y,
\]
where
\[
Y=A^{-1}\#(B^\dagger A^{-1}B+C)+A^{-1}B.
\]
&
\[
\widetilde O(\kappa_A\kappa_C)
\]
queries to \(U_B,U_C\), and
\[
\widetilde O(\kappa_A^2\kappa_C)
\]
queries to \(U_A\).
\\
\addlinespace

This paper
&
Unit BE of the normalized Hamiltonian \(H\), or built from BEs of \(A,G,Q\).
&
Standard CARE:
\[
A^*X+XA-XGX+Q=0.
\]
Assumes
\[
\spec(H)\cap i\mathbb R=\emptyset,
\]
a Hamiltonian strip/profile certificate, and
\(\sigma=\sigma_{\min}(\Pi_{11})>0\).
&
Deterministic BE of
\[
X=\Pi_{21}\Pi_{11}^{-1}.
\]
Normalization:
\[
O\!\left(
\frac{1+\beta_{\rm sign}}{2\sigma-\eps_{\rm sign}}
\right).
\]
&
Sign stage:
\[
\widetilde O(R_H)
\]
queries to \(U_H,U_H^\dagger\).  Projector inversion multiplies by
\[
\widetilde O\!\left(
\frac{\beta_\Pi}{2\sigma-\eps_{\rm sign}}
\right).
\]
\\
\bottomrule
\end{tabularx}
\end{table}
\end{document}